\definecolor{darkblue}{rgb}{0.1,0.1,.7}
\pgfplotsset{compat=1.10}
\newcommand{\abs}[1]{\left\lvert#1\right\rvert}
\newtheorem{theorem}{Theorem}[section]
\newtheorem{lemma}[theorem]{Lemma}
\newtheorem{corollary}[theorem]{Corollary}
\newtheorem{conjecture}[theorem]{Conjecture}
\theoremstyle{remark}
\newtheorem{remark}[theorem]{Remark}
\def\@fpheader{\ }
\title{Lightcone Modular Bootstrap and Tauberian Theory\\ \vspace{0.4cm}\large{A Cardy-like Formula for Near-extremal Black Holes}}
\author{Sridip Pal$^{\infty}$, Jiaxin Qiao$^{8}$}
\affiliation{$^\infty$Walter Burke Institute for Theoretical Physics,  California Institute of Technology,  Pasadena, CA, USA  \\  $^8$ Laboratory for Theoretical Fundamental Physics, Institute of Physics, École Polytechnique Fédérale de Lausanne (EPFL), CH-1015 Lausanne, Switzerland\\ }
\abstract{We show that for a unitary modular invariant 2D CFT with central charge $c>1$ and having a nonzero twist gap in the spectrum of Virasoro primaries, for sufficiently large spin $J$,  there always exist spin-$J$ operators with twist falling in the interval $(\frac{c-1}{12}-\varepsilon,\frac{c-1}{12}+\varepsilon)$ with $\varepsilon=O(J^{-1/2}\log J)$.  We establish that the number of Virasoro primary operators in such a window has a Cardy-like i.e. \!$\exp\left(2\pi\sqrt{\frac{(c-1)J}{6}}\right)$ growth. We make further conjectures on potential generalization to CFTs with conserved currents. A similar result is proven for a family of holographic CFTs with the twist gap growing linearly in $c$ and a uniform boundedness condition, in the regime $J\gg c^3\gg1$. From the perspective of near-extremal rotating BTZ black holes (without electric charge), our result is valid when the Hawking temperature is much lower than the ``gap temperature".}
\begin{document}
	
\maketitle

\section{Introduction}
Understanding universal properties is of fundamental importance in the study of physical phenomena. In the realm of two-dimensional conformal field theories (2D CFTs), a famous example is the Cardy formula \cite{Cardy:1986ie}. The Cardy formula establishes an asymptotic relation between the microcanonical entropy $S_\delta(\Delta)$, which counts the number of states with scaling dimensions $\Delta'$ in a window $(\Delta-\delta, \Delta+\delta)$, and the central charge $c$:
\begin{equation}
	\begin{split}
		S_\delta(\Delta) := \log\left(\sum\limits_{\abs{\Delta'-\Delta}<\delta}n_{\Delta'}\right) = 2\pi\sqrt{\frac{c\Delta}{3}}+O(\log\Delta)\quad(\Delta\rightarrow\infty),
	\end{split}
\end{equation}
where $n_{\Delta}$ denotes the number of states with scaling dimension $\Delta$. This formula, along with its generalizations \cite{Verlinde:2000wg,Kutasov:2000td,Shaghoulian:2015kta,Kraus:2016nwo,Das:2017vej, Das:2017cnv,Cardy:2017qhl,Brehm:2018ipf,Collier:2019weq,Pal:2020wwd,Harlow:2021trr,Belin:2021ryy,Anous:2021caj,Lin:2022dhv,Benjamin:2023qsc},  provides a universal connection between the high energy spectrum of a CFT and its central charge.
It has played a significant role in the black hole physics, such as black hole microstate counting, the study of the Hawking-Page phase transition and checking AdS/CFT correspondence \cite{Strominger:1996sh,Strominger:1997eq,Sen:2007qy,Hellerman:2009bu,Sen:2012dw,Hartman:2014oaa}.

It has been recently realized that the Cardy formula holds only on average \cite{Mukhametzhanov:2019pzy,Pal:2019zzr}(see also appendix C of \cite{Das:2017vej}),  and its precise validity requires a more rigorous treatment using Tauberian theory \cite{korevaar2004tauberian} (as explained in \cite{Pappadopulo:2012jk,Qiao:2017xif}). In this paper, we aim to prove a Cardy-like formula
\begin{equation}\label{def:Micentropy}
	\begin{split}
		S^{Vir}_\kappa(J)=2\pi\sqrt{\frac{c-1}{6}J}+O(\log(J))\quad (J\rightarrow\infty),
	\end{split}
\end{equation}
for Virasoro primaries near a ``twist accumulation point". Here, the microcanonical entropy $S^{Vir}_\kappa(J)$ is the logarithm of the number of spin-$J$ Virasoro primaries $\mathcal{O}_{\Delta,J}$ in a shrinking window of twist $\tau$:
\begin{equation}
	\begin{split}
		\abs{\tau-\frac{c-1}{12}}<2\kappa J^{-1/2}\log J,\quad\tau:=\Delta-J\,,  \ \kappa>0\,.
	\end{split}
\end{equation}
Here the lower bound of the allowed $\kappa$ is proportional to $\tau_{\rm gap}^{-1}$, where $\tau_{\rm gap}$ is the twist gap in the spectrum of Virasoro primaries, and the factor of $2$ is just a convention.

Our primary focus is on the unitary modular invariant 2D CFTs with central charge $c>1$ and having a twist gap $\tau_{\rm gap}>0$ in the spectrum of Virasoro primaries. We will study the CFT torus partition function in the so-called \emph{double lightcone limit}. In our recent work \cite{Pal:2022vqc}, we conducted a rigorous analysis of the torus partition function for these 2D CFTs under this limit. The outcome of this analysis led to the establishment of a theorem that confirm certain well-known claims previously discussed in the modular bootstrap literature \cite{Collier:2016cls,Kusuki:2018wpa,Collier:2018exn,Benjamin:2019stq,Afkhami-Jeddi:2017idc}. In particular, it was shown in \cite{Pal:2022vqc} that the theory must include a family of Virasoro primary operators $\mathcal{O}_{\Delta,J}$ with 
\begin{equation}
	\begin{split}
		\Delta,J\rightarrow\infty,\quad\tau\equiv\Delta-J\rightarrow2A\left(\equiv\frac{c-1}{12}\right),
	\end{split}
\end{equation}
where $\Delta=h+\bar h$ is scaling dimension and $J=|h-\bar h|$ is spin. This rigorous framework gives us a powerful tool to investigate more detailed questions about the universality of the fixed twist, large spin spectrum of operators in such CFTs.  One natural question that arises is: how many spin-$J$ Virasoro primary operators have a twist near $\frac{c-1}{12}$? The study of this question leads us to eq.\,\eqref{def:Micentropy}.

The CFTs of aforementioned kind are also known as \textit{irrational} CFTs with Virasoro symmetry only.\footnote{A recent proposal suggests constructing examples of irrational CFTs by weakly relevant deformations from multiple copies of minimal models. See \cite{Antunes:2022vtb} for details.} They are expected to exhibit chaos and/or some form of random-matrix-like statistics in their spectrum of primary operators.  In particular,  the spectrum of primary operators are expected to have dense spacing in appropriate asymptotic sense.  Some of these expectations are byproduct of holography, where we know high energy spectrum of dual CFT capture the black hole microstates.  While we expect the quantum systems dual to black hole to be chaotic,  for example SYK \cite{Saad:2018bqo}, from a CFT perspective,  it is far from obvious to see the imprint of chaos (the recent explorations in this direction\footnote{There have been works probing the signature of chaos in CFT correlation functions with various degrees of rigor and implicit assumptions, for example butterfly effect \cite{Roberts:2014ifa},  ETH like statements as in \cite{Kraus:2016nwo,Das:2017cnv,Basu:2017kzo,Cardy:2017qhl, Brehm:2018ipf,Romero-Bermudez:2018dim,Hikida:2018khg,Collier:2019weq},  however,  to best of our knowledge,  ETH has not been proven in $2$D CFT.  See also EFT approach to chaos \cite{Haehl:2018izb,Anous:2020vtw,Altland:2020ccq,Choi:2023mab}.} include \cite{Collier:2021rsn}, \cite{Haehl:2023tkr} built upon \cite{Benjamin:2021ygh}) in the spectrum of CFT operators.\footnote{Also see \cite{DiUbaldo:2023qli} for a recent discussion on the random matrix behavior of 2d CFTs and AdS$_3$ quantum gravity.} For instance, in the regime of fixed spin and large $\Delta$, it is expected that the asymptotic spacing in $\Delta$ becomes exponentially small with respect to the entropy, scaling as $\sqrt{\Delta}$, and ultimately approaches zero. However, the current best bound in this direction, without assuming a twist gap, is $1$ as established in \cite{Ganguly:2019ksp, Mukhametzhanov:2020swe}, which represents an improvement upon the results of \cite{Mukhametzhanov:2019pzy}. It should be noted that this bound is optimal in the absence of a twist gap but is expected to be sub-optimal when a twist gap is imposed. 

In this paper, with the assumption of a twist gap $\tau_{\rm gap}$, we prove the existence of a ``dense" spectrum, characterized by a large number of Virasoro primary states and powerlaw decreasing spacing of twist, in the vicinity of a specific fixed twist value $\tau=\frac{c-1}{12}$ and for very large spin $J$.\footnote{It is important to note that the term ``dense" used here should not be confused with its typical usage in chaos-related CFT literature. In the chaos context, ``dense" often refers to the property where the spacing between adjacent energy spectra is generally given by $\rho^{-1}$, where $\rho\sim e^{\#\sqrt{E}}$ represents the coarse-grained spectral density. However, in this paper, when we refer to ``dense", we simply mean that as we consider specific powerlaw decreasing windows, the number of spectra within those windows increases as $e^{\#\sqrt{E}}$. It is worth emphasizing that the specific distribution of the spectra within the window remains unknown.}  This result holds for irrational CFTs that possess Virasoro symmetry exclusively.
\begin{itemize}
\item \textit {We prove a refined version of twist accumulation result: there always exist  a Virasoro primary operator with sufficiently large spin $J$ within a narrow window of twist around the twist accumulation point i.e $(c-1)/12$; the width of such window of twist goes to $0$ as $J^{-1/2}\log J$. }

\item \textit{We rigorously establish two-sided bounds for $\mathcal{N}_J(\varepsilon)$, the number of spin-$J$ Virasoro primary operators $\mathcal{O}_{\Delta,J}$ in a window $|\Delta-J-\frac{c-1}{12}|\leqslant 2\varepsilon$, and let $\varepsilon$ scale as $ \varepsilon=\kappa J^{-1/2}\log J$ with $\kappa$ being a fixed positive number of order $O(\tau_{\rm gap}^{-1})$. In the limit $J\to\infty$,  $\mathcal{N}_J$ grows as 
\begin{equation}\label{intro:NJ}
\mathcal{N}_J(\varepsilon\equiv\kappa J^{-1/2}\log J)=e^{4\pi\sqrt{AJ}+O(\log J)}\,,
\end{equation}
which is equivalent to \eqref{def:Micentropy} via the relation
\begin{equation}
	\begin{split}
		S^{Vir}_\kappa(J)\equiv \log\left(\mathcal{N}_J(\varepsilon\equiv\kappa J^{-1/2}\log J) \right).
	\end{split}
\end{equation}
A more precise form of eq.\,\eqref{intro:NJ} is stated in \ref{cor:operatorcount},  a corollary of the main theorem  \ref{theorem:modulartauberianFJ},  that we prove in this paper. }

\item \textit{We make further conjectures on potential generalization of the result for a CFT with conserved currents in section \ref{sec:currents}; also see appendix \ref{app:WNconjecture} for statement about $W_N$ CFTs.}
\end{itemize}

In this paper we further consider a family of unitary modular invariant 2D CFTs, including the large central charge limit $A\equiv\frac{c-1}{24}\to\infty$, such that (a) the lower bound of the twist gap $2T$ grows at least linearly in central charge, i.e. $T/A\geqslant \alpha >0$, and (b) their partition functions satisfy a uniform boundedness condition, inspired by the HKS sparseness condition \cite{Hartman:2014oaa}.  Holographically, this family of CFTs probes the near extremal rotating BTZ black holes, having a nearly $AdS_2\times S^1$ throat. One expects a Schwarzian theory \cite{Maldacena:2016upp} to describe such limit. We have the following main result:
\begin{itemize}
\item \textit{For a such a class of CFTs with sufficiently high central charge,  we rigorously estimate the number of operators $\mathcal{O}_{\Delta,J}$ with sufficiently large spin $J$, scaling dimension $\Delta$  such that  
\begin{equation}\label{result:spacing}
\Delta-J-2A \in \left(-\varepsilon_1,\varepsilon_2 \right) \,, \quad
\left(\varepsilon_1\equiv\frac{1}{\pi\alpha}\sqrt{\frac{A}{J}}\log\left(AJ\right),\ \varepsilon_2\equiv\frac{3}{\kappa}\sqrt{\frac{A}{J}}(2\pi A+\log J)\right).
\end{equation}
Here  $\kappa$ is a positive constant, and its precise value will be clear later.}

\textit{In the limit $A\to \infty$ and $J/A^3\to\infty$, an analogue of \eqref{intro:NJ} reads }
\begin{equation}
 \mathcal{N}_J(\varepsilon_1,\varepsilon_2)=e^{4\pi\sqrt{AJ}+O(A)+O(\log AJ)}\,.
 \end{equation}
 \textit{See the theorem.\!~\ref{theorem:largec} and its corollary  \ref{cor:operatorcount:largec} for the precise version.}
\end{itemize}
This result has a gravitational interpretation in terms of the near-extremal rotating BTZ black holes with angular momentum $J$. The entropy of the near-extremal rotating BTZ black hole is given by the formula
	$$S_{\rm BH}\approx2\pi\sqrt{\frac{c}{6}J}\approx4\pi\sqrt{AJ},\quad c=\frac{3\ell_3}{2G_N}\gg1,$$
	where $\ell_3$ is the radius of AdS$_3$, $G_N$ is Newton's constant and $c=\frac{3\ell_3}{2G_N}$ is the Brown-Henneaux relation \cite{Brown:1986nw}. This formula is known in the standard black hole thermodynamics. Our result supports the thermodynamic description of the near-extremal black holes when the Hawking temperature $T_{\rm H}$, given by $T_{\rm H}=\beta^{-1}$, falls within a certain regime:
	\begin{equation}
		\text{const}\times \frac{\sqrt{c/J}}{\alpha} \leqslant T_{\rm H} \ll 1/c.
	\end{equation}
	In particular, the Hawking temperature is much lower than the ``gap temperature" $c^{-1}$.

In this paper, we leverage existing techniques to analyze the partition function in the lightcone limit with complex $\beta_L$ or $\beta_R$.   While the estimates related to the lightcone bootstrap were already given in \cite{Pal:2022vqc},  they were applicable for partition function evalauted at real $\beta_L$ and $\beta_R$.  The main technical challenge for us is to uplift the aformentioned rigorous estimate as done in \cite{Pal:2022vqc} so that it applies to the partition function for complex $\beta_L$ or $\beta_R$ and we are able to learn about the large spin,  small twist spectra.  We achieve this by using the Tauberian theory techniques developed in \cite{Mukhametzhanov:2020swe, Mukhametzhanov:2019pzy} to analyze the partition function for complex $\beta$ albeit in the high temperature limit (not the light cone limit).  Our main contribution in this paper lies in combining these techniques to analyze the partition function in the lightcone limit with complex $\beta_L$ or $\beta_R$; this leads to the main results of this paper.


We view our results as a stepping stone towards a rigorous understanding of chaotic irrational CFTs, although it has not yet been established in a general $c>1$ irrational CFT with a twist gap. We anticipate that with further effort, a similar analysis can be applied to CFT four-point functions based on \cite{Fitzpatrick:2012yx, Komargodski:2012ek} and \cite{Mukhametzhanov:2018zja, Pal:2022vqc}.\footnote{The analysis of CFT four-point functions using the lightcone bootstrap would be more complicated than the modular bootstrap approach because the conformal blocks of CFT four-point functions do not naturally factorize into left- and right-movers. However, in the double lightcone limit, the conformal blocks exhibit approximate factorization (see \cite{Pal:2022vqc}, appendix A). Based on this observation, we anticipate that a similar Tauberian theorem to theorem \ref{theorem:modulartauberianFJ} can be established using the techniques explained in \cite{Qiao:2017xif}.}

The paper is organized as follows. In section \ref{section:modularbootstrap}, we present the proof of the Cardy-like formula \eqref{def:Micentropy}, and the main results of this section are summarized in theorem \ref{theorem:modulartauberianFJ} and its corollary \ref{cor:operatorcount}. Appendices \ref{app:Ivac:asym} and \ref{appendix:uniformbounds} provide additional technical details related to this section. In section \ref{sec:currents}, we explore potential generalizations for CFTs with conserved currents, {and we include specific examples and leave technical details to appendix \ref{appendix:example}}. Moving on to section \ref{section:largec}, we focus on holographic CFTs and investigate the limit of large central charge, $c\rightarrow\infty$. The main results of this section are summarized in theorem \ref{theorem:largec} and its corollary \ref{cor:operatorcount:largec}, with the proofs presented in appendix \ref{section:estimatelargec}.  Then we discuss the connection between our results and the thermodynamics of near-extremal rotating BTZ black holes.  In section \ref{section:conclusion}, we make conclusions and discuss some potential future directions.

\section{Modular bootstrap}\label{section:modularbootstrap}
\subsection{Setup}\label{section:modularsetup}
We consider a unitary, modular invariant 2D CFT with central charge $c>1$, a (unique) normalizable vacuum and a positive twist gap $\tau_{\rm gap}>0$ in the spectrum of Virasoro primaries. The torus partition function $Z(\beta_L,\beta_R)$ of such a CFT is defined by
\begin{equation}
	Z (\beta_L,\beta_R) \equiv \text{Tr}_{\mathcal{H}_{\text{CFT}}} \left( e^{- \beta_L
		\left( L_0 - \frac{c}{24} \right)} e^{- \beta_R \left( \bar{L}_0 -
		\frac{c}{24} \right)} \right) . \label{Z}
\end{equation}
where $\beta_L$ and $\beta_R$ are the inverse temperatures of the left and right movers, $L_0$ and $\bar{L}_0$ are the standard Virasoro algebra generators and $\mathcal{H}_{\rm CFT}$ is the CFT Hilbert space which is assumed to be the direct sum of Virasoro representations characterized by conformal weights $h$ and $\bar{h}$
\begin{equation}
	\begin{split}
		\mathcal{H}_{\rm CFT}=\bigoplus_{h,\bar{h}} V_{h} \otimes V_{\bar{h}}. \label{def:Hilbertspace}
	\end{split}
\end{equation}
The twist gap assumption means that
\begin{equation}
	\begin{split}
		h,\bar{h}\geqslant\tau_{\rm gap}/2
	\end{split}
\end{equation}
for all representations except the vacuum representation ($h=\bar{h}=0$). 

Using eqs.\,(\ref{Z}) and (\ref{def:Hilbertspace}), the torus partition function can be written as a sum of Virasoro characters $\chi_h(\beta_L)\chi(\beta_R)$ over primaries
\begin{equation}
	\begin{split}
		Z(\beta_L,\beta_R)=\sum\limits_{h,\bar{h}}n_{h,\bar{h}}\ \chi_h(\beta_L)\chi_{\bar{h}}(\beta_R),
	\end{split}
\end{equation}
where $n_{h,\bar{h}}$ counts the degeneracy of the Virasoro primaries with conformal weights $h$ and $\bar{h}$. For $c>1$, the characters of Virasoro unitary representations are given by 
\begin{equation}\label{def:Vircharacter}
	\chi_h (\beta) \equiv \text{Tr}_{V_h} \left( e^{- \beta \left( L_0 -
		\frac{c}{24} \right)}  \right) = \frac{e^{\frac{c - 1}{24} \beta}}{\eta
		(\beta)} \times \begin{cases}
		1 - e^{- \beta} &\text{if } h = 0,\\
		e^{- \beta h}&\text{if } h > 0,
	\end{cases}
\end{equation}
where the Dedekind eta function $\eta(\beta)\equiv e^{-\beta/24}\prod\limits_{n=1}^{\infty}(1-e^{-n \beta})$ accounts for the contribution of descendants. Then we have
\begin{equation}
	Z (\beta_L, \beta_R) = \frac{\tilde{Z} (\beta_L,
		\beta_R)}{\eta (\beta_L) \eta (\beta_R)} ,\label{ZtoZtilde}
\end{equation}
where the reduced partition function $\tilde{Z}$ is given by
\begin{equation}
	\tilde{Z} (\beta_L, \beta_R) = e^{A(\beta_L + \beta_R)}\left[(1 - e^{- \beta_L}) (1 - e^{-
		\beta_R}) + \sum_{h, \bar{h}\geqslant T} n_{h,\bar{h}}\,e^{- \beta_L h - \beta_R \bar{h}}\right]
	\label{def:Ztilde} .
\end{equation}
Here we have denoted $A \equiv \frac{c - 1}{24}$ and $T\equiv\tau_{\rm gap}/2$ for convenience. $n_{h,\bar{h}}$ is the degeneracy of the Virasoro primaries with conformal weights $h$ and $\bar{h}$. The first term in the square bracket corresponds to the contribution from the vacuum state, while the second term represents the total contribution from Virasoro primaries with twists above the twist gap. 

The above formulations assumed a discrete spectrum. The argument below also works for the continuum spectrum, where eq.\,(\ref{def:Ztilde}) is replaced by
\begin{equation}
	\begin{split}
		\tilde{Z} (\beta_L, \beta_R) = e^{A(\beta_L + \beta_R)}\left[(1 - e^{- \beta_L}) (1 - e^{-
			\beta_R}) + \int_{T}^{\infty}d h\int_{T}^{\infty}d\bar{h}\ \rho(h,\bar{h}) e^{- \beta_L h - \beta_R \bar{h}}\right].\label{Ztilde:integral}
	\end{split}
\end{equation}
Here $\rho$ is a non-negative spectral density of Virasoro primaries, which is related to $n_{h,\bar{h}}$ by
\begin{equation}
	\begin{split}
		\rho(h,\bar{h})=\sum\limits_{h',\bar{h}'\geqslant T}n_{h',\bar{h}'}\delta(h-h')\delta(\bar{h}-\bar{h}').
	\end{split}
\end{equation}
We assume that (a) the partition function $Z$ (or equivalently $\tilde{Z}$) for a given CFT is finite when $\beta_L,\beta_R\in(0,\infty)$; (b) $Z$ is modular invariant, i.e.\,$Z(\beta_L,\beta_R)$ is invariant under the transformations generated by
\begin{equation}\label{def:modulartransf}
	\begin{split}
		(\beta_L,\beta_R)\rightarrow&(\beta_L+2\pi i,\beta_R-2\pi i), \\
		(\beta_L,\beta_R)\rightarrow&\left(\frac{4\pi^2}{\beta_L},\frac{4\pi^2}{\beta_R}\right). \\
		\end{split}
\end{equation}
The invariance under the first transformation implies that the spin $J:=\abs{h-\bar{h}}$ of any Virasoro primary state must be an integer. The invariance condition under the second transformation (which is called \emph{S modular transformation}),
\begin{equation}
	Z (\beta_L, \beta_R) = Z_{\text{}} \left( \frac{4 \pi^2}{\beta_L}, \frac{4
		\pi^2}{\beta_R} \right), \label{Smod}
\end{equation}
can be formulated in terms of reduced partition function $\tilde{Z}$ as follows. By (a) and the positivity of the spectral density, the convergence domain of $Z(\beta_L,\beta_R)$ (or equivalently $\tilde{Z}(\beta_L,\beta_R)$) can be extended to the complex domain of $(\beta_L,\beta_R)$ with\footnote{This justifies why the first modular transformation is well-defined on the partition function.}
\begin{equation}
	\begin{split}
		\mathrm{Re}(\beta_L),\mathrm{Re}(\beta_R)\in(0,\infty). \label{domain:complex}
	\end{split}
\end{equation}
Since under S modular transformation, $\eta$ behaves as $\eta(\beta)=\sqrt{\frac{2\pi}{\beta}}\eta(\frac{4\pi^2}{\beta})$ , eqs.\,(\ref{ZtoZtilde}) and (\ref{Smod}) imply that $\tilde{Z}$ transforms as
\begin{equation}\label{modulartransformation}
	\tilde{Z} (\beta_L, \beta_R) = \sqrt{\frac{4 \pi^2}{\beta_L \beta_R}} \tilde{Z}
	\left( \frac{4 \pi^2}{\beta_L}, \frac{4 \pi^2}{\beta_R} \right).
\end{equation}
Notice that the complex domain (\ref{domain:complex}) is preserved by the S modular transformation. Therefore we have two convergent expansions of $\tilde{Z}(\beta_L,\beta_R)$ for $(\beta_L,\beta_R)$ in the domain (\ref{domain:complex}):
\begin{itemize}
	\item Direct channel: expanding l.h.s. of (\ref{modulartransformation}) in terms of (\ref{Ztilde:integral}).
	\item Dual channel: expanding r.h.s. of (\ref{modulartransformation}) in terms of (\ref{Ztilde:integral}) (with $\beta_L,\beta_R$ replaced by $\frac{4\pi^2}{\beta_L},\frac{4\pi^2}{\beta_R}$).
\end{itemize}

\subsection{Review of the twist accumulation point}\label{section:twistaccum}
Under the above setup, one can show that in the theory, there is at least one family of Virasoro primaries $\mathcal{O}_i$ with $h_i\rightarrow A$ and $\bar{h}_i\rightarrow\infty$ \cite{Collier:2016cls,Afkhami-Jeddi:2017idc,Benjamin:2019stq,Pal:2022vqc}. In other words, $(h=A,\bar{h}=\infty)$ is an accumulation point in the spectrum of Virasoro primaries. The same is true with $h$ and $\bar{h}$ interchanged. Here let us briefly explain why it is true. For more technical details, see \cite{Pal:2022vqc}, section 3.

We consider the reduced partition function $\tilde{Z}(\beta_L,\beta_R)$ for real and positive $(\beta_L,\beta_R)$. We take the double lightcone (DLC) limit , defined by\footnote{The ``DLC" limit defined in this context is referred to as the ``M$_*$" limit in \cite{Pal:2022vqc}. In that work, two distinct lightcone bootstrap problems were discussed, and the ``M$_*$" limit, specifically the modular double lightcone limit, was employed to differentiate it from the DLC limit in the other problem.}
	\begin{equation}\label{def:DLClimit}
		\begin{split}
			\text{DLC limit:}\quad \beta_L \rightarrow \infty,
			\quad \beta_R \rightarrow 0,
			\quad \mathfrak{b}(\beta_L,\beta_R):=\frac{4\pi^2T}{A\beta_R} -\beta_L-\frac{3}{A}\log(\beta_L)\rightarrow \infty.
		\end{split}
\end{equation}
The important feature of this limit is that $\beta_{R}$ approaches 0 much faster than $\beta_L$ approaches $\infty$. The introduction of the logarithmic term in $\mathfrak{b}(\beta_L,\beta_R)$ is just for technical reason. One can show that in the DLC limit, the partition function $\tilde{Z}(\beta_L,\beta_R)$ is dominated by the vacuum term (the first term in eq.\,(\ref{def:Ztilde})) in the dual channel, i.e.
\begin{equation}\label{DLCpart}
	\underset{\rm DLC}{\lim} \frac{\tilde{Z} (\beta_L,\beta_R)}{\frac{8\pi^3}{\beta_L^{3/2}\beta_R^{1/2}} e^{\frac{4\pi^2A}{\beta_R}}} = 1. 
\end{equation}
Here the denominator is the asymptotic behavior of the vacuum term in the dual channel:
\begin{equation}\label{Ztilde:asympdual:withoutcurrent}
	\begin{split}
		\sqrt{\frac{4\pi^2}{\beta_L\beta_R}}\tilde{Z}_{vac}\left(\frac{4\pi^2}{\beta_L},\frac{4\pi^2}{\beta_R}\right)\equiv&\sqrt{\frac{4\pi^2}{\beta_L\beta_R}}e^{A\left(\frac{4\pi^2}{\beta_L}+\frac{4\pi^2}{\beta_R}\right)}\left(1-e^{-\frac{4\pi^2}{\beta_L}}\right)\left(1-e^{-\frac{4\pi^2}{\beta_R}}\right) \\
		\sim&\frac{8\pi^3}{\beta_L^{3/2}\beta_R^{1/2}} e^{\frac{4\pi^2A}{\beta_R}}\quad(\beta_{L}\rightarrow\infty,\ \beta_{R}\rightarrow0). \\
	\end{split}
\end{equation}
Now we consider the direct channel (i.e.\,the l.h.s. of eq.\,(\ref{modulartransformation})). Let $\Omega$ be a set of $(h,\bar{h})$ pairs (assuming that $\Omega$ does not contain the vacuum $(0,0)$). We define $\tilde{Z}_{\Omega}$ to be the partial sum of eq.\,(\ref{def:Ztilde}) with $(h,\bar{h})\in\Omega$:
\begin{equation}
	\begin{split}
		\tilde{Z}_{\Omega}(\beta_L,\beta_R):=\sum_{(h,\bar{h})\in\Omega} n_{h,\bar{h}}\,e^{- \beta_L h - \beta_R \bar{h}}
	\end{split}
\end{equation}
In what follows we will only state the conditions of $\Omega$, e.g.~$\tilde{Z}_{h\geqslant A+\varepsilon}$ is the same as $\tilde{Z}_{\Omega}$ with $\Omega=[A+\varepsilon,\infty)\times(0,\infty)$. The claim is that in the DLC limit, the direct channel is dominated by the sum over $h\in(A-\varepsilon,A+\varepsilon)$ and $\bar{h}\geqslant\bar{h}_\ast$:
\begin{equation}
	\begin{split}
		\underset{\rm DLC}{\lim} \frac{\tilde{Z}_{h\in(A-\varepsilon,A+\varepsilon),\bar{h}\geqslant\bar{h}_\ast}
			\left( \beta_L, \beta_R \right)}{\frac{8\pi^3}{\beta_L^{3/2}\beta_R^{1/2}} e^{\frac{4\pi^2A}{\beta_R}}} = 1. \label{modular:crossedchannel}
	\end{split}
\end{equation}
Here $\varepsilon>0$ can be arbitrarily small and $\bar{h}_\ast$ can be arbitrarily large (but they are fixed when we take the DLC limit). To prove this claim, \cite{Pal:2022vqc} demonstrated that in the direct channel, the total contribution from other $(h,\bar{h})$ pairs is suppressed compared to  the dual-channel vacuum term. 

Therefore, $(h=A,\bar{h}=\infty)$ must be an accumulation point in the spectrum. Otherwise one can find sufficiently small $\varepsilon$ and sufficiently large $\bar{h}_\ast$ such that $\tilde{Z}_{h\in(A-\varepsilon,A+\varepsilon),\bar{h}\geqslant\bar{h}_\ast}=0$, contradicting eq.\,(\ref{modular:crossedchannel}). By interchanging the roles of $\beta_L$ and $\beta_{R}$ in the above argument, we can show that $(h=\infty,\bar{h}=A)$ is also an accumulation point in the spectrum.

In terms of scaling dimension $\Delta=h+\bar{h}$ and spin $J=|h-\bar{h}|$, the above argument implies that the theory must include a family of Virasoro primary operators $\mathcal{O}_{\Delta,J}$ with
	\begin{equation}\label{twistaccumulation:DeltaJ}
		\begin{split}
			\Delta,J\rightarrow\infty,\quad\Delta-J\rightarrow2A\left(\equiv\frac{c-1}{12}\right).
		\end{split}
\end{equation}
For general CFTs, (\ref{twistaccumulation:DeltaJ}) is slightly weaker than the existence of both $(h\rightarrow\infty,\bar{h}\rightarrow A)$ and $(h\rightarrow A,\bar{h}\rightarrow\infty)$ families. In a CFT with conserved parity, these two statements are the equivalent.

\subsection{Main theorem}\label{section:modular:idea}
In the previous subsection, we reviewed that in the double lightcone limit, the dominant contribution to the reduced partition function $\tilde{Z}(\beta_L,\beta_R)$ comes from the spectrum with high spin and twist near $2A$ in the direct channel, while the vacuum state ($h=\bar{h}=0$) dominates in the dual channel. This observation implies a connection between the spectral density $\rho(h,\bar{h})$ (as given by eq.\,(\ref{Ztilde:integral})) near the accumulation point $(h=A,\bar{h}=\infty)$ and the vacuum term of the partition function in the dual channel. In fact, by conducting a more thorough analysis of the arguments presented in \cite{Pal:2022vqc}, we can not only establish the existence of an infinite number of operators near the accumulation point $(h=A, \bar{h}=\infty)$ but also estimate how many such operators there are. To achieve this quantitative understanding, we will employ Tauberian theory \cite{korevaar2004tauberian}, building upon similar reasoning presented in \cite{Mukhametzhanov:2019pzy, Mukhametzhanov:2020swe}, and integrate it with the arguments put forth in \cite{Pal:2022vqc}. This combined approach will be the focus of the remaining sections in this paper.

The object we are going to study is the total number of Virasoro primaries with $h$ in the range $(A-\varepsilon, A+\varepsilon)$ and $\bar{h}=h+J$ where the spin $J$ is fixed. This quantity is denoted as $\mathcal{N}_J(\varepsilon)$ and can be expressed as the sum of the degeneracies $n_{h,h+J}$ of Virasoro primaries over the specified range of $h$:
\begin{equation}
	\mathcal{N}_J(\varepsilon) := \sum\limits_{h\in(A-\varepsilon, A+\varepsilon)} n_{h,h+J}.
\end{equation}
Our goal is to derive non-trivial asymptotic two-sided bounds on $\mathcal{N}_J(\varepsilon)$ in the limit $J\rightarrow\infty$ and $\varepsilon\rightarrow0$, under specific constraints between $\varepsilon$ and $J$. However, due to technical limitations, a direct estimate of $\mathcal{N}_J(\varepsilon)$ is not feasible.\footnote{While it is possible to compute $\mathcal{N}_J(\varepsilon)$ directly on a case-by-case basis, our primary objective in this paper is to derive universal behaviors of $\mathcal{N}_J(\varepsilon)$.} To overcome this, we introduce another quantity $\mathcal{A}_J(\beta_L,\varepsilon)$ by assigning a $\beta_{L}$-dependent weight to each degeneracy $n_{h,\bar{h}}$ of Virasoro primaries:
\begin{equation}\label{def:AJ}
	\begin{split}
		\mathcal{A}_J(\beta_L,\varepsilon):=\sum\limits_{h\in(A-\varepsilon,A+\varepsilon)}n_{h,h+J}e^{-(h-A)\beta_L}.
	\end{split}
\end{equation}
Importantly, $\mathcal{N}_J(\varepsilon)$ and $\mathcal{A}_J(\beta_L,\varepsilon)$ are related by the following inequality:
\begin{equation}\label{NAineq}
	\begin{split}
	e^{-\varepsilon\beta_{L}}\mathcal{A}_J(\beta_{L},\varepsilon)\leqslant	\mathcal{N}_J(\varepsilon)\leqslant e^{\varepsilon\beta_{L}}\mathcal{A}_J(\beta_{L},\varepsilon).
	\end{split}
\end{equation}
This inequality provides an upper and lower bound for $\mathcal{N}_J(\varepsilon)$ in terms of $\mathcal{A}_J(\beta_L,\varepsilon)$, with a dependence on the parameter $\varepsilon$ and the inverse temperature $\beta_L$. So our approach involves two main steps. First, we will derive asymptotic two-sided bounds for $\mathcal{A}_J(\beta_L,\varepsilon)$. Then, we will use eq.\,\eqref{NAineq} to obtain corresponding bounds for $\mathcal{N}_J(\varepsilon)$.

To estimate $\mathcal{A}_J(\beta_L,\varepsilon)$, we introduce the DLC$_w$ (double lightcone) limit defined as follows:
\begin{equation}\label{def:DLCwJ}
	\begin{split}
		\mathrm{DLC}_w\ \mathrm{limit}:&\quad\beta_L,\ J\rightarrow\infty,\quad \frac{2\pi T(1-w^2)}{A}\sqrt{\frac{J}{A}}-\beta_L\rightarrow\infty\,,\\
		&\quad \beta_L^{-1}\log{J}\to 0\,.
	\end{split}
\end{equation}
The reason we still refer to it as the ``DLC" limit, similar to \eqref{def:DLClimit}, will become clearer later. For now, a brief explanation is that by introducing the additional identification
$$\beta_{R}=2\pi\sqrt{\frac{A}{J}},$$
\eqref{def:DLCwJ} becomes a slightly stronger form of \eqref{def:DLClimit}. We will revisit this point later around \eqref{def:DLCwlimit}.

With the aforementioned setup, we present our main theorem as follows:
\begin{theorem}\label{theorem:modulartauberianFJ}
	Take any unitary, modular invariant 2D CFT with central charge $c>1$ (i.e.\,$A\equiv\frac{c-1}{24}>0$), a unique normalizable vacuum and a twist gap $\tau_{\rm gap}\equiv2T>0$ in the spectrum of nontrivial Virasoro primaries. 
	
	Then for any $w\in\left(\frac{1}{2},1\right)$ fixed, and $\varepsilon$ within the range
	\begin{equation}\label{epsilon:choice:DLCJ}
		\begin{split}
			&\varepsilon_{\rm min}(\beta_{L},J)\leqslant\varepsilon\leqslant1-\frac{1}{2w}, \\
			\varepsilon_{\rm min}(\beta_{L},J)&:=\max\left\{\frac{A^{3/2}}{\pi w^2T}\frac{\log{J}}{\sqrt{J}},\ 
			\frac{3\log{J}}{4\beta_{L}}+\frac{2\log\beta_{L}}{\beta_{L}}\right\},
		\end{split}
	\end{equation}
	the quantity $\mathcal{A}_J$, defined in \eqref{def:AJ}, satisfies the following asymptotic two-sided bounds in the DLC$_w$ limit \eqref{def:DLCwJ}:
	\begin{equation}
		\begin{split}\label{eq:resultFiniteEpsilon}
			\frac{1}{w}\frac{1}{1-\frac{\tan\left(\pi w(1-\varepsilon)\right)}{\pi w(1-\varepsilon)}}\lesssim \frac{\mathcal{A}_J(\beta_L,\varepsilon )}{4\pi^{5/2}\beta_L^{-3/2}J^{-1/2}e^{4\pi\sqrt{AJ}}} \lesssim \frac{1}{w}\frac{2}{1+\frac{\sin(2\pi w\varepsilon)}{2\pi w\varepsilon}},
		\end{split}
	\end{equation}
    which is uniform in $\varepsilon$. Here by $a\lesssim b$ we mean
    \begin{equation}
    	\begin{split}
    		\lim\frac{a}{b}\leqslant1
    	\end{split}
    \end{equation}
    in the considered limit.
\end{theorem}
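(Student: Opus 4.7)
The plan is to use S-modular invariance together with a Fourier projection to extract a Laplace-type transform of the fixed-spin primary spectrum, and then a Beurling--Selberg band-limited Tauberian argument (in the spirit of~\cite{Mukhametzhanov:2019pzy,Mukhametzhanov:2020swe}) to convert the transform data into the windowed count $\mathcal{A}_J(\beta_L,\varepsilon)$. Define the fixed-spin character $g_J(\gamma):=\sum_{h}n_{h,h+J}\,e^{-\gamma h}$. First I would isolate $g_J$ by shifting $(\beta_L,\beta_R)\to(\beta_L+iy,\beta_R-iy)$ in~\eqref{modulartransformation} and integrating $\int_{-\pi}^{\pi}\!dy\,e^{iyJ}/(2\pi)$: since $h-\bar h\in\mathbb{Z}$, the direct channel produces $e^{-\beta_RJ}g_J(\beta_L+\beta_R)$ plus the explicit vacuum subtraction, while the dual channel is controlled by the DLC vacuum dominance~\eqref{DLCpart}. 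Specializing to the Cardy saddle $\beta_R=\beta_R^{\ast}:=2\pi\sqrt{A/J}$ (stationary point of $-\beta_RJ+4\pi^{2}A/\beta_R$) and doing the Gaussian saddle integral in $y$ produces the leading asymptotic of $g_J$, whose $e^{A(\,\cdot\,)}$-rescaling yields the normalizer $4\pi^{5/2}\beta_L^{-3/2}J^{-1/2}e^{4\pi\sqrt{AJ}}$ of~\eqref{eq:resultFiniteEpsilon}; by analyticity of $g_J$ in $\gamma$ this asymptotic extends uniformly to $\gamma=\beta_L-i\omega$ in the relevant strip.

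Second, I would install the window restriction $h\in(A-\varepsilon,A+\varepsilon)$ via Beurling--Selberg band-limited majorants and minorants $\phi_{\pm}$ of $\mathbf{1}_{(-\varepsilon,\varepsilon)}$, whose Fourier transforms $\hat\phi_{\pm}$ are supported in $[-W,W]$ with $W:=2\pi w\sqrt{J/A}$. Fourier inversion gives
\begin{equation*}
\sum_{h}n_{h,h+J}\,\phi_{\pm}(h-A)\,e^{-(h-A)\beta_L}=\int_{-W}^{W}\!\frac{d\omega}{2\pi}\,\hat\phi_{\pm}(\omega)\,e^{A(\beta_L-i\omega)}\,g_J(\beta_L-i\omega),
\end{equation*}
and $\phi_{-}\leqslant\mathbf{1}_{(-\varepsilon,\varepsilon)}\leqslant\phi_{+}$ sandwiches $\mathcal{A}_J(\beta_L,\varepsilon)$ between two such integrals along the horizontal contour $\{\gamma=\beta_L-i\omega:|\omega|\leqslant W\}$. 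Inserting the Step-1 asymptotic for $g_J(\beta_L-i\omega)$ and rescaling $\omega=2\pi u\sqrt{J/A}$ reduces the $\omega$-integral to an explicit $u$-quadrature of $\hat\phi_{\pm}$ against a slowly-varying kernel. For the standard Selberg one-sided constructions these quadratures evaluate to the $w$- and $\varepsilon$-dependent ratios $\tfrac{\tan(\pi w(1-\varepsilon))}{\pi w(1-\varepsilon)}$ and $\tfrac{\sin(2\pi w\varepsilon)}{2\pi w\varepsilon}$ appearing in~\eqref{eq:resultFiniteEpsilon}; the constraint $w\in(1/2,1)$ keeps the saddle at $\omega=0$ strictly interior to $[-W,W]$ while leaving enough bandwidth for the Beurling--Selberg error to become subleading.

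The main obstacle is Step 1 --- specifically uplifting the DLC estimate~\eqref{DLCpart} from real $(\beta_L,\beta_R)$ to complex arguments, uniformly on the rectangular contour $\{\beta_L+iy:|y|\leqslant\pi\}\times\{\beta_R^{\ast}-iy:|y|\leqslant\pi\}$ and, via analyticity, along the horizontal line $\gamma=\beta_L-i\omega$ with $|\omega|\leqslant W$. I would re-run the direct-versus-dual channel comparison of~\cite{Pal:2022vqc} after these complex shifts, showing that the non-vacuum piece $\tilde Z-\tilde Z_{\rm vac}$ in the dual channel remains subleading along the whole contour; this is the lightcone analogue of the complex-$\beta$ extensions of the high-temperature Tauberian theorems carried out in~\cite{Mukhametzhanov:2019pzy,Mukhametzhanov:2020swe}, and is presumably the content of Appendices~\ref{app:Ivac:asym}--\ref{appendix:uniformbounds}. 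The lower bound $\varepsilon\geqslant\varepsilon_{\min}(\beta_L,J)$ in~\eqref{epsilon:choice:DLCJ} is calibrated precisely so that the Beurling--Selberg error $O(1/W)=O(\sqrt{A/J})$ and the $\beta_L^{-1}\log J$ corrections both sit below the Gaussian precision of the $y$-saddle; once this uniform complex-$\beta$ control is in place, the remaining saddle-point evaluations and Beurling--Selberg bookkeeping are essentially mechanical.
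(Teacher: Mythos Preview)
Your approach swaps the roles of $h$ and $\bar h$ relative to the paper. You do an exact Fourier projection onto fixed spin (the $\int_{-\pi}^{\pi}dy\,e^{iyJ}$ integral) and then apply Beurling--Selberg majorants/minorants to the $h$-window with a growing bandwidth $W=2\pi w\sqrt{J/A}$. The paper does the opposite: it never Fourier-projects onto fixed spin. Instead it uses the integer-spin constraint to trade the fixed-$J$ condition for an adjustable $\bar h$-window $|\bar h-\bar H|<\delta$ with any $\delta\in(\varepsilon,1-\varepsilon)$ (eq.~\eqref{AJ:prop}, Fig.~\ref{fig:window1}), and applies Beurling--Selberg to the $\bar h$-variable with a \emph{fixed} bandwidth $\Lambda<2\pi w$ (eq.~\eqref{phipm:suppcondition}). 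The $h$-window $h\in(A-\varepsilon,A+\varepsilon)$ is never smeared; it is kept as an exact restriction in the direct channel and shown to dominate by the term-by-term DLC estimates of Section~\ref{section:modulardirect}.

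This difference has two concrete consequences. First, the trigonometric constants in~\eqref{eq:resultFiniteEpsilon} are not Beurling--Selberg values for a window of half-width~$\varepsilon$: they come from optimising the paper's $\bar h$-window half-width over $\delta\in(\varepsilon,1-\varepsilon)$ and sending $\delta_+\to\varepsilon$, $\delta_-\to1-\varepsilon$, $\Lambda\to2\pi w$ (Section~\ref{section:modular:tauberian}). The $1-\varepsilon$ in the lower bound is precisely the \emph{largest} $\bar h$-window compatible with picking out a single spin; your construction has no place where $1-\varepsilon$ naturally appears, and with $W\varepsilon=2\pi w\varepsilon\sqrt{J/A}$ your Selberg quadratures cannot reproduce those ratios.

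Second, the hard step you identify --- extending the vacuum-dominance estimate to complex arguments --- is in your scheme much harder than in the paper's. You need a uniform asymptotic for $g_J(\beta_L-i\omega)$ on a strip of width $W\sim\sqrt{J}$ that grows without bound, on top of the $|y|\leqslant\pi$ integration; analyticity alone does not propagate asymptotics into such a strip. The paper only ever complexifies $\beta_R\to\beta_R+it$ over the \emph{fixed} compact interval $|t|\leqslant\Lambda<2\pi w$, and the strict inequality $\Lambda<2\pi w$ is exactly what makes the dual-channel non-vacuum bound (eq.~\eqref{eq:dualGreatA}) close. The authors themselves flag the exact-spin-projection route as technically out of reach: around eq.~\eqref{GMT:toestablish} they remark that replacing the compactly supported $\hat\phi_\pm$ by the kernel $e^{i\theta J}$ (your $e^{iyJ}$) is precisely where they ``are unable to rigorously justify'' vacuum dominance.
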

Let's make some remarks on Theorem \ref{theorem:modulartauberianFJ}:
\begin{remark}\label{Remark3.3}
	(a) In eq.\,(\ref{eq:resultFiniteEpsilon}), the upper bound is always greater than the lower bound for the assumed range of $w$. This is because the upper bound is consistently larger than $\frac{1}{w}$, while the lower bound monotonically decreases with $\varepsilon$ in the interval $\varepsilon\in\left(0,1-\frac{1}{2w}\right)$. Notably, when $\varepsilon=0$, the lower bound is less than or equal to $\frac{1}{w}$. This observation provides a consistency check for the validity of the two-sided bounds.
	
	(b) The gap between the upper and lower bounds in eq.\,\eqref{eq:resultFiniteEpsilon} decreases as we increase $w$ (i.e.\,when a stronger DLC$_w$ limit is imposed) and decrease $\varepsilon$. In the limit $\varepsilon\to0$ and $w\to1$, both the upper and lower bounds converge to 1. 
	
	(c) In the DLC$_w$ limit, the lower bound $\varepsilon_{\rm min}(\beta_L,J)$ for $\varepsilon$ approaches zero. We note that our choice of $\varepsilon_{\rm min}(\beta_L,J)$ is not optimal with respect to the method that we use, in the sense that the coefficients of the logarithms in \eqref{epsilon:choice:DLCJ} can be further improved. But we expect that the current form of $\varepsilon_{\rm min}$ already captures its essential behavior in the double lightcone limit, namely $\varepsilon_{\rm min}=O(J^{-1/2}\log J)$.
\end{remark}
Using theorem \ref{theorem:modulartauberianFJ}, we can obtain an estimate for $\mathcal{N}_J(\varepsilon)$. Let us consider the following constraints, which are compatible with the DLC$_w$ limit \eqref{def:DLCwJ} (when $J\rightarrow\infty$):
\begin{equation}
	\begin{split}
		\beta_{L}=3\kappa^{-1} J^{1/2},\quad \varepsilon=\kappa J^{-1/2}\log{J}\quad\left(\kappa^{-1}<\frac{2\pi T(1-w^2)}{3A^{3/2}}\ {\rm fixed}\right).
	\end{split}
\end{equation}
Substituting these values into eq.\,\eqref{NAineq} and theorem \ref{theorem:modulartauberianFJ}, and choose e.g.\,$w=\frac{3}{4}$, we obtain the following result:
\begin{corollary}\label{cor:operatorcount}
	Given any fixed $\kappa\in\left(\frac{4A^{3/2}}{\pi T},\infty\right)$, we have
	\begin{equation}
		\begin{split}
			\mathcal{N}_J(\varepsilon\equiv\kappa J^{-1/2}\log J)=J^{-5/4}e^{4\pi\sqrt{AJ}+f_{\kappa}(J)},
		\end{split}
	\end{equation}
where the error term $f_{\kappa}(J)$ satisfies the bound
\begin{equation}\label{eq:errorbound}
	\begin{split}
		\abs{f_{\kappa}(J)}\leqslant 3\log(J+1)+C(\kappa),
	\end{split}
\end{equation}
with $C(\kappa)$ being a finite constant.
\end{corollary}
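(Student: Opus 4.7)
The strategy is to apply Theorem \ref{theorem:modulartauberianFJ} directly, with the specific parameter choice $\beta_L = 3\kappa^{-1}J^{1/2}$ and $\varepsilon = \kappa J^{-1/2}\log J$, to obtain sharp bounds on $\mathcal{A}_J(\beta_L,\varepsilon)$; then transfer these to $\mathcal{N}_J(\varepsilon)$ via the elementary inequality \eqref{NAineq}. The design of the two choices is cooked up so that the product $\varepsilon\beta_L = 3\log J$ is exactly the polynomial loss factor $J^{\pm 3}$ which will eventually sit inside $f_\kappa(J)$ as the $3\log(J+1)$ piece; and so that $\beta_L^{-3/2}J^{-1/2} = (\kappa/3)^{3/2}J^{-3/4}\cdot J^{-1/2} = (\kappa/3)^{3/2}J^{-5/4}$ produces the claimed $J^{-5/4}$ prefactor.

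The first step is to verify that this choice of $(\beta_L, J, \varepsilon)$ is admissible in the theorem, i.e.\ that it realizes the DLC$_w$ limit \eqref{def:DLCwJ} for the fixed value $w = \frac{3}{4}$ (as suggested in the excerpt) and that $\varepsilon \geq \varepsilon_{\rm min}(\beta_L,J)$. Clearly $\beta_L\to\infty$ and $\beta_L^{-1}\log J = \frac{\kappa\log J}{3\sqrt{J}}\to 0$. The nontrivial condition is
\begin{equation*}
\frac{2\pi T(1-w^2)}{A}\sqrt{\frac{J}{A}} - \beta_L = \left(\frac{2\pi T(1-w^2)}{A^{3/2}} - \frac{3}{\kappa}\right)\sqrt{J}\,\longrightarrow\,\infty,
\end{equation*}
which holds provided $\kappa^{-1} < \frac{2\pi T(1-w^2)}{3A^{3/2}}$. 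For $w=\frac{3}{4}$ this reads $\kappa > \frac{24A^{3/2}}{7\pi T}$, and the stated condition $\kappa > \frac{4A^{3/2}}{\pi T}$ is sufficient (being stronger). For $\varepsilon_{\rm min}$, both entries in the max in \eqref{epsilon:choice:DLCJ} evaluate to a constant times $J^{-1/2}\log J$ with the chosen $\beta_L$; one checks that $\kappa > \frac{4A^{3/2}}{\pi T}$ dominates the first entry (since $\frac{A^{3/2}}{\pi w^2 T} = \frac{16A^{3/2}}{9\pi T}$ for $w=\tfrac34$), while the second entry gives $\frac{\kappa\log J}{4\sqrt{J}} + O(J^{-1/2}\log J)$, which is also $\leq \varepsilon$. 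Hence Theorem \ref{theorem:modulartauberianFJ} applies.

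Second, feeding these parameters into \eqref{eq:resultFiniteEpsilon}, we have $\varepsilon\to 0$ as $J\to\infty$, so the upper bound tends to $\frac{1}{w}=\frac{4}{3}$ and the lower bound tends to $\frac{1}{w}\bigl(1-\frac{\tan(\pi w)}{\pi w}\bigr)^{-1} = \frac43\bigl(1+\frac{4}{3\pi}\bigr)^{-1}$, both finite positive constants. Therefore there exists a constant $C_1(\kappa)$ such that for all sufficiently large $J$,
\begin{equation*}
C_1(\kappa)^{-1}\,J^{-5/4}\,e^{4\pi\sqrt{AJ}} \;\leqslant\; \mathcal{A}_J(\beta_L,\varepsilon) \;\leqslant\; C_1(\kappa)\,J^{-5/4}\,e^{4\pi\sqrt{AJ}}.
\end{equation*}

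Third, we apply \eqref{NAineq} with the crucial observation $\varepsilon\beta_L = 3\log J$, giving $e^{\pm\varepsilon\beta_L} = J^{\pm 3}$. Combining,
\begin{equation*}
C_1(\kappa)^{-1}\,J^{-5/4-3}\,e^{4\pi\sqrt{AJ}} \;\leqslant\; \mathcal{N}_J(\varepsilon) \;\leqslant\; C_1(\kappa)\,J^{-5/4+3}\,e^{4\pi\sqrt{AJ}},
\end{equation*}
which, after taking logs and absorbing the finitely many low-$J$ cases into the constant via the shift $\log J\mapsto\log(J+1)$, yields exactly $\mathcal{N}_J(\varepsilon) = J^{-5/4}e^{4\pi\sqrt{AJ}+f_\kappa(J)}$ with $|f_\kappa(J)|\leq 3\log(J+1) + C(\kappa)$. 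There is no real obstacle in this argument, since it is a substitution and bookkeeping exercise; the only care needed is to check the $\varepsilon_{\rm min}$ condition against the stated hypothesis $\kappa > 4A^{3/2}/(\pi T)$, which as noted is arranged to be comfortably sufficient.
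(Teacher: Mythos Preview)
Your proof is correct and follows exactly the approach the paper takes: set $\beta_L=3\kappa^{-1}J^{1/2}$, $\varepsilon=\kappa J^{-1/2}\log J$, $w=\tfrac34$, verify the DLC$_w$ and $\varepsilon_{\rm min}$ conditions, feed into Theorem~\ref{theorem:modulartauberianFJ}, then use \eqref{NAineq} with $\varepsilon\beta_L=3\log J$. You have in fact supplied considerably more detail than the paper does (it only states the parameter choices and leaves the substitution to the reader); your verification of $\varepsilon\geqslant\varepsilon_{\rm min}$ for the second entry is slightly imprecise in its phrasing but the conclusion is correct.
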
 
Before going to the proof, we have three remarks.
\begin{remark}
	(1) Recall the twist accumulation point is given by $\tau=2A\equiv\frac{c-1}{12}$, corollary \ref{cor:operatorcount} tells us that at large spin $J$, the number of states that are very closed to the twist accumulation point grows exponentially as $e^{2\pi\sqrt{\frac{(c-1)}{6}J}}$, with additional slow-growth factors that are bounded by powers of $J$. This implies that the average spacing between adjacent states in this regime is approximately given by $e^{-2\pi\sqrt{\frac{(c-1)}{6}J}}$.  However,  we can not at present rule out the possibility of having all the states piling up near the end points of the interval.  Therefore the rigorous upper bound on spacing is given by the size of window i.e $J^{-1/2}\log J$.
	
	(2) In corollary \ref{cor:operatorcount}, it is crucial to note that the lower bound of $\kappa$ is proportional to $T^{-1}$. This dependence clearly indicates that our analysis will not be valid if the theory does not have a twist gap.
	
	(3) If we further assume that the theory has some critical spin $J_*$, above which there are no Virasoro primaries with twist strictly below $2A\equiv\frac{c-1}{12}$, then all the Virasoro primaries have $h$ greater than or equal to $A$ when $\bar{h}\geqslant h+J_*$. Consequently, considering the exponential term $e^{(A-h)\beta_{L}}\leqslant1$, we find that the number of Virasoro primaries with $h$ in the window $\left[A,A+\kappa J^{-1/2}\log J\right)$ cannot be smaller than $\mathcal{A}_J$. This leads to a more precise lower bound on $\mathcal{N}_J$, given by:
	\begin{equation}\label{NJ:optimallowerbound}
		\begin{split}
			\mathcal{N}_J(\varepsilon\equiv\kappa J^{-1/2}\log J)\geqslant {\rm const} (J+1)^{-5/4}e^{4\pi\sqrt{AJ}},
		\end{split}
	\end{equation}
	where the constant prefactor is strictly positive. Here the power index $-5/4$ is obtained by choosing $\beta_{L}\sim J^{1/2}$ in \eqref{eq:resultFiniteEpsilon}. 
	
	The index of $-5/4$ in $\mathcal{N}_J(\varepsilon)$ can be understood by considering the contribution from the vacuum character in the dual channel. This can be naively reproduced by only taking into account this part of the contribution. To see this, we rewrite the dual vacuum character in terms of the Laplace transform of the modular crossing kernel:
	\begin{equation}
		\begin{split}
			&\sqrt{\frac{2\pi}{\beta}}e^{\frac{4\pi^2A}{\beta}}\left(1-e^{-\frac{4\pi^2}{\beta}}\right) \\
			=&\int_A^\infty\,d h\,\sqrt{\frac{2}{h-A}}\left[\cosh(4\pi\sqrt{A(h-A)})-\cosh\left(4\pi\sqrt{(A-1)(h-A)}\right)\right]e^{-(h-A)\beta}. \\
		\end{split}
	\end{equation}
   Therefore, a naive computation of the``vacuum character" contribution to $\mathcal{N}_J(\varepsilon)$ is as follows:
   \begin{equation}
   	\begin{split}
   		\left[\mathcal{N}_J(\varepsilon)\right]_{\rm naive}=&\int_A^{A+\varepsilon}d h\int_{h+J-1}^{h+J+1}\sqrt{\frac{4}{(h-A)(\bar{h}-A)}} \\
   		&\times\left[\cosh(4\pi\sqrt{A(h-A)})-\cosh\left(4\pi\sqrt{(A-1)(h-A)}\right)\right] \\
   		&\times\left[\cosh(4\pi\sqrt{A(\bar{h}-A)})-\cosh\left(4\pi\sqrt{(A-1)(\bar{h}-A)}\right)\right] \\
   		\sim&{\rm const}\,\varepsilon^{3/2}J^{-1/2}e^{4\pi\sqrt{AJ}}\quad(\varepsilon\ll1,\ J\gg1). \\
   	\end{split}
   \end{equation}
   By choosing $\varepsilon=\kappa J^{-1/2}\log J$, we obtain the correct index of $-5/4$ in \eqref{NJ:optimallowerbound}. We expect that this is the optimal power index of $J$ for the lower bound of $\mathcal{N}_J(\varepsilon\equiv\kappa J^{-1/2}\log J)$, in the sense that the index cannot be larger. One possible approach to verify the optimality is to examine explicit examples of torus partition functions, e.g.\,the one presented in \cite{Benjamin:2020mfz}.
\end{remark}

\subsection{Sketch of the proof}
To derive the two-sided asymptotic bounds \eqref{eq:resultFiniteEpsilon} for $\mathcal{A}_J(\beta_L,\varepsilon)$ in the DLC$_w$ limit, we introduce several tricks as follows.

\begin{figure}
	\centering
	\begin{tikzpicture}
		\draw[->] (0, 0) -- (3, 0) node[right] {$h$}; 
		\draw[->] (0, 0) -- (0, 3) node[above] {$\bar{h}$}; 
		\draw[pink] (1.5,0) -- (1.5,3);
		\draw (1.5,0) node[anchor=north] {$A$};
		\draw (1.5,0) circle[radius=2pt];
		\fill (1.5,0)  circle[radius=2pt];
		\draw[blue] (0,0) -- (3,3);
		\draw[blue] (1,0) -- (3,2);
		\draw[blue] (2,0) -- (3,1);
		\draw[blue] (0,1) -- (2,3);
		\draw[blue] (0,2) -- (1,3);
		\draw [red] (1.4,1.3) rectangle (1.6,1.7);
	\end{tikzpicture}
\begin{tikzpicture}
	\draw[->] (0, 0) -- (3, 0) node[right] {$h$}; 
	\draw[->] (0, 0) -- (0, 3) node[above] {$\bar{h}$}; 
	\draw[pink] (1.5,0) -- (1.5,3);
	\draw (1.5,0) node[anchor=north] {$A$};
	\draw (1.5,0) circle[radius=2pt];
	\fill (1.5,0)  circle[radius=2pt];
	\draw[blue] (0,0) -- (3,3);
	\draw[blue] (1,0) -- (3,2);
	\draw[blue] (2,0) -- (3,1);
	\draw[blue] (0,1) -- (2,3);
	\draw[blue] (0,2) -- (1,3);
	\draw [red] (1.4,0.65) rectangle (1.6,2.35);
\end{tikzpicture}
\caption{\label{fig:window1} Illustration of the idea behind equation \eqref{AJ:prop}. The blue lines represent the allowed positions of the spectrum, constrained by $h-\bar{h}\in\mathbb{Z}$. We aim to count the spectrum around the pink line ($h=A$). We choose two windows (shown in red) with the same width in $h$ but different widths in $\bar{h}$. Due to the integer-spin constraint, the spectrum inside the two windows is the same, as long as the windows intersect with only one of the blue lines.}
\end{figure}
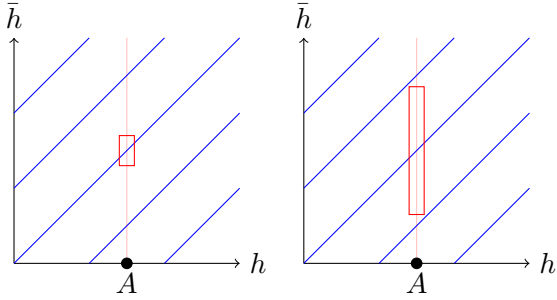
The first trick relies on the fact that only integer spins are allowed (here we only consider bosonic CFTs). This implies that the spectrum is empty for values of $h-\bar{h}$ that are non-integers. Using this property, we can express $\mathcal{A}_J$ in a different form as follows (see figure \ref{fig:window1} for a clearer visual representation):
\begin{equation}\label{AJ:prop}
	\begin{split}
		\mathcal{A}_{J}(\beta_L,\varepsilon)=\mathcal{A}(\beta_L,\bar{H},\varepsilon,\delta)\quad\forall\delta\in(\varepsilon,1-\varepsilon),
	\end{split}
\end{equation}
where
\begin{equation}\label{def:Hbar}
	\begin{split}
		\bar{H}\equiv A+J
	\end{split}
\end{equation}
and $\mathcal{A}(\beta_L,\bar{H},\varepsilon,\delta)$ is defined as
\begin{equation}\label{def:Acal}
	\begin{split}
		\mathcal{A}(\beta_L,\bar{H},\varepsilon,\delta):=&\int_{A-\varepsilon}^{A+\varepsilon}d h\int_{\bar{H}-\delta}^{\bar{H}+\delta} d\bar{h}\,\rho(h,\bar{h})e^{-(h-A)\beta_L}, \\
	\end{split}
\end{equation}
where $\rho(h,\bar{h})$ represents the spectral density of Virasoro primaries in the continuum-spectrum version of $\tilde{Z}$ given by the integral in eq.\,(\ref{Ztilde:integral}).

Now the problem is reduced to obtaining the upper and lower bounds for $\mathcal{A}(\beta_L,\bar{H},\varepsilon,\delta)$ in the DLC$_w$ limit. To achieve this, we express the DLC$_w$ limit \eqref{def:DLCwJ} in terms of $\beta_L$ and $\bar{H}$, taking into account that $\bar{H}=A+J$:
\begin{equation}\label{def:DLCdual}
	\begin{split}
		\mathrm{DLC}_w\ \mathrm{limit}:\quad&\beta_L\rightarrow\infty,\quad\bar{H}\rightarrow\infty,\quad \frac{2\pi T(1-w^2)}{A}\sqrt{\frac{\bar{H}-A}{A}}-\beta_L\rightarrow\infty, \\
		&\beta_L^{-1}\log\bar{H}\rightarrow0, \\
	\end{split}
\end{equation}
where $w$ is the same parameter introduced in \eqref{def:DLCwJ}. 

To proceed, we introduce the next trick which was used in \cite{Mukhametzhanov:2019pzy,Mukhametzhanov:2020swe}. Let us consider two functions $\phi_{\pm}(x)$ satisfying the inequality
\begin{equation}\label{phipm}
	\begin{split}
		\phi_-(x)\leqslant\theta_\delta(x)\leqslant\phi_+(x),\quad
		\theta_\delta(x):=\,\theta(x\in[-\delta,\delta]). 
	\end{split}
\end{equation}
In addition, for technical reasons, we require that $\phi_{\pm}$ are band-limited functions, meaning that their Fourier transforms $\hat{\phi}_\pm$ have compact support:
\begin{equation}\label{phipm:suppcondition}
	\begin{split}
		\phi_{\pm}(x)=&\int d t\  \hat{\phi}_\pm(t)e^{-i x t}, \\
		\mathrm{supp}(\hat{\phi}_\pm)\subset&[-\Lambda,\Lambda]\quad\mathrm{for\ some\ }\Lambda<2\pi w. \\
	\end{split}
\end{equation}
The functions satisfying these conditions exist \cite{Selberg1991}. Later, for the specific range of $w$ we are interested, we will give explicit expression for $\phi_{\pm}$, see \eqref{def:phipm} for $\Lambda=2\pi$ and \eqref{phipm:rescale} for any $\Lambda$. 

Here again, $w$ corresponds to the parameter in eq.\,(\ref{def:DLCdual}). The choice of $\Lambda<2\pi w$ will be clarified at the end of section \ref{section:modular:directnonvac}. By substituting eqs.\,(\ref{def:Acal}) and (\ref{phipm}) into the definition of $\mathcal{A}$, we obtain an upper bound for $\mathcal{A}$ given by:
\begin{equation}\label{A:upperbound}
	\begin{split}
		\mathcal{A}(\beta_L,\bar{H},\varepsilon,\delta)\leqslant&\int_{A-\varepsilon}^{A+\varepsilon}d h\int_0^\infty d\bar{h}\,\rho(h,\bar{h})e^{-(h-A)\beta_L+(\bar{H}+\delta-\bar{h})\beta_R}\theta_\delta(\bar{h}-\bar{H}) \\
		\leqslant&\int_{A-\varepsilon}^{A+\varepsilon}d h\int_0^\infty d\bar{h}\,\rho(h,\bar{h})e^{-(h-A)\beta_L+(\bar{H}+\delta-\bar{h})\beta_R}\phi_+(\bar{h}-\bar{H}) \\
		=&e^{(\bar{H}+\delta)\beta_R}\int_{A-\varepsilon}^{A+\varepsilon}d h\int_0^\infty d\bar{h}\,\rho(h,\bar{h})e^{-(h-A)\beta_L-\bar{h}\beta_R}\int d t \hat{\phi}_+(t)e^{-i(\bar{h}-\bar{H})t} \\
		=&e^{(\bar{H}+\delta-A)\beta_R}\int d t\ \tilde{Z}_{h\in(A-\varepsilon,A+\varepsilon)}(\beta_L,\beta_R+i t)\hat{\phi}_+(t)e^{i(\bar{H}-A)t}. \\
	\end{split}
\end{equation}
In the first line, we used $e^{(\bar{H}+\delta-\bar{h})\beta_R}\geqslant1$ in the support of $\theta_\delta(\bar{h}-\bar{H})$. In the second line, we bounded $\theta_\delta$ by $\phi_+$. In the third line, we rewrote $\phi_+$ as the Fourier transform of $\hat{\phi}_+$. Finally, in the last line we used the definition of $\tilde{Z}_{h\in(A-\varepsilon,A+\varepsilon)}$. 

Similarly, we have the following lower bound for $\mathcal{A}$:
\begin{equation}\label{A:lowerbound}
	\begin{split}
		\mathcal{A}(\beta_L,\bar{H},\varepsilon,\delta)\geqslant e^{(\bar{H}-\delta-A)\beta_R}\int d t\ \tilde{Z}_{h\in(A-\varepsilon,A+\varepsilon)}(\beta_L,\beta_R+i t)\hat{\phi}_-(t)e^{i(\bar{H}-A)t},
	\end{split}
\end{equation}
where $\hat{\phi}_{-}$ is the Fourier transform of $\phi_{-}$. It is worth noting that although the bounds depend on $\beta_R$, the quantity $\mathcal{A}$ itself does not. The final result, given by eq.\,\eqref{eq:resultFiniteEpsilon}, will be obtained by selecting an appropriate value for $\beta_R$. Here we choose $\beta_{R}$ to be\footnote{We will provide a technical explanation for this choice in appendix \ref{app:Ivac:asym}. Additionally, an intuitive argument will be presented in section \ref{section:modulardual}.}
\begin{equation}\label{choice:betaRHbar}
	\begin{split}
		\beta_R=2\pi\sqrt{\frac{A}{\bar{H}-A}}.
	\end{split}
\end{equation}
With this choice, the limit \eqref{def:DLCdual} can be expressed as:
\begin{equation}\label{def:DLCwlimit}
	\begin{split}
		\text{DLC$_w$ limit:}\quad &\beta_L \rightarrow \infty,
		\quad \beta_R \rightarrow 0,
		\quad \mathfrak{b}_w(\beta_L,\beta_R):=\frac{4\pi^2T(1-w^2)}{A\beta_R} -\beta_L\rightarrow \infty, \\
		&\beta_L^{-1}\log\beta_{R}\rightarrow0. \\
	\end{split}
\end{equation}
We observe that \eqref{def:DLCwlimit} is slightly stronger than \eqref{def:DLClimit}: the inclusion of the $w^2$ term in the third equation of \eqref{def:DLCwlimit} is sufficient to eliminate the logarithmic term $\log(\beta_{L})$ present in the third equation of \eqref{def:DLClimit}. The last equation in \eqref{def:DLCwlimit} is introduced for technical reasons.

From now on, we will \textbf{always} assume (\ref{def:Hbar}) and (\ref{choice:betaRHbar}) by default. Consequently, the three formulations of the DLC$_w$ limit, namely (\ref{def:DLCwJ}), (\ref{def:DLCdual}) and (\ref{def:DLCwlimit}), are equivalent. 

In the DLC$_w$ limit, the exponential prefactor $e^{(\bar{H}-A\pm\delta)\beta_{R}}$ in the upper bound \eqref{A:upperbound} and the lower bound \eqref{A:lowerbound} coincide because $\beta_{R}\delta\rightarrow0$. So we have
\begin{equation}
	\begin{split}
		I_{-, h\in(A-\varepsilon,A+\varepsilon)}(A,\bar{H};\beta_{L},\beta_{R})\stackrel{{\rm DLC}_w}{\lesssim}\frac{\mathcal{A}(\beta_{L},\bar{H},\varepsilon,\delta)}{e^{2\pi\sqrt{A(\bar{H}-A)}}}\stackrel{{\rm DLC}_w}{\lesssim}I_{+, h\in(A-\varepsilon,A+\varepsilon)}(A,\bar{H};\beta_{L},\beta_{R})
	\end{split}
\end{equation}
where $I_{\pm}$ are defined in the following way:
\begin{equation}\label{modular:toestimate}
	\begin{split}
		I_{\pm, h\in(A-\varepsilon,A+\varepsilon)}(A,\bar{H};\beta_{L},\beta_{R}):=\int d t\ \tilde{Z}_{h\in(A-\varepsilon,A+\varepsilon)}(\beta_L,\beta_R+i t)\hat{\phi}_\pm(t)e^{i(\bar{H}-A)t}.
	\end{split}
\end{equation}
So our goal reduces to deriving asymptotic behavior of the integral \eqref{modular:toestimate} in the DLC$_w$ limit. To do this, the main idea is to demonstrate that the asymptotic behavior of the integral \eqref{modular:toestimate} remains unchanged in the DLC$_w$ limit if we replace $\tilde{Z}_{h\in(A-\varepsilon,A+\varepsilon)}$ with the vacuum term in the dual channel of the full reduced partition function $\tilde{Z}$:
\begin{equation}
	\begin{split}
		\lim\limits_{{{\mathrm{DLC}_w}}}\frac{\int d t\ \tilde{Z}_{h\in(A-\varepsilon,A+\varepsilon)}(\beta_L,\beta_R+i t)\hat{\phi}_\pm(t)e^{i(\bar{H}-A)t}}{\int d t\ \sqrt{\frac{4\pi^2}{\beta_L(\beta_R+i t)}}\tilde{Z}_{\rm vac}\left(\frac{4\pi^2}{\beta_L},\frac{4\pi^2}{\beta_R+i t}\right)\hat{\phi}_\pm(t)e^{i(\bar{H}-A)t}}=1.\label{modular:idea}
	\end{split}
\end{equation}
To see this, let us consider the integral
\begin{equation}\label{Ipm:direct}
	\begin{split}
		I_\pm(A,\bar{H};\beta_{L},\beta_{R}):=\int d t\ \tilde{Z}(\beta_L,\beta_R+i t)\hat{\phi}_\pm(t)e^{i(\bar{H}-A)t}.
	\end{split}
\end{equation}
Using modular invariance, we evaluate this integral in the dual channel:
\begin{equation}\label{Ipm:dual}
	\begin{split}
		I_\pm(A,\bar{H};\beta_{L},\beta_{R})=\int d t\ \sqrt{\frac{4\pi^2}{\beta_L(\beta_R+it)}}\tilde{Z}\left(\frac{4\pi^2}{\beta_{L}},\frac{4\pi^2}{\beta_R+i t}\right)\hat{\phi}_\pm(t)e^{i(\bar{H}-A)t}.
	\end{split}
\end{equation}
Now we split $I_\pm$ in different ways in the two channels. Using equations \eqref{Ipm:direct} and \eqref{Ipm:dual}, we have:
\begin{equation}\label{modularinv:split}
	\begin{split}
		I_{\pm,\rm vac}+I_{\pm,T\leqslant h\leqslant A-\varepsilon}&+I_{\pm,A-\varepsilon<h<A+\epsilon}+I_{\pm,h\geqslant A+\epsilon}=I^{\rm dual}_{\pm,\rm vac}+I^{\rm dual}_{\pm,\rm nonvac}, \\
		({\rm direct}\ &{\rm channel})\qquad\qquad\qquad\qquad\qquad({\rm dual\ channel}) \\
	\end{split}
\end{equation}
where the integrals are defined as follows:
\begin{equation}\label{def:allIpm}
	\begin{split}
		I_{\pm,\rm vac}=&\int d t\ \tilde{Z}_{\rm vac}(\beta_L,\beta_R+i t)\hat{\phi}_\pm(t)e^{i(\bar{H}-A)t}, \\
		I_{\pm,T\leqslant h\leqslant A-\varepsilon}=&\int d t\ \tilde{Z}_{T\leqslant h\leqslant A-\varepsilon}(\beta_L,\beta_R+i t)\hat{\phi}_\pm(t)e^{i(\bar{H}-A)t}, \\
		I_{\pm,h\in(A-\varepsilon,A+\varepsilon)}=&\int d t\ \tilde{Z}_{h\in(A-\varepsilon,A+\varepsilon)}(\beta_L,\beta_R+i t)\hat{\phi}_\pm(t)e^{i(\bar{H}-A)t}, \\
		I_{\pm,h\geqslant A+\epsilon}=&\int d t\ \tilde{Z}_{h\geqslant A+\varepsilon}(\beta_L,\beta_R+i t)\hat{\phi}_\pm(t)e^{i(\bar{H}-A)t}, \\
		I^{\rm dual}_{\pm,\rm vac}=&\int d t \sqrt{\frac{4\pi^2}{\beta_L(\beta_R+i t)}}\tilde{Z}_{\rm vac}\left(\frac{4\pi^2}{\beta_L},\frac{4\pi^2}{\beta_R+i t}\right)\hat{\phi}_\pm(t)e^{i(\bar{H}-A)t}, \\
		I^{\rm dual}_{\pm,\rm nonvac}=&\int d t \sqrt{\frac{4\pi^2}{\beta_L(\beta_R+i t)}}\tilde{Z}_{h,\bar{h}\geqslant T}\left(\frac{4\pi^2}{\beta_L},\frac{4\pi^2}{\beta_R+i t}\right)\hat{\phi}_\pm(t)e^{i(\bar{H}-A)t}. \\ 
	\end{split}
\end{equation}
Here and below, $I^{\ldots}_{\ldots}$ always refers to $I^{\ldots}_{\ldots}(A,\bar{H};\beta_{L},\beta_{R})$ with the identification $\beta_{R}=2\pi\sqrt{\frac{A}{\bar{H}-A}}$.

In section \ref{section:modulardual}, we will demonstrate that in the DLC$_w$ limit, the dual channel is dominated by $I^{\rm dual}_{\pm,\rm vac}$. This is equivalent to say that
\begin{equation}\label{modular:sketch1}
	\begin{split}
		\lim\limits_{{\mathrm{DLC}_w}}\frac{I^{\rm dual}_{\pm,\rm nonvac}}{I^{\rm dual}_{\pm,\rm vac}}=0.
	\end{split}
\end{equation}
Moving on to section \ref{section:modulardirect}, we will establish that in the DLC$_w$ limit, the direct channel is dominated by $I_{\pm,h\in(A-\varepsilon,A+\varepsilon)}$. Since the vacuum term dominates the dual channel in the DLC$_w$ limit, this is equivalent to say that
\begin{equation}\label{modular:sketch2}
	\begin{split}
		\lim\limits_{\mathrm{DLC}_w}\frac{\left(I_{\pm,\rm vac}+I_{\pm,T\leqslant h\leqslant A-\varepsilon}+I_{\pm,h\geqslant A+\varepsilon}\right)}{I^{\rm dual}_{\pm,\rm vac}}=0.
	\end{split}
\end{equation}
We will establish that each term in the numerator of (\ref{modular:sketch2}) is suppressed by the denominator $I^{\rm dual}_{\pm,\rm vac}$. 

Subsequently, eq.\,(\ref{modular:idea}) follows from eqs.\,(\ref{modularinv:split}), (\ref{def:allIpm}), (\ref{modular:sketch1}) and (\ref{modular:sketch2}).  We can then evaluate the denominator of eq.\,(\ref{modular:idea}), which corresponds to $I^{\rm dual}_{\pm,\rm vac}$, using its precise expression. The result of $I^{\rm dual}_{\pm,\rm vac}$ will be given in section \ref{modular:crossvac}, and the technical details will be given in appendix \ref{app:Ivac:asym}. It provides us with the desired estimate of the asymptotic upper and lower bounds on $\mathcal{A}(\beta_L,\bar{H},\varepsilon,\delta)$ in the DLC$_w$ limit (see section \ref{section:modular:tauberian}).

In section \ref{section:epsilonwindow}, we will demonstrate that, while satisfying the aforementioned estimates, $\varepsilon$ can effectively approach zero in the DLC$_w$ limit, as long as it remains bounded from below by $\varepsilon_{\rm min}(\beta_{L},J)$ (defined in eq.\,(\ref{epsilon:choice:DLCJ})). This justification will support the final part of theorem \ref{theorem:modulartauberianFJ}.

Returning to $\mathcal{A}_J(\beta_{L},\varepsilon)$ using eq.\,(\ref{AJ:prop}), we note that the spectral density $\rho$ is positive, implying that $\mathcal{A}(\beta_L,\bar{H},\varepsilon,\delta)$ is monotonically increasing in $\delta$. To obtain optimal bounds for $\mathcal{A}_J$, we choose the smallest $\delta$ for the upper bound on $\mathcal{A}$ and the largest $\delta$ for the lower bound on $\mathcal{A}$, yielding
\begin{equation}\label{AJ:twosideinA}
	\begin{split}
		\lim\limits_{\delta\rightarrow(1-\varepsilon)^{-}}\mathcal{A}(\beta_L,\bar{H},\varepsilon,\delta)\leqslant\mathcal{A}_{J}(\beta_L,\varepsilon)\leqslant\lim\limits_{\delta\rightarrow\varepsilon^{+}}\mathcal{A}(\beta_L,\bar{H},\varepsilon,\delta).
	\end{split}
\end{equation}
These inequalities provide the two-sided bounds stated in eq.\,(\ref{eq:resultFiniteEpsilon}).

\subsection{Dual channel}\label{section:modulardual}
\subsubsection{Dual channel: vacuum}
Consider the vacuum part of $I_\pm$ in the dual channel
\begin{equation}\label{int:dualvac}
	\begin{split}
		I^{\rm dual}_{\pm,\rm vac}\equiv&\int d t \sqrt{\frac{4\pi^2}{\beta_L(\beta_R+i t)}}\tilde{Z}_{\rm vac}\left(\frac{4\pi^2}{\beta_L},\frac{4\pi^2}{\beta_R+i t}\right)\hat{\phi}_\pm(t)e^{i(\bar{H}-A)t}, \\
	\end{split}
\end{equation}
where $\tilde{Z}_{\rm vac}$ is the vacuum part of the partition function, given by
\begin{equation}
	\begin{split}
		\tilde{Z}_{\rm vac}\left(\beta,\bar{\beta}\right)=e^{A\left(\beta+\bar{\beta}\right)}\left(1-e^{-\beta}\right)\left(1-e^{-\bar{\beta}}\right).
	\end{split}
\end{equation}
By definition $I^{\rm dual}_{\pm,\rm vac}$ depends on $A$, $\bar{H}$, $\beta_L$ and $\beta_R$. We will choose a proper $\beta_R$ to optimize the asymptotic behavior of $I^{\rm dual}_{\pm,\rm vac}$ in the limit (\ref{def:DLCdual}). As mentioned in section \ref{section:modular:idea}, here we choose $\beta_R=2\pi\sqrt{\frac{A}{\bar{H}-A}}$. We leave the technical reason of this choice to appendix \ref{app:Ivac:asym}. Here we would like to give the following intuitive explanation why this is a good choice. 

For simplicity let us fix $\beta_L$ and take the limit $\beta_R\rightarrow0$. Using modular invariance, one can show that $\tilde{Z}(\beta_L,\beta_{R})$ is dominated by the vacuum term in the dual channel:
\begin{equation}
	\begin{split}
		\tilde{Z}(\beta_L,\beta_{R})=&\sqrt{\frac{4\pi^2}{\beta_L\beta_{R}}}\tilde{Z}_{vac}\left(\frac{4\pi^2}{\beta_L},\frac{4\pi^2}{\beta_R}\right)\left[1+O\left(e^{-\frac{4\pi^2 T}{\beta_{R}}}\right)\right] \\
		\sim&\frac{8\pi^3}{\beta_L^{3/2}\beta_R^{1/2}} e^{\frac{4\pi^2A}{\beta_R}}, \\
	\end{split}
\end{equation}
where $T$ is the twist gap. To reproduce this asymptotic behavior, we make a naive guess on the large-$\bar{h}$ behavior of the spectral density $\rho(h,\bar{h})$ by performing inverse Laplace transform on $\beta_{R}^{-1/2}e^{\frac{4\pi^2A}{\beta_{R}}}$:
\begin{equation}
		\begin{split}
			\rho(h,\bar{h})\stackrel{guess}{\sim} F(h)\frac{e^{4\pi\sqrt{A(\bar{h}-A)}}}{\sqrt{\bar{h}-A}} \quad(\bar{h}\rightarrow\infty).
		\end{split}
\end{equation}
This statement can actually be proven in a rigorous way using the argument in \cite{Mukhametzhanov:2019pzy}. Then we have
\begin{equation}
		\begin{split}
			\tilde{Z}(\beta_L,\beta_R)\equiv&\int d h d\bar{h}\,\rho(h,\bar{h}) e^{(A-h)\beta_L+(A-\bar{h})\beta_R} \\
			\sim&\mathcal{L}(F)(\beta_L)e^{A\beta_L}\int d\bar{h}\frac{e^{4\pi\sqrt{A(\bar{h}-A)}}}{\sqrt{\bar{h}-A}}e^{(A-\bar{h})\beta_R}\quad(\beta_R\rightarrow0),
		\end{split}
\end{equation}
where $\mathcal{L}(F)$ is the Laplace transform of $F$. Now let us focus on the $\beta_R$-related part:
\begin{equation}
		\begin{split}
			\int d\bar{h}\frac{e^{4\pi\sqrt{A(\bar{h}-A)}}}{\sqrt{\bar{h}-A}}e^{(A-\bar{h})\beta_R}=2e^{\frac{4\pi^2A}{\beta_R}}\int dx\ e^{-\beta_R\left(x-\frac{2\pi\sqrt{A}}{\beta_R}\right)^2}.
		\end{split}
\end{equation}
Here, we introduce the variable change $x=\sqrt{\bar{h}-A}$. Observing the integrand, we notice that it reaches its maximum value at $x=\frac{2\pi\sqrt{A}}{\beta_R}$, which implies $\beta_{R}=2\pi\sqrt{\frac{A}{\bar{h}-A}}$ (see figure \ref{figure:window}). As we aim to extract information about the spectrum within the window $\bar{h}\in(\bar{H}-\delta,\bar{H}+\delta)$, it seems natural to choose eq.\,(\ref{choice:betaRHbar}) as the relation between $\beta_R$ and $\bar{H}$. This completes the intuitive explanation.
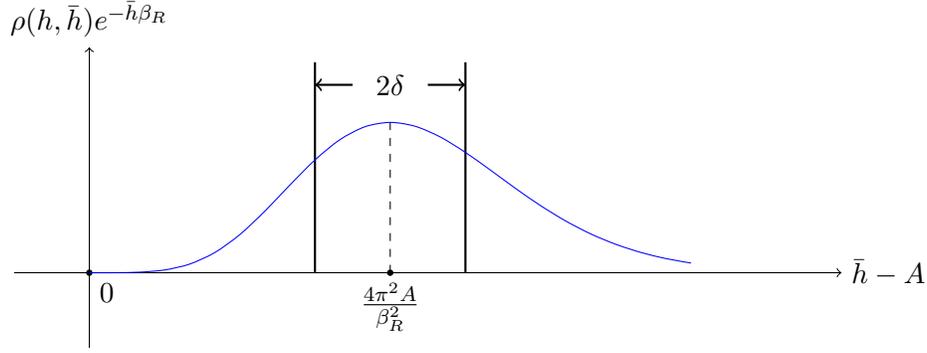
\begin{figure}
	\centering
	\begin{tikzpicture}
		\draw[->] (-1, 0) -- (10, 0) node[right] {$\bar{h}-A$}; 
		\draw[->] (0, -1) -- (0, 3) node[above] {$\rho(h,\bar{h})e^{-\bar{h}\beta_R}$}; 
		\draw[fill] (0,0) circle [radius=1pt] node[below right] {$0$}; 
		\draw[dashed] (4,0) -- (4,2); 
		\draw[thick,black] (3,0) -- (3,2.8); 
		\draw[thick,black] (5,0) -- (5,2.8); 
		\draw[thick , ->] (3.5,2.5) -- (3,2.5); 
		\draw[thick , ->] (4.5,2.5) -- (5,2.5); 
		\node (A) at (4,2.5) {$2\delta$}; 
		\draw[fill] (4,0) circle [radius=1pt] node[below] {$\frac{4\pi^2A}{\beta_R^2}$}; 
		\draw[scale=1, domain=0:8, smooth, variable=\x, blue] plot ({\x},{2*exp(-4*pow(pow(\x,1/2)-2,2))}); 
	\end{tikzpicture}
	\caption{A typical behavior of $\rho(h,\bar{h})e^{-\bar{h}\beta_{R}}$ (with $h$ fixed) at small $\beta_R$.}
	\label{figure:window}
\end{figure}

After identifying $\beta_{R}$ and $\bar{H}$ using constraint (\ref{choice:betaRHbar}), we have\footnote{By $A\sim B$ we mean that $\frac{A}{B}\rightarrow1$ in the considered limit.}
\begin{equation}
	\begin{split}
		I^{\rm dual}_{\pm,\rm vac}\sim \left(\dfrac{4\pi^2}{\beta_L}\right)^{3/2}\sqrt{\frac{\pi}{\bar{H}}}e^{2\pi\sqrt{A\bar{H}}}\hat{\phi}_\pm(0)  \label{Ipmvac:asymp:inH}
	\end{split}
\end{equation}
in the DLC$_w$ limit. We leave the detailed derivation of eq.\,(\ref{Ipmvac:asymp:inH}) to appendix \ref{app:Ivac:asym}. In order to compare with the contribution from other parts in (\ref{modularinv:split}), it is convenient to rewrite the asymptotic behavior of $I^{\rm dual}_{\pm,\rm vac}$ as
\begin{equation}
	\begin{split}
		I^{\rm dual}_{\pm,\rm vac}\stackrel{{\rm DLC}_w}{\sim} \frac{4\pi^{5/2}\beta_R}{\beta_L^{3/2}A^{1/2}}e^{\frac{4\pi^2A}{\beta_R}}\hat{\phi}_\pm(0).  \label{Ipmvac:asymp}
	\end{split}
\end{equation}

\subsubsection{Dual channel: non-vacuum}\label{section:modular:directnonvac}
Consider the non-vacuum part of $I_\pm$ in the dual channel, defined as follows:
\begin{equation}\label{def:dualnonvac}
	\begin{split}
		I^{\rm dual}_{\pm,\rm nonvac}\equiv&\int d t \sqrt{\frac{4\pi^2}{\beta_L(\beta_R+i t)}}\tilde{Z}_{h,\bar{h}\geqslant T}\left(\frac{4\pi^2}{\beta_L},\frac{4\pi^2}{\beta_R+i t}\right)\hat{\phi}_\pm(t)e^{i(\bar{H}-A)t}, \\
	\end{split}
\end{equation}
where $\tilde{Z}_{h,\bar{h}\geqslant T}$ is defined by
\begin{equation}
	\begin{split}
		\tilde{Z}_{h,\bar{h}\geqslant T}\left(\beta,\bar{\beta}\right)=\int_{T}^{\infty}d h\int_{T}^{\infty} d\bar{h}\,\rho(h,\bar{h})e^{-(h-A)\beta-(\bar{h}-A)\bar{\beta}}.
	\end{split}
\end{equation}
For technical reasons, we split $I^{\rm dual}_{\pm,\rm nonvac}$ into two parts
\begin{equation}
	\begin{split}
		I^{\rm dual}_{\pm,\rm nonvac}=I^{\rm dual}_{\pm,T\leqslant\bar{h}<A}+I^{\rm dual}_{\pm, \bar{h}\geqslant A},
	\end{split}
\end{equation}
where the subscripts denote the regimes of $(h,\bar{h})$ that contribute. We begin with the following inequality:
\begin{equation}
	\begin{split}
		\abs{I^{\rm dual}_{\pm,\rm nonvac}}\leqslant&\sqrt{\frac{4\pi^2}{\beta_L\beta_R}}\max\limits_{x}\abs{\hat{\phi}_\pm(x)}\int_{-\Lambda}^{\Lambda} d t \tilde{Z}_{h,\bar{h}\geqslant T}\left(\frac{4\pi^2}{\beta_L},\frac{4\pi^2\beta_R}{\beta_R^2+t^2}\right) \\
		=&\sqrt{\frac{4\pi^2}{\beta_L\beta_R}}\max\limits_{x}\abs{\hat{\phi}_\pm(x)} \\
		&\times\int_{-\Lambda}^{\Lambda} d t \left[\tilde{Z}_{T\leqslant \bar h<A}\left(\frac{4\pi^2}{\beta_L},\frac{4\pi^2\beta_R}{\beta_R^2+t^2}\right)+\tilde{Z}_{ \bar h\geqslant A}\left(\frac{4\pi^2}{\beta_L},\frac{4\pi^2\beta_R}{\beta_R^2+t^2}\right)\right] \\
		\leqslant&\sqrt{\frac{4\pi^2}{\beta_L\beta_R}}\max\limits_{x}\abs{\hat{\phi}_\pm(x)} \\
		&\times2\Lambda\left[\tilde{Z}_{T\leqslant \bar h<A}\left(\frac{4\pi^2}{\beta_L},\frac{4\pi^2}{\beta_R}\right)+\tilde{Z}_{\bar h\geqslant A}\left(\frac{4\pi^2}{\beta_L},\frac{4\pi^2\beta_R}{\beta_R^2+\Lambda^2}\right)\right]. \\
	\end{split}
\end{equation}
Here in the first line, we use the inequality $\abs{\sqrt{\frac{1}{\beta_R+i t}}}\leqslant\sqrt{\frac{1}{\beta_R}}$, the identity $\abs{e^z}=e^{\mathrm{Re}(z)}$, and the fact that $\rm supp(\hat{\phi}_\pm)\subset[-\Lambda,\Lambda]$. In the second line, we split $\tilde{Z}_{h,\bar{h}\geqslant T}$ into two parts. In the last line, we make use of the inequalities $e^{-(\bar{h}-A)\frac{4\pi^2\beta_R}{\beta_R^2+t^2}}\leqslant e^{-(\bar{h}-A)\frac{4\pi^2}{\beta_R}}$ for $\bar{h}<A$ and $e^{-(\bar{h}-A)\frac{4\pi^2\beta_R}{\beta_R^2+t^2}}\leqslant e^{-(\bar{h}-A)\frac{4\pi^2\beta_R}{\beta_R^2+\Lambda^2}}$ for $\bar{h}\geqslant A$ and $\abs{t}\leqslant\Lambda$. Here, we use the maximum notation $\max_{x}$ to indicate that we are taking the maximum of $\abs{\hat{\phi}_\pm(x)}$ over all $x$. 

The estimate above is for $I^{\rm dual}_{\pm,\rm nonvac}$, but it is straightforward to obtain the following individual inequalities for $I^{\rm dual}_{\pm,T\leqslant\bar{h}<A}$ and $I^{\rm dual}_{\pm, \bar{h}\geqslant A}$:
\begin{equation}\label{estimate:Inonvac}
	\begin{split}
		\abs{I^{\rm dual}_{\pm,T\leqslant\bar{h}<A}}\leqslant&2\Lambda \sqrt{\frac{4\pi^2}{\beta_L\beta_R}}\max\limits_{x}\abs{\hat{\phi}_\pm(x)} \tilde{Z}_{T\leqslant \bar h<A}\left(\frac{4\pi^2}{\beta_L},\frac{4\pi^2}{\beta_R}\right), \\
		\abs{I^{\rm dual}_{\pm,\bar h\geqslant A}}\leqslant&2\Lambda \sqrt{\frac{4\pi^2}{\beta_L\beta_R}}\max\limits_{x}\abs{\hat{\phi}_\pm(x)} \tilde{Z}_{\bar h\geqslant A}\left(\frac{4\pi^2}{\beta_L},\frac{4\pi^2\beta_{R}}{\beta_R^2+\Lambda^2}\right). \\
	\end{split}
\end{equation}
To demonstrate that $I^{\rm dual}_{\pm,T\leqslant\bar{h}<A}$ and $I^{\rm dual}_{\pm, \bar{h}\geqslant A}$ are suppressed by $I^{\rm dual}_{\pm,\rm vac}$ in the DLC$_w$ limit, we can establish some upper bounds on $\tilde{Z}_{T\leqslant \bar h<A}$ and $\tilde{Z}_{\bar h\geqslant A}$. We present a useful lemma below:
\begin{lemma}\label{lemma:Ztildebound}
	Let $\beta_0\in(0,\infty)$ be a fixed number. The partition function satisfies the following upper bound:
	\begin{equation}
		\begin{split}
			\tilde{Z}(\beta_L,\beta_R)\leqslant&\kappa(\beta_0)e^{A(\beta_L+\beta_R)}\quad(\beta_L,\beta_R\geqslant\beta_0), \\
		\end{split}
	\end{equation}
where
\begin{equation}\label{def:kappa0}
	\begin{split}
		\kappa(\beta_0)\equiv1+\frac{\tilde{Z}(\beta_0,\beta_0)}{\tilde{Z}_{\rm vac}(\beta_0,\beta_0)}.
	\end{split}
\end{equation}
\end{lemma}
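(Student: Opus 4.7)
The proof is a short direct estimate, built on the positivity of the degeneracies $n_{h,\bar h}$ and the fact that every non-vacuum term in \eqref{def:Ztilde} has strictly positive twist. My plan is to isolate the factor $e^{A(\beta_L+\beta_R)}$ common to both sides and then bound the remaining bracket by $\kappa(\beta_0)$ uniformly in the region $\beta_L,\beta_R\geqslant\beta_0$.

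Concretely, I would proceed as follows. First, using \eqref{def:Ztilde}, write
\[
\tilde Z(\beta_L,\beta_R)=e^{A(\beta_L+\beta_R)}\Bigl[(1-e^{-\beta_L})(1-e^{-\beta_R})+\!\!\sum_{h,\bar h\geqslant T}\!\!n_{h,\bar h}\,e^{-\beta_L h-\beta_R\bar h}\Bigr],
\]
so it suffices to show the bracket is $\leqslant\kappa(\beta_0)$ whenever $\beta_L,\beta_R\geqslant\beta_0$. The vacuum piece is trivially bounded by $1$. For the non-vacuum piece, I use that each summand is non-negative and that $h,\bar h\geqslant T>0$; hence the sum is monotonically non-increasing in each of $\beta_L,\beta_R$ separately. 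Replacing both by $\beta_0$ gives
\[
\sum_{h,\bar h\geqslant T}n_{h,\bar h}\,e^{-\beta_L h-\beta_R\bar h}\leqslant \sum_{h,\bar h\geqslant T}n_{h,\bar h}\,e^{-\beta_0(h+\bar h)}=e^{-2A\beta_0}\bigl[\tilde Z(\beta_0,\beta_0)-\tilde Z_{\rm vac}(\beta_0,\beta_0)\bigr],
\]
where the equality just re-assembles the $\beta_0$-partition function.

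Finally I would note the identity $e^{-2A\beta_0}\tilde Z_{\rm vac}(\beta_0,\beta_0)=(1-e^{-\beta_0})^2\in(0,1]$, which gives $e^{-2A\beta_0}\leqslant 1/\tilde Z_{\rm vac}(\beta_0,\beta_0)$, and therefore
\[
e^{-2A\beta_0}\bigl[\tilde Z(\beta_0,\beta_0)-\tilde Z_{\rm vac}(\beta_0,\beta_0)\bigr]\leqslant \frac{\tilde Z(\beta_0,\beta_0)}{\tilde Z_{\rm vac}(\beta_0,\beta_0)}-(1-e^{-\beta_0})^2.
\]
Adding the $\leqslant 1$ bound from the vacuum piece yields bracket $\leqslant 1+\tilde Z(\beta_0,\beta_0)/\tilde Z_{\rm vac}(\beta_0,\beta_0)=\kappa(\beta_0)$, which is the claim.

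There is no real obstacle here: the argument uses only (i) non-negativity of the Virasoro primary degeneracies (valid in a unitary CFT), (ii) the twist gap $T>0$ entering through $e^{-\beta h}\leqslant e^{-\beta_0 h}$ for $h\geqslant 0$, and (iii) the modular-invariance-free identity $\tilde Z_{\rm vac}(\beta_0,\beta_0)=e^{2A\beta_0}(1-e^{-\beta_0})^2$. The only place one has to be mildly careful is that the manipulation relies on pointwise monotonicity of each term, so one should make sure to apply it before combining with the vacuum contribution; doing the bound term-by-term, as above, avoids any subtlety.
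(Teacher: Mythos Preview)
Your proof is correct and follows essentially the same route as the paper's: split $\tilde Z$ into vacuum plus non-vacuum, bound the vacuum piece by $e^{A(\beta_L+\beta_R)}$, and use monotonicity in $\beta_L,\beta_R$ to control the non-vacuum piece at $\beta_0$. The only cosmetic difference is that the paper tracks the twist gap $T$ in the intermediate step (obtaining an extra factor $e^{-T(\beta_L+\beta_R-2\beta_0)}\leqslant 1$), whereas your monotonicity step really only needs $h,\bar h\geqslant 0$; both collapse to the same $\kappa(\beta_0)$ bound.
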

\begin{proof}
	For $\beta_L,\beta_R\geqslant\beta_0$ we have
	\begin{equation}
		\begin{split}
			\tilde{Z}(\beta_L,\beta_R)\leqslant&\tilde{Z}_{\rm vac}(\beta_L,\beta_R)+\tilde{Z}_{\rm nonvac}(\beta_L,\beta_R) \\
			\leqslant&\tilde{Z}_{\rm vac}(\beta_L,\beta_R)+e^{(A-T)(\beta_L+\beta_R-2\beta_0)}\tilde{Z}_{\rm nonvac}(\beta_0,\beta_0), \\
		\end{split}
	\end{equation}
where we split the partition function into the vacuum part and the non-vacuum part and used the fact that $h,\bar{h}\geqslant T$ for each term in the non-vacuum part. 

The vacuum part is bounded as follows:
\begin{equation}\label{bound:Zvac}
	\begin{split}
		\tilde{Z}_{\rm vac}(\beta_L,\beta_R)\equiv e^{A(\beta_L+\beta_R)}\left(1-e^{-\beta_L}\right)\left(1-e^{-\beta_R}\right)\leqslant e^{A(\beta_L+\beta_R)}.
	\end{split}
\end{equation}
We also have
\begin{equation}
	\begin{split}
		\tilde{Z}_{\rm nonvac}(\beta_0,\beta_0)\leqslant\frac{\tilde{Z}(\beta_0,\beta_0)}{\tilde{Z}_{\rm vac}(\beta_0,\beta_0)}\tilde{Z}_{\rm vac}(\beta_0,\beta_0)\leqslant\frac{\tilde{Z}(\beta_0,\beta_0)}{\tilde{Z}_{\rm vac}(\beta_0,\beta_0)} e^{2A\beta_0},
	\end{split}
\end{equation}
where we bounded $\tilde{Z}_{\rm nonvac}$ by the full partition function $\tilde{Z}$ and used (\ref{bound:Zvac}).

Putting everything together we get
\begin{equation}
	\begin{split}
		\tilde{Z}(\beta_L,\beta_R)\leqslant e^{A(\beta_L+\beta_R)}\left[1+e^{-T(\beta_L+\beta_R-2\beta_0)}\frac{\tilde{Z}(\beta_0,\beta_0)}{\tilde{Z}_{\rm vac}(\beta_0,\beta_0)}\right]\leqslant\kappa(\beta_0)e^{A(\beta_L+\beta_R)}
	\end{split}
\end{equation}
for $\beta_L,\beta_R\geqslant\beta_0$, where $\kappa(\beta_0)$ is given by (\ref{def:kappa0}). This completes the proof.
\end{proof}
Consider the expression $\tilde{Z}_{T\leqslant \bar {h}<A}\left(\frac{4\pi^2}{\beta_L},\frac{4\pi^2}{\beta_R}\right)$ in the regime where $\beta_R\leqslant\beta_0\leqslant\beta_L$. We have the following inequalities:
\begin{equation}\label{Ztilde:dualLessA:derivation}
	\begin{split}
		\tilde{Z}_{T\leqslant \bar h<A}\left(\frac{4\pi^2}{\beta_L},\frac{4\pi^2}{\beta_R}\right)\leqslant&e^{(A-T)\left(\frac{4\pi^2}{\beta_R}-\frac{4\pi^2}{\beta_0}\right)}\tilde{Z}_{T\leqslant \bar h<A}\left(\frac{4\pi^2}{\beta_L},\frac{4\pi^2}{\beta_0}\right) \\
		\leqslant&e^{(A-T)\left(\frac{4\pi^2}{\beta_R}-\frac{4\pi^2}{\beta_0}\right)}\sqrt{\frac{\beta_L\beta_0}{4\pi^2}}\tilde{Z}(\beta_L,\beta_0) \\
		\leqslant&\kappa(\beta_0)\sqrt{\frac{\beta_L\beta_0}{4\pi^2}}e^{(A-T)\left(\frac{4\pi^2}{\beta_R}-\frac{4\pi^2}{\beta_0}\right)+A(\beta_L+\beta_0)},\\
	\end{split}
\end{equation}
In the first line, we use the fact that $e^{(A-\bar{h})\frac{4\pi^2}{\beta_R}}\leqslant e^{(A-T)\left(\frac{4\pi^2}{\beta_R}-\frac{4\pi^2}{\beta_0}\right)}e^{(A-\bar{h})\frac{4\pi^2}{\beta_0}}$ for $h\geqslant T$ and $\beta_R\leqslant\beta_0$. In the second line, we bound $\tilde{Z}_{T\leqslant \bar h<A}$ by the full partition function and use modular invariance. Finally, in the last line, we use Lemma \ref{lemma:Ztildebound}.

Using (\ref{Ipmvac:asymp}), \eqref{Ztilde:dualLessA:derivation} and the first inequality of \eqref{estimate:Inonvac}, we obtain the following asymptotic inequality in the DLC$_w$ limit:
\begin{equation}
	\begin{split}\label{eq:dualLessA}
		\abs{\frac{I^{\rm dual}_{\pm,T\leqslant\bar{h}<A}}{I^{\rm dual}_{\pm,{\rm vac}}}}\stackrel{{\rm DLC}_w}{\lesssim}&C^{(1)}_\pm(\beta_0,A,T)\left(\frac{\beta_L}{\beta_{R}}\right)^{3/2}e^{-\frac{4\pi^2T}{\beta_R}+A\beta_L}, \\
	\end{split}
\end{equation}
where $C^{(1)}\pm(\beta_0,A,T)$ is a finite constant for fixed $\beta_0$, $A$, and $T$, given by
\begin{equation}\label{def:C1}
	\begin{split}
		C^{(1)}_\pm(\beta_0,A,T)\equiv& \frac{\kappa(\beta_0) A^{1/2}\beta_0^{1/2}\Lambda}{2\pi^{5/2}}\max\limits_{x}\abs{\frac{\hat{\phi}_\pm(x)}{\hat{\phi}_\pm(0)}}e^{-A\left((1-\frac{T}{A})\frac{4\pi^2}{\beta_0}-\beta_0\right)}. \\
	\end{split}
\end{equation}
We can apply the DLC$_w$ limit (defined in (\ref{def:DLCwlimit})) to the remaining factor in the r.h.s. of (\ref{eq:dualLessA}), obtaining:
\begin{equation}
\begin{split}\label{eq:dualLessAprime}
\left(\frac{\beta_L}{\beta_{R}}\right)^{3/2}e^{-\frac{4\pi^2T}{\beta_R}+A\beta_L}& = e^{-\frac{4\pi^2T}{\beta_R}(1-w^2)+A\beta_L}\times\left(\frac{\beta_L}{\beta_{R}}\right)^{3/2}e^{-\frac{4\pi^2w^2T}{\beta_R}}\\
&\leqslant \left[e^{-\frac{4\pi^2T}{\beta_R}(1-w^2)+A\beta_L}\right]\left[\left(4\pi^2\right)^{3/2}\beta_R^{-3}e^{-\frac{4\pi^2w^2T}{\beta_R}}\right]\quad (\beta_L\beta_R \leqslant 4\pi^2) \\
&\stackrel{{\rm DLC}_w}{\longrightarrow}0. \\
\end{split}
\end{equation}
Here we used the fact that $\beta_L\beta_R\leqslant4\pi^2$ eventually holds in the DLC$_w$ limit. As a result, we conclude that $I^{\rm dual}_{\pm,T\leqslant\bar{h}<A}$ is suppressed by $I^{\rm dual}_{\pm,{\rm vac}}$ in the DLC$_w$ limit.

Then let us consider $\tilde{Z}_{\bar h\geqslant A}\left(\frac{4\pi^2}{\beta_L},\frac{4\pi^2\beta_R}{\beta_R^2+\Lambda^2}\right)$ in the regime $\beta_L,\frac{\Lambda^2}{\beta_R}\geqslant\beta_0\ {\rm and}\ \beta_R\leqslant\Lambda$. We have the following bound:
\begin{equation}\label{Ztilde:dualGreatA:derivation}
	\begin{split}
		\tilde{Z}_{\bar h\geqslant A}\left(\frac{4\pi^2}{\beta_L},\frac{4\pi^2\beta_R}{\beta_R^2+\Lambda^2}\right)\leqslant&\sqrt{\frac{\beta_L(\beta_R^2+\Lambda^2)}{4\pi^2\beta_R}}\tilde{Z}\left(\beta_L,\beta_R+\frac{\Lambda^2}{\beta_R}\right) \\
		\leqslant&\kappa(\beta_0)\sqrt{\frac{\beta_L\Lambda^2}{2\pi^2\beta_R}}e^{A\left(\beta_L+\beta_R+\frac{\Lambda^2}{\beta_R}\right)}. \\
	\end{split}
\end{equation}
Here in the first line we bounded $\tilde{Z}_{\bar h\geqslant A}$ by the full partition function and used modular invariance, and in the second line we applied lemma \ref{lemma:Ztildebound} in the specific regime of $(\beta_L,\beta_R)$. By using the above bound and (\ref{Ipmvac:asymp}), we obtain the following estimate: 
\begin{equation}
	\begin{split}\label{eq:dualGreatA}
		\abs{\frac{I^{\rm dual}_{\pm,\bar h\geqslant A}}{I^{\rm dual}_{\pm,{\rm vac}}}}\stackrel{{\rm DLC}_w}{\lesssim}&C^{(2)}_\pm(\beta_0,A,T)\frac{\beta_L^{3/2}}{\beta^2_R} e^{-A\left(\frac{4\pi^2-\Lambda^2}{\beta_R}-\beta_L-\beta_R\right)}, \\
		C^{(2)}_\pm(\beta_0,A,T)\equiv&\sqrt{\frac{A\Lambda^3}{2\pi^5}}\kappa(\beta_0)\max\limits_{x}\abs{\frac{\hat{\phi}_\pm(x)}{\hat{\phi}_\pm(0)}}. \\
	\end{split}
\end{equation}
We note that $C^{(2)}_\pm(\beta_0,A,T)$ is a finite constant for fixed $\beta_0$, $A$ and $T$. Since we have chosen $\phi_{\pm}$ with $\Lambda<2\pi w$ (see (\ref{phipm:suppcondition})), in the regime $\beta_L\beta_R \leqslant 4\pi^2$ (which eventually holds in the DLC$_w$ limit), we can write
\begin{equation}
\frac{\beta_L^{3/2}}{\beta^2_R}e^{-A\left(\frac{4\pi^2-\Lambda^2}{\beta_R}-\beta_L\right)}\leqslant (4\pi^2)^{3/2} e^{-A\left(\frac{4\pi^2}{\beta_R}(1-w^2)-\frac{A}{T}\beta_L\right)}\times\left[  \beta_R^{-7/2}e^{-A\frac{4\pi^2}{\beta_R}\left(w^2-\frac{\Lambda^2}{4\pi^2}\right)}\right].
\end{equation}
In the DLC$_w$ limit, the first exponential factor goes to zero by (\ref{def:DLClimit}) and the second factor $[..]$ also also goes to zero because of $\Lambda<2\pi w$ (this is the reason why we made such a choice of $\Lambda$ in (\ref{phipm:suppcondition})). Therefore, we conclude that $I^{\rm dual}_{\pm,T\leqslant\bar{h}<A}$ is suppressed by $I^{\rm dual}_{\pm,\rm vac}$ in the DLC$_w$ limit.

\subsection{Direct channel}\label{section:modulardirect}
Now we consider $I_{\pm}$ in the direct-channel. According to the dual-channel results in the previous subsection, we know that $I_{\pm}$ has the asymptotic behavior
\begin{equation}\label{eq:mainasymp}
	\begin{split}
		I_\pm \sim I^{\rm dual}_{\pm,\rm vac}\sim \frac{4\pi^{5/2}\beta_R}{\beta_L^{3/2}A^{1/2}}e^{\frac{4\pi^2A}{\beta_R}}\hat{\phi}_\pm(0)
	\end{split}
\end{equation}
in the DLC$_w$ limit with the identification $\beta_R=2\pi\sqrt{\frac{A}{\bar{H}-A}}$. In this section, we would like to show that $I_\pm$ is dominated by $I_{\pm,h\in(A-\varepsilon,A+\varepsilon)}$ in the direct channel (in the same limit), i.e.
\begin{equation}
	\begin{split}
		I_{\pm,h\in(A-\varepsilon,A+\varepsilon)}\stackrel{{\rm DLC}_w}{\sim}  \frac{4\pi^{5/2}\beta_R}{\beta_L^{3/2}A^{1/2}}e^{\frac{4\pi^2A}{\beta_R}}\hat{\phi}_\pm(0).
	\end{split}
\end{equation}
We will argue this by showing that
\begin{equation}
	\begin{split}
		\lim\limits_{\mathrm{DLC}_w}\frac{I_{\pm,\rm vac}}{I^{\rm dual}_{\pm,\rm vac}}=\lim\limits_{\mathrm{DLC}_w}\frac{I_{\pm,T\leqslant h\leqslant A-\varepsilon}}{I^{\rm dual}_{\pm,\rm vac}}=\lim\limits_{\mathrm{DLC}_w}\frac{I_{\pm,h\geqslant A+\varepsilon}}{I^{\rm dual}_{\pm,\rm vac}}=0.
	\end{split}
\end{equation}

\subsubsection{Direct channel: vacuum}\label{modular:crossvac}
Let us consider the vacuum term $I_{\pm,\rm vac}$ in the direct channel of $I_\pm$:
\begin{equation}
	\begin{split}
		I_{\pm,\rm vac}\equiv&\int d t\,e^{A(\beta_L+\beta_R+i t)}\left(1-e^{-\beta_L}\right)\left(1-e^{-\beta_R-i t}\right)\,\hat{\phi}_{\pm}(t)e^{i(\bar{H}-A)t} \\
		=&e^{A(\beta_L+\beta_R)}\left(1-e^{-\beta_L}\right)\left[\phi_{\pm}(-\bar{H})-e^{-\beta_R}\phi_{\pm}(1-\bar{H})\right].
	\end{split}
\end{equation}
So $I_{\pm,\rm vac}$ has the following upper bound
\begin{equation}\label{Ipmdrect:upperbound}
	\begin{split}
		I_{\pm,\rm vac}\leqslant 2\max\limits_{x}\abs{\phi_\pm(x)} e^{A(\beta_L+\beta_R)}.
	\end{split}
\end{equation}
Compare (\ref{Ipmdrect:upperbound}) with \eqref{eq:mainasymp}, we see that the ratio $I_{\pm,\rm vac}/I_{\pm,\rm vac}^{\rm dual}$ is asymptotically bounded as follows in $\mathrm{DLC}_w$ limit :
\begin{equation}\label{eq:vac}
\begin{split}
\abs{\frac{I_{\pm,\rm vac}}{I^{\rm dual}_{\pm,\rm vac}}}\stackrel{\mathrm{DLC}_w}{\lesssim}& C^{(3)}_\pm(A)\frac{\beta_L^{3/2}}{\beta_R} e^{-A\left(\frac{4\pi^2}{\beta_R}-\beta_L-\beta_R\right)}, \\
C^{(3)}_\pm(A)\equiv&\frac{A^{1/2}}{2\pi^{5/2}}\max\limits_{x}\abs{\frac{\phi_\pm(x)}{\hat{\phi}_\pm(0)}}. \\
\end{split}
\end{equation}
$C^{(3)}_\pm(A)$ is a finite constant. The rest part of (\ref{eq:vac}) is bounded as follows
\begin{equation}
	\begin{split}
		\frac{\beta_L^{3/2}}{\beta_R} e^{-A\left(\frac{4\pi^2}{\beta_R}-\beta_L-\beta_R\right)}=&\frac{\beta_L^{3/2}}{\beta_R}e^{-\frac{4\pi^2 w^2}{\beta_R}} e^{-A\left(\frac{4\pi^2(1-w^2)}{\beta_R}-\beta_L-\beta_R\right)} \\
		\stackrel{{\rm DLC}_w}{\lesssim}&\frac{8\pi^3}{\beta_R^{5/2}}e^{-\frac{4\pi^2 w^2}{\beta_R}} \times e^{-A\left(\frac{4\pi^2T(1-w^2)}{A\beta_R}-\beta_L-\beta_R\right)} \\
		\rightarrow&0.
	\end{split}
\end{equation}
Here in the second line we used the fact that $\beta_L\beta_R\leqslant 4\pi^2$ eventually in the DLC$_w$ limit, and the third line follows from the definition of the DLC$_w$ limit (recall eq.\,(\ref{def:DLCwlimit})). Therefore, $I_{\pm,\rm vac}$ is suppressed by $I^{\rm dual}_{\pm,\rm vac}$ in the DLC$_w$ limit.

\subsubsection{Direct channel: high twist and low twist}\label{modular:crosshighlow}
Then let us consider the non-vacuum terms in the direct channel of $I_\pm$: $I_{\pm,T\leqslant h\leqslant A-\varepsilon}$ and $I_{\pm,h\geqslant A+\varepsilon}$, given by (\ref{def:allIpm}). Integrating over $t$ in (\ref{def:allIpm}) for $I_{\pm,T\leqslant h\leqslant A-\varepsilon}$ and $I_{\pm,h\geqslant A+\varepsilon}$, we get
\begin{equation}
	\begin{split}
		I_{\pm,T\leqslant h\leqslant A-\varepsilon}=&\,e^{A(\beta_L+\beta_R)}\int_{T}^{A-\varepsilon}d h\int_{T}^{\infty}d\bar{h}\,\rho(h,\bar{h})e^{-h\beta_L-\bar{h}\beta_R}\,\phi_{\pm}(h-\bar{H}), \\
		I_{\pm,h\geqslant A+\varepsilon}=&\,e^{A(\beta_L+\beta_R)}\int_{A+\varepsilon}^{\infty}d h\int_{T}^{\infty}d\bar{h}\,\rho(h,\bar{h})e^{-h\beta_L-\bar{h}\beta_R}\,\phi_{\pm}(h-\bar{H}). \\
	\end{split}
\end{equation}
Bounding $\phi_\pm(h-\bar{H})$ by its maximal value, we get
\begin{equation}\label{Ipmlowhigh:obviousbound}
	\begin{split}
		\abs{I_{\pm,T\leqslant h\leqslant A-\varepsilon}}\leqslant&\max\limits_{x}\abs{\phi_{\pm}(x)}\tilde{Z}_{T\leqslant h\leqslant A-\varepsilon}(\beta_L,\beta_R), \\
		\abs{I_{\pm,h\geqslant A+\varepsilon}}\leqslant&\max\limits_{x}\abs{\phi_{\pm}(x)}\tilde{Z}_{h\geqslant A+\varepsilon}(\beta_L,\beta_R). \\
	\end{split}
\end{equation}
Now it suffices to show that $\tilde{Z}_{T\leqslant h\leqslant A-\varepsilon}(\beta_L,\beta_R)$ and $\tilde{Z}_{h\geqslant A+\varepsilon}(\beta_L,\beta_R)$ are suppressed by $I^{\rm dual}_{\pm,\rm vac}$ in the DLC$_w$ limit. This follows from the same analysis as in \cite{Pal:2022vqc}, section 3.\footnote{Here a quick way to see the suppression is to rewrite the asymptotic behavior of $I^{\rm dual}_{\pm,\rm vac}$ as
\begin{equation}
		\begin{split}
			I^{\rm dual}_{\pm,\rm vac}\stackrel{{\rm DLC}_w}{\sim}\sqrt{\frac{\beta_{R}^3}{4\pi A}}\hat{\phi}_{\pm}(0)\times\sqrt{\frac{4\pi^2}{\beta_{L}\beta_{R}}}\tilde{Z}_{\rm vac}\left(\frac{4\pi^2}{\beta_{L}},\frac{4\pi^2}{\beta_{R}}\right).
		\end{split}
\end{equation}
The second factor is exactly the dual-channel vacuum term of the partition function $\tilde{Z}$. It is known in \cite{Pal:2022vqc} $\frac{\tilde{Z}_{T\leqslant h\leqslant A-\varepsilon}(\beta_L,\beta_R)}{\sqrt{\frac{4\pi^2}{\beta_{L}\beta_{R}}}\tilde{Z}_{\rm vac}\left(\frac{4\pi^2}{\beta_{L}},\frac{4\pi^2}{\beta_{R}}\right)}$ and $\frac{\tilde{Z}_{h\geqslant A+\varepsilon}(\beta_L,\beta_R)}{\sqrt{\frac{4\pi^2}{\beta_{L}\beta_{R}}}\tilde{Z}_{\rm vac}\left(\frac{4\pi^2}{\beta_{L}},\frac{4\pi^2}{\beta_{R}}\right)}$ decays exponentially fast in the M$^*$ limit. The DLC$_w$ limit in this paper is a stronger version of the M$^*$ limit, and it is sufficient to kill the extra slow-growing factors $\beta_{R}^{-3/2}$.
}

Let us derive an upper bound on $\tilde{Z}_{h\geqslant A+\varepsilon}(\beta_L,\beta_R)$ first. We chose some fixed $\beta_0\in(0,\infty)$ and consider the regime $\beta_R\leqslant \frac{4\pi^2}{\beta_0}\leqslant\beta_L$. We have
\begin{equation}\label{Zhigh:bound}
	\begin{split}
		\tilde{Z}_{h\geqslant A+\varepsilon}(\beta_L,\beta_R)\leqslant& e^{-\varepsilon\left(\beta_L-\frac{4\pi^2}{\beta_0}\right)}\tilde{Z}_{h\geqslant A+\varepsilon}\left(\frac{4\pi^2}{\beta_0},\beta_R\right) \\
		\leqslant&e^{-\varepsilon\left(\beta_L-\frac{4\pi^2}{\beta_0}\right)}\sqrt{\frac{\beta_0}{\beta_{R}}}\tilde{Z}\left(\beta_0,\frac{4\pi^2}{\beta_{R}}\right) \\
		\leqslant&e^{-\varepsilon\left(\beta_L-\frac{4\pi^2}{\beta_0}\right)}\sqrt{\frac{\beta_0}{\beta_{R}}}\kappa(\beta_0)e^{A\left(\beta_0+\frac{4\pi^2}{\beta_{R}}\right)}.
	\end{split}
\end{equation}
Here in the first line we used $e^{(A-h)\beta_L}\leqslant e^{-\varepsilon\left(\beta_L-\frac{4\pi^2}{\beta_0}\right)}e^{(A-h)\frac{4\pi^2}{\beta_0}}$ for $\beta_{L}\geqslant\frac{4\pi^2}{\beta_0}$ and $h\geqslant A+\varepsilon$, in the second line we bounded $\tilde{Z}_{h\geqslant A+\varepsilon}$ by the full partition function $\tilde{Z}$ and used modular invariance (\ref{modulartransformation}), and in the last line we used lemma \ref{lemma:Ztildebound} in the regime $\frac{4\pi^2}{\beta_{R}}\geqslant\beta_0$. 

By (\ref{Ipmvac:asymp}), (\ref{Ipmlowhigh:obviousbound}) and (\ref{Zhigh:bound}) we get
\begin{equation}\label{eq:high}
	\begin{split}
		\abs{\frac{I_{\pm,h\geqslant A+\varepsilon}}{I^{\rm dual}_{\pm,\rm vac}}}\stackrel{{\rm DLC}_w}{\lesssim}&C^{(4)}_\pm(A,\beta_0)\left(\frac{\beta_{L}}{\beta_{R}}\right)^{3/2} e^{-\varepsilon\left(\beta_L-\frac{4\pi^2}{\beta_0}\right)+A\beta_0}\\
		C^{(4)}_\pm(A,\beta_0)=&\kappa(\beta_0)\sqrt{\frac{A\beta_0}{16\pi^5}}\max\limits_{x}\abs{\frac{\phi_{\pm}(x)}{\hat{\phi}_\pm(0)}} \\
	\end{split}
\end{equation}
$C^{(4)}_\pm(A,\beta_0)$ is a finite constant. The rest part of (\ref{eq:high}) is bounded as follows
\begin{equation}\label{eq:highprime}
	\begin{split}
		\left(\frac{\beta_{L}}{\beta_{R}}\right)^{3/2}e^{-\varepsilon\left(\beta_L-\frac{4\pi^2}{\beta_0}\right)+A\beta_0}=\beta_{L}^{3/2}e^{-\varepsilon\beta_{L}/2+A\beta_0+\frac{4\pi^2\varepsilon}{\beta_0}}\times\beta_{R}^{-3/2}e^{-\varepsilon\beta_{L}/2}\stackrel{{\rm DLC}_w}{\longrightarrow}0.
	\end{split}
\end{equation}
Here the first factor obviously vanishes as $\beta_{L}\rightarrow\infty$, and the second factor vanishes because of last condition of DLC$_w$. 

Then let us derive an upper bound on $\tilde{Z}_{T\leqslant h\leqslant A-\varepsilon}(\beta_L,\beta_R)$. We introduce an auxiliary variable $\beta_L'$. Then we have the following upper bound on $\tilde{Z}_{T\leqslant h\leqslant A-\varepsilon}(\beta_L,\beta_R)$:
\begin{equation}\label{ineq:lowtwist}
	\begin{split}
		\tilde{Z}_{T\leqslant h\leqslant A-\varepsilon}(\beta_L,\beta_R)\leqslant e^{-\varepsilon(\beta_L'-\beta_L)}\tilde{Z}_{T\leqslant h\leqslant A-\varepsilon}(\beta_L',\beta_R)\quad(0<\beta_L\leqslant\beta_L').
	\end{split}
\end{equation} 
This bound follows from the fact that $e^{(A-h)\beta_L}\leqslant e^{-\varepsilon(\beta_L'-\beta_L)}e^{(A-h)\beta_L'}$ for $h\leqslant A-\varepsilon$ and $\beta_L\leqslant\beta_L'$. We choose
\begin{equation}\label{def:auxbeta}
	\begin{split}
		\beta_L'=\frac{4\pi^2 T(1-w^2/2)}{A\beta_R}\left(=\frac{2\pi T(1-w^2/2)}{A}\sqrt{\frac{\bar{H}-A}{A}}\right).
	\end{split}
\end{equation}
Then in the DLC$_w$ limit, we have
\begin{equation}\label{betaprimebeta}
	\begin{split}
		\beta_L'-\beta_L\geqslant\frac{4\pi^2T(1-w^2)}{A\beta_{R}}-\beta_{L}\rightarrow\infty.
	\end{split}
\end{equation}
We see that $\beta_L'\geqslant\beta_L$ eventually in the DLC$_w$ limit so eq.\,(\ref{ineq:lowtwist}) holds, and
\begin{equation}\label{ineq:Zlowtwist}
	\begin{split}
		\tilde{Z}_{T\leqslant h\leqslant A-\varepsilon}(\beta_L',\beta_R) \leqslant&\sqrt{\frac{4\pi^2}{\beta_L'\beta_R}}\tilde{Z}\left(\frac{4\pi^2}{\beta_L'},\frac{4\pi^2}{\beta_R}\right) \\
		=&\sqrt{\frac{4\pi^2}{\beta_L'\beta_R}}\left[\tilde{Z}_{\rm vac}\left(\frac{4\pi^2}{\beta_L'},\frac{4\pi^2}{\beta_R}\right)+\tilde{Z}_{h,\bar{h}\geqslant T}\left(\frac{4\pi^2}{\beta_L'},\frac{4\pi^2}{\beta_R}\right)\right] \\
		\leqslant&\sqrt{\frac{4\pi^2}{\beta_L'\beta_R}}\left[\frac{4\pi^2}{\beta_{L}'}e^{A\left(\frac{4\pi^2}{\beta_L'}+\frac{4\pi^2}{\beta_R}\right)}+\kappa(\beta_0)\sqrt{\frac{\beta_L'\beta_0}{4\pi^2}}e^{(A-T)\left(\frac{4\pi^2}{\beta_R}-\frac{4\pi^2}{\beta_0}\right)+A(\beta_L'+\beta_0)}\right] \\
		=&\left(\frac{4\pi^2}{\beta_{L}'}\right)^{3/2}\beta_{R}^{-1/2}e^{A\left(\frac{4\pi^2}{\beta_L'}+\frac{4\pi^2}{\beta_R}\right)} \\
		&\times\left[1+\kappa(\beta_0)\beta_0^{1/2}\left(\frac{\beta_L'}{4\pi^2}\right)^{3/2}e^{A\left(\beta_L'-\frac{4\pi^2}{\beta_{L}'}+\beta_0-\frac{4\pi^2}{\beta_0}\right)-T\left(\frac{4\pi^2}{\beta_R}-\frac{4\pi^2}{\beta_0}\right)}\right] \\
	\end{split}
\end{equation}
Here in the first line we bounded $\tilde{Z}_{T\leqslant h\leqslant A-\varepsilon}$ by the full partition function $\tilde{Z}$ and used modular invariance (\ref{modulartransformation}), in the second line we rewrote $\tilde{Z}$ as $\tilde{Z}_{\rm vac}+\tilde{Z}_{h,\bar{h}\geqslant T}$, in the third line we used $\tilde{Z}_{\rm vac}(\beta,\bar{\beta})\leqslant \beta e^{A(\beta+\bar{\beta})}$ and lemma \ref{lemma:Ztildebound}, and the last line is just a rewriting of the third line. The second term in $\left[\ldots\right]$ vanishes in the DLC$_w$ limit because
\begin{equation}\label{Zlowtwist:2ndterm}
	\begin{split}
		\left(\frac{\beta_L'}{4\pi^2}\right)^{3/2}e^{A\left(\beta_L'-\frac{4\pi^2}{\beta_{L}'}+\beta_0-\frac{4\pi^2}{\beta_0}\right)-T\left(\frac{4\pi^2}{\beta_R}-\frac{4\pi^2}{\beta_0}\right)}=&\left(\frac{(1-w^2/2)T}{A\beta_R}\right)^{3/2}e^{-\frac{2\pi^2w^2T}{\beta_R}-\frac{A^2\beta_{R}}{(1-w^2/2)T}} \\
		&\times e^{A\left(\beta_0-\frac{4\pi^2}{\beta_0}+\frac{4\pi^2 T}{A\beta_0}\right)} \\
		\stackrel{{\rm DLC}_w}{\longrightarrow}&0. \\
	\end{split}
\end{equation}
By (\ref{Ipmvac:asymp}), (\ref{ineq:lowtwist}), (\ref{ineq:Zlowtwist}) and (\ref{Zlowtwist:2ndterm}) we get
\begin{equation}\label{eq:low}
	\begin{split}
		\abs{\frac{I_{\pm,T\leqslant h\leqslant A-\varepsilon}}{I^{\rm dual}_{\pm,\rm vac}}}\stackrel{{\rm DLC}_w}{\lesssim}&C^{(5)}_{\pm}(A)\left(\frac{4\pi^2\beta_{L}}{\beta_{L}'\beta_{R}}\right)^{3/2}e^{-\varepsilon(\beta_L'-\beta_L)+\frac{4\pi^2A}{\beta_{L}'}}, \\
		C^{(5)}_{\pm}(A)=&\sqrt{\frac{A}{16\pi^5}}\max\limits_{x}\abs{\frac{\phi_{\pm}(x)}{\hat{\phi}_\pm(0)}}. \\
	\end{split}
\end{equation}
$C^{(4)}_\pm(A,\beta_0)$ is a finite constant. The rest part of (\ref{eq:low}) is bounded as follows
\begin{equation}\label{eq:lowprime}
	\begin{split}
		\left(\frac{4\pi^2\beta_{L}}{\beta_{L}'\beta_{R}}\right)^{3/2}e^{-\varepsilon(\beta_L'-\beta_L)+\frac{4\pi^2A}{\beta_{L}'}}=&\left(\frac{A\beta_{L}}{(1-w^2/2)T}\right)^{3/2}e^{-\frac{2\pi^2Tw^2\varepsilon}{A\beta_{R}}}\times e^{-\varepsilon\left(\frac{4\pi^2T(1-w^2)}{A\beta_{R}}-\beta_{L}\right)+\frac{A^2\beta_{R}}{(1-w^2/2)T}} \\
		\leqslant&\left(\frac{4\pi^2(1-w^2)}{(1-w^2/2)\beta_{R}}\right)^{3/2}e^{-\frac{2\pi^2Tw^2\varepsilon}{A\beta_{R}}} \quad({\rm eventually\ in\ DLC}_w)\\
		&\times e^{-\varepsilon\left(\frac{4\pi^2T(1-w^2)}{A\beta_{R}}-\beta_{L}\right)+\frac{A^2\beta_{R}}{(1-w^2/2)T}} \\
		\stackrel{{\rm DLC}_w}{\longrightarrow}&0.
	\end{split}
\end{equation}
Here the first line is just a rewriting, in the second line we used the fact that $\beta_{L}\leqslant\frac{4\pi^2T(1-w^2)}{A\beta_{R}}$ eventually in DLC$_w$ limit, and in the last line we used fact that both two factors vanishes in the DLC$_w$ limit.

\subsection{Summary of the estimates, two-sided bounds for fixed $\varepsilon$}\label{section:modular:tauberian}
\renewcommand{\arraystretch}{1.5}
\begin{table}[h]
	\centering
	\begin{tabular}{@{}ll@{}ll@{}}\toprule
		& $I_{\pm,\ldots}/I^{\rm dual}_{\pm,\rm vac}$ &\quad Eq.\\
		\midrule
		dual channel, $T\leqslant \bar h\leqslant A$ & $\left(\frac{\beta_L}{\beta_R}\right)^{3/2}e^{-\frac{4\pi^2T}{\beta_R}+A\beta_L}$  &\quad \eqref{eq:dualLessA} \\
		dual channel, $\bar h\geqslant A$ & $\frac{\beta_L^{3/2}}{\beta_R^2}e^{-A\left(\frac{4\pi^2-\Lambda^2}{\beta_R}-\beta_L\right)}$ &\quad  \eqref{eq:dualGreatA} \\
		\midrule
		direct channel, $h=0$ & $\frac{\beta_L^{3/2}}{\beta_R}e^{-A\left(\frac{4\pi^2}{\beta_R}-\beta_L\right)}$  &\quad \eqref{eq:vac} \\
		direct channel, $h\geqslant A+\varepsilon$ & $\left(\frac{\beta_L}{\beta_R}\right)^{3/2}e^{-\varepsilon\beta_L}$&\quad \eqref{eq:high} \\
		direct channel, $T\leqslant h\leqslant A-\varepsilon$ & $\beta_L^{3/2}e^{-\varepsilon\left(\frac{4\pi^2T(1-w^2/2)}{A\beta_R}-\beta_L\right)}$  &\quad \eqref{eq:low}\\
		\bottomrule
	\end{tabular}
	\caption{\label{table:modularestimate} Estimates of the suppressed contributions in the $\mathrm{DLC}_w$ limit. Here we have ignored the constant factors.}
\end{table}
We summarize our estimates on various terms in (\ref{modularinv:split}) in table \ref{table:modularestimate}. We conclude that in the $\mathrm{DLC}_w$ limit, $I_{\pm,h\in(A-\varepsilon,A+\varepsilon)}$ dominates the direct channel and $I^{\rm dual}_{\pm,\rm vac}$ dominates the dual channel. So we get $I_{\pm,h\in(A-\varepsilon,A+\varepsilon)}\stackrel{{\rm DLC}_w}{\sim}I^{\rm dual}_{\pm,\rm vac}$, which justifies eq.\,(\ref{modular:idea}). 

On the other hand, eq.\,(\ref{Ipmvac:asymp:inH}) gives the asymptotic behavior of $I^{\rm dual}_{\rm vac}$ in the $\mathrm{DLC}_w$ limit. Together with (\ref{A:upperbound}), (\ref{A:lowerbound}) and (\ref{choice:betaRHbar}), we conclude that
\begin{equation}
		\begin{split}\label{eq:resultraw}
			\hat{\phi}_{-}(0) \lesssim \frac{\mathcal{A}(\beta_L,\bar{H},\varepsilon,\delta)}{\left(\frac{4\pi^2}{\beta_L}\right)^{3/2}\sqrt{\frac{\pi}{\bar{H}}}e^{4\pi\sqrt{A\bar{H}}}} \lesssim \hat{\phi}_{+}(0)
		\end{split}
\end{equation}
in the $\mathrm{DLC}_w$ limit. Here we would like to emphasize that the above equation is valid only when $\hat{\phi}_{+}(0)\neq0$ (the lower bound is trivial when $\hat{\phi}_{-}(0)=0$).
    
Recall that we would like to derive the bounds on $\mathcal{A}_J(\varepsilon,\beta_{L})$. We use (\ref{AJ:prop}), i.e.\,$\mathcal{A}_J(\beta_L,\varepsilon)$ is exactly the same as $\mathcal{A}(\beta_{L},\bar{H},\varepsilon,\delta)$ for $\delta\in(\varepsilon,1-\varepsilon)$. This fact allows us to choose different $\delta$ for the upper and lower bounds, say $\delta_+$ and $\delta_-$ respectively.  Therefore,  we conclude from eqs.\,\eqref{AJ:prop} and \eqref{eq:resultraw} that
\begin{equation}\label{AJ:twosided1}
		\begin{split}
			\hat{\phi}_{-,\delta_{-}}(0)\lesssim \frac{\mathcal{A}_J(\beta_L,\varepsilon )}{8\pi^{7/2}\beta_L^{-3/2}J^{-1/2}e^{4\pi\sqrt{AJ}}} \lesssim \hat{\phi}_{+,\delta_+}(0).
		\end{split}
\end{equation}
in the DLC$_w$ limit, with $\delta_\pm\in(\varepsilon,1-\varepsilon)$. Here we added the extra subscript $\delta_{\pm}$ to $\phi_\pm$, which simply means $\phi_{\pm}$ in (\ref{phipm}) with $\delta=\delta_{\pm}$.

Now the question is: given fixed $\Lambda$ and $\delta_{\pm}$, what are the optimal values of $\hat{\phi}_{\pm,\delta_{\pm}}(0)$? This problem was studied in \cite{Mukhametzhanov:2020swe} for $\Lambda=2\pi$. In that case, when $\delta_{+}$ very close to 0 and $\delta_{-}$ very close to 1, the optimal functions $\phi_\pm$ are given by
\begin{equation}\label{def:phipm}
	\begin{split}
		\phi_{+,\delta_+}(x)&=\frac{16 \delta_+ ^2 \left[x \cos \left(\pi\delta_+\right) \sin \left(\pi  x\right)-\delta_+  \sin \left(\pi\delta_+\right) \cos \left(\pi x\right)\right]^2}{\left(x^2-\delta_+ ^2\right)^2 (2\pi\delta_+  +\sin (2\pi\delta_+))^2},\\
		\phi_{-,\delta_-}(x)&=\frac{ \delta_- ^2 \left(x \cos \left(\pi x\right)-\delta_-  \cot \left(\pi\delta_-  \right) \sin \left(\pi  x\right)\right)^2}{x^2 (\delta_-^2 -x^2) \left(\pi\delta_- \cot \left(\pi\delta_-\right)-1\right)^2}. \\
	\end{split}
\end{equation}
This choice of $\phi_\pm$ gives\footnote{These values are obtained from the first equation of eq.\,(86) and eq.\,(88) in \cite{Mukhametzhanov:2020swe}.  These function appeared in \cite{littmann2013quadrature} in mathematics literature.}
\begin{equation}
	\begin{split}
		\hat{\phi}_{+,\delta_+}(0)=\frac{1}{2\pi}\frac{2}{1+\frac{\sin(2\pi\delta_{+})}{2\pi\delta_{+}}},\quad\hat{\phi}_{-,\delta_-}(0)=\frac{1}{2\pi}\frac{1}{1-\frac{\tan(\pi\delta_{-})}{\pi\delta_{-}}}.
	\end{split}
\end{equation}
The case with arbitrary $\Lambda$ can easily be obtained by doing scaling:
\begin{equation}\label{phipm:scaling}
	\begin{split}
		\phi_{\pm,\delta_\pm}(x)\quad(\Lambda=2\pi)\quad\longrightarrow\quad\phi^{\Lambda}_{\pm,\delta_\pm}(x):=\phi_{\pm,\frac{\Lambda}{2\pi}\delta_\pm}\left(\frac{\Lambda x}{2\pi}\right).
	\end{split}
\end{equation}
Under scaling, $\hat{\phi}_{\pm,\delta_{\pm}}(0)$ is given by
\begin{equation}
	\begin{split}
		\hat{\phi}^\Lambda_{\pm,\delta_{\pm}}(0)=\frac{2\pi}{\Lambda}\hat{\phi}_{\pm,\frac{\Lambda}{2\pi}\delta_\pm}(0).
	\end{split}
\end{equation}
This gives us
\begin{equation}\label{hatphipm:optimalR}
	\begin{split}
		\hat{\phi}_{+,\delta_+}(0)=\frac{1}{\Lambda}\frac{2}{1+\frac{\sin(\Lambda\delta_+)}{\Lambda\delta_+}},\quad \hat{\phi}_{-,\delta_-}(0)=\frac{1}{\Lambda}\frac{1}{1-\frac{2\tan\left(\frac{\Lambda\delta_-}{2}\right)}{\Lambda\delta_-}}\,.
	\end{split}
\end{equation}
Here we neglect the superscript $\Lambda$ in the expression.

Now we insert the optimal values in (\ref{hatphipm:optimalR}) into \eqref{AJ:twosided1}. But we need to be careful because the following constraints on $\delta_{\pm}$, $w$ and $\Lambda$ should be satisfied:
\begin{itemize}
	\item According to (\ref{AJ:prop}) we need $\delta_{\pm}\in(\varepsilon,1-\varepsilon)$.
	\item According to (\ref{phipm:suppcondition}) we must have $\Lambda<2\pi w$.
	\item The choice of $\phi_{+,\delta_+}$ requires that $0<\Lambda\delta_{+}<\pi$ (see eq.\,(86) of \cite{Mukhametzhanov:2020swe}).
	\item The choice of $\phi_{-,\delta_{-}}$ requires that $\pi<\Lambda\delta_-<2\pi$ (see eq.\,(88) of \cite{Mukhametzhanov:2020swe}).
\end{itemize}	
To make the above constraints consistent we also need that
\begin{equation}
	\begin{split}
		w>\frac{\Lambda}{2\pi}>\frac{1}{2\delta_{-}}>\frac{1}{2(1-\varepsilon)}\quad\Rightarrow\quad\varepsilon<1-\frac{1}{2w}.
	\end{split}
\end{equation}
Then the condition $\varepsilon>0$ implies $w>1/2$. Under the above constraints, we choose $\delta_-$ to be arbitrarily close to $1-\varepsilon$, $\delta_+$ to be arbitrarily close to $\varepsilon$ and $\Lambda$ to be arbitrarily close to $2\pi w$. So $\mathcal{A}_J$ has the following asymptotic two-sided bounds
\begin{equation}\label{AJ:twosided2}
	\begin{split}
		\frac{1}{w}\frac{1}{1-\frac{\tan\left(\pi w(1-\varepsilon)\right)}{\pi w(1-\varepsilon)}}-\alpha_-\lesssim \frac{\mathcal{A}_J(\beta_L,\varepsilon )}{4\pi^{5/2}\beta_L^{-3/2}J^{-1/2}e^{4\pi\sqrt{AJ}}} \lesssim \frac{1}{w}\frac{2}{1+\frac{\sin(2\pi w\varepsilon)}{2\pi w\varepsilon}}+\alpha_+.
	\end{split}
\end{equation}
Here $\alpha_\pm$ are defined by
\begin{equation}
	\begin{split}
		\alpha_+:=2\pi\left(\hat{\phi}^{\Lambda}_{+,\delta_{+}}(0)-\hat{\phi}^{2\pi w}_{+,\varepsilon}(0)\right),\quad\alpha_-:=2\pi\left(\hat{\phi}^{2\pi w}_{-,1-\varepsilon}(0)-\hat{\phi}^{\Lambda}_{-,\delta_{-}}(0)\right).
	\end{split}
\end{equation}
They are positive but can be arbitrarily small in the limit $\delta_{+}\rightarrow\varepsilon$ and $\delta_{-}\rightarrow1-\varepsilon$. However $\mathcal{A}_J(\beta_{L},\varepsilon)$ does not depend on $\delta_{\pm}$ and $\Lambda$ because $\phi_{\pm}$ are just auxiliary functions for our analysis. So we arrive at \eqref{eq:resultFiniteEpsilon} for $w\in(\frac{1}{2},1)$ and $\varepsilon\in\left(0,1-\frac{1}{2w}\right)$ fixed.
\begin{remark}
There are also other choices of $\hat{\phi}_{\pm,\delta_{\pm}}$, which give non-optimal but simpler expressions of upper and lower bounds than \eqref{eq:resultFiniteEpsilon}. For example, one can choose $\phi_{\pm,\delta}$ at $\Lambda=2\pi$ to be the Beurling-Selberg functions \cite{BeurlingArne1989Tcwo, Selberg1991}, denoted by $\phi^{\rm BS}_{\pm,\delta}$.\footnote{For the explicit constructions and technical details, see \cite{Mukhametzhanov:2020swe}, eqs.\,(38) - (42) and appendix C.} The B-S functions give $\hat{\phi}^{\rm BS}_{\pm,\delta}(0)=\frac{1}{2\pi}(2\delta\pm1)$. Then after the same rescaling procedure as \eqref{phipm:scaling}, one gets
\begin{equation}
	\begin{split}
		\hat{\phi}_{\pm,\delta_\pm}=\frac{1}{2\pi}\left(2\delta_{\pm}\pm\frac{2\pi}{\Lambda}\right).
	\end{split}
\end{equation}
By taking extremal values of $\delta_{\pm}$, the upper and lower bounds in \eqref{eq:resultFiniteEpsilon} becomes $\frac{1}{w}+2\varepsilon$ and $2-\dfrac{1}{w}-2\varepsilon$. We note that in the limit $\varepsilon\rightarrow0$, the B-S function already gives the optimal upper bound, while it gives the optimal lower bound only when we also take the limit $w\rightarrow1$.
\end{remark}
So far, the statement in theorem \ref{theorem:modulartauberianFJ} has been established with for fixed $\varepsilon$. As a final step, we would like to let $\varepsilon$ also go to zero in the DLC$_w$ limit. This will be the subject of the next subsection.

\subsection{Shrinking the $(A-\varepsilon,A+\varepsilon)$ window}\label{section:epsilonwindow}
In this subsection, we would like to establish the final part of theorem \ref{theorem:modulartauberianFJ}, which allows for the vanishing of $\varepsilon$ in the DLC$_w$ limit, provided it remains larger than $\varepsilon_{\rm min}(\beta_{L},J)$ as defined in \eqref{epsilon:choice:DLCJ}.

In our analysis, the $\varepsilon$-dependence arises in three key aspects:
\begin{enumerate}
	\item In appendix \ref{app:Ivac:asym}, we used the dominated convergence theorem, which requires the condition that $\left|\phi_{\pm}\right|$, which depends on $\delta$, must be bounded by an integrable function. This condition is automatically satisfied when $\varepsilon$ is fixed since, in that case, we work with fixed functions $\phi_{\pm}$ that are integrable by themselves. However, as $\varepsilon\rightarrow 0$, the functions $\phi_{\pm}$ are no longer fixed, and it becomes crucial to ensure that the family of $\phi_{\pm}$ functions we consider remains uniformly bounded by certain integrable functions that are independent of $\delta$. This will ensure the applicability of the dominated convergence theorem in the limit as $\varepsilon\rightarrow 0$.
	
	\item The ratios $\max\limits_{x}\abs{\frac{\hat{\phi}_{\pm}(x)}{\hat{\phi}_\pm(0)}}$ (as seen in \eqref{def:C1} and \eqref{eq:dualGreatA}) and $\max\limits_{x}\abs{\frac{\phi_\pm(x)}{\hat{\phi}_\pm(0)}}$ (as seen in \eqref{eq:vac}, \eqref{eq:high}, and \eqref{eq:low}) depend on the choice of $\phi_{\pm}$. Ultimately, we selected $\delta_{+}=\varepsilon\rightarrow0$ for $\phi_{+}$ and $\delta_{-}=1-\varepsilon\rightarrow1$ for $\phi_{-}$. Therefore, similarly to point 1, we need to derive some uniform bounds on the ratios.
	
	\item The bounds on the high- and low-twist contributions in the direct channel of $I_\pm$ incorporate exponential factors that rely on $\varepsilon$, as presented in Table \ref{table:modularestimate}.
\end{enumerate}
To address the concerns raised in the first and second points, we establish upper bounds on the quantities
\begin{equation*}
	\begin{split}
		\abs{\phi_{\pm,\delta_{\pm}}(x)},\quad\max\limits_{x}\abs{\frac{\phi_{\pm,\delta_{\pm}}(x)}{\hat{\phi}_{\pm,\delta_{\pm}}(0)}}\quad{\rm and}\ \max\limits_{x}\abs{\frac{\hat{\phi}_{\pm,\delta_{\pm}}(x)}{\hat{\phi}_{\pm,\delta_{\pm}}(0)}}
	\end{split}
\end{equation*}
for the chosen $\phi_{\pm,\delta_{\pm}}$. These bounds remain uniform in $\delta_{\pm}$ within the regimes that permit the limits $\delta_{+}\rightarrow0$ and $\delta_{-}\rightarrow1$. The precise statements of these bounds are presented in lemma \ref{lemma:phiplusbounds} and lemma \ref{lemma:phiminusbounds}, while the detailed proofs can be found in appendix \ref{appendix:uniformbounds}. Consequently, the first and second concerns are effectively resolved through the establishment of these uniform bounds.

To address the third point, we examine the conditions required for the DLC$_w$ limit as $\varepsilon$ approaches 0, as shown in table \ref{table:modularestimate}. These conditions can be expressed as follows:
\begin{equation}\label{epsiloncondition}
	\begin{split}
		\beta_L^{3/2}e^{-\varepsilon\left(\frac{4\pi^2T}{A\beta_R}(1-w^2/2)-\beta_L\right)},\left(\frac{\beta_L}{\beta_R}\right)^{3/2}e^{-\varepsilon\beta_L}\rightarrow0. \\
	\end{split}
\end{equation}
I.e.\ the exponential factors must decay rapidly enough to render the power-law factors negligible.

For the first term in \eqref{epsiloncondition}, we use the DLC$_w$ condition \eqref{def:DLCwlimit} and get the following inequality:
\begin{equation}
	\begin{split}
		\beta_L^{3/2}e^{-\varepsilon\left(\frac{4\pi^2T}{A\beta_R}(1-w^2/2)-\beta_L\right)}\leqslant\left(\frac{4\pi^2T(1-w^2)}{A\beta_R}\right)^{3/2}e^{-\varepsilon\frac{2\pi^2 w^2T}{A\beta_R}}.
	\end{split}
\end{equation}
In the DLC$_w$ limit, where $\beta_R\rightarrow 0$, we need the r.h.s.\ of the above inequality to vanish. This can be achieved if we impose the following sufficient condition:
\begin{equation}\label{epsilon:conditionlow}
	\begin{split}
		\varepsilon\frac{2\pi^2 w^2T}{A\beta_R}\geqslant\left(\frac{3}{2}+\alpha\right)\log(1/\beta_{R}),
	\end{split}
\end{equation}
where $\alpha$ is an arbitrary fixed positive constant.

For the second term in \eqref{epsiloncondition}, we rewrite it as follows:
\begin{equation}
	\begin{split}
		\left(\frac{\beta_L}{\beta_R}\right)^{3/2}e^{-\varepsilon\beta_L}=e^{-\left(\varepsilon-\frac{3\log(1/\beta_{R})}{2\beta_{L}}\right)\beta_{L}+\frac{3}{2}\log\beta_{L}}.
	\end{split}
\end{equation}
In order for this term to vanish in the DLC$_w$ limit, we require the following sufficient condition:
\begin{equation}\label{epsilon:conditionhigh}
	\begin{split}
		\left(\varepsilon-\frac{3\log(1/\beta_{R})}{2\beta_{L}}\right)\beta_{L}\geqslant \left(\frac{3}{2}+\alpha\right)\log\beta_{L},
	\end{split}
\end{equation}
where $\alpha$ is an arbitrary fixed positive constant.

Now we select the same $\alpha$ for both \eqref{epsilon:conditionlow} and \eqref{epsilon:conditionhigh} for simplicity, and determine the smallest value of $\varepsilon$ that satisfies these conditions. Specifically, we have:
\begin{equation}\label{epsilon:choice:DLC}
	\begin{split}
		\varepsilon\geqslant\max\left\{\left(\frac{3}{2}+\alpha\right)\frac{A}{2\pi^2w^2T}\beta_{R}\log(1/\beta_{R}),\ 
		\frac{3\log(1/\beta_{R})}{2\beta_{L}}+\left(\frac{3}{2}+\alpha\right)\frac{\log\beta_{L}}{\beta_{L}}\right\}
	\end{split}
\end{equation}
It can be verified that, in the DLC$_w$ limit, the choice of $\varepsilon$ given by \eqref{epsilon:choice:DLC} tends to zero. The last part of theorem \ref{theorem:modulartauberianFJ} follows by choosing $\alpha=\frac{1}{2}$ in \eqref{epsilon:choice:DLC} and using the identification \eqref{choice:betaRHbar} (which implies $\beta_{R}\sim J^{-1/2}$ in the DLC$_w$ limit). This finishes the whole proof of theorem \ref{theorem:modulartauberianFJ}.

\section{Adding conserved currents}\label{sec:currents}
In our previous analysis, we assumed that the 2D CFT has a nonzero twist gap $\tau_{\rm gap}>0$, and that the vacuum state is the only twist-0 primary state. However, we believe that the argument can be straightforwardly generalized to include 2D CFTs with nontrivial twist-0 primaries, which correspond to conserved chiral currents.\footnote{We thank Nathan Benjamin for a relevant discussion on this point.}

To account for the presence of these twist-0 primaries, we modify our general ansatz for the partition function. While maintaining the assumption of a nonzero twist gap $\tau_{\rm gap}$ above the twist-0 primaries, the modified partition function ansatz becomes:
\begin{equation}\label{Zansatz:withcurrents}
	\begin{split}
		Z(\beta_L,\beta_R)=&\chi_0(\beta_L)\chi_0(\beta_R)+\sum\limits_{h,\bar{h}\geqslant\tau_{\rm gap}/2}n_{h,\bar{h}}\ \chi_h(\beta_L)\chi_{\bar{h}}(\beta_R) \\
		&+\sum\limits_{j=1}^{\infty}D(j)\left[\chi_{j}(\beta_L)\chi_0(\beta_R)+\chi_0(\beta_L)\chi_{j}(\beta_R)\right], \\
	\end{split}
\end{equation}
where $j$ represents the spin of the current, and $D(j)$ denotes the number of left/right currents with spin $j$ . We assume parity symmetry for simplicity, which implies that for each $j$, there are an equal number of left and right spin-$j$ currents.

We aim to provide separate comments on the cases of finitely many currents and infinitely many currents, and propose conjectures that generalize, or partially generalize, the results presented in section \ref{section:modularbootstrap}. Although the following arguments lack rigor, we anticipate that they can be established using a similar analysis to section \ref{section:modularbootstrap}.

So far we are not aware of any known examples of CFTs that can be used to verify the consistency of (a) the case of finitely many currents (section \ref{section:finitecurrents}) or (b) the case of infinitely many currents with a "slow growth" of the current degeneracy $D(j)$ (section \ref{section:infinitenoshift}). However, many examples exist where we can examine the last case: infinitely many currents with a "critical growth" of the current degeneracy $D(j)$ (section \ref{section:shifttwist}). We provide three examples for the last case, and the technical details can be found in appendix \ref{appendix:example}.

Nevertheless, a comprehensive investigation is still required in order to fully understand and validate these conjectures. We leave it for future work.

\subsection{Finite number of currents}\label{section:finitecurrents}
Consider the case where the 2D CFT has finitely many twist-0 primaries (chiral currents). In this scenario, the modified partition function, given by eq.\,\eqref{Zansatz:withcurrents}, reduces to:
\begin{equation}
	\begin{split}
		Z(\beta_L,\beta_R)=&\chi_0(\beta_L)\chi_0(\beta_R)+\sum\limits_{h,\bar{h}\geqslant\tau_{\rm gap}/2}n_{h,\bar{h}}\ \chi_h(\beta_L)\chi_{\bar{h}}(\beta_R) \\
		&+\sum\limits_{n=1}^{N}\left[\chi_{j_n}(\beta_L)\chi_0(\beta_R)+\chi_0(\beta_L)\chi_{j_n}(\beta_R)\right], \\
	\end{split}
\end{equation}
where $j_n\in\mathbb{Z}_+$ represent the spins of the currents, and $N$ denotes the total number of currents.

In comparison to the asymptotic behavior of the partition function without currents, the essential difference here is that the current characters, instead of the vacuum character, dominate the dual channel in the DLC$_w$ limit (\ref{def:DLCwlimit}). This change arises because in the dual channel, the $\beta_L$-dependent part of the current character approaches 1, while that of the vacuum character approaches zero in the double lightcone limit. Specifically, we have:
\begin{equation}
	\begin{split}
		{\rm vacuum:}\  e^{\frac{4\pi^2A}{\beta_{L}}}\left(1-e^{-\frac{4\pi^2}{\beta_{L}}}\right)\sim\frac{4\pi^2}{\beta_{L}}\quad vs\quad{\rm current:}\ e^{(A-j)\frac{4\pi^2}{\beta_{L}}}\sim1.
	\end{split}
\end{equation}
Consequently, in the dual channel, we find that:
\begin{equation}
	\begin{split}
		\tilde{Z}(\beta_L,\beta_R)\stackrel{{\rm DLC}_w}{\sim}&\sqrt{\frac{4\pi^2}{\beta_{L}\beta_{R}}}e^{A(\frac{4\pi^2}{\beta_{L}}+\frac{4\pi^2}{\beta_{R}})}\sum\limits_{n=1}^{N}e^{-\frac{4\pi^2j_n}{\beta_{L}}}\left(1-e^{-\frac{4\pi^2}{\beta_{R}}}\right) \\
		\sim&\frac{2\pi N}{(\beta_L\beta_R)^{1/2}}e^{\frac{4\pi^2A}{\beta_R}}. \\
	\end{split}
\end{equation}
This indicates that only the slow-growth factor (the power-growth factor) changes when finitely many currents are added to the partition function. Therefore, the argument presented in section \ref{section:twistaccum} remains valid. The point $(h=A,\bar{h}=\infty)$ continues to be an accumulation point in the spectrum of Virasoro primaries, and the same holds true with $h$ and $\bar{h}$ interchanged.

Moreover, the analysis conducted in sections \ref{section:modular:idea} - \ref{section:epsilonwindow} still applies, but results in table \ref{table:modularestimate} are slightly modified: the exponential factors remain the same, but the power-law indices in the estimates will change due to the presence of the currents.

Therefore, in this case, we expect similar results to theorem \ref{theorem:modulartauberianFJ}, with two modifications:
\begin{itemize}
	\item The arguments in the two-sided bounds \eqref{eq:resultFiniteEpsilon} are modified as follows:
	\begin{equation}
		\begin{split}
			\frac{\mathcal{A}_J(\beta_L,\varepsilon )}{4\pi^{5/2}\beta_L^{-3/2}J^{-1/2}e^{4\pi\sqrt{AJ}}}\ \longrightarrow&\ \frac{\mathcal{A}_J(\beta_L,\varepsilon )}{N\sqrt{\frac{\pi}{\beta_L J}}e^{4\pi\sqrt{AJ}}}. \\
		\end{split}
	\end{equation}
    \item The allowed lower bound on $\varepsilon$ has a similar structure, but the coefficients in front of the logarithmic terms in \eqref{epsilon:choice:DLCJ} will change.
\end{itemize}
Then for $\mathcal{N}_J(\varepsilon)$ in the large spin limit, we still have
\begin{equation}
	\begin{split}
		\mathcal{N}_J(\varepsilon\equiv\kappa J^{-1/2}\log J)\stackrel{J\rightarrow\infty}{\sim}e^{4\pi\sqrt{AJ}+f_\kappa(J)},\quad \abs{f_\kappa(J)}\leqslant C_1\log( J+1)+C_2(\kappa),
	\end{split}
\end{equation}
where the allowed $\kappa$ and the constants $C_1$ and $C_2$ will be different from corollary \ref{cor:operatorcount}.

\subsection{Infinitely many currents}\label{sec:Infinite currents}
Now let us consider a CFT with infinitely many currents. In this case, the partition function is given by \eqref{Zansatz:withcurrents} with an infinite sum over the current spin $j$. For convenience, we define
\begin{equation}\label{def:Fcurrent}
	\begin{split}
		\mathcal{F}_{\rm current}(\beta):=\sum\limits_{j=1}^{\infty}D(j)e^{-j \beta}\equiv\sum\limits_{j=1}^{\infty}D(j)q^j\quad(q=e^{-\beta}).
	\end{split}
\end{equation}
We can rewrite $\tilde{Z}$ using \eqref{Zansatz:withcurrents} as:
\begin{equation}
	\begin{split}
		\tilde{Z}(\beta_L,\beta_R)=&e^{A(\beta_{L}+\beta_{R})}\Big{[}\left(1-e^{-\beta_{L}}\right)\left(1-e^{-\beta_{R}}\right)+\sum\limits_{h,\bar{h}\geqslant\tau_{\rm gap}/2}e^{-h\beta_{L}-\bar{h}\beta_{R}} \\
		&+\mathcal{F}_{\rm current}(\beta_{L})\left(1-e^{-\beta_{R}}\right)+\left(1-e^{-\beta_{L}}\right)\mathcal{F}_{\rm current}(\beta_{R})\Big{]}. \\
	\end{split}
\end{equation}
It is important to note that the growth of $D(j)$, the number of left/right currents with spin $j$, cannot be arbitrarily fast. Here we make the following ansatz:
\begin{equation}\label{D:ansatz}
	\begin{split}
		D(j)=f(j)e^{4\pi a j^b}\quad(a>0,b\geqslant0),
	\end{split}
\end{equation}
where $f(j)$ represents the component of slow growth compared to the exponential factor.\footnote{Here, by ``slow growth" of $f(j)$ we mean that for any $\delta>0$, $f(j)e^{-\delta\,j^b}\rightarrow0$ in the limit $j\rightarrow\infty$.} Additionally, for simplicity, we assume that $f(j)$ is bounded from below by some positive constant.

The range of the allowed $a$ and $b$ is constrained by the Cardy growth \cite{Cardy:1986ie}, which states that:
\begin{equation}\label{D:cardybound}
	\begin{split}
		D(j)\lesssim g(j) e^{4\pi\sqrt{Aj}}\quad(j\rightarrow\infty),
	\end{split}
\end{equation}
where $g(j)$ is some factor of slow growth. This bound can be derived from modular invariance by considering reduced partition function $\tilde{Z}(\beta_{L},\beta_{R})$ in the limit $\beta_{L}\rightarrow\infty$ with $\beta_{R}$ fixed. It follows that $b$ should be in the range:
\begin{equation}
	\begin{split}
		b\leqslant\frac{1}{2},
	\end{split}
\end{equation}
In addition, at criticality $b=\frac{1}{2}$, the parameter $a$ should be in the range:
\begin{equation}
	\begin{split}
		a\leqslant\sqrt{A}.
	\end{split}
\end{equation}
\begin{remark}
	Using the ansatz \eqref{D:ansatz}, neglecting the slow-growing factor $f(j)$ and performing a saddle-point approximation in $j$ in \eqref{def:Fcurrent}, we find that in the limit $\beta_{L}\rightarrow\infty$, $\mathcal{F}_{\text{current}}\left(\frac{4\pi^2}{\beta_L}\right)$ exhibits the following growth behavior:
	\begin{equation}\label{Fcurrent:asymp:b}
		\begin{split}
			\mathcal{F}_{\rm current}\left(\frac{4\pi^2}{\beta_{L}}\right)\sim\sum\limits_{j=1}^{\infty}e^{-j \frac{4\pi^2}{\beta_{L}}+4\pi aj^b}\sim e^{\#\beta_{L}^{\frac{b}{1-b}}}\quad(\beta_L\rightarrow\infty).
		\end{split}
	\end{equation}
	Using \eqref{def:Fcurrent} together with the derived asymptotic behavior \eqref{Fcurrent:asymp:b}, we can also argue why $b>\frac{1}{2}$ is not allowed. Firstly, for $b\geqslant1$, the sum in \eqref{def:Fcurrent} diverges when $\beta<4\pi a$. However, such a divergence contradicts the requirement of a well-defined torus partition function. Secondly, for $b \in \left(\frac{1}{2},1\right)$, we have $\frac{b}{1-b} > 1$ in \eqref{Fcurrent:asymp:b}. Consequently, if we consider the limit $\beta_{L}\rightarrow\infty$ with $\beta_{R}$ fixed for $\tilde{Z}(\beta_{L},\beta_{R})$, the contribution solely from the left currents in the dual channel grows faster in $\beta_L$ than the total contribution in the direct channel, which is at most $e^{A\beta_{L}}$. This contradiction rules out the possibility of $b\in\left(\frac{1}{2},1\right)$.\footnote{This argument holds irrespective of the twist-gap condition $\tau_{\text{gap}}>0$ and is also valid for the case $\tau_{\text{gap}}=0$.}
\end{remark}
In the subsequent subsections, we will analyze the following three cases separately:
\begin{itemize}
	\item $0 \leqslant b < \frac{1}{2}$;
	\item $b=\frac{1}{2}$ and $0<a<\sqrt{A}$;
	\item $b = \frac{1}{2}$ and $a=\sqrt{A}$.
\end{itemize}
We will see that they exhibit different behaviors in the double lightcone limit.

\subsubsection{Case $0\leqslant b<\frac{1}{2}$}\label{section:infinitenoshift}
In the case when $0\leqslant b<\frac{1}{2}$, in the double lightcone limit, we expect the partition function $\tilde{Z}(\beta_{L},\beta_{R})$ to be dominated by the left-current characters in the dual channel:
\begin{equation}\label{Ztilde:dualasymp:withcurrent}
	\begin{split}
		\tilde{Z}(\beta_{L},\beta_{R})\sim\sqrt{\frac{4\pi^2}{\beta_{L}\beta_{R}}}\mathcal{F}_{\rm current}\left(\dfrac{4\pi^2}{\beta_{L}}\right)e^{\frac{4\pi^2}{\beta_{R}}},
	\end{split}
\end{equation}
where the asymptotic behavior of the current factor $\mathcal{F}_{\rm current}\left(\dfrac{4\pi^2}{\beta_{L}}\right)$ is given by (\ref{Fcurrent:asymp:b}) (up to a slow-growing factor). By comparing (\ref{Ztilde:dualasymp:withcurrent}) with (\ref{Ztilde:asympdual:withoutcurrent}), we note that only the growth in $\beta_{L}$ changes, while the exponential factor $e^{\frac{4\pi^2}{\beta_{R}}}$, which dominates the growth in the double lightcone limit, remains the same.

In the direct channel of $\tilde{Z}(\beta_{L},\beta_{R})$, our expectation is that the contributions from the vacuum, currents, low twists, and high twists are still subleading compared to the r.h.s. of \eqref{Ztilde:dualasymp:withcurrent}. Thus, the analysis presented in section \ref{section:twistaccum} remains applicable, indicating that $(h,\bar{h})=(A,\infty)$ continues to be an accumulation point in the spectrum of Virasoro primaries.

To estimate the quantity $\mathcal{A}_J(\beta_{L},\varepsilon)$, we can follow a similar analysis as described in sections \ref{section:modular:idea} - \ref{section:epsilonwindow}. However, due to the presence of infinite currents, we expect additional factors to appear in the estimates:
$$e^{\#\beta_{L}^{\frac{b}{1-b}}},\quad e^{\#\beta_{R}^{-\frac{b}{1-b}}},\quad e^{\#\beta_{L}^{-\frac{b}{1-b}}},\quad e^{\#\beta_{R}^{\frac{b}{1-b}}}.$$
While the first two terms may exhibit unbounded growth, it is important to note that $\frac{b}{1-b}<1$ for the range $b<\frac{1}{2}$. Hence, we anticipate that these subexponential-growing factors will have a negligible impact on the estimates, except when considering the shrinking of the $(A-\varepsilon,A+\varepsilon)$ range towards zero.

Moreover, it is necessary to establish that the contributions from (a) right-current characters in the dual channel and (b) left- and right-current characters in the direct channel also become subleading in the double lightcone limit. 

Based on the aforementioned argument, we expect the following key differences compared to table \ref{table:modularestimate}:
\begin{itemize}
	\item The dominant term is now $I^{\rm dual}_{\pm,{\rm left}}$, which corresponds to the contribution from left currents in the dual channel.
	\item The power indices in the prefactors will change.
	\item There are additional estimates on $I^{\rm dual}_{\pm,{\rm right}}$, $I_{\pm,{\rm left}}$ and $I_{\pm,{\rm right}}$. These contributions are expected to be subleading.
	\item In the estimate of the low-twist contribution, an additional factor $e^{\#\beta_{R}^{-\frac{b}{1-b}}}$ will appear in the final result. This arises from the contribution of the left currents $\mathcal{F}_{\rm currents}\left(\frac{4\pi^2}{\beta_{L}'}\right)$ in the analysis of \eqref{ineq:Zlowtwist}.
\end{itemize}
The first and second differences are already present in the case of finitely many currents. However, the third difference, which accounts for the growth of the current degeneracy $D(j)$, significantly impacts the lower bound of $\varepsilon$ allowed by theorem \ref{theorem:modulartauberianFJ}. Specifically, the decay behavior of $\varepsilon_{\rm min}$ changes from $J^{-1/2}\log J$ to $J^{-\frac{1-2b}{2(1-b)}}$, which decays more slowly as $J$ increases.

Therefore, we propose the following conjecture that generalizes theorem \ref{theorem:modulartauberianFJ} and corollary \ref{theorem:modulartauberianFJ}:
\begin{conjecture}\label{conjecture:infinitespin:noshift}
	For a torus partition function of the form (\ref{Zansatz:withcurrents}), assuming the presence of at least one chiral current, if the current density $D(j)$ satisfies the upper bound
	\begin{equation}\label{Dbound:noshift}
		\begin{split}
			D(j)\leqslant N_0e^{a j^b}
		\end{split}
	\end{equation}
where $a>0$, $0\leqslant b<\frac{1}{2}$, and $N_0<+\infty$, then the following statements hold:
\begin{enumerate}
	\item There exists a family of Virasoro primaries $\left\{(h_n,\bar{h}_n)\right\}_{n\in\mathbb{N}}$ with $h_n\rightarrow A$ and $\bar{h}_n\rightarrow\infty$.
	\item Theorem \ref{theorem:modulartauberianFJ} still holds with the following modifications:
	\begin{itemize}
		\item The argument in the two-sided bounds \eqref{eq:resultFiniteEpsilon} is modified to
		\begin{equation}
			\frac{\mathcal{A}_J(\beta_L,\varepsilon )}{4\pi^{5/2}\beta_L^{-3/2}J^{-1/2}e^{4\pi\sqrt{AJ}}}\ \longrightarrow\ \frac{\mathcal{A}_J(\beta_L,\varepsilon)}{\mathcal{F}_{\rm current}\left(\frac{4\pi^2}{\beta_{L}}\right)\sqrt{\frac{\pi}{\beta_{L}J}}e^{4\pi\sqrt{AJ}}}.
		\end{equation}
	\item The lower bound for the allowed $\varepsilon$ is modified to
	\begin{equation}\label{epsilonbound:noshift}
		\begin{split}
			\varepsilon\geqslant\max\left\{\alpha_1 J^{-\frac{1-2b}{2(1-b)}}+\alpha_2 J^{-1/2}\log{J},\ 
			\alpha_3\beta_{L}^{-1}\log{J}+\alpha_4\beta_{L}^{-1}\log\beta_{L}\right\}.
		\end{split}
	\end{equation}
    Here the constants $\alpha_i$ are finite. They depend on $A$, $T$, $w$, and the parameters in \eqref{Dbound:noshift} (i.e.\,$N_0$, $a$, and $b$).
	\end{itemize}
    \item Corollary \ref{cor:operatorcount} still holds with the following modifications:
    \begin{itemize}
    	\item For $\mathcal{N}_J(\varepsilon)$, the choice of $\varepsilon$ is modified to $\varepsilon=\kappa J^{-\frac{1-2b}{2(1-b)}}$, where $\kappa$ has a finite, positive lower bound that depends on $A$, $T$, $N_0$, $a$ and $b$.
    	\item We have
    	\begin{equation}\label{NJ:result:noshift}
    		\begin{split}
    			\mathcal{N}_J\left(\varepsilon\equiv\kappa J^{-\frac{1-2b}{2(1-b)}}\right)=J^{-3/4}e^{4\pi\sqrt{AJ}+f_\kappa(J)},
    		\end{split}
    	\end{equation}
        where the error term $f\kappa(J)$ satisfies the bound
        \begin{equation}\label{error:result:noshift}
        	\begin{split}
        		f_\kappa(J)\leqslant C(\kappa,A,T,a,b)(J+1)^{\frac{b}{2(1-b)}},
        	\end{split}
        \end{equation}
       and the constant $C$ is finite.
    \item When $b=0$, an additional $\log J$ factor should be included in the choice of $\varepsilon$ and the bound on the error term $f_\kappa(J)$.
    \end{itemize}
\end{enumerate}
\end{conjecture}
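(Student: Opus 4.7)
The plan is to retrace the proof of theorem \ref{theorem:modulartauberianFJ} given in sections \ref{section:modular:idea}--\ref{section:epsilonwindow}, replacing the dual-channel vacuum term $I^{\rm dual}_{\pm,\rm vac}$ by the dual-channel left-current contribution
\begin{equation*}
I^{\rm dual}_{\pm,\rm left} := \int dt\, \sqrt{\tfrac{4\pi^2}{\beta_L(\beta_R+it)}}\, e^{A\left(\frac{4\pi^2}{\beta_L}+\frac{4\pi^2}{\beta_R+it}\right)}\, \mathcal{F}_{\rm current}\!\left(\tfrac{4\pi^2}{\beta_L}\right)\left(1 - e^{-\frac{4\pi^2}{\beta_R+it}}\right)\hat{\phi}_\pm(t) e^{i(\bar{H}-A)t}.
\end{equation*}
The crucial observation is that $\mathcal{F}_{\rm current}(4\pi^2/\beta_L)$ is independent of the integration variable $t$, so it factors out of the $t$-integral and the remaining saddle-point analysis is exactly the one already performed in appendix \ref{app:Ivac:asym}. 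With the identification $\beta_R = 2\pi\sqrt{A/(\bar H - A)}$, this yields the modified normalizing asymptotic
\begin{equation*}
I^{\rm dual}_{\pm,\rm left} \stackrel{\mathrm{DLC}_w}{\sim} \tfrac{\pi^{1/2}\beta_R}{\beta_L^{1/2} A^{1/2}}\, e^{\frac{4\pi^2 A}{\beta_R}}\, \mathcal{F}_{\rm current}\!\left(\tfrac{4\pi^2}{\beta_L}\right)\hat{\phi}_\pm(0),
\end{equation*}
which, upon applying the Selberg-type bracketing of section \ref{section:modular:tauberian}, explains the modified normalization $\mathcal{F}_{\rm current}(4\pi^2/\beta_L)\sqrt{\pi/(\beta_L J)}\,e^{4\pi\sqrt{AJ}}$ stated in the conjecture.

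Next I would redo each suppression estimate in table \ref{table:modularestimate}. The dual right-current piece contains $\mathcal{F}_{\rm current}(4\pi^2/(\beta_R+it))$, which is bounded by $N_0 e^{-4\pi^2/(\beta_R+|t|)}$ on $|t|\leqslant\Lambda$ and is therefore exponentially suppressed since $4\pi^2/\beta_R \to \infty$. The dual non-vacuum non-current part is absorbed by lemma \ref{lemma:Ztildebound} applied to the full $\tilde Z$, so the constant $\kappa(\beta_0)$ already accounts for currents. In the direct channel, the vacuum, the left and right current characters, and the high-twist contribution $I_{\pm,h\geqslant A+\varepsilon}$ are each suppressed by exponential factors that defeat the subexponential growth $e^{O(\beta_L^{b/(1-b)})}$ of $\mathcal{F}_{\rm current}(4\pi^2/\beta_L)$ coming from the new normalization, which is allowed because $b/(1-b)<1$ is equivalent to $b<1/2$. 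The main obstacle, and where the new lower bound \eqref{epsilonbound:noshift} on $\varepsilon$ arises, is the low-twist estimate: repeating the argument around \eqref{ineq:Zlowtwist} at the auxiliary point $\beta_L'$ of \eqref{def:auxbeta} now produces an additional multiplicative factor
\begin{equation*}
\mathcal{F}_{\rm current}\!\bigl(\tfrac{4\pi^2}{\beta_L'}\bigr) \;\leqslant\; N_0 \, e^{\,C\,\beta_R^{-b/(1-b)}}
\end{equation*}
from the left currents at the modular-dual point, and the suppression factor $e^{-\varepsilon(\beta_L'-\beta_L)}\sim e^{-\varepsilon/\beta_R}$ must dominate it. This forces $\varepsilon \gtrsim \beta_R^{(1-2b)/(1-b)} \sim J^{-(1-2b)/(2(1-b))}$, matching the leading term in \eqref{epsilonbound:noshift}; the subleading $J^{-1/2}\log J$ and $\beta_L^{-1}\log\beta_L$ pieces come from the remaining two conditions in \eqref{epsilon:choice:DLC} exactly as before.

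Parts 1 and 3 of the conjecture then follow almost immediately from part 2. The twist accumulation statement is obtained as in section \ref{section:twistaccum} applied to the new normalization, using that $\mathcal{F}_{\rm current}(4\pi^2/\beta_L) \geqslant e^{-4\pi^2 j_0/\beta_L} \to 1$ whenever at least one current of spin $j_0$ is present, so the lower bound is strictly positive. The modified form of corollary \ref{cor:operatorcount} is obtained by choosing $\beta_L \sim J^{1/2}$: the error $f_\kappa(J)$ inherits an additive term $\log\mathcal{F}_{\rm current}(4\pi^2 J^{-1/2}) \lesssim a\,J^{b/2}$ from the new normalization, explaining the exponent $b/(2(1-b))$ in \eqref{error:result:noshift} after tracking the $\beta_L^{-3/4}$ prefactor. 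The genuinely delicate technical points I expect to require care are: (i) establishing a matching \emph{lower} bound on $\mathcal{F}_{\rm current}(4\pi^2/\beta_L)$ strong enough to make the bracketing two-sided under the upper bound assumption \eqref{Dbound:noshift} alone; and (ii) verifying that the uniform estimates on $\phi_{\pm,\delta_\pm}$ of appendix \ref{appendix:uniformbounds} still provide enough control when $\delta_\pm\to\varepsilon$ decays at the slower rate $J^{-(1-2b)/(2(1-b))}$ rather than $J^{-1/2}\log J$, so that the dominated-convergence steps in appendix \ref{app:Ivac:asym} remain valid in the $\varepsilon\to0$ limit.
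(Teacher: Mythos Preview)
The statement you are addressing is explicitly a \emph{conjecture} in the paper, and the paper does not prove it: immediately after stating it, the authors write ``The proof of conjecture \ref{conjecture:infinitespin:noshift} is left for future work.'' What the paper does supply is a heuristic outline in section \ref{section:infinitenoshift}, and your proposal follows that outline essentially step for step: replace the dual-channel vacuum by the dual-channel left-current contribution $I^{\rm dual}_{\pm,\rm left}$, observe that $\mathcal{F}_{\rm current}(4\pi^2/\beta_L)$ factors out of the $t$-integral, redo the suppression estimates of table \ref{table:modularestimate} with the new subexponential factors $e^{\#\beta^{\,b/(1-b)}}$, and trace the modified $\varepsilon$ lower bound to the extra $\mathcal{F}_{\rm current}(4\pi^2/\beta_L')$ appearing in the low-twist estimate \eqref{ineq:Zlowtwist}. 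Your identification of the exponent $J^{-(1-2b)/(2(1-b))}$ from balancing $e^{-\varepsilon/\beta_R}$ against $e^{C\beta_R^{-b/(1-b)}}$ is exactly the mechanism the paper points to.

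One technical slip: your bound on the dual right-current piece, ``$\mathcal{F}_{\rm current}(4\pi^2/(\beta_R+it))$ is bounded by $N_0 e^{-4\pi^2/(\beta_R+|t|)}$,'' is not correct as written. The real part of $4\pi^2/(\beta_R+it)$ is $4\pi^2\beta_R/(\beta_R^2+t^2)$, which is \emph{small} for $t=O(1)$, so $\mathcal{F}_{\rm current}$ there can be as large as $e^{\#\beta_R^{-b/(1-b)}}$. The correct reason this term is suppressed is that the accompanying left factor $(1-e^{-4\pi^2/\beta_L})\sim 4\pi^2/\beta_L$ carries no $e^{4\pi^2 A/\beta_R}$, so the ratio to $I^{\rm dual}_{\pm,\rm left}$ is of order $e^{-4\pi^2 A/\beta_R + O(\beta_R^{-b/(1-b)})}$, which vanishes precisely because $b<\tfrac12$. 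Your concern (i) is less severe than you suggest: the normalization uses the \emph{actual} $\mathcal{F}_{\rm current}$, and the assumption of at least one current gives $\mathcal{F}_{\rm current}(4\pi^2/\beta_L)\geqslant e^{-4\pi^2 j_0/\beta_L}\to 1$, which suffices for the two-sided bracketing. Your concern (ii) is already handled by lemmas \ref{lemma:phiplusbounds}--\ref{lemma:phiminusbounds}, whose bounds are uniform on fixed $\delta$-intervals and do not care about the rate at which $\delta$ approaches its endpoints.
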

\begin{remark}
This conjecture also encompasses the case of finite currents ($b=0$), where we have
	\begin{equation}
		\begin{split}
			 \mathcal{F}_{\rm currents}\left(\frac{4\pi^2}{\beta_{L}}\right)\stackrel{\beta_{L}\rightarrow\infty}{\longrightarrow}\mathcal{F}_{\rm currents}(0)={\rm number\ of\ left\ currents}.
		\end{split}
	\end{equation}
\end{remark}
The proof of conjecture \ref{conjecture:infinitespin:noshift} is left for future work.

\subsubsection{Case $b=\frac{1}{2}$ preview: the shifted twist accumulation point}\label{section:shifttwist}
Now we turn our attention to the critical case: $b=\frac{1}{2}$. In this case, the ansatz \eqref{D:ansatz} of $D(j)$ becomes:
\begin{equation}\label{Nj:ansatz:shift}
	\begin{split}
		D(j)=f(j)e^{4\pi a\sqrt{j}},
	\end{split}
\end{equation}
where $a$ is a strictly positive parameter. We still assume that $f(j)$ grows slowly compared to any exponential growth of the form $e^{\#\sqrt{j}}$. It is worth noting that there exist many known examples of CFTs that fall into this category (see sections \ref{section:twistshift1} and \ref{section:twistshift2} below, and appendix \ref{appendix:example} for more details).
\begin{remark}
	Before proceeding with the discussion, it is important to note that the ansatz \eqref{Nj:ansatz:shift} does not encompass all possible cases that remain from conjecture \ref{conjecture:infinitespin:noshift}. Recall that the conjecture covers situations where the growth of the spin degeneracy $D(j)$ is bounded by $e^{a\,j^b}$ for some $b<\frac{1}{2}$. However, there may be other scenarios that fall outside the scope of \eqref{Nj:ansatz:shift}. For instance, there could exist CFTs with $D(j)\sim e^{\sqrt{j}/\log j}$. Such cases cannot be accommodated within the framework of \eqref{Nj:ansatz:shift} with a positive $a$. At present, we do not have a systematic approach to address the most general situation, and we leave it for future work.
\end{remark}
For CFTs with a spin degeneracy $D(j)$ described by the form \eqref{Nj:ansatz:shift}, an intriguing phenomenon occurs: the twist accumulation point is shifted by $a^2$:
\begin{equation}\label{h:shift}
	\begin{split}
		h=A-a^2,\quad\bar{h}=\infty.
	\end{split}
\end{equation}
This shift arises because in the double lightcone limit, there is an additional factor $e^{a^2\beta_{L}}$ in the dual channel of $\tilde{Z}(\beta_{L},\beta_{R})$, which corresponds to the cumulative contribution of current operators:
\begin{equation}
	\begin{split}
		\mathcal{F}_{\rm current}\left(\frac{4\pi^2}{\beta_{L}}\right)=&\sum\limits_{j=1}^{\infty}D(j)e^{-\frac{4\pi^2 j}{\beta_{L}}} \\
		\stackrel{\beta_{L}\rightarrow\infty}{\sim}&\int_0^{\infty}d j\ e^{4\pi a\sqrt{j}-\frac{4\pi^2j}{\beta_{L}}} \\
		\sim&e^{a^2\beta_{L}}. \\
	\end{split}
\end{equation}
The estimate above is not precise as we neglected the slow-growing factor $f(j)$. Taking into account $f(j)$, we expect $\mathcal{F}_{\rm current}$ to have the following asymptotic behavior:
\begin{equation}\label{Fcurrent:asymp:withshift}
	\begin{split}
		\mathcal{F}_{\rm current}\left(\beta\right)\stackrel{\beta\rightarrow0}{\sim}e^{\frac{4\pi^2a^2}{\beta}+o\left(\frac{1}{\beta}\right)}.
	\end{split}
\end{equation}
In the double lightcone limit, the contribution from left currents dominates the dual channel of $\tilde{Z}(\beta_{L},\beta_{R})$, leading to:
\begin{equation}
	\begin{split}
		\tilde{Z}(\beta_{L},\beta_{R})\stackrel{{\rm DLC}_w}{\sim}\sqrt{\frac{4\pi^2}{\beta_{L}\beta_{R}}}e^{a^2\beta_{L}+\frac{4\pi^2A}{\beta_{R}}+o(\beta_{L})}.
	\end{split}
\end{equation}
Because of the presence of the extra factor $e^{a^2\beta_{L}}$, to make the contribution from the left currents dominant, the double lightcone limit is not necessarily as strong as \eqref{def:DLClimit}. Here we modify the DLC$_w$ limit \eqref{def:DLCwlimit} to:
\begin{equation}\label{redef:DLCw}
	\begin{split}
		{\rm DLC}_w\ {\rm limit}:\quad&\beta_{L}\rightarrow\infty,\quad\beta_{R}\rightarrow0,\quad\frac{4\pi^2T(1-w^2)}{(A-a^2)\beta_{R}}-\beta_{L}\rightarrow\infty, \\
		&\beta_{L}^{-1}\log\beta_{R}\rightarrow0.
	\end{split}
\end{equation}
Then to match the correct exponential growth in $\beta_{L}$ in the direct channel of $\tilde{Z}(\beta_{L},\beta_{R})$, we need contribution from operators with
\begin{equation}
	\begin{split}
		e^{(A-h)\beta_{L}}\sim e^{a^2\beta_{L}}\quad(\beta_{L}\rightarrow\infty).
	\end{split}
\end{equation}
This leads to a rough but quick guess of the position of the twist accumulation point, given by \eqref{h:shift}. We anticipate that by employing a similar technique as in \cite{Pal:2022vqc} and assuming that $\mathcal{F}_{\rm current}(\beta)$ exhibits asymptotic behavior \eqref{Fcurrent:asymp:withshift}, it may be possible to rigorously prove the existence of the twist accumulation point at $(h=A-a^2,\bar{h}=\infty)$.

Based on \eqref{h:shift} and the unitarity constraint ($h\geqslant0$), we find that $a$ cannot exceed $\sqrt{A}$. This is consistent with the "Cardy growth" argument presented around eq.\,\eqref{D:cardybound} and gives rise to two possibilities: (1) $0<a<\sqrt{A}$ and (2) $a=\sqrt{A}$ (as mentioned at the beginning of section \ref{sec:Infinite currents}), which will be discussed separately in the next two subsections. 

\subsubsection{Case $b=\frac{1}{2}$ and $0<a<\sqrt{A}$}\label{section:twistshift1}
In the case of $0<a<\sqrt{A}$, we can choose a sufficiently small (but nonzero) $\varepsilon$ such that the interval $(A-a^2-\varepsilon,A-a^2+\varepsilon)$ does not contain 0. Consequently, for Virasoro primaries with $h\in(A-a^2-\varepsilon,A-a^2+\varepsilon)$ and large $\bar{h}$ (i.e.\,large spin), their Virasoro characters correspond to the second case of \eqref{def:Vircharacter} without any subtraction. Therefore, in this scenario, we can essentially apply the same arguments as in section \ref{section:modularbootstrap}, but with slight modifications in the definitions of $\mathcal{N}_J$, $\mathcal{A}_J$:
\begin{equation}\label{redef:AJ}
	\begin{split}
		\mathcal{N}_J(\varepsilon):=&\sum\limits_{h\in(A-a^2-\varepsilon,A-a^2+\varepsilon)}n_{h,h+J}, \\ \mathcal{A}_J(\beta_L,\varepsilon):=&\sum\limits_{h\in(A-a^2-\varepsilon,A-a^2+\varepsilon)}n_{h,h+J}e^{-(h-A+a^2)\beta_L}, \\
	\end{split}
\end{equation}
and the double lightcone limit which is compatible with the current version \eqref{redef:DLCw}:
\begin{equation}\label{redef:DLCwJ}
	\begin{split}
		\mathrm{DLC}_w\ \mathrm{limit}:&\quad\beta_L,J\rightarrow\infty,\quad \frac{2\pi T(1-w^2)}{A-a^2}\sqrt{\frac{J}{A}}-\beta_L\rightarrow\infty\,,\\
		&\quad \beta_L^{-1}\log J\to 0\,.
	\end{split}
\end{equation}
Here we propose the following conjecture:
\begin{conjecture}\label{conjecture:infinitespin:shift}
	For a partition function of the form (\ref{Zansatz:withcurrents}), where the current degeneracy $D(j)$ is given by (\ref{Nj:ansatz:shift}) with $a\in(0,\sqrt{A})$, the following statements hold:
	\begin{enumerate}
		\item There exists a family of Virasoro primaries $\left\{(h_n,\bar{h}_n)\right\}_{n\in\mathbb{N}}$ with $h_n\rightarrow A-a^2$ and $\bar{h}_n\rightarrow\infty$.
		\item For any $w\in\left(\frac{1}{2},1\right)$ and $\varepsilon\in(0,A-a^2)$ fixed, the quantity $\mathcal{A}_J(\beta_L,\varepsilon)$ redefined in \eqref{redef:AJ} satisfies the following asymptotic two-sided bounds in the redefined DLC$_w$ limit \eqref{redef:DLCwJ}:\footnote{The denominator in the r.h.s. is actually $\mathcal{F}_{\rm current}\left(\frac{4\pi^2}{\beta_{L}}\right)\sqrt{\frac{\pi}{\beta_{L}(J-a^2)}}e^{4\pi\sqrt{A(J-a^2)}}$, but it makes no difference in the limit $J\rightarrow\infty$.}
		\begin{equation}\label{replace:shift}
			\begin{split}
				\frac{1}{w}\frac{1}{1-\frac{\tan\left(\pi w(1-\varepsilon)\right)}{\pi w(1-\varepsilon)}}\lesssim \frac{\mathcal{A}_J(\beta_L,\varepsilon)}{f_{\rm current}\left(\beta_{L}\right)\sqrt{\frac{\pi}{\beta_{L}J}}e^{4\pi\sqrt{AJ}}} \lesssim \frac{1}{w}\frac{2}{1+\frac{\sin(2\pi w\varepsilon)}{2\pi w\varepsilon}},
			\end{split}
		\end{equation}
	    where $f_{\rm current}(\beta)$ is the slow-growing factor in $\mathcal{F}_{\rm current}$:
	    \begin{equation}
	    	\begin{split}
	    		\mathcal{F}_{\rm current}\left(\frac{4\pi^2}{\beta}\right)=f_{\rm current}(\beta)e^{a^2\beta}.
	    	\end{split}
	    \end{equation}
	\end{enumerate}
\end{conjecture}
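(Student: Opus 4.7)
The plan is to follow the architecture of the proof of theorem \ref{theorem:modulartauberianFJ} essentially verbatim, with two structural changes: the twist window is recentered at $h=A-a^2$, and the dual-channel vacuum term $\tilde{Z}_{\rm vac}$ is replaced by the left-current tower. Concretely, set $\bar{H}:=A-a^2+J$, keep the identification $\beta_R=2\pi\sqrt{A/J}$ (the analogue of \eqref{choice:betaRHbar}), and replace $\tilde{Z}_{\rm vac}$ everywhere in \eqref{def:allIpm} by
\[
\tilde{Z}_{\rm left}(\beta_L,\beta_R)\;:=\;e^{A(\beta_L+\beta_R)}\,\mathcal{F}_{\rm current}(\beta_L)\bigl(1-e^{-\beta_R}\bigr),
\]
writing $I^{\rm dual}_{\pm,\rm left}$ for the resulting dual-channel integral. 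The primary weight in the analogue of \eqref{redef:AJ} is now $e^{-(h-A+a^2)\beta_L}$, equivalent to an overall factor $e^{-a^2\beta_L}$ multiplying each $I_\pm$ computed with the unshifted weight. The integer-spin reshuffling of figure \ref{fig:window1} still applies because $\bar{H}-(A-a^2)=J\in\mathbb{Z}$, so $\mathcal{A}_J=\mathcal{A}(\beta_L,\bar{H},\varepsilon,\delta)$ continues to hold for $\delta\in(\varepsilon,1-\varepsilon)$.

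The first substantive step is to compute the leading behaviour of $I^{\rm dual}_{\pm,\rm left}$ in the modified DLC$_w$ limit \eqref{redef:DLCw}. Using \eqref{Fcurrent:asymp:withshift} in the form $\mathcal{F}_{\rm current}(4\pi^2/\beta)=f_{\rm current}(\beta)\,e^{a^2\beta}$, the $\beta_L$-dependent factor in the dual-channel integrand becomes $\mathcal{F}_{\rm current}(4\pi^2/\beta_L)\sim f_{\rm current}(\beta_L)e^{a^2\beta_L}$, whose exponential part precisely cancels the global $e^{-a^2\beta_L}$ coming from the shifted weight. The remaining $t$-integral is, modulo the slowly-varying ratio $f_{\rm current}(4\pi^2/(\beta_R+it))/f_{\rm current}(4\pi^2/\beta_R)$, the same Gaussian saddle that produced \eqref{Ipmvac:asymp:inH} in appendix \ref{app:Ivac:asym}. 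Granting uniform control of that ratio for $|t|\leqslant\Lambda$, one obtains the replacement of \eqref{Ipmvac:asymp:inH}, whose leading size is $f_{\rm current}(\beta_L)\beta_L^{-1/2}\sqrt{\pi/J}\,e^{4\pi\sqrt{AJ}}\,\hat{\phi}_\pm(0)$ up to numerical constants, feeding directly into the denominator of \eqref{replace:shift}.

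The second step is to redo the suppression estimates of table \ref{table:modularestimate} against the new denominator $I^{\rm dual}_{\pm,\rm left}$. Lemma \ref{lemma:Ztildebound} still holds with the same proof, since the currents at $h=0$ or $\bar{h}=0$ only enlarge the constant $\kappa(\beta_0)$ by the finite amount $e^{2A\beta_0}\mathcal{F}_{\rm current}(\beta_0)/\tilde{Z}_{\rm vac}(\beta_0,\beta_0)$. The dual-channel ``$T\leqslant\bar{h}<A$'' and ``$\bar{h}\geqslant A$'' bounds of section \ref{section:modular:directnonvac} then go through with only the harmless extra factor $1/f_{\rm current}(\beta_L)$ absorbed by the DLC$_w$ exponential margin of \eqref{redef:DLCw}. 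Three new checks are needed. (i) The dual-channel right-current contribution carries an extra $\mathcal{F}_{\rm current}(4\pi^2/\beta_R)\sim D(1)e^{-4\pi^2/\beta_R}$ from the right-mover, exponentially small as $\beta_R\to 0$. (ii) The direct-channel left-current contribution (at $\bar{h}=0$) must be treated separately from the non-chiral high/low-twist sums: after including the $e^{-a^2\beta_L}$ shift it is bounded by $e^{-a^2\beta_L}e^{A\beta_L}\mathcal{F}_{\rm current}(\beta_L)\max|\phi_\pm|\lesssim e^{(A-a^2-1)\beta_L}$, which the DLC$_w$ upper bound on $\beta_L$ keeps strictly below $e^{4\pi\sqrt{AJ}}$, with analogous reasoning for the direct-channel right currents. (iii) In the low-twist direct bound \eqref{ineq:Zlowtwist}, the auxiliary $\beta_L'=4\pi^2T(1-w^2/2)/((A-a^2)\beta_R)$ is chosen so that $\beta_L'-\beta_L\to\infty$ by \eqref{redef:DLCw}, and the additional current factor $\mathcal{F}_{\rm current}(\beta_L')$ picked up from modular invariance is swallowed by the $e^{-\varepsilon(\beta_L'-\beta_L)}$ gain whenever $\varepsilon$ is uniformly bounded below.

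The main obstacle is the analytic input: upgrading the crude ansatz \eqref{Nj:ansatz:shift} to the sharp asymptotic \eqref{Fcurrent:asymp:withshift} with an identifiable slow-growing prefactor $f_{\rm current}$, and extending that asymptotic to complex argument $4\pi^2/(\beta_R+it)$ uniformly on $|t|\leqslant\Lambda$. Without a universal statement about $f(j)$, the prefactor $f_{\rm current}$ in \eqref{replace:shift} is determined only up to a $f_{\rm current}(\beta+O(1))/f_{\rm current}(\beta)\to 1$-type ambiguity, which is why the result is formulated as a conjecture rather than a theorem. Once this analytic input is granted, part (1) (existence of an accumulation family at $h=A-a^2$) follows from the non-vanishing of the lower bound in \eqref{replace:shift} by the contradiction argument of section \ref{section:twistaccum} applied to the shifted window, and part (2) follows by packaging the two-sided estimates through the same Beurling--Selberg-type functions $\phi^{\Lambda}_{\pm,\delta_\pm}$ used in section \ref{section:modular:tauberian}, without modification.
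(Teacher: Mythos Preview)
The statement is a \emph{conjecture} in the paper, and the paper does not supply a proof. What the paper does provide is a heuristic justification in section \ref{section:shifttwist}--\ref{section:twistshift1}: the observation that $\mathcal{F}_{\rm current}(4\pi^2/\beta_L)\sim e^{a^2\beta_L}$ forces the dual-channel left-current tower to replace the vacuum as the dominant term, shifting the accumulation point to $h=A-a^2$; the expectation that the analysis of sections \ref{section:modular:idea}--\ref{section:epsilonwindow} carries over with the modifications listed there; and two consistency checks (decoupled irrational CFTs and $W_N$ CFTs) in the appendices. The paper explicitly leaves a rigorous proof to future work.

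Your sketch is a faithful and somewhat more detailed elaboration of exactly this program. You correctly identify the structural changes (recentering at $A-a^2$, replacing $\tilde{Z}_{\rm vac}$ by $\tilde{Z}_{\rm left}$ in the dual channel), the additional checks required (right currents in the dual channel, left/right currents in the direct channel, the extra $\mathcal{F}_{\rm current}(\beta_L')$ factor in the low-twist bound), and the Beurling--Selberg endgame. Most importantly, you correctly isolate the genuine obstruction: the passage from the ansatz \eqref{Nj:ansatz:shift} to the sharp asymptotic \eqref{Fcurrent:asymp:withshift}, with uniform control of $f_{\rm current}$ along the complex strip $|t|\leqslant\Lambda$, is precisely the missing analytic input that keeps this a conjecture rather than a theorem. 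The paper makes the same point implicitly by declining to specify $\varepsilon_{\rm min}$ and noting that the subleading factors depend on the unspecified $f(j)$. So your proposal and the paper's discussion are in complete agreement on both the strategy and its limitations.
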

We have not made a conjecture regarding the lower bound of the allowed $\varepsilon$ since it depends on the slow-growing factor $f(j)$ in \eqref{Nj:ansatz:shift}. Additionally, $f(j)$ will affect the $\varepsilon$-window of $\mathcal{N}_J(\varepsilon)$. We expect that by choosing proper $\varepsilon\rightarrow0$ limit according to the growth of $f(j)$, $\mathcal{N}_J(\varepsilon)$ will behave as
\begin{equation}\label{operatorcounting:shift}
	\begin{split}
		\mathcal{N}_J(\varepsilon)=e^{4\pi\sqrt{AJ}+\ldots},
	\end{split}
\end{equation}
where ``$\ldots$" denotes the subleading part in the exponent.

To examine that our conjecture is reasonable, we would like to provide two examples.
\newline
\newline\textbf{Example 1: decoupled irrational CFTs}
\newline There are very few known examples for the case of theorem \ref{theorem:modulartauberianFJ} (e.g.\,the construction of partition function in \cite{Benjamin:2020mfz}). However, given at least one of such examples, we can make infinitely many examples for the case of conjecture \ref{conjecture:infinitespin:shift}. The construction is simply several decoupled copies of unitary CFTs:
$${\rm CFT}={\rm CFT}^{(1)}\otimes{\rm CFT}^{(2)}\otimes\ldots\otimes{\rm CFT}^{(N)}.$$
We assume that each CFT$^{(i)}$ has central charge $c^{(i)}>1$, a unique vacuum and a nonzero twist gap $\tau^{(i)}_{\rm gap}$. Then the big CFT has the following features:
\begin{itemize}
	\item The central charge is given by $c=\sum\limits_{i=1}^{N}c^{(i)}$ (i.e.\,$A=\sum\limits_{i=1}^{N}A^{(i)}+\frac{N-1}{24}$ where $A^{(i)}=\frac{c^{(i)}-1}{24}$).
	\item The theory has a unique vacuum and infinitely many conserved chiral currents.
	\item The theory has a twist gap for the Virasoro primaries with nonzero twists, given by $\tau_{\rm gap}=\min\{4,\tau_{\rm gap}^{(1)},\ldots,\tau_{\rm gap}^{(N)}\}$.
	\item When $j$ is large, the growth of the current degeneracy is given by $D(j)\sim e^{4\pi\sqrt{\frac{N-1}{24}j}}$ up to a slow-growing factor. I.e. this example corresponds to $b=\frac{1}{2}$ and $a=\sqrt{\frac{N-1}{24}}$.
	\item It has a twist accumulation point of Virasoro primaries, given by
	\begin{equation}
		\begin{split}
			h=\sum\limits_{i=1}^{N}A^{(i)}\quad\left(A^{(i)}\equiv\frac{c^{(i)}-1}{24}\right),\quad\bar{h}=\infty.
		\end{split}
	\end{equation}
    \item The quantity $\mathcal{A}_J(\varepsilon,\beta_{L})$, defined in \eqref{redef:AJ}, is expected to have the following growth in the redefined DLC$_w$ limit \eqref{redef:DLCwJ}:
    \begin{equation}
    	\begin{split}
    		\mathcal{A}_J(\varepsilon,\beta_{L})\stackrel{{\rm DLC}_w}{\approx}e^{4\pi\sqrt{AJ}},
    	\end{split}
    \end{equation}
    up to some factor of slow growth.
\end{itemize}
The first five points can be justified rigorously, while the last point is argued with some additional hypotheses that we consider natural. We leave the technical details to appendix \ref{app:decoupledCFT}.

According to the first four points, this example falls within the scope of the cases discussed in this subsection, making it a suitable test-bed for examining the consistency of conjecture \ref{conjecture:infinitespin:shift}. 

The first, fourth, and fifth points indicate that the CFT exhibits a shifted twist accumulation point precisely located at $h=A-a^2$, aligning with the first part of conjecture \ref{conjecture:infinitespin:shift}. The final point demonstrates the correct exponential growth of $\mathcal{A}_J(\varepsilon,\beta_{L})$ in the DLC$_w$ limit, supporting the validity of the second part of conjecture \ref{conjecture:infinitespin:shift}.
\newline
\newline\textbf{Example 2: $W_N$ CFT}
\newline The second example is the unitary $W_N$ CFT with central charge $c>N-1$ and a twist gap $\tau_{\rm gap}^{W_N}>0$ in the spectrum of $W_N$-primaries \cite{Afkhami-Jeddi:2017idc}.

The $W_N$ algebra is an extension of the Virasoro algebra, which implies that the $W_N$ CFT possesses a more fine-tuned spectrum and dynamics to satisfy the constraints imposed by the $W_N$ algebra. However, for the purpose of our discussion, let us momentarily set aside the $W_N$ algebra and concentrate solely on the Virasoro algebra. In this context, the theory exhibits the following features from Virasoro algebra perspective:
\begin{itemize}
	\item The theory has a unique vacuum and infinitely many conserved chiral currents.
	\item The growth of the current degeneracy $D(j)$ (see \eqref{Zansatz:withcurrents}) is given by
	\begin{equation}
		\begin{split}
			D(j)\sim e^{4\pi\sqrt{\frac{N-1}{24}j}}\quad(j\rightarrow\infty)
		\end{split}
	\end{equation}
	up to a factor of slow growth in $j$.
	\item The theory has a twist gap for the Virasoro primaries with nonzero twists, given by $\tau_{\rm gap}=\min\left\{6,\tau_{\rm gap}^{W_N}\right\}$.
	\item The theory has a twist accumulation point of Virasoro primaries, given by
	\begin{equation}
		\begin{split}
			h=A-\frac{N-2}{24},\quad\bar{h}=\infty.
		\end{split}
	\end{equation} 
	\item The quantity $\mathcal{N}_J(\varepsilon)$ and $\mathcal{A}_J(\varepsilon,\beta_{L})$, defined in \eqref{redef:AJ}, are expected to have the following growth:
	\begin{equation}
		\begin{split}
			\mathcal{A}_J(\varepsilon,\beta_{L})\stackrel{{\rm DLC}_w}{\sim}e^{4\pi\sqrt{AJ}},\quad\mathcal{N}_J(\varepsilon\equiv\kappa J^{-1/2}\log J)\stackrel{J\rightarrow\infty}{\sim}e^{4\pi\sqrt{AJ}},
		\end{split}
	\end{equation}
	up to some factor of slow growth.
\end{itemize}
We provide the technical details of the above claims in appendix \ref{app:WNCFT}. The first three points can be rigorously proven. Consequently, this example falls under the case described in Conjecture \ref{conjecture:infinitespin:shift}. 

We believe that the fourth point, consistent with the first part of conjecture \ref{conjecture:infinitespin:shift}, can be demonstrated using a similar technique as in \cite{Pal:2022vqc}. 

The setup in the $W_N$ case exhibits significant similarities to the standard CFT case, leading us to propose conjecture \ref{conjecture:modulartauberianFJ:WN} in appendix \ref{app:WNconjecture}, which extends theorem \ref{theorem:modulartauberianFJ} to $W_N$ CFTs. 

Finally, the last point is consistent with the second part of conjecture \ref{conjecture:infinitespin:shift}. Its validity can be argued by considering the incorporation of some additional natural hypotheses, such as assuming the correctness of conjecture \ref{conjecture:modulartauberianFJ:WN} for $W_N$ CFTs.

\subsubsection{Case $b=\frac{1}{2}$ and $a=\sqrt{A}$}\label{section:twistshift2}
For the second possibility ($a=\sqrt{A}$), it is clear that there exists a twist accumulation point at $h=0$ due to parity symmetry and the existence of infinitely many left currents, which implies the existence of infinitely many right currents. This case can be realized by considering multiple copies of $c<1$ unitary minimal models (as demonstrated in the example of 3 copies of Ising CFTs in section \ref{appendix:example}).

Moreover, when $a=\sqrt{A}$, the counting of Virasoro primaries around the accumulation point $(h=0,\bar{h}=\infty)$ becomes straightforward. Since we assumed a nonzero twist gap $\tau_{\rm gap}$ for Virasoro primaries with twists different from zero, choosing $\varepsilon<\tau_{\rm gap}/2$ allows us to count the total number of spin-$J$ Virasoro primaries within the window $h\in[0,\varepsilon)$. In this case, these Virasoro primaries can only be right currents (i.e.\,$h=0$). Therefore, the counting simplifies to:
\begin{equation}\label{result:alphaequalA}
	\begin{split}
		\mathcal{N}_J(\varepsilon)=\mathcal{A}_J(\beta_{L},\varepsilon)={\rm \#\ of\ spin}-J\ {\rm right\ currents}=D(J)\quad (\varepsilon<\tau_{\rm gap}/2).
	\end{split}
\end{equation}
\textbf{Example 3: three copies of Ising CFTs}
\newline The case discussed in this subsection can be realized by taking several copies of $c<1$ unitary minimal models. Here we would like to present the simplest of them, which is the three copies of Ising CFTs:
$${\rm CFT}={\rm Ising}^{(1)}\otimes{\rm Ising}^{(2)}\otimes{\rm Ising}^{(3)}.$$
This CFT has the following features:
\begin{itemize}
	\item The theory has a central charge $c=\frac{3}{2}$, i.e.\,A=$\frac{1}{48}$.
	\item It has a unique vacuum and infinitely many conserved currents.
	\item It has a twist gap $\tau_{\rm gap}=\frac{1}{8}$ for Virasoro primaries with nonzero twists, which is equal to the scaling dimension of the Ising spin field $\sigma$.
	\item For large values of $j$, the growth of the current degeneracy is approximately given by $D(j)\sim e^{4\pi\sqrt{\frac{1}{48}j}}$, up to a slow-growing factor. So in this example, we have $b=\frac{1}{2}$ and $a=\sqrt{\frac{1}{48}}$.
\end{itemize}
All of the above points can be rigorously justified by using the explicit character formulas of the Ising CFT, which are well-known. The technical details are provided in appendix \ref{app:Ising}. 

Based on these points, this example falls within the case discussed in this subsection. It is important to note that this CFT, along with the previous two examples, exhibits infinitely many twist accumulation points for Virasoro primaries. This is due to the fact that these theories are fine-tuned to have a larger symmetry algebra. However, our analysis here, which only concerns the Virasoro algebra, only allows us to conclude that the accumulation point corresponding to $h=A-a^2$ and $\bar{h}=\infty$ (or with $h$ and $\bar{h}$ interchanged) is universal.

\section{Holographic CFTs and large $c$ limit}\label{section:largec}
In this section, our focus shifts to holographic CFTs in the large central charge limit $c\rightarrow\infty$. Specifically, we consider a family of 2D irrational CFTs labeled by the central charge $c$, denoted as $\left\{\mathcal{A}_c\right\}$. Our goal here is similar to that in section \ref{section:modularbootstrap}: counting the spectrum of Virasoro primaries of these CFTs near the twist accumulation point. We consider
\begin{equation}\label{NJ:largec}
	\begin{split}
		\mathcal{N}_{J}(\varepsilon_1,\varepsilon_2,A):=\sum\limits_{A-\varepsilon_1<h<A+\varepsilon_2}n_{h,h+J},\quad A\equiv\frac{c-1}{24},
	\end{split}
\end{equation}
and take the limit $J,A\rightarrow\infty$ and $\varepsilon_1,\varepsilon_2\rightarrow0$ with appropriate constraints between $\varepsilon_1$, $\varepsilon_2$, $J$ and $A$. In contrast to section \ref{section:modularbootstrap}, here we allow $\varepsilon_1$ and $\varepsilon_2$ to differ. The reason is that in the large central charge limit, the allowed lower bounds on $\varepsilon_1$ and $\varepsilon_2$ will behave differently in $A$. 

We adopt the same basic assumptions about CFTs as in section \ref{section:modularbootstrap}. Additionally, we assume the existence of two theory-independent quantities, namely $\alpha$ and $C(\beta_{L},\beta_{R})$, satisfying the following conditions:
\begin{enumerate}
	\item For any CFT in the family $\left\{\mathcal{A}_c\right\}$, the twist gap is bounded from below by
	\begin{equation}\label{assum1}
		T\geqslant \alpha A,
	\end{equation}
where $\alpha$ is strictly positive.
	\item For any CFT in the family $\left\{\mathcal{A}_c\right\}$, given that $\beta_L,\beta_R>2\pi$, the ratio of the full partition function and its vacuum part is bounded from above by
	\begin{equation}
		\begin{split}\label{assum2}
			\frac{\tilde{Z}(\beta_L,\beta_R)}{\tilde{Z}_{\rm vac}(\beta_L,\beta_R)}\leqslant C(\beta_L,\beta_R).
		\end{split}
	\end{equation}
    where $C(\beta_{L},\beta_{R})$ is finite as long as $\beta_{L},\beta_{R}>2\pi$. 
\end{enumerate}
It is worth noting that the first assumption is not strictly necessary, as we expect our results to hold even if we include a sparse spectrum of operators below the twist gap ($\tau_{\rm gap}\equiv 2T$) in the partition function. The second assumption takes inspiration from the HKS sparseness condition \cite{Hartman:2014oaa} and its implications.\footnote{We anticipate that \eqref{assum2} can be derived by imposing a similar, yet stronger, sparseness condition compared to the HKS sparseness condition introduced in \cite{Hartman:2014oaa}. While the HKS condition focuses on the sparseness of the low-energy spectrum, \eqref{assum2} specifically requires sparseness in the low-twist spectrum.}

Our analysis in this section closely follows that in section \ref{section:modularbootstrap}. However, there are additional subtleties that arise due to the following reason. In section \ref{section:modularbootstrap}, when we decomposed the partition function into several parts in different channels:
	\begin{equation*}
		\begin{split}
			\tilde{Z}=\tilde{Z}_{\rm vac}+\tilde{Z}_{T\leqslant h\leqslant A-\varepsilon_1}+\tilde{Z}_{A-\varepsilon_1<h< A+\varepsilon_2}+\tilde{Z}_{h\geqslant A+\varepsilon_2}=\tilde{Z}^{\rm dual}_{\rm vac}+\tilde{Z}^{\rm dual}_{\rm nonvac},
		\end{split}
	\end{equation*} 
and argued that several terms are subleading, we actually bounded them by the partition function itself at some fixed inverse temperature, e.g.\ $\tilde{Z}(\beta_0,\beta_0)$. However, in the case of a family of theories, these terms may no longer be subleading as they could all grow exponentially fast with $c$ in the large central charge limit. This is the reason why we introduce the two additional conditions mentioned above: we want the dominant terms in the fixed CFT case to remain dominant in the holographic case. Exploring the generality of our additional assumptions, particularly the second one, in holographic CFTs would be an intriguing avenue for future research.

\subsection{Holographic double lightcone limit, main results}\label{section:holoCFTresults}
To estimate $\mathcal{N}_J(\varepsilon_1,\varepsilon_2,A)$ defined in \eqref{NJ:largec}, we introduce the quantity $\mathcal{A}_J(\beta_L,\varepsilon_1,\varepsilon_2,A)$ defined as follows:
\begin{equation}\label{def:AJC}
	\begin{split}
		\mathcal{A}_J(\beta_L,\varepsilon_1,\varepsilon_2,A):=\sum\limits_{h\in(A-\varepsilon_1,A+\varepsilon_2)}n_{h,h+J}e^{-(h-A)\beta_L}.
	\end{split}
\end{equation}
This definition is similar to the previously defined $\mathcal{A}_J(\beta_L,\varepsilon)$ (as defined in eq.\,\eqref{def:AJ}), but now it depends on $A$ since the theory is no longer fixed. By definition, $\mathcal{N}_J(\varepsilon_1,\varepsilon_2)$ and $\mathcal{A}_J(\beta_L,\varepsilon)$ satisfy the following inequality:
\begin{equation}\label{NJbounds:largec}
	\begin{split}
		e^{-\varepsilon_1\beta_{L}}\mathcal{A}_J(\beta_L,\varepsilon_1,\varepsilon_2,A)\leqslant\mathcal{N}_J(\varepsilon_1,\varepsilon_2,A)\leqslant e^{\varepsilon_2\beta_{L}}\mathcal{A}_J(\beta_L,\varepsilon_1,\varepsilon_2,A).
	\end{split}
\end{equation}
We consider $\mathcal{A}_J(\beta_L,\varepsilon)$ in the holographic double lightcone limit ($\mathrm{HDLC}$), which is defined by the following limit procedure:
\begin{equation}\label{def:HDLC}
	\begin{split}
		\mathrm{HDLC}_{w}\ \mathrm{limit}:\quad&\beta_L/A,\ J,\ A\to\infty,\quad 2\pi \alpha(1-w^2)\sqrt{\frac{J}{A}}-\beta_L\rightarrow\infty, \\
		&\beta_L^{-1}\log\left(\frac{J}{A}\right)\rightarrow0\,,
	\end{split}
\end{equation}
where $w\in\left(\frac{1}{2},1\right)$ is fixed. 

We note that in the $\mathrm{HDLC}_w$ limit, the first, second, and fourth conditions imply that $J/A^3\rightarrow\infty$. When $A$ and $T$ are fixed, the $\mathrm{HDLC}_w$ limit reduces to the $\mathrm{DLC}_w$ limit defined in eq\,\eqref{def:DLCwJ}. Thus, we consider the $\mathrm{HDLC}_w$ limit as a natural generalization of the $\mathrm{DLC}_w$ limit to the case of the large central charge limit.

With the above setup, we have the following result:
\begin{theorem}\label{theorem:largec}
	Let $\left\{A_c\right\}$ be a family of CFTs satisfying the above mentioned conditions (see section \ref{section:modularsetup} and the beginning of section \ref{section:largec}). Consider any $w \in \left(\frac{1}{2},1\right)$ fixed and $\varepsilon_i$ within the range
	\begin{equation}\label{epsilon:choice:DLCJlargec}
		\begin{split}
			\varepsilon_{i,\rm min}(\beta_{L},J,A)\leqslant&\varepsilon_{i}\leqslant1-\frac{1}{2w}, \\
		\end{split}
	\end{equation}
    where $\varepsilon_{i,\rm min}$ are defined by
    \begin{equation}\label{def:epsilonminC}
    	\begin{split}
    		\varepsilon_{1,\rm min}(\beta_{L},J,A)&:=\dfrac{3\log \left(A\beta_{L}\right)}{2\pi\alpha w^2}\sqrt{\dfrac{A}{J}}\,.\\
    		\varepsilon_{2,\rm min}(\beta_{L},J,A)&:=\left(\beta_{L}-\frac{4\pi}{3}\right)^{-1}\left[3\pi A+\frac{3}{2}\log\left(\frac{\beta_{L}\sqrt{AJ}}{2\pi}\right)\right]\,.
    	\end{split}
    \end{equation}
	Then the quantity $\mathcal{A}_J(\beta_L,\varepsilon_1,\varepsilon_2,A)$ defined in \eqref{def:AJC} satisfies the following asymptotic two-sided bounds in the $\mathrm{HDLC}_w$ limit \eqref{def:HDLC}:
	\begin{equation}
		\begin{split}\label{eq:resultFiniteEpsilonB:largec}
			\frac{1}{w}\left(\frac{1}{1-\frac{\tan\left(\pi w(1-\varepsilon)\right)}{\pi w(1-\varepsilon)}}\right)\lesssim \frac{\mathcal{A}_J(\beta_L,\varepsilon_1,\varepsilon_2,A)}{4\pi^{5/2}\beta_L^{-3/2}J^{-1/2}e^{4\pi\sqrt{AJ}}} \lesssim \frac{1}{w}\left(\frac{2}{1+\frac{\sin(2\pi w\varepsilon)}{2\pi w\varepsilon}}\right),
		\end{split}
	\end{equation}
	where $\varepsilon\equiv\max\{\varepsilon_1,\varepsilon_2\}$. The above bounds are uniform in $\varepsilon_1$ and $\varepsilon_2$.
\end{theorem}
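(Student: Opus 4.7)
The plan is to parallel the proof of Theorem \ref{theorem:modulartauberianFJ} step by step, substituting every ingredient that previously relied on a fixed CFT by a uniform-in-$c$ analogue drawn from the two new assumptions \eqref{assum1} and \eqref{assum2}. Concretely, I would again identify $\bar H=A+J$ and set $\beta_R=2\pi\sqrt{A/J}$, use band-limited approximants $\phi_{\pm,\delta_\pm}$ of the indicator of an interval of width $\delta_\pm\in(\varepsilon,1-\varepsilon)$ with $\varepsilon=\max\{\varepsilon_1,\varepsilon_2\}$, and bound $\mathcal A_J(\beta_L,\varepsilon_1,\varepsilon_2,A)$ above and below by the Tauberian integrals $I_{\pm,h\in(A-\varepsilon_1,A+\varepsilon_2)}$ of Section \ref{section:modular:idea}. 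Modular invariance reduces the problem to the splitting \eqref{modularinv:split}, and the goal is once more to show that $I^{\rm dual}_{\pm,\rm vac}$ dominates everything else. The saddle-point analysis of Appendix \ref{app:Ivac:asym} should carry over uniformly in $A$ because the only conditions used are $\beta_L/A\to\infty$ and $\beta_L^{-1}\log\beta_R\to 0$, both of which are built into \eqref{def:HDLC}; consequently the target asymptotic $I^{\rm dual}_{\pm,\rm vac}\sim 4\pi^{5/2}\beta_R\beta_L^{-3/2}A^{-1/2}e^{4\pi^2 A/\beta_R}\hat\phi_\pm(0)$ remains valid.

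Assumption \eqref{assum2} is the replacement for Lemma \ref{lemma:Ztildebound}: fixing a reference pair $(\beta_L^0,\beta_R^0)=(3\pi,3\pi)$ (any point with both coordinates strictly above $2\pi$ would do), the bound there plus the monotonicity argument of Lemma \ref{lemma:Ztildebound} produces a uniform-in-$A$ inequality of the form $\tilde Z(\beta_L,\beta_R)\leqslant \kappa_0\, e^{A(\beta_L+\beta_R)}$ on the domain $\beta_L,\beta_R\geqslant 3\pi$, where $\kappa_0$ depends only on $C(3\pi,3\pi)$. Plugging this into the derivations of \eqref{eq:dualLessA}, \eqref{eq:dualGreatA}, \eqref{eq:vac}, \eqref{eq:high} and \eqref{eq:low} reproduces precisely the five dimensionless suppression factors tabulated in Table \ref{table:modularestimate}, now with $A$-independent constants. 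Assumption \eqref{assum1}, $T\geqslant \alpha A$, then turns those suppressions into genuine vanishings in the HDLC$_w$ limit: the ``$T\leqslant\bar h\leqslant A$'' exponential $e^{-4\pi^2 T(1-w^2)/\beta_R+A\beta_L}$ becomes $e^{-A[4\pi^2\alpha(1-w^2)/\beta_R-\beta_L]}$, killed by the third line of \eqref{def:HDLC}, while the direct-channel vacuum piece $e^{-A(4\pi^2/\beta_R-\beta_L-\beta_R)}$ is killed by $\beta_L\beta_R<4\pi^2\alpha(1-w^2)<4\pi^2$, which holds eventually in HDLC$_w$.

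The asymmetry of the lower bounds in \eqref{def:epsilonminC} emerges from the two remaining rows of the table. For the low-twist window the auxiliary temperature of \eqref{def:auxbeta} satisfies $\beta_L'-\beta_L\geqslant \pi\alpha w^2\sqrt{J/A}$ in HDLC$_w$; requiring $\varepsilon_1(\beta_L'-\beta_L)$ to dominate the $(A\beta_L)^{O(1)}$ polynomial prefactor reproduces $\varepsilon_{1,\min}=\tfrac{3\log(A\beta_L)}{2\pi\alpha w^2}\sqrt{A/J}$. For the high-twist window the direct-channel suppression is $e^{-\varepsilon_2(\beta_L-4\pi^2/\beta_0)}$, while the prefactor now carries an extra $A$-growth $e^{A\beta_0}$ inherited from the uniform version of Lemma \ref{lemma:Ztildebound}; choosing $\beta_0=3\pi$ reproduces the $3\pi A$ summand and the $(\beta_L-4\pi/3)^{-1}$ factor in the second line of \eqref{def:epsilonminC}. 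Once these lower bounds on $\varepsilon_{1,2}$ are in force, the uniform Fourier-analytic bounds on $\phi_{\pm,\delta_\pm}$ proved in Appendix \ref{appendix:uniformbounds} permit sending $\delta_+\to\varepsilon$ and $\delta_-\to 1-\varepsilon$ exactly as in Section \ref{section:modular:tauberian}, yielding the two-sided bounds \eqref{eq:resultFiniteEpsilonB:largec}.

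The main obstacle is the $A$-bookkeeping: in the fixed-$c$ case many steps produced constants that silently depended on $c$, and I need to verify that no such hidden $A$-growth obstructs HDLC$_w$. The two qualitatively new phenomena to track are (i) the direct-channel vacuum contribution $e^{A(\beta_L+\beta_R)}$, utterly negligible for fixed $c$, is now competitive and its removal is what forces the $3\pi A$ term inside $\varepsilon_{2,\min}$; and (ii) the effective low-twist rate $\beta_L'$ is tied to $T$ rather than to a fixed constant, so it grows only like $\sqrt{J/A}$ rather than $\sqrt{J}$, which is why $\varepsilon_{1,\min}$ carries the extra factor $\sqrt{A/J}$ relative to the $J^{-1/2}\log J$ of Theorem \ref{theorem:modulartauberianFJ}. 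Beyond these two bookkeeping points, the argument is a systematic rerun of Sections \ref{section:modular:idea}--\ref{section:epsilonwindow}.
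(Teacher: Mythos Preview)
Your proposal is correct and follows essentially the same route as the paper's proof in Appendix \ref{section:estimatelargec}: reuse the Tauberian skeleton of Sections \ref{section:modular:idea}--\ref{section:epsilonwindow}, replace Lemma \ref{lemma:Ztildebound} by its uniform version (valid for any $\beta_0>2\pi$ via assumption \eqref{assum2}), trade $T$ for $\alpha A$ via assumption \eqref{assum1}, and track the extra $A$-dependence through the five rows of Table \ref{table:modularestimate}; the choice $\beta_0=3\pi$ then yields exactly the $(\beta_L-4\pi/3)^{-1}$ and $3\pi A$ in $\varepsilon_{2,\min}$. One small slip: in your final paragraph, item (i), the $3\pi A$ summand does not arise from the direct-channel vacuum character $I_{\pm,\rm vac}$ (which is handled without any $\varepsilon$, exactly as in \eqref{eq:vac}) but from the $e^{A\beta_0}$ factor in the high-twist bound \eqref{Zhigh:bound}, which you yourself identify correctly two paragraphs earlier.
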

Theorem \ref{theorem:largec} serves as the large-$c$ counterpart to theorem \ref{theorem:modulartauberianFJ}. It establishes the universal behavior of the spectrum near the twist accumulation point in the regime where
\begin{equation}
	\begin{split}
		J\gg c^3.
	\end{split}
\end{equation}
Due to the similarity in the overall proof structure between theorem \ref{theorem:largec} and theorem \ref{theorem:modulartauberianFJ}, we leave the detailed proof of theorem \ref{theorem:largec} to appendix \ref{section:estimatelargec}. In this section, we focus on providing key observations and remarks regarding the main technical distinctions specific to the large central charge limit case.
\begin{remark}
	1) The basic idea in deriving the above bounds is to estimate the partition function by itself but evaluated at some fixed inverse temperature, e.g.\ $\tilde{Z}(\beta_0,\beta_0)$.  However,  in $A\to\infty$ limit,  $\tilde{Z}(\beta_0,\beta_0)$ grows like $e^{2A\beta_0}$.  This leads to subtle differences between theorem.~\!\ref{theorem:largec} and theorem.~\!\ref{theorem:modulartauberianFJ}.  For example,  while the eq.~\!\eqref{def:epsilonminC} is similar to the eq.~\!\eqref{epsilon:choice:DLCJ}, there is an extra $3\pi A$ factor in the expression for $\varepsilon_{2,\rm min}(\beta_{L},J,A)$,  which comes about and is important because $A$ is very large.

	2) In the above discussions we assumed that the non-vacuum spectrum of Virasoro primaries starts from an $O(c)$ twist gap, i.e. $h,\bar{h}\geqslant \alpha A$ with some fixed $\alpha>0$. In fact, our conclusion does not change if we have finitely many Virasoro primaries with $h,\bar h$ being $O(1)$ numbers.  Using the same analysis as in appendices \ref{app:estimateIR2} and \ref{app:estimateIR34}, one can show that the contributions  $I^{\rm dual}_{\pm,(h,\bar{h})}$,  from each of these extra operators are suppressed:
\begin{equation}
	\begin{split}
		\abs{\frac{I^{\rm dual}_{\pm,(h,\bar{h})}}{I^{\rm dual}_{\pm,\rm vac}}}\stackrel{{\rm HDLC}_{w}}{\longrightarrow}0\,,
	\end{split}
\end{equation}
and hence can be neglected in our analysis.
\end{remark}
Let us now estimate $\mathcal{N}_J(\varepsilon_1,\varepsilon_2)$ using theorem \ref{theorem:largec}. For this purpose, we choose the following values of $\beta_{L}$, $\varepsilon_1$ and $\varepsilon_2$:
\begin{equation}\label{NJchoice:largec}
	\begin{split}
		\beta_{L}=\kappa\sqrt{\frac{J}{A}},\quad\varepsilon_1=\frac{1}{\pi\alpha w^2}\sqrt{\frac{A}{J}}\log\left(AJ\right),\quad\varepsilon_2=\frac{3}{\kappa}\sqrt{\frac{A}{J}}\left(2\pi A+\log J\right),
	\end{split}
\end{equation}
where $\kappa\in(0,2\pi\alpha(1-w^2))$. It can be verified that in the limit $J/A^3\rightarrow\infty$ (which is necessary for the HDLC$_w$ limit as mentioned after \eqref{def:HDLC}), both the conditions of the HDLC$_w$ limit \eqref{def:HDLC} and the $\varepsilon_i$ bounds \eqref{def:epsilonminC} are satisfied. Additionally, it is worth noting that for fixed $A$, both $\varepsilon_1$ and $\varepsilon_2$ decay as $O(J^{-1/2}\log J)$ in the limit $J\rightarrow\infty$, which is consistent with the behavior observed in the case of fixed CFT.

By applying eq.\,\eqref{NJbounds:largec}, theorem \ref{theorem:largec}, and eq.\,\eqref{NJchoice:largec}, we obtain two-sided bounds for the quantity $\mathcal{N}_J(\varepsilon_1,\varepsilon_2)$. In order to simplify the statement of the result, we sacrifice optimality by choosing $w=\frac{3}{4}$. Thus, we arrive at the following estimate for $\mathcal{N}_J$:
\begin{corollary}\label{cor:operatorcount:largec}
	For any fixed $\kappa\in\left(0,\frac{7\pi\alpha}{8}\right)$, we have
	\begin{equation}
		\begin{split}
			&\mathcal{N}_J\left(\varepsilon_1,\varepsilon_2\right)=e^{4\pi\sqrt{AJ}+f_{\kappa}(A,J)}\quad(J/A^3\rightarrow\infty)\\
			&\left(\varepsilon_1\equiv\frac{1}{\pi\alpha}\sqrt{\frac{A}{J}}\log\left(AJ\right), \varepsilon_2\equiv\frac{3}{\kappa}\sqrt{\frac{A}{J}}(2\pi A+\log J)\right), \\
		\end{split}
	\end{equation}
    where the error term $f_\kappa(A,J)$ is bounded by
    \begin{equation}
    	\begin{split}
    		\abs{f_\kappa(A,J)}\leqslant6\pi A+5\log(AJ)+C(\kappa),
    	\end{split}
    \end{equation}
    with $C(\kappa)$ being a finite constant.
\end{corollary}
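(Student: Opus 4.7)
The plan is to specialize Theorem \ref{theorem:largec} at $w = 3/4$ to the one-parameter choice $\beta_L = \kappa\sqrt{J/A}$ with $\kappa \in (0, 7\pi\alpha/8)$, and then use the inequality \eqref{NJbounds:largec} to pass from bounds on $\mathcal{A}_J$ to bounds on $\mathcal{N}_J$. First I would verify that, in the regime $J/A^3 \to \infty$, all the conditions of the HDLC$_{3/4}$ limit \eqref{def:HDLC} are satisfied: the gap condition $2\pi\alpha(1-w^2)\sqrt{J/A} - \beta_L = (\tfrac{7}{8}\pi\alpha - \kappa)\sqrt{J/A} \to \infty$ forces exactly $\kappa < \tfrac{7}{8}\pi\alpha$; $\beta_L/A \to \infty$ is the defining scaling $J/A^3 \to \infty$; and $\beta_L^{-1}\log(J/A) \to 0$ is immediate. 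I would then check, with the stated $\varepsilon_1,\varepsilon_2$, that the lower-bound inequalities \eqref{def:epsilonminC} hold: $\varepsilon_{1,\min} \sim \sqrt{A/J}\log(A\beta_L) \sim \sqrt{A/J}\log(AJ)$ and $\varepsilon_{2,\min} \sim \beta_L^{-1}[A + \log(\beta_L\sqrt{AJ})] \sim \sqrt{A/J}(A + \log J)$ match the stated scalings exactly, so the inequalities reduce to matching numerical constants.

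Theorem \ref{theorem:largec} then delivers
\[
\mathcal{A}_J(\beta_L,\varepsilon_1,\varepsilon_2,A) \;=\; 4\pi^{5/2}\,\beta_L^{-3/2}\,J^{-1/2}\,e^{4\pi\sqrt{AJ}}\cdot P,
\]
where $P$ is an $O(1)$ prefactor asymptotically trapped, at $w=3/4$ and $\varepsilon := \max(\varepsilon_1,\varepsilon_2) \to 0$, between the two sides of \eqref{eq:resultFiniteEpsilonB:largec} (both tending to $4/3$). Substituting $\beta_L = \kappa\sqrt{J/A}$ reduces the algebraic prefactor to $\kappa^{-3/2}A^{3/4}J^{-5/4}$, contributing at most $\tfrac{3}{4}\log A + \tfrac{5}{4}\log J + \log(\kappa^{-3/2})$ to $\log \mathcal{A}_J$, i.e.\ $O(\log(AJ)) + C(\kappa)$.

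The final step applies \eqref{NJbounds:largec}. With the stated choices,
\[
\varepsilon_1\beta_L = \tfrac{\kappa}{\pi\alpha}\log(AJ), \qquad \varepsilon_2\beta_L = 6\pi A + 3\log J,
\]
so the lower bound on $\mathcal{N}_J$ only loses a logarithmic factor in $AJ$, while the upper bound loses $e^{6\pi A}$ plus a logarithmic factor. This is precisely why $\varepsilon_2$ is chosen proportional to $\sqrt{A/J}\cdot A$ rather than $\sqrt{A/J}\log(AJ)$: at large $c$ one must pay an $O(e^{A})$ toll in one direction to turn the thermally weighted count $\mathcal{A}_J$ into a hard count $\mathcal{N}_J$. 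Gathering the contributions from $\mathcal{A}_J$ and from $e^{\pm \varepsilon_i \beta_L}$ produces the claimed bound $|f_\kappa(A,J)| \leq 6\pi A + 5\log(AJ) + C(\kappa)$. The entire argument is mechanical once Theorem \ref{theorem:largec} is in hand; the only delicate point is arithmetical, namely pinning down the precise numerical constants (including the upper cutoff $\tfrac{7}{8}\pi\alpha$ on $\kappa$) so that $\varepsilon_i \geq \varepsilon_{i,\min}$ holds with the $\varepsilon_i$ as stated, and I expect no conceptual obstruction beyond that.
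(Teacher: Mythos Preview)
Your proposal is correct and follows essentially the same route as the paper: choose $w=3/4$, set $\beta_L=\kappa\sqrt{J/A}$ with $\kappa\in(0,2\pi\alpha(1-w^2))=(0,7\pi\alpha/8)$, verify the HDLC$_w$ conditions and the inequalities $\varepsilon_i\geqslant\varepsilon_{i,\min}$, apply Theorem~\ref{theorem:largec}, and then pass to $\mathcal{N}_J$ via \eqref{NJbounds:largec}. One small arithmetic slip: at $w=3/4$ and $\varepsilon\to0$ the lower bound in \eqref{eq:resultFiniteEpsilonB:largec} tends to $\tfrac{4}{3}\cdot\tfrac{3\pi}{3\pi+4}$ rather than $\tfrac{4}{3}$, but this is still a positive $O(1)$ constant and is harmlessly absorbed into $C(\kappa)$.
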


\subsection{Near-extremal rotating BTZ black holes}
In this subsection, we will discuss the implications of the results we have derived in the previous subsection within the context of holography. It is worth noting that the Cardy-like formulas are commonly used to compute the entropy of black holes in AdS$_3$ \cite{Strominger:1996sh}. However, our current investigation focuses on the rotating BTZ black holes.

The near-extremal rotating BTZ black hole has an approximate AdS$_2\times S^1$ throat, as discussed in section 4.1 of \cite{Ghosh:2019rcj}. Since the Schwarzian action describes gravity in nearly AdS$_2$ spacetime \cite{Maldacena:2016upp}, it is reasonable to expect that the nearly AdS$_2\times S^1$ throat of the near-extremal rotating BTZ black hole is described by the Schwarzian action. From a holographic perspective, this suggests the existence of a \textit{Schwarzian} sector in 2D holographic CFT.

In the work \cite{Ghosh:2019rcj}, the authors made significant progress in identifying this Schwarzian sector. They specifically highlighted that the near-extremal limit in 3D gravity is dual to the double lightcone limit in 2D CFT. Their analysis led to the proposal of a universal sector described by Schwarzian theory in irrational 2D CFTs. The methodology closely follows the intuitive aspects of the lightcone bootstrap in the large central charge regime, effectively capturing the qualitative features of this limit. Furthermore, the authors extended their analysis to correlators, providing additional evidence for the proposed universality.

Here we would like to focus on the torus partition function. In section \ref{section:largec:schwarzian recap}, we will briefly review the argument in \cite{Ghosh:2019rcj}. Then in section \ref{section:largec:fineprints}, we will compare our results to the ones in \cite{Ghosh:2019rcj}, and clarify what we can justify and what we cannot.

\subsubsection{Review: rotating BTZ black hole thermodynamics in the near-extremal limit}\label{section:largec:schwarzian recap}
In this section, we provide a concise review of the key points from \cite{Ghosh:2019rcj} regarding the analysis of the torus partition function in the lightcone limit and its connection to the near-extremal limit of the rotating BTZ black hole in AdS$_3$.

The metric of the rotating BTZ black hole (without electric charge) is given by \cite{Banados:1992wn}
\begin{equation}
ds^2= -f(r)dt^2+\frac{dr^2}{f(r)}+r^2\left(d\phi-\frac{r_+r_-}{\ell_3r^2}dt\right)^2\,, \quad f(r):=\frac{(r^2-r_+^2)(r^2-r_-^2)}{\ell_3^2r^2}\,.
\end{equation}
Here, $r_\pm$ denote the radii of the outer and inner horizons, respectively, satisfying $0<r_-\leqslant r_+$, and $\ell_3$ denotes the radius of AdS$_3$ (which is related to the cosmological constant $\Lambda$ through $\Lambda=-\ell_3^{-2}$).

To facilitate our discussion, we will use dimensionless parameters for the physical quantities (such as temperature, mass, etc.) of the BTZ black hole. The corresponding dimensionful parameters can be obtained by multiplying dimensionless quantities by $(\ell_3)^{a}$ with the appropriate power indices $a$.

The mass $M$ and spin $J$ of the black hole are given by
\begin{equation}\label{BTZ:phys}
M=\frac{r_+^2+r_-^2}{8G_N\ell_3}\,,\quad J= \frac{r_+r_-}{4G_N\ell_3}\,,
\end{equation}
where $G_N$ is Newton's constant. The requirement for $r_{\pm}$ to be real implies the bound $J\leqslant M$.

The thermodynamic quantities of the BTZ black hole, including the Hawking temperature $T_{\rm H}$, the angular momentum chemical potential $\Omega$, and the black hole entropy $S$, were derived using various semiclassical methods \cite{Banados:1992wn,Hyun:1994na,Ichinose:1994rg,Carlip:1994gc,Brown:1994gs,Carlip:1995qv,Carlip:1994hq,Englert:1994hu,Medved:2001ca,Emparan:1998qp}. The expressions for these quantities are as follows:
\begin{equation}\label{BTZ:thermo}
T_{\rm H}=\frac{r_+^2-r_-^2}{2\pi\ell_3 r_+}\,,\quad\Omega=\frac{r_-}{r_+}\,,\quad S=\frac{\pi r_+}{2G_N}\,.
\end{equation}
The black hole thermodynamics can be explicitly verified: $dM=T_{\rm H} dS+\Omega dJ$. 

In the classical regime where $G_N \ll \ell_3$ and in the near-extremal regime where $r_+ \approx r_-$, the entropy of the near-extremal black hole is related to its angular momentum using \eqref{BTZ:phys}, \eqref{BTZ:thermo}, and the Brown-Henneaux relation $c = \frac{3\ell_3}{2G_N}$ \cite{Brown:1986nw}. We find that the entropy is given by
\begin{equation}\label{BTZ:entropy}
	S \approx 2\pi\sqrt{\frac{c}{6}J}\approx4\pi\sqrt{AJ}\quad(c\gg1).
\end{equation}
This result is consistent with corollary \ref{cor:operatorcount:largec} which provides the operator counting formula in the large $c$ limit. It is remarkable that from the CFT side, we obtain the correct entropy formula for near-extremal black holes, providing a gravitational interpretation of our results. This matching between the CFT and gravitational descriptions not only reinforces the validity of the thermodynamic description of black hole physics but also provides support for the holographic principle.

The inverse temperatures for left and right movers, denoted as $\beta_L$ and $\beta_R$ respectively, are related to $T_{\rm H}$ and $\Omega$ as follows:
\begin{equation}\label{BTZ:defbeta}
	\begin{split}
		\beta_{L}=(1+\Omega)\beta,\quad\beta_{R}=(1-\Omega)\beta\quad(\beta=T_{\rm H}^{-1}).
	\end{split}
\end{equation}
Expressing $\beta_L$ and $\beta_R$ in terms of $r_\pm$, we have:
\begin{equation}
	\begin{split}
		\beta_{L}=\frac{2\pi\ell_3}{r_+-r_-},\quad\beta_{R}=\frac{2\pi\ell_3}{r_++r_-}.
	\end{split}
\end{equation}
In \cite{Preskill:1991tb}, it was emphasized that the self-consistency of semiclassical methods requires the back reaction of black hole radiation to be negligible. Specifically, the fluctuation in Hawking temperature, denoted as $\Delta T_{\rm H}$, should be much smaller than $T_{\rm H}$ itself:
\begin{equation}
	\begin{split}
		\frac{\braket{(\Delta T_{\rm H})^2}}{T_{H}^2}\ll1.
	\end{split}
\end{equation} 
This condition can be expressed using a standard thermodynamic argument \cite{landau2013statistical} as:
\begin{equation}\label{BTZ:condition}
	\begin{split}
		\frac{\braket{(\Delta T_{\rm H})^2}}{T_{H}^2}=\frac{1}{T_{\rm H}}\left(\frac{\partial T_{\rm H}}{\partial S}\right)_J\ll1.
	\end{split}
\end{equation}
By substituting \eqref{BTZ:thermo} into \eqref{BTZ:condition}, we find:
\begin{equation}\label{BTZ:consistency}
	\begin{split}
		T_{\rm H}\gg\frac{G_N}{\ell_3}.
	\end{split}
\end{equation}
Using \eqref{BTZ:defbeta} and the Brown-Henneaux relation $c = \frac{3\ell_3}{2G_N}$ in the semiclassical regime $G_N \ll \ell_3$, the above constraint can be written as:
\begin{equation}\label{BTZ:gaptemp}
	\begin{split}
		\beta_{L},\beta_{R}\ll c\quad(c\gg1).
	\end{split}
\end{equation}
For this reason, in ref.\,\cite{Ghosh:2019rcj}, $c^{-1}$ is referred to as the ``gap temperature" of the BTZ black hole. It was believed that the thermodynamic description of the black hole breaks down when $T_{\rm H} = O(c^{-1})$.

The above analysis raises a puzzle regarding the validity of black hole thermodynamics in the near-extremal regime ($r_+\approx r_-$ or $\Omega\approx1$), where $T_{\rm H}$ could be of $O(c^{-1})$ or even smaller. A resolution to this puzzle is recently proposed in \cite{Ghosh:2019rcj}.  See also \cite{Turiaci:2023wrh} for a recent review.

Ref.\,\cite{Ghosh:2019rcj} investigated the near-extremal regime of black holes characterized by the conditions:
\begin{equation}\label{GMT:condition1}
	\begin{split}
		\beta_{L}=O(c),\quad\beta_{R}=O(c^{-1})\quad(c\gg1).
	\end{split}
\end{equation}
Based on the previous argument, it appears that black hole thermodynamics breaks down in this regime due to the violation of the ``gap temperature condition" \eqref{BTZ:gaptemp}. However, \cite{Ghosh:2019rcj} proposed that the thermodynamics remains valid, but it is no longer described by semiclassical methods. Instead, a quantum mode governed by Schwarzian theory \cite{Maldacena:2016upp,Mertens:2017mtv} becomes dominant in the near-extremal regime.

Furthermore, \cite{Ghosh:2019rcj} argued for the appearance of the Schwarzian sector universally in a broad class of $c\gg1$ irrational CFTs with a twist gap and a significantly large central charge $c\gg1$. Here we aim to revisit their argument while presenting it from a slightly different perspective.  Instead of focusing on the full partition function $Z(\beta_{L},\beta_{R})$ as examined in \cite{Ghosh:2019rcj}, our analysis focuses on the reduced partition function $\tilde{Z}(\beta_{L},\beta_{R})$, which exclusively accounts for the Virasoro primaries. The computation below follows the same logic as in \cite{Ghosh:2019rcj}.\footnote{For a review on the same argument in terms of the full partition function $Z(\beta_{L},\beta_{R})$, see appendix B of \cite{Aggarwal:2022xfd}.} This choice enables us to conveniently compare our results with those of \cite{Ghosh:2019rcj}.

The intuitive argument goes as follows. In the grand canonical ensemble, focusing on $\tilde{Z}(\beta_{L},\beta_{R})$, the limit as $\beta_R\to 0$ favors the dominance of the vacuum state in the dual channel, while the limit as $\beta_L\to \infty$ favors non-vacuum states. The presence of a twist gap in the spectrum of Virasoro primaries ensures that each non-vacuum term is suppressed, leading to the vacuum state's dominance in the dual channel. The vacuum contribution in the dual channel can then be identified with the contribution from the rotating BTZ black hole. Hence, we have the approximation:
\begin{equation}\label{GMT:BTZdominance}
	\begin{split}
		\tilde{Z}(\beta_{L},\beta_{R})\stackrel{\eqref{GMT:condition1}}{\approx}\sqrt{\frac{4\pi^2}{\beta_{L}\beta_{R}}}e^{\frac{4\pi^2 A}{\beta_{L}}+\frac{4\pi^2A}{\beta_{R}}}\left(1-e^{-\frac{4\pi^2}{\beta_{L}}}\right)\left(1-e^{-\frac{4\pi^2}{\beta_{R}}}\right)\equiv \tilde{Z}_{\rm BTZ}(\beta_{L},\beta_{R}),
	\end{split}
\end{equation}
where $A\equiv\frac{c-1}{24}$. In the large central charge limit, the left-moving part of $\tilde{Z}_{\rm BTZ}$ can be identified to the Schwarzian partition function. To see this, we introduce the Schwarzian variable $\tilde{\beta}$ by rescaling $\beta(\equiv(\beta_{L}+\beta_{R})/2)$
\begin{equation}\label{def:Schwbeta}
	\begin{split}
		\tilde{\beta}:=\frac{\beta}{2A}\left(\equiv\frac{12\beta}{(c-1)}\right).
	\end{split}
\end{equation}
In the regime \eqref{GMT:condition1} with large central charge, we have $\beta\approx\beta_{L}/2$, then the grand canonical partition function can be further approximated as
\begin{equation}
	\begin{split}
		\tilde{Z}(\beta_{L},\beta_{R})\stackrel{\eqref{GMT:condition1}}{\approx}\left(\frac{\pi}{\tilde{\beta}}\right)^{3/2}e^{\pi^2/\tilde{\beta}}\times\left(\frac{\pi^2}{A}\right)^{3/2}\beta_{R}^{-1/2}e^{\frac{4\pi^2A}{\beta_{R}}},
	\end{split}
\end{equation}
and
\begin{equation}\label{def:Schwpartition}
	\begin{split}
		Z_{\text{Schw}}(\tilde{\beta})=\left(\frac{\pi}{\tilde{\beta}}\right)^{3/2}e^{\pi^2/\tilde{\beta}}
	\end{split}
\end{equation}
is the Schwarzian partition function with a circle length $\tilde{\beta}$. So we see that the grand canonical partition function is dominated by the Schwarzian modes in the regime \eqref{GMT:condition1}. The grand canonical entropy can be determined using the standard thermodynamic formula:
\begin{equation}\label{GMT:Sgrand}
	\begin{split}
		S_{\rm grand}(\beta_{L},\beta_{R})\equiv&\left(1-\beta_{L}\frac{\partial}{\partial\beta_{L}}-\beta_{R}\frac{\partial}{\partial\beta_{R}}\right)\log\tilde{Z}(\beta_{L},\beta_{R}) \\
		=&\frac{8\pi^2 A}{\beta_{R}}+O(\log\beta_{L})+O(\log\beta_{R})+O(\log A)
	\end{split}
\end{equation}
Here, the entropy from $Z_{\rm Schw}$ is included in the error term. 

The Schwarzian sector can also be seen in the canonical ensemble of primary states with $\bar{h}=h+J$ (spin-$J$):
\begin{equation}\label{GMT:defZJ}
	\begin{split}
		\tilde{Z}_J(\beta)\equiv\int_{0}^{2\pi}\frac{d\theta}{2\pi}\,e^{i\theta J}\,\tilde{Z}(\beta-i\theta,\beta+i\theta).
	\end{split}
\end{equation}
In \cite{Ghosh:2019rcj}, it was argued that in an equivalent near-extremal regime:\footnote{We modify the condition stated in ref.\,\cite{Ghosh:2019rcj}, eq.\,(2.14), to $J=O(c^3)$. We believe this to be the correct equivalent condition to $\beta_{R}=O(c^{-1})$. The reason will become clear shortly: we will see that $\beta_{R}=2\pi\sqrt{\frac{A}{J}}$ under the saddle point approximation. This identification, along with the condition $\beta_{R}=O(c^{-1})$, implies $J=O(c^3)$. This has also been noted in the appendix B of \cite{Aggarwal:2022xfd} to justify the validity of saddle.}
\begin{equation}\label{GMT:condition2}
	\begin{split}
		\beta=O(c),\quad J=O(c^3)\quad(c\gg1),
	\end{split}
\end{equation}
$\tilde{Z}_J(\beta)$ can be evaluated by replacing $\tilde{Z}$ with $\tilde{Z}_{\rm BTZ}$ and computing the contribution around the complex saddle point $\theta= i\beta-2\pi i\sqrt{\frac{A}{J}}+O(J^{-1})$. This yields:
\begin{equation}\label{GMT:Zcanonicalresult}
	\begin{split}
		\tilde{Z}_J(\beta)\stackrel{\eqref{GMT:condition2}}{\approx}\sqrt{2}\pi^{5/2}\beta^{-3/2}J^{-1/2}e^{4\pi\sqrt{AJ}-\beta J+\frac{2\pi^2A}{\beta}}.
	\end{split}
\end{equation}
Then the canonical partition function $\tilde{Z}_J(\beta)$ can be expressed as
\begin{equation}
	\begin{split}
		\tilde{Z}_J(\beta)=Z_{\rm Schw}(\tilde{\beta})\times e^{4\pi\sqrt{AJ}-\beta J+O(\log A)+O(\log\beta)},
	\end{split}
\end{equation}
The canonical entropy is obtained using the standard thermodynamic formula:
\begin{equation}\label{GMT:Scanonical}
	\begin{split}
		S_{\rm canon}(J,\beta)\equiv\left(1-\beta\frac{\partial}{\partial\beta}\right)\tilde{Z}_J(\beta)=4\pi\sqrt{AJ}+O(\log A)+O(\log\beta).
	\end{split}
\end{equation}
Here, the entropy from $Z_{\rm Schw}$ is included in the $O(\log A)+O(\log\beta)$ terms. To match the grand canonical entropy \eqref{GMT:Sgrand} with the canonical entropy \eqref{GMT:Scanonical}, we recall $\beta_{R}=\beta+i\theta=2\pi\sqrt{\frac{A}{J}}+O(J^{-1})$ in the saddle point approximation. This identification ensures the agreement of their leading terms: $S=4\pi\sqrt{AJ}+(errors)$.

Let us consider the microcanonical ensemble now. In the limit \eqref{GMT:condition1}, where $\beta_{L}\approx2\beta=O(c)$, we expect the dominant contribution to the grand canonical partition function $\tilde{Z}(\beta_{L},\beta_{R})$ to arise from states with $h-A=O(c^{-1})$. This expectation is based on the duality between $\beta_{L}$ and $h-A$ in the definition of $\tilde{Z}(\beta_L,\beta_R)$ (as seen in \eqref{def:Ztilde}). Additionally, in the canonical ensemble, we identify $\beta_{R}$ with the saddle point at large spin, which leads to $\beta_{R}=2\pi\sqrt{\frac{A}{J}}$. Using this relation together with \eqref{GMT:condition1}, we find that $J\equiv\bar{h}-h=O(c^3)$. In the microcanonical ensemble, we therefore anticipate that the relevant spectrum for the limit \eqref{GMT:condition1} consists of states characterized by
\begin{equation}\label{GMT:condition3}
	\begin{split}
		J(\equiv\bar{h}-h)=O(c^3),\quad h-A=O(c^{-1}).
	\end{split}
\end{equation}
To determine the microcanonical entropy, we need to determine the number of Virasoro primaries within the range \eqref{GMT:condition3}. Based on the BTZ dominance \eqref{GMT:BTZdominance} in the limit \eqref{GMT:condition1}, a plausible approach is to perform an inverse Laplace transform of $\tilde{Z}_{\rm BTZ}$ to obtain the coarse-grained spectral density of Virasoro primaries within the specified range. The inverse Laplace transform of $\tilde{Z}_{\rm BTZ}$ yields the following expressions for the left- and right-moving vacuum characters:
\begin{equation}
	\begin{split}
		\frac{1}{\sqrt{\beta}}e^{\frac{4\pi^2A}{\beta}}(1-e^{-\frac{4\pi^2}{\beta}})=\int_{A}^{\infty}d \bar h\,\rho_0(A;\bar h)\,e^{-(\bar h-A)\beta}, \\
	\end{split}
\end{equation}
where the modular crossing kernel $\rho_0$ is given by
\begin{equation}
	\begin{split}
		\rho_0(A;\bar h)=\dfrac{\cosh(4\pi\sqrt{A(\bar h-A)})}{\sqrt{(\bar h-A)\pi}}-\dfrac{\cosh(4\pi\sqrt{(A-1)(\bar h-A)})}{\sqrt{(\bar h-A)\pi}}. \\
	\end{split}
\end{equation}
Therefore, $2\pi\rho_{0}(A;h)\rho_{0}(A;\bar{h})$ serves as the naive coarse-grained spectral density for $\tilde{Z}$. Let us use it to estimate the total number of Virasoro primaries within the range:
\begin{equation}\label{GMT:microrange}
	\begin{split}
		\abs{h-A}\leqslant\lambda c^{-1},\quad \abs{h-A-J}\leqslant\frac{1}{2},\quad\lambda=O(1).
	\end{split}
\end{equation}
By integrating $2\pi\rho_{0}(A;h)\rho_{0}(A;\bar{h})$ over the specified range of $h$ and $\bar{h}$, we obtain
\begin{equation}
	\begin{split}
		&2\pi\int_A^{A+\lambda c^{-1}} dh\,\rho_{0}(A,h)\,\times\,\int_{A+J-\frac{1}{2}}^{A+J+\frac{1}{2}}d\bar{h}\,\rho_{0}(A;\bar{h}) \\
		\approx&\frac{\pi^{5/2}}{A^{3/2}}\int_0^{\lambda/6}d(k^2)\sinh(2\pi k)\,\times\,\frac{1}{2\sqrt{\pi J}}\,e^{4\pi\sqrt{AJ}}\quad(k=2\sqrt{A(h-A)}).
	\end{split}
\end{equation}
In this expression, the integral over the left-moving part is identified as the integral over the Schwarzian density of states (which is $\sinh(2\pi k)$). By taking the logarithm of the result, we obtain the microcanonical entropy of the states within the range \eqref{GMT:microrange}:
\begin{equation}\label{GMT:Smicro}
	\begin{split}
		S_{\rm micro}(A,\lambda,J)=4\pi\sqrt{AJ}+O(\log A)+O(\log J),
	\end{split}
\end{equation}
where the contribution from the left-moving part, which includes the Schwarzian sector and is of $O(\log A)$, is absorbed into the error term.

The agreement between the entropy computations in different ensembles for near-extremal BTZ black holes, as seen from eqs.\,\eqref{GMT:Sgrand}, \eqref{GMT:Scanonical}, and \eqref{GMT:Smicro}, is significant. Furthermore, the entropy formula $S=4\pi\sqrt{AJ}$ can be reproduced using eqs.\,\eqref{BTZ:phys}, \eqref{BTZ:thermo}, and $c=\frac{3\ell_3}{2G_N}$ in the near-extremal regime. This provides strong evidence supporting the validity of the thermodynamic description of AdS$_3$ pure gravity in the near-extremal regime, where the Hawking temperature is of the same order as the ``gap temperature".

However, it is important to note that the above arguments have certain caveats, and it is necessary to consider these limitations when concluding the existence of a universal Schwarzian sector in a general class of irrational CFTs, even without a gravitational dual. We will discuss these caveats in the next subsection.
	
\subsubsection{Fine-prints and resolution a la Tauberian}\label{section:largec:fineprints}
In this subsection, we would like to compare our results from section \ref{section:holoCFTresults} to the ones in \cite{Ghosh:2019rcj}, and put some of the intuitive arguments above on rigorous footing and clarify what can be proven rigorously. 

Let us first consider the partition function in the grand canonical ensemble: $\tilde{Z}(\beta_{L},\beta_{R})$. By assuming \eqref{assum1} and \eqref{assum2}, we can establish the BTZ dominance, i.e.\,the vacuum dominance in the dual channel, within the regime specified by \eqref{GMT:condition1}. Similar to the estimate performed in \cite{Pal:2022vqc}, we find that
\begin{equation}
	\begin{split}
		\frac{\tilde{Z}_{\rm BTZ}(\beta_{L},\beta_{R})}{\tilde{Z}(\beta_{L},\beta_{R})}=1+O\left[\beta_{L}^{3/2}e^{-A\left(\frac{4\pi^2\alpha}{\beta_{R}}-\beta_{L}-3\pi\right)}\right],
	\end{split}
\end{equation}
where $\tilde{Z}_{\rm BTZ}$ was defined in \eqref{GMT:BTZdominance}. From this result, the BTZ dominance can be reached under condition \eqref{GMT:condition1} and an extra technical assumption
\begin{equation}
	\begin{split}
		\frac{4\pi^2\alpha}{\beta_{R}}-\beta_{L}\geqslant\kappa c,
	\end{split}
\end{equation}
where $\kappa>0$ is a fixed constant. The above extra assumption is weaker than the one imposed in the HDLC$_w$ condition \eqref{def:HDLCequiv} (equivalent to \eqref{def:HDLC} when we identify $\beta_R=2\pi\sqrt{A/J}$). In this limit, $\beta_{L}$ is allowed to be of $O(c)$, which gives an $O(1)$ Schwarzian variable $\tilde{\beta}\approx\frac{\beta_{L}}{4A}$ using \eqref{def:Schwbeta}. Consequently, the grand canonical partition function $\tilde{Z}(\beta_{L},\beta_{R})$ exhibits the following asymptotic behavior:
\begin{equation}
	\begin{split}
		\tilde{Z}(\beta_{L},\beta_{R})\approx\tilde{Z}_{\rm BTZ}(\beta_{L},\beta_{R})\approx Z_{\rm Schw}(\tilde{\beta})\times\left(\frac{\pi^2}{A}\right)^{3/2}\beta_{R}^{-1/2}e^{\frac{4\pi^2A}{\beta_{R}}}.
	\end{split}
\end{equation}
Therefore, in our framework, we have justified that the Schwarzian partition function appears in the left-moving sector of the grand canonical partition function. However, it should be noted that the presence of the Schwarzian sector in the theory is not guaranteed, as the direct channel spectrum capable of reproducing the above asymptotic behavior is not necessarily unique.

Let us next consider the canonical ensemble. We would like to start by establishing a relationship between the quantity $\mathcal{A}_J(\beta_{L},\varepsilon_1,\varepsilon_2,A)$ (defined in \eqref{def:AJC}) which we used in theorem \ref{theorem:largec} and the canonical partition function $\tilde{Z}_{J}(\beta)$ (defined in \eqref{GMT:defZJ}). By explicitly evaluating the integral in \eqref{GMT:defZJ}, we find
\begin{equation}
	\begin{split}
		\tilde{Z}_J(\beta)=e^{-\beta J}\sum\limits_{h}n_{h,h+J}e^{-2\beta(h-A)}.
	\end{split}
\end{equation}
Comparing this expression with \eqref{def:AJC}, we obtain the exact relation
\begin{equation}\label{relation:ZtildeA}
	\begin{split}
		\tilde{Z}_J(\beta)=\mathcal{A}_J(2\beta,\infty,\infty,A)e^{-\beta J}.
	\end{split}
\end{equation}
Then it is not surprising that setting $\beta\approx\beta_{L}/2\gg A$ in \eqref{GMT:Zcanonicalresult} yields
\begin{equation}
	\begin{split}
		\tilde{Z}_J(\beta)\approx 4\pi^{5/2}\beta_{L}^{-3/2}J^{-1/2}e^{4\pi\sqrt{AJ}}\times e^{-\beta J},
	\end{split}
\end{equation}
which, by \eqref{relation:ZtildeA}, implies
\begin{equation}\label{GMT:AJresult}
	\begin{split}
		\mathcal{A}_J(\beta_{L},\infty,\infty,A)\approx4\pi^{5/2}\beta_{L}^{-3/2}J^{-1/2}e^{4\pi\sqrt{AJ}}.
	\end{split}
\end{equation}
This result agrees well with theorem \ref{theorem:largec} (see \eqref{eq:resultFiniteEpsilonB:largec}, the denominator below $\mathcal{A}_J$). The main distinction is that \eqref{GMT:AJresult} takes into account contributions from all spin-$J$ states (i.e.\,with $\varepsilon_1=\varepsilon_2=\infty$), while theorem \ref{theorem:largec} only counts the contributions from spin-$J$ states with twists near $\frac{c-1}{12}$ (i.e.\,with finite $\varepsilon_1$ and $\varepsilon_2$). This difference is actually one of the main points of theorem \ref{theorem:largec}: in the HDLC$_w$ limit, both $\mathcal{A}_J(\beta_{L},\varepsilon_1,\varepsilon_2,A)$ and $\mathcal{A}_J(\beta_{L},\infty,\infty,A)$ yield the same leading behavior.

Now, let us clarify the conditions required for our analysis in the canonical ensemble. For theorem \ref{theorem:largec} to hold true, we need the HDLC$_w$ conditions \eqref{def:HDLC}, which imply:
\begin{equation}\label{holoCFT:condition1}
	\begin{split}
		\beta_{L}\gg c,\quad J\gg c^{3}.
	\end{split}
\end{equation}
Thus, theorem \ref{theorem:largec} pertains to a different regime than the one considered in \cite{Ghosh:2019rcj}, where they require the condition \eqref{GMT:condition2}. In \cite{Ghosh:2019rcj}, having $\beta_{L}=O(c)$ is crucial as it enables the identification of the Schwarzian sector through the rescaling \eqref{def:Schwbeta}. The condition \eqref{holoCFT:condition1} indicates that our results are valid when the Hawking temperature is much lower than the ``gap temperature":
\begin{equation}
	\begin{split}
		T_{\rm H}\ll c^{-1}.
	\end{split}
\end{equation}
In addition, according to theorem \ref{theorem:largec}, the HDLC$_w$ limit conditions \eqref{def:HDLC} impose a lower bound on the temperature. Specifically, we require
\begin{equation}
	\begin{split}
		T_{\rm H}\geqslant{\rm const}\times\alpha^{-1}\sqrt{\frac{c}{J}}
	\end{split}
\end{equation}
in order to ensure the validity of our analysis. 

Figure \ref{fig:threeregimes} provides a schematic illustration of the corresponding regimes for our arguments, as well as for the arguments presented in \cite{Ghosh:2019rcj} and \cite{Preskill:1991tb}.
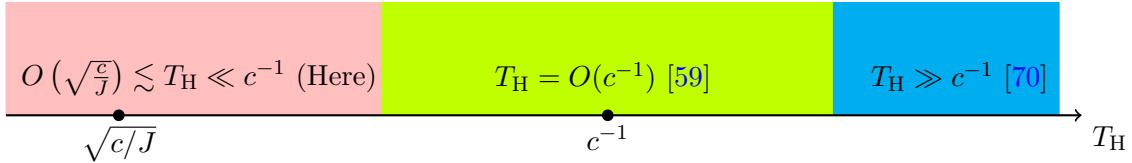
\begin{figure}
	\centering
	\begin{tikzpicture}
		\draw [fill=pink,pink] (-3,0) rectangle (2,1.5);  
		\draw [fill=lime,lime] (2,0) rectangle (8,1.5);  
		\draw [fill=cyan,cyan] (8,0) rectangle (11,1.5);  
		\draw[thick,->] (-3,0) -- (11.3,0) node[anchor=north west] {$T_{\rm H}$};
		\draw (5,0) node[anchor=north] {$c^{-1}$};
		\draw (5,0) circle[radius=2pt];
		\fill (5,0)  circle[radius=2pt];
		\draw (-1.5,0) node[anchor=north] {$\sqrt{c/J}$};
		\draw (-1.5,0) circle[radius=2pt];
		\fill (-1.5,0)  circle[radius=2pt];
		\node[text width=5cm] at (-0.3,0.5) {$O\left(\sqrt{\frac{c}{J}}\right)\lesssim T_{\rm H}\ll c^{-1}$ (Here)};
		\node[text width=3cm] at (5,0.5) {$T_{\rm H}=O(c^{-1})$ \cite{Ghosh:2019rcj}};
		\node[text width=3cm] at (10,0.5) {$T_{\rm H}\gg c^{-1}$ \cite{Preskill:1991tb}};
	\end{tikzpicture}
    \caption{\label{fig:threeregimes}The three regimes of $T_{\rm H}$ compared to the ``gap temperature" $c^{-1}$ (assuming $c\gg1$ and $\beta_{R}=O(c^{-1})$ or smaller). The results in our paper are valid in the pink regime. The arguments of \cite{Ghosh:2019rcj} (Schwarzian sector) correspond to the green regime. The arguments of \cite{Preskill:1991tb} (general regime of validity of black hole thermodynamics) correspond to the blue regime.}
\end{figure}
We would like to highlight that the arguments concerning the universal Schwarzian sector presented in the previous subsection \cite{Ghosh:2019rcj} are meaningful only if it can be shown that the dominant contribution to $\tilde{Z}_J(\beta)$ arises from the BTZ partition function, subject to the condition \eqref{GMT:condition2}. However, to our knowledge, this has not been rigorously established yet. A clean treatment of this issue would involve proving the following equation:
\begin{equation}\label{GMT:toestablish}
	\begin{split}
		\lim\limits_{c\rightarrow\infty}\frac{\tilde{Z}_{J,\rm BTZ}(\beta)}{\tilde{Z}_{J}(\beta)}\rightarrow1
	\end{split}
\end{equation}
under the condition \eqref{GMT:condition2}, where $\tilde{Z}_{J,\rm BTZ}(\beta)$ represents the contribution from the BTZ partition function. Eq.\,\eqref{GMT:toestablish} is similar to \eqref{modular:sketch1}, where the compact support of $\hat{\phi}_\pm$ played a crucial role in the proof. In the case of \eqref{GMT:toestablish}, the analog of $\hat{\phi}_{\pm}$ is $e^{i\theta J}$, which is supported over the entire real axis of $\theta$. Due to this technical complication, we are unable to rigorously justify \eqref{GMT:toestablish}. It is possible that \eqref{GMT:toestablish} is not universally true for all classes of irrational CFTs, but may hold with certain additional assumptions that arise from the gravitational perspective. We leave this question for future study.

Lastly, let us consider the microcanonical ensemble. In this paper, we have shown that in the HDLC$_w$ limit, the dominant contribution to $\mathcal{A}_J(\beta_{L},\varepsilon_1=\infty,\varepsilon_2=\infty,A)$ arises from the spectrum satisfying the conditions
\begin{equation}\label{holoCFT:condition2}
	\begin{split}
		J\gg c^3,\quad\Delta-J-\frac{c-1}{12}\in(-\varepsilon_1,\varepsilon_2),
	\end{split}
\end{equation}
where
\begin{equation}\label{holoCFT:condition3}
	\begin{split}
		\varepsilon_1=O\left(\sqrt{\frac{c}{J}}\log(J)\right),\quad\varepsilon_2=O\left(\sqrt{\frac{c^3}{J}}\right)+O\left(\sqrt{\frac{c}{J}}\log(J)\right).
	\end{split}
\end{equation}
We have established that within this range, the leading term of the microcanonical entropy is given by $S_{\rm micro}=4\pi\sqrt{AJ}$, which coincides with \eqref{GMT:Smicro} as well as the standard black hole thermodynamic prediction \eqref{BTZ:entropy}. It is important to note that our argument is purely based on CFT considerations and applies to a general class of irrational CFTs, without necessarily relying on a gravitational interpretation.

While the microcanonical entropy formulas are the same, the regime of validity of our result differs from that of \cite{Ghosh:2019rcj}, as mentioned earlier. In our case, the first condition \eqref{holoCFT:condition2}, which also appeared in the canonical ensemble (see \eqref{holoCFT:condition1}), is more restrictive than the corresponding condition in \eqref{GMT:condition2}.

In \cite{Ghosh:2019rcj}, it is crucial for the spectrum with $0\leqslant\Delta-J-\frac{c-1}{12}\leqslant O(c^{-1})$ to dominate $\tilde{Z}(\beta_{L},\beta_{R})$ in the limit \eqref{GMT:condition1} (i.e.\,the second condition of \eqref{GMT:condition3}) in order to identify $\Delta-J-\frac{c-1}{12}$ with the positive Schwarzian energy of $O(1)$:
\begin{equation}
	\begin{split}
		k^2\propto c\left(\Delta-J-\frac{c-1}{12}\right)=O(1).
	\end{split}
\end{equation}
In contrast, in the second condition of our case \eqref{holoCFT:condition2}, we do not exclude the spectrum with twists lower than $\frac{c-1}{12}$ (i.e.\,$\Delta-J-\frac{c-1}{12}\in(-\varepsilon_1,0)$). Additionally, the width of the window depends on $J$ and is not necessarily of $O(c^{-1})$.

In our case, it is unclear which modes within the range $\Delta-J-\frac{c-1}{12}\in(-\varepsilon_1,\varepsilon_2)$ are more important for the partition function $\tilde{Z}(\beta_{L},\beta_{R})$ in the HDLC$_w$ limit. It would be interesting to investigate the general conditions under which we can access the ``Schwarzian regime" \eqref{GMT:condition3} and rigorously perform the microstate counting. We leave this for future study.

This finishes our discussion on holographic CFTs. We end this section with the remark that it is conceivable to generalize the rigorous discussion to the supersymmetric case along the lines of \cite{Heydeman:2020hhw,Iliesiu:2022kny,Iliesiu:2022onk}.

\section{Conclusion and brief discussion}\label{section:conclusion}
In this paper, we present a refined twist accumulation result for two-dimensional unitary conformal field theories with central charge $c>1$ and a twist gap in the spectrum of Virasoro primaries. Using the lightcone bootstrap argument and Tauberian theory, we rigorously estimate the number of Virasoro primary operators with twist near $\frac{c-1}{12}$ and large spin, leading to the derivation of a Cardy-like formula \eqref{def:Micentropy} which counts the states around the twist accumulation point with twist spacing going to $0$ in the large spin limit.

We also explore potential generalizations of our result for CFTs with conserved currents. Depending on the growth of the number of currents $D(j)$ with respect to the spin $j$, the generalization to \eqref{def:Micentropy} can vary significantly. We propose conjectures in this regard and anticipate that converting these conjectures into theorems should be a feasible endeavor.

For future research, it would be interesting to consider irrational CFTs symmetric under a larger chiral algebra and with a twist gap in the spectrum of primaries of the larger chiral algebra. The twist accumulation point in the spectrum of such primaries is expected to shift, and it is conceivable to establish analogous rigorous findings for irrational CFTs with the larger chiral algebra.

Additionally, we study a family of CFTs with a twist gap growing linearly in the central charge and a uniform boundedness condition on the torus partition function. We establish a similar Cardy-like formula for microcanonical entropy in the limit of large central charge. From a holographic perspective, our result can be interpreted as the entropy formula for near-extremal rotating BTZ black holes in the regime where the Hawking temperature is much lower than the ``gap temperature". It would be interesting to investigate a general CFT condition, inspired by the gravity side, under which we can rigorously perform the microstate counting in the ``Schwarzian regime". In that regime, the Hawking temperature is comparable to the ``gap temperature".

Another avenue to consider is the investigation of CFTs with a global symmetry $\mathrm{G}$ and the study of the symmetry-resolved version of asymptotic CFT data. For instance, it is possible to derive the density of states restricted to an irreducible representation of $\mathrm{G}$ in the in large $\Delta$ limit  \cite{Pal:2020wwd} (see also \cite{Lin:2021udi,Lin:2023uvm}). This analysis has been extended to higher-dimensional CFTs and holographic CFTs in subsequent works such as \cite{Harlow:2021trr} followed by \cite{Kang:2022orq}. Universal results for CFTs with non-invertible symmetry are discussed in  \cite{Lin:2022dhv} and in \cite{Choi:2023xjw}.  By building upon the techniques elucidated in this paper, one can aspire to derive the universality of the CFT spectrum when restricted to an irreducible representation of $\mathrm{G}$ in the regime of fixed twist and large spin.  Furthermore, Tauberian theory can be potentially useful for extracting detailed structure of asymptotic CFT data,  unveiled in a beautiful recent paper \cite{Benjamin:2023qsc}. 

One might hope to use the techniques in this paper in the context of generic irrational CFTs with Virasoro symmetry only to shed light on 1) the claimed twist gap \cite{Benjamin:2019stq} of $\frac{c-1}{16}$ and/or 2) shifting of BTZ threshold $\frac{c-1}{24}$ by a spin dependent quantity \cite{Maxfield:2019hdt}.

Finally, it is crucial to highlight the significance of the identity block approximation in elucidating the origin of universal results in CFTs under appropriate limits. The Tauberian formalism serves as a valuable tool for rigorously understanding such approximations and their regime of validity.  Notably, recent investigations have shed light on subtle effects related to identity block dominance in \cite{Chandra:2023dgq} and \cite{Chandra:2023rhx}.  The Identity block approximation appears in the context of Virasoro mean field theory \cite{Kusuki:2018wpa,Collier:2018exn} as well. We believe that techniques developed in this paper will be useful to investigate  such effects and more.

\section*{Acknowledgments}
We thank Nathan Benjamin, Gabriele Di Ubaldo, Tom Hartman, Yikun Jiang, João Penedones, Eric Perlmutter,  Biswajit Sahoo and Joaquin Turiaci for useful and enlightening discussions.  We also thank Alexandre Belin, João Penedones, Slava Rychkov and Joaquin Turiaci for useful comments on the draft.  SP acknowledges the support by the U.S. Department of Energy, Office of Science, Office of High Energy Physics, under Award Number DE-SC0011632 and by the Walter Burke Institute for Theoretical Physics. JQ is supported by the Swiss National Science Foundation through the National Centre of Competence in Research SwissMAP and by the Simons Collaboration on Confinement and QCD Strings.

\appendix
\section{Estimating the dual-vacuum term}\label{app:Ivac:asym}	
In this section we compute the asymptotic behavior of the integral
\begin{equation}
	\begin{split}
		I^{\rm dual}_{\pm,\rm vac}\equiv&\int_{-\infty}^{+\infty} d t \sqrt{\frac{4\pi^2}{\beta_L(\beta_R+i t)}}\tilde{Z}_{\rm vac}\left(\frac{4\pi^2}{\beta_L},\frac{4\pi^2}{\beta_R+i t}\right)\hat{\phi}_\pm(t)e^{i(\bar{H}-A)t} \\
	\end{split}
\end{equation}
in the modular double lightcone limit, where
\begin{equation}
	\begin{split}
		\tilde{Z}_{\rm vac}\left(\beta,\bar{\beta}\right)&\equiv e^{A\left(\beta+\bar{\beta}\right)}\left(1-e^{-\beta}\right)\left(1-e^{-\bar{\beta}}\right). \\
	\end{split}
\end{equation}
By definition, $I^{\rm dual}_{\pm,\rm vac}$ is factorized into two parts
\begin{equation}\label{Ivac:factorize}
	\begin{split}
		I^{\rm dual}_{\pm,\rm vac}(A,\bar{H};\beta_L,\beta_R)=&I^{\rm dual,L}_{\pm,\rm vac}(A;\beta_L)I^{\rm dual,R}_{\pm,\rm vac}(A,\bar{H};\beta_R), \\
		I^{\rm dual,L}_{\pm,\rm vac}(A;\beta_L)=&\sqrt{\frac{4\pi^2}{\beta_L}}e^{\frac{4\pi^2A}{\beta_L}}\left(1-e^{-\frac{4\pi^2}{\beta_L}}\right), \\
		I^{\rm dual,R}_{\pm,\rm vac}(A,\bar{H};\beta_R)=&\int_{-\infty}^{+\infty} d t \frac{1}{\sqrt{\beta_R+i t}}e^{\frac{4\pi^2A}{\beta_R+i t}}\left(1-e^{-\frac{4\pi^2}{\beta_R+i t}}\right)\hat{\phi}_\pm(t)e^{i(\bar{H}-A)t}. \\
	\end{split}
\end{equation}
When $\beta_L\rightarrow\infty$, the asymptotic behavior of $I^{\rm dual,L}_{\pm,\rm vac}$ is given by
\begin{equation}
	\begin{split}
		I^{\rm dual,L}_{\pm,\rm vac}(A;\beta_L)\sim \left(\frac{4\pi^2}{\beta_L}\right)^{3/2}e^{\frac{4\pi^2A}{\beta_L}}.
	\end{split}
\end{equation}
For $I^{\rm dual,R}_{\pm,\rm vac}$ we introduce the identity
\begin{equation}\label{id:Ivac}
	\begin{split}
		\frac{1}{\sqrt{\beta}}e^{\frac{4\pi^2A}{\beta}}(1-e^{-\frac{4\pi^2}{\beta}})=\int_{A}^{\infty}d \bar h\,\rho_0(A;\bar h)\,e^{-(\bar h-A)\beta}, \\
	\end{split}
\end{equation}
where the kernel $\rho_0$ is given by
\begin{equation}\label{modularkernel}
	\begin{split}
		\rho_0(A;\bar h)=\dfrac{\cosh(4\pi\sqrt{A(\bar h-A)})}{\sqrt{(\bar h-A)\pi}}-\dfrac{\cosh(4\pi\sqrt{(A-1)(\bar h-A)})}{\sqrt{(\bar h-A)\pi}}. \\
	\end{split}
\end{equation}
Then by eqs.\,(\ref{Ivac:factorize}) and (\ref{id:Ivac}) we get
\begin{equation}\label{Idualvac:int}
	\begin{split}
		I^{\rm dual,R}_{\pm,\rm vac}(A,\bar{H};\beta_R)=\int_{A-\bar{H}}^{\infty}\mathrm{d}x\ \rho_0(A;x+\bar{H}) e^{-\beta_R(x+\bar{H}-A)} \phi_{\pm}(x).
	\end{split}
\end{equation}
We choose $\phi_\pm$ which satisfy the following properties
\begin{equation}\label{phipm:condition}
	\begin{split}
		(a)&\ \abs{\phi_{\pm}(x)}\leqslant \frac{C_{\pm}}{1+x^2},\quad C_\pm<\infty \\
		(b)&\ \hat{\phi}_{\pm}(0)\neq0 \\
	\end{split}
\end{equation}
Below we will study $I^{\rm dual,R}_{\pm,\rm vac}(A,\bar{H};\beta_R)$ with $\beta_R=2\pi\sqrt{\frac{A}{\bar{H}-A}}$ in two special limits
\begin{equation}\label{limit:HbarA}
	\begin{split}
		\mathrm{I}.&\ A\geqslant A_0>0,\quad \bar{H},\frac{\bar{H}}{A}\rightarrow\infty; \\
		\mathrm{II}.&\ A,\bar{H}\rightarrow\infty,\quad\frac{\bar{H}}{A}=s>1\ {\rm fixed}.
	\end{split}
\end{equation}
\begin{remark}
	For the purposes of this paper, our focus is primarily on considering the case of limit I. However, interestingly, in our analysis, it requires minimal effort to include the case of limit II as well. Consequently, we provide an argument that encompasses both cases simultaneously, with the hope that the inclusion of limit II will be beneficial for future studies.
\end{remark}
We split the kernel $\rho_0$ in (\ref{modularkernel}) into four parts
\begin{equation}
	\begin{split}
		\rho_0(A;\bar h)=&\rho_{0}^{(1)}(A;\bar h)-\rho_{0}^{(2)}(A;\bar h)+\rho_{0}^{(3)}(A;\bar h)-\rho_{0}^{(4)}(A;\bar h), \\
		\rho_{0}^{(1)}(A;\bar h)=&\dfrac{e^{4\pi\sqrt{A(\bar h-A)}}}{2\sqrt{(\bar h-A)\pi}}, \\
		\rho_{0}^{(2)}(A;\bar h)=&\dfrac{e^{4\pi\sqrt{(A-1)(\bar h-A)}}}{2\sqrt{(\bar h-A)\pi}},\\
		\rho_{0}^{(3)}(A;\bar h)=&\dfrac{e^{-4\pi\sqrt{A(\bar h-A)}}}{2\sqrt{(\bar h-A)\pi}}, \\
		\rho_{0}^{(4)}(A;\bar h)=&\dfrac{e^{-4\pi\sqrt{(A-1)(\bar h-A)}}}{2\sqrt{(\bar h-A)\pi}}.\\
	\end{split}
\end{equation}
Correspondingly, $I^{\rm dual,R}_{\pm,\rm vac}$ is split into four parts
\begin{equation}
	\begin{split}
		I^{\rm dual,R}_{\pm,\rm vac}=I^{R(1)}_{\pm,\rm vac}+I^{R(2)}_{\pm,\rm vac}+I^{R(3)}_{\pm,\rm vac}+I^{R(4)}_{\pm,\rm vac}.
	\end{split}
\end{equation}
We would like to show that for $\hat{\phi}_\pm(0)\neq0$:\footnote{By definition, we have $\hat{\phi}_{+}(0)>0$, whereas it is not always the case that $\hat{\phi}_{-}(0)\neq0$ (see the defining properties of $\phi_{\pm}$ in \eqref{phipm}) and \eqref{phipm:suppcondition}. In this paper, we will explicitly choose a specific form of $\phi_\pm$ that ensures $\hat{\phi}_{\pm}(0)\neq0$.}
\begin{itemize}
	\item In the limit I, the dominant contribution to $I^{\rm dual,R}_{\pm,\rm vac}$ comes from $I^{R(1)}_{\pm,\rm vac}$. Consequently, $I^{\rm dual}_{\pm,\rm vac}$ has the following asymptotic behavior:
	\begin{equation}
		\begin{split}
			I^{\rm dual}_{\pm,\rm vac}\stackrel{\rm I}{\sim}\left(\dfrac{4\pi^2}{\beta_L}\right)^{3/2}\sqrt{\frac{\pi}{\bar{H}}}e^{2\pi\sqrt{A\bar{H}}}\hat{\phi}_\pm(0).
		\end{split}
	\end{equation}
	\item In the limit II, the dominant contribution to $I^{\rm dual,R}_{\pm,\rm vac}$ comes from $I^{R(1)}_{\pm,\rm vac}$ and $I^{R(2)}_{\pm,\rm vac}$. Consequently, $I^{\rm dual}_{\pm,\rm vac}$ has the following asymptotic behavior:
	\begin{equation}
		\begin{split}
			I^{\rm dual}_{\pm,\rm vac}\stackrel{\rm II}{\sim}\left(1+e^{-2\pi\sqrt{s-1}}\right)\left(\dfrac{4\pi^2}{\beta_L}\right)^{3/2}\sqrt{\frac{\pi}{\bar{H}}}e^{2\pi\sqrt{A\bar{H}}}\hat{\phi}_\pm(0).
		\end{split}
	\end{equation}
    where the first and second terms correspond to the contributions from $I^{R(1)}_{\pm,\rm vac}$ and $I^{R(2)}_{\pm,\rm vac}$, respectively.
\end{itemize}

\subsection{Estimating $I^{R(1)}_{\pm,\rm vac}$ (the dominant term)}
By definition, $I^{R(1)}_{\pm,\rm vac}$ is given by
\begin{equation}
	\begin{split}
		I^{R(1)}_{\pm,\rm vac}:=&\int_{A-\bar{H}}^{\infty}\mathrm{d}x\  \rho_{0}^{(1)}\left(A;\bar x+\bar{H}\right) e^{-\beta_R\left(x+\bar{H}-A\right)} \phi_{\pm}(x), \\
		\rho_{0}^{(1)}(A;\bar h)=&\dfrac{e^{4\pi\sqrt{A(\bar h-A)}}}{2\sqrt{(\bar h-A)\pi}}, \\
	\end{split}
\end{equation}
We rewrite $I^{R(1)}_{\pm,\rm vac}$ as
\begin{equation}\label{I1:rewrite}
	\begin{split}
		I^{R(1)}_{\pm,\rm vac}(A,\bar{H},\beta_R)=&\dfrac{e^{\frac{4\pi^2 A}{\beta_R}}}{2\sqrt{\pi(\bar{H}-A)}}\int_{A-\bar{H}}^{\infty}\mathrm{d}x\ \sqrt{\frac{\bar{H}-A}{x+\bar{H}-A}}e^{-\beta_R \left(\sqrt{x+\bar{H}-A}-\frac{2\pi\sqrt{A}}{\beta_R}\right)^2}  \phi_{\pm}(x).
	\end{split}
\end{equation}
We split the domain of integration into two parts
\begin{equation}\label{I*int:split}
	\begin{split}
		\int_{A-\bar{H}}^{\infty}=\int_{A-\bar{H}}^{-\bar{H}^\tau}+\int_{-\bar{H}^\tau}^{\infty},
	\end{split}
\end{equation}
where $\tau\in(\frac{1}{2},1)$ is some fixed constant.
The first integral is bounded by
\begin{equation}
	\begin{split}
		&\left|\int_{A-\bar{H}}^{-\bar{H}^\tau}dx\ \sqrt{\frac{\bar{H}-A}{x+\bar{H}-A}}e^{-\beta_R \left(\sqrt{x+\bar{H}-A}-\frac{2\pi\sqrt{A}}{\beta_R}\right)^2}  \phi_{\pm}(x)\right|\\ 
		\leqslant
		&\left(\max\limits_{A-\bar{H}\leqslant x\leqslant-\bar{H}^\tau}\left\{|\phi_\pm(x)|\right\}\right)\left(\int_{A-\bar{H}}^{-\bar{H}^\tau}dx\ \sqrt{\frac{\bar{H}}{x+\bar{H}-A}}\right) \\
		=&\left(\max\limits_{A-\bar{H}\leqslant x\leqslant-\bar{H}^\tau}\left\{|\phi_\pm(x)|\right\}\right)\left(2\sqrt{\bar{H}(\bar{H}-\bar{H}^\tau-A)}\right)\\
	\end{split}
\end{equation}
Here we bounded the exponential factor by 1 and bounded $\phi_{\pm}$ by its maximal value. Since $\phi_{\pm}$ has the upper bound (\ref{phipm:condition}), we get
\begin{equation}
	\begin{split}
		&\left|\int_{A-\bar{H}}^{-\bar{H}^\tau}dx\ \sqrt{\frac{\bar{H}-A}{x+\bar{H}-A}}e^{-\beta_R \left(\sqrt{x+\bar{H}-A}-\frac{2\pi\sqrt{A}}{\beta_R}\right)^2}  \phi_{\pm}(x)\right|\leqslant\frac{2C_{\pm}\bar{H}}{1+\bar{H}^{2\tau}}\rightarrow0
	\end{split}
\end{equation}
in the limit (\ref{limit:HbarA}).

The second integral is controlled as follows. We rewrite the integral as
\begin{equation}
	\begin{split}
		&\int_{-\infty}^{\infty}\mathrm{d}x\ Y_{\bar{H}}(x), \\
		&Y_{\bar{H}}(x):=\sqrt{\frac{\bar{H}-A}{x+\bar{H}-A}}e^{-\beta_R \left(\sqrt{x+\bar{H}-A}-\frac{2\pi\sqrt{A}}{\beta_R}\right)^2}  \phi_{\pm}(x)\theta(x+\bar{H}^\tau).
	\end{split}
\end{equation}
We have
\begin{equation}
	\begin{split}
		\sqrt{\frac{\bar{H}-A}{x+\bar{H}-A}}\theta(x+\bar{H}^\tau)\leqslant\sqrt{\frac{\bar{H}}{\bar{H}-\bar{H}^\tau-A}}\leqslant\sqrt{2}
	\end{split}
\end{equation}
for $\bar{H}\geqslant\max\left\{4A,4^{1/(1-\tau)}\right\}$. Let us explain the above bound. Since $\bar{H}\geqslant\max\left\{4A,4^{1/(1-\tau)}\right\}$, we have $\bar H\geqslant 4A$ and $\bar{H} \geqslant 4 \bar H^\tau$, leading to $\bar H-A-\bar H^\tau \geqslant \bar H/2$.  Plugging this in,  we obtain $\sqrt{\frac{\bar{H}}{\bar{H}-\bar{H}^\tau-A}}\leqslant\sqrt{2}$.

Now the exponential factor in $Y_{\bar{H}}(x)$ is obviously bounded by 1. So we get
\begin{equation}
	\begin{split}
		|Y_{\bar{H}}(x)|\leqslant\sqrt{2}|\phi_\pm(x)|
	\end{split}
\end{equation}
for sufficiently large $\bar{H}$ and the r.h.s. is an integrable function on $\mathbb{R}$. Now we consider the point-wise limit of $Y_{\bar{H}}(x)$ as $\bar{H}$ goes to $\infty$. First, using $\tau>0$ we have
\begin{equation}
	\begin{split}
		\sqrt{\frac{\bar{H}-A}{x+\bar{H}-A}}\theta(x+\bar{H}^\tau)\stackrel{(\ref{limit:HbarA})}{\longrightarrow}1
	\end{split}
\end{equation}
for fixed $x$. So far we do not have good control on the exponential factor $e^{-\beta_R \left(\sqrt{x+\bar{H}-A}-\frac{2\pi\sqrt{A}}{\beta_R}\right)^2}$ (other than knowing that it is bounded from above by 1). We would like to make this factor reach its maximal value (i.e.~1) in the limit. For this purpose we choose $\beta_R=2\pi\sqrt{\frac{A}{\bar{H}-A}}$, which gives
\begin{equation}
	\begin{split}
		\beta_R \left(\sqrt{x+\bar{H}-A}-\frac{2\pi\sqrt{A}}{\beta_R}\right)^2=&2\pi\sqrt{\frac{A}{\bar{H}-A}}\frac{x^2}{\left(\sqrt{x+\bar{H}-A}+\sqrt{\bar{H}-A}\right)^2} \\
		\leqslant&\frac{2\pi}{\sqrt{s-1}}\frac{x^2}{\left(\sqrt{x+\bar{H}-A}+\sqrt{\bar{H}-A}\right)^2} \\
		\rightarrow&0 \\
	\end{split}
\end{equation}
for fixed $x$ in the limit (\ref{limit:HbarA}). Thus we get
\begin{equation}
	\begin{split}
		Y_{\bar{H}}(x)\stackrel{(\ref{limit:HbarA})}{\longrightarrow}\phi_\pm(x)
	\end{split}
\end{equation}
for fixed $x$. Then using the dominated convergence theorem we conclude that
\begin{equation}
	\begin{split}
		\int_{-\infty}^{\infty}\mathrm{d}x\ Y_{\bar{H}}(x)\stackrel{(\ref{limit:HbarA})}{\longrightarrow}\int_{-\infty}^{\infty}d x\ \phi_{\pm}(x)=2\pi\hat{\phi}_\pm(0). \\
	\end{split}
\end{equation}
Putting everything together we get
\begin{equation}\label{eq:I1}
	\begin{split}
		I^{R(1)}_{\pm,\rm vac}\left(A,\bar{H},\beta_R\equiv2\pi\sqrt{\frac{A}{\bar{H}-A}}\right)\stackrel{(\ref{limit:HbarA})}{\sim} \sqrt{\frac{\pi}{\bar{H}-A}}e^{2\pi\sqrt{A({\bar{H}-A)}}}\hat{\phi}_\pm(0).
	\end{split}
\end{equation}
\begin{remark}
	This result holds for both limit I and II in (\ref{limit:HbarA}).
\end{remark}

\subsection{Estimating $I^{R(2)}_{\pm,\rm vac}$}\label{app:estimateIR2}
By definition, $I^{R(2)}_{\pm,\rm vac}$ is given by
\begin{equation}
	\begin{split}
		I^{R(2)}_{\pm,\rm vac}:=&\int_{A-\bar{H}}^{\infty}\mathrm{d}x\  \rho_{0}^{(2)}\left(A;\bar x+\bar{H}\right) e^{-\beta_R\left(x+\bar{H}-A\right)} \phi_{\pm}(x), \\
		\rho_{0}^{(2)}(A;\bar h)=&\dfrac{e^{4\pi\sqrt{(A-1)(\bar h-A)}}}{2\sqrt{(\bar h-A)\pi}}, \\
	\end{split}
\end{equation}
Similarly to the analysis for $I^{R(1)}_{\pm,\rm vac}$, we rewrite $I^{R(2)}_{\pm,\rm vac}$ as
\begin{equation}\label{I2:rewrite}
	\begin{split}
		I^{R(2)}_{\pm,\rm vac}(A,\bar{H},\beta_R)=&\dfrac{e^{\frac{4\pi^2 (A-1)}{\beta_R}}}{2\sqrt{\pi(\bar{H}-A)}}\int_{A-\bar{H}}^{\infty}\mathrm{d}x\ \sqrt{\frac{\bar{H}-A}{x+\bar{H}-A}}e^{-\beta_R \left(\sqrt{x+\bar{H}-A}-\frac{2\pi\sqrt{A-1}}{\beta_R}\right)^2}  \phi_{\pm}(x).
	\end{split}
\end{equation}
We see that its structure is very similar to (\ref{I1:rewrite}), except that $A$ is replaced by $A-1$ in two places. Then we do the same analysis as $I^{R(1)}_{\pm,\rm vac}$. We split the domain of integration into two parts as we have done in (\ref{I*int:split}). In the limit (\ref{limit:HbarA}), the first part goes to zero for the same reason. For the second part, the only different analysis is in the exponential factor. For $\beta=2\pi\sqrt{\frac{A}{\bar{H}-A}}$ we have
\begin{equation}
	\begin{split}
		\beta_R \stackrel{(\ref{limit:HbarA})}{\longrightarrow}
		\begin{cases}
			0
			& ({\rm limit\ I}) \\
			\frac{2\pi}{\sqrt{s-1}} & ({\rm limit\ II}) \\
		\end{cases}.
	\end{split}
\end{equation}
Then for limit I we have
\begin{equation}\label{betaR:limit}
	\begin{split}
		\abs{\frac{I^{R(2)}_{\pm,\rm vac}(A,\bar{H},\beta_R)}{I^{R(1)}_{\pm,\rm vac}(A,\bar{H},\beta_R)}}_{\beta_R=2\pi\sqrt{\frac{A}{\bar{H}-A}}}\leqslant const(\phi_{\pm})e^{-\frac{4\pi^2}{\beta_R}}\rightarrow0
	\end{split}
\end{equation}
because of the exponential suppression.

For limit II, the exponential factor in the integral tends to 1 point-wise:
\begin{equation}
	\begin{split}
		\beta_R \left(\sqrt{x+\bar{H}-A}-\frac{2\pi\sqrt{A-1}}{\beta_R}\right)^2=&\frac{2\pi}{\sqrt{s-1}}\frac{(x+s-1)^2}{\left(\sqrt{x+\bar{H}-A}+\sqrt{\frac{(A-1)(\bar{H}-A)}{A}}\right)^2} \\
		\rightarrow&0.
	\end{split}
\end{equation}
Here we used (\ref{betaR:limit}). Now using the same dominated-convergence-theorem argument as the analysis for $I^{R(1)}_{\pm,\rm vac}$ we conclude that in the limit II, has the similar asymptotic behavior to (\ref{eq:I1}):
\begin{equation}\label{eq:I2}
	\begin{split}
		I^{R(2)}_{\pm,\rm vac}\left(A,\bar{H},\beta_R\equiv2\pi\sqrt{\frac{A}{\bar{H}-A}}\right)\stackrel{\rm II}{\sim} \sqrt{\frac{\pi}{(s-1)A}}e^{2\pi\sqrt{s-1}(A-1)}\hat{\phi}_\pm(0).
	\end{split}
\end{equation}
In this case $I^{R(2)}_{\pm,\rm vac}$ is comparable to $I^{R(1)}_{\pm,\rm vac}$:
\begin{equation}
	\begin{split}
		\abs{\frac{I^{R(2)}_{\pm,\rm vac}(A,\bar{H},\beta_R)}{I^{R(1)}_{\pm,\rm vac}(A,\bar{H},\beta_R)}}_{\beta_R=2\pi\sqrt{\frac{A}{\bar{H}-A}}}\stackrel{\rm II}{\longrightarrow}e^{-2\pi\sqrt{s-1}}=O(1).
	\end{split}
\end{equation}

\subsection{Estimating $I^{R(3)}_{\pm,\rm vac}$ and $I^{R(4)}_{\pm,\rm vac}$}\label{app:estimateIR34}
By definition, $I^{R(3)}_{\pm,\rm vac}$ is given by
\begin{equation}
	\begin{split}
		I^{R(3)}_{\pm,\rm vac}:=&\int_{A-\bar{H}}^{\infty}\mathrm{d}x\  \rho_{0}^{(3)}\left(A;\bar x+\bar{H}\right) e^{-\beta_R\left(x+\bar{H}-A\right)} \phi_{\pm}(x), \\
		\rho_{0}^{(3)}(A;\bar h)=&\dfrac{e^{-4\pi\sqrt{A(\bar h-A)}}}{2\sqrt{(\bar h-A)\pi}}, \\
	\end{split}
\end{equation}
Similarly to the analysis for $I^{R(1)}_{\pm,\rm vac}$, we rewrite $I^{R(3)}_{\pm,\rm vac}$ as
\begin{equation}\label{I3:rewrite}
	\begin{split}
		I^{R(3)}_{\pm,\rm vac}(A,\bar{H},\beta_R)=&\dfrac{e^{\frac{4\pi^2 A}{\beta_R}}}{2\sqrt{\pi(\bar{H}-A)}}\int_{A-\bar{H}}^{\infty}\mathrm{d}x\ \sqrt{\frac{\bar{H}-A}{x+\bar{H}-A}}e^{-\beta_R \left(\sqrt{x+\bar{H}-A}+\frac{2\pi\sqrt{A}}{\beta_R}\right)^2}  \phi_{\pm}(x).
	\end{split}
\end{equation}
We see that $I^{R(3)}_{\pm,\rm vac}$ only differs from $I^{R(1)}_{\pm,\rm vac}$ by a change of the ``$\pm$" sign in the exponential factor. Then the analysis is similar to $I^{R(1)}_{\pm,\rm vac}$. We split the integral into two parts as we have done in (\ref{I*int:split}). In the limit (\ref{limit:HbarA}), the first part goes to zero for the same reason, and the second part is controlled by the dominated convergence theorem. But for $I^{R(3)}_{\pm,\rm vac}$ we have
\begin{equation}
	\begin{split}
		\beta_R(\sqrt{x+\bar{h}-A}+\frac{2\pi\sqrt{A}}{\beta_R})=&2\pi\sqrt{\frac{A}{\bar{H}-A}}\left(\sqrt{x+\bar{H}-A}+\sqrt{\bar{H}-A}\right)^2 \\
		\geqslant&2\pi\sqrt{A(\bar{H}-A)} \\
		\stackrel{(\ref{limit:HbarA})}{\longrightarrow}&\infty. \\
	\end{split}
\end{equation}
Here in the first line we used $\beta_R=2\pi\sqrt{\frac{A}{\bar{H}-A}}$, in the second line we dropped the first term in the bracket, and in the last line used (\ref{limit:HbarA}). This estimate implies that the second part of the integral also goes to zero. Therefore we get
\begin{equation}
	\begin{split}
		\frac{I^{R(3)}_{\pm,\rm vac}(A,\bar{H},\beta_R)}{I^{R(1)}_{\pm,\rm vac}(A,\bar{H},\beta_R)}\bigg{|}_{\beta_R=2\pi\sqrt{\frac{A}{\bar{H}-A}}}\stackrel{(\ref{limit:HbarA})}{\longrightarrow}0\quad(\hat{\phi}_\pm(0)\neq0).
	\end{split}
\end{equation}
By the same analysis, we can also get
\begin{equation}
	\begin{split}
		\frac{I^{R(4)}_{\pm,\rm vac}(A,\bar{H},\beta_R)}{I^{R(1)}_{\pm,\rm vac}(A,\bar{H},\beta_R)}\bigg{|}_{\beta_R=2\pi\sqrt{\frac{A}{\bar{H}-A}}}\stackrel{(\ref{limit:HbarA})}{\longrightarrow}0\quad(\hat{\phi}_\pm(0)\neq0).
	\end{split}
\end{equation}
So we conclude that $I^{R(3)}_{\pm,\rm vac}$ and $I^{R(4)}_{\pm,\rm vac}$ are subleading in the limit (\ref{limit:HbarA}).


\section{Some uniform bounds on $\phi_{\pm,\delta}$}\label{appendix:uniformbounds}
In this appendix, we address the subtleties that arise when we take the limit $\delta\rightarrow0$ or $\delta\rightarrow1$ for the selected functions $\phi_{\pm,\delta}$ (as discussed in section \ref{section:epsilonwindow}, specifically points 1 and 2). Throughout our analysis, we consistently choose the following expressions for $\phi_{\pm,\delta}$, which are \eqref{def:phipm} rescaled under \eqref{phipm:scaling}:
\begin{equation}\label{phipm:rescale}
	\begin{split}
		\phi_{+,\delta}(x)&=\frac{16 \delta ^2 \left(x \cos \left(\frac{\delta  \Lambda }{2}\right) \sin \left(\frac{\Lambda  x}{2}\right)-\delta  \sin \left(\frac{\delta  \Lambda }{2}\right) \cos \left(\frac{\Lambda  x}{2}\right)\right)^2}{\left(x^2-\delta ^2\right)^2 (\delta  \Lambda +\sin (\delta  \Lambda ))^2},\\
		\phi_{-,\delta}(x)&=\frac{4 \delta ^2 \left(x \cos \left(\frac{\Lambda  x}{2}\right)-\delta  \cot \left(\frac{\delta  \Lambda }{2}\right) \sin \left(\frac{\Lambda  x}{2}\right)\right)^2}{x^2 (\delta^2 -x^2) \left(\delta  \Lambda  \cot \left(\frac{\delta  \Lambda }{2}\right)-2\right)^2}. \\
	\end{split}
\end{equation}
Recall that the range of allowed values for $\delta$ is given by $\varepsilon<\delta<1-\varepsilon$. As we approach the limit $\varepsilon\rightarrow0$, it eventually leads us to consider $\delta\rightarrow0$ for $\phi_{+,\delta}$ and $\delta\rightarrow1$ for $\phi_{-,\delta}$. In these limits, the values of the corresponding functions $\phi_{\pm,\delta}$ are obtained point-wise as follows:
\begin{equation}
	\begin{split}
		\phi_{+,0}(x)=\frac{4\sin^2\left(\frac{\Lambda x}{2}\right)}{\Lambda^2x^2},\quad\phi_{-,1}(x)=\frac{4\left[x\cos\left(\frac{\Lambda x}{2}\right)-\cot\left(\frac{\Lambda}{2}\right)\sin\left(\frac{\Lambda x}{2}\right)\right]^2}{x^2(1-x^2)\left[\Lambda\cot\left(\frac{\Lambda}{2}\right)-2\right]^2}.
	\end{split}
\end{equation}
These limits are well-defined functions and are $L^1$-integrable over the real axis. This observation provides strong evidence that $\phi_{\pm,\delta}$ satisfy certain uniform bounds that are essential for our analysis.

We aim to establish the following properties of $\phi_{\pm,\delta}$:
\begin{lemma}\label{lemma:phiplusbounds}
For a fixed $\Lambda\in(0,2\pi]$, the function $\phi_{+,\delta}$ satisfies the following bounds for $\delta\in\left[0,\frac{1}{2}\right]$:
		\begin{equation}\label{phiplusbounds}
			\begin{split}
				\phi_{+,\delta}(x)\leqslant&\begin{cases}
					4 & |x|\leqslant 1, \\
					\frac{64}{\Lambda^2x^2} & |x|\geqslant1, \\
				\end{cases} \\
				 \abs{\frac{\phi_{+,\delta}(x)}{\hat{\phi}_{+,\delta}(0)}}\leqslant &\max\left\{\frac{64}{\Lambda},\ 4\Lambda\right\}, \\
				\abs{\frac{\hat{\phi}_{+,\delta}(x)}{\hat{\phi}_{+,\delta}(0)}}\leqslant &1.
			\end{split}
		\end{equation}
\end{lemma}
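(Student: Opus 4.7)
The plan is to rewrite $\phi_{+,\delta}$ in a form where the apparent singularities at $x=\pm\delta$ manifestly cancel, and then to read off all three bounds by elementary sinc estimates. Setting $u=\Lambda x/2$ and $v=\Lambda\delta/2$, the product-to-sum identities $\sin u\cos v=\tfrac{1}{2}[\sin(u+v)+\sin(u-v)]$ and $\cos u\sin v=\tfrac{1}{2}[\sin(u+v)-\sin(u-v)]$ yield
$$x\cos v\sin u-\delta\sin v\cos u=\tfrac{1}{2}(x-\delta)\sin(u+v)+\tfrac{1}{2}(x+\delta)\sin(u-v).$$
Substituting this into \eqref{phipm:rescale}, the factor $(x^2-\delta^2)^2=(x-\delta)^2(x+\delta)^2$ in the denominator cancels against the square of the rewritten numerator, leaving the clean representation
$$\phi_{+,\delta}(x)=\hat{\phi}_{+,\delta}(0)^2\left[\frac{\sin\!\bigl(\Lambda(x+\delta)/2\bigr)}{x+\delta}+\frac{\sin\!\bigl(\Lambda(x-\delta)/2\bigr)}{x-\delta}\right]^2,$$
where I have used the explicit value $\hat{\phi}_{+,\delta}(0)=2\delta/(\delta\Lambda+\sin(\delta\Lambda))$ already recorded in \eqref{hatphipm:optimalR}. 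This identity is the geometric heart of the lemma: the singularities at $x=\pm\delta$ are removable, and each term inside the bracket is a shifted sinc kernel.

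Next I would establish the uniform two-sided estimate $1/\Lambda\leqslant\hat{\phi}_{+,\delta}(0)\leqslant 2/\Lambda$. This is immediate: for $\delta\in[0,1/2]$ and $\Lambda\in(0,2\pi]$ one has $\Lambda\delta\in[0,\pi]$, on which $0\leqslant\sin(\Lambda\delta)/(\Lambda\delta)\leqslant 1$, so $1\leqslant 1+\sin(\Lambda\delta)/(\Lambda\delta)\leqslant 2$, which gives the stated range upon inverting the explicit expression for $\hat{\phi}_{+,\delta}(0)$. Applying the trivial estimate $|\sin(\Lambda y/2)/y|\leqslant\Lambda/2$ to each term in the bracket of the key identity then gives $\phi_{+,\delta}(x)\leqslant\Lambda^2\hat{\phi}_{+,\delta}(0)^2\leqslant 4$, valid for all real $x$ and in particular for $|x|\leqslant 1$. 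For $|x|\geqslant 1\geqslant 2\delta$ I would instead use $|x\pm\delta|\geqslant|x|/2$ together with $|\sin(\Lambda(x\pm\delta)/2)|\leqslant 1$, so that each term of the bracket is bounded by $2/|x|$; squaring and applying $\hat{\phi}_{+,\delta}(0)\leqslant 2/\Lambda$ produces $\phi_{+,\delta}(x)\leqslant 16\hat{\phi}_{+,\delta}(0)^2/x^2\leqslant 64/(\Lambda^2 x^2)$. Dividing the pointwise bounds by the lower bound $\hat{\phi}_{+,\delta}(0)\geqslant 1/\Lambda$ in each region then yields the second inequality in \eqref{phiplusbounds}, with $4\Lambda$ coming from $|x|\leqslant 1$ and $64/(\Lambda x^2)\leqslant 64/\Lambda$ coming from $|x|\geqslant 1$.

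The third bound is an abstract positivity statement: since $\phi_{+,\delta}$ is a square and therefore non-negative, the triangle inequality applied to $\hat{\phi}_{+,\delta}(t)=\tfrac{1}{2\pi}\int\phi_{+,\delta}(x)e^{ixt}dx$ gives $|\hat{\phi}_{+,\delta}(t)|\leqslant\tfrac{1}{2\pi}\int\phi_{+,\delta}(x)dx=\hat{\phi}_{+,\delta}(0)$. The only step that is not routine bookkeeping is the trigonometric factorization in the first paragraph; once the singularity-free representation of $\phi_{+,\delta}$ is in place, the uniformity in $\delta$ is automatic because all residual $\delta$-dependence is absorbed into $\hat{\phi}_{+,\delta}(0)$, for which the two-sided estimate $[1/\Lambda,2/\Lambda]$ suffices. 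I therefore do not anticipate any substantive obstacle.
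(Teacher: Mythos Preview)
Your proof is correct. The approach is essentially the same as the paper's---a trigonometric rewriting to remove the apparent singularities at $x=\pm\delta$, followed by elementary sinc-type estimates, and positivity of $\phi_{+,\delta}$ for the Fourier-transform bound---but your execution is cleaner. The paper treats the two regions with separate rewritings: for $|x|\geqslant 1$ it bounds the original expression directly to get $16/[\Lambda^2(|x|-\delta)^2]$, and for $|x|\leqslant 1$ it uses a partial factorization into one shifted sinc term plus an asymmetric remainder. Your unified representation
$$\phi_{+,\delta}(x)=\hat{\phi}_{+,\delta}(0)^2\left[\frac{\sin(\Lambda(x+\delta)/2)}{x+\delta}+\frac{\sin(\Lambda(x-\delta)/2)}{x-\delta}\right]^2$$
handles both regions at once and makes the uniformity in $\delta$ transparent, since all $\delta$-dependence outside the harmless sinc shifts is absorbed into the prefactor $\hat{\phi}_{+,\delta}(0)^2$. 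The two-sided estimate $1/\Lambda\leqslant\hat{\phi}_{+,\delta}(0)\leqslant 2/\Lambda$ and the positivity argument for the third inequality match the paper exactly.
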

\begin{lemma}\label{lemma:phiminusbounds}
Let $w\in\left(\frac{1}{2},1\right)$ and $\Lambda\in\left[\frac{(2w+1)\pi}{2},2\pi w\right]$ be fixed. For $\delta\in\left[\frac{(2w+1)\pi}{2\Lambda},1\right]$, the function $\phi_{-,\delta}$ satisfies the following bounds:
\begin{equation}\label{phiminusbounds}
	\begin{split}
		\phi_{-,\delta}(x)\leqslant&\begin{cases}
			\pi^2 & |x|\leqslant 2, \\
			\frac{4\pi^2}{\Lambda^2(\abs{x}-1)^2} & |x|\geqslant2. \\
		\end{cases} \\
		\abs{\frac{\phi_{-,\delta}(x)}{\hat{\phi}_{-,\delta}(0)}}\leqslant& \Lambda\left(1+\frac{4\cot\left(\frac{(2w-1)\pi}{4}\right)}{(2w+1)\pi}\right)\max\left\{\pi^2,\ \frac{4\pi^2}{\Lambda^2}\right\},  \\
		\abs{\frac{\hat{\phi}_{-,\delta}(x)}{\hat{\phi}_{-,\delta}(0)}}\leqslant&2\pi\Lambda\left(1+\frac{4\cot\left(\frac{(2w-1)\pi}{4}\right)}{(2w+1)\pi}\right)\left(1+\frac{2}{\Lambda^2}\right). \\
	\end{split}
\end{equation}
\end{lemma}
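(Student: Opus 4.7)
The proof proceeds in three parts: (a) a uniform positive lower bound on $\hat{\phi}_{-,\delta}(0)$, (b) the pointwise upper bound on $\phi_{-,\delta}(x)$, and (c) integration/normalization to extract the remaining two bounds.

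For part (a), I will use the explicit formula from \eqref{hatphipm:optimalR}, namely $\hat{\phi}_{-,\delta}(0) = \frac{1}{\Lambda}\bigl[1 - \frac{2\tan(\Lambda\delta/2)}{\Lambda\delta}\bigr]^{-1}$. Under the stated constraints on $\Lambda$ and $\delta$, the quantity $\Lambda\delta/2$ lies in $\bigl[\frac{(2w+1)\pi}{4},\pi w\bigr]\subset\bigl(\frac{\pi}{2},\pi\bigr)$, so $\tan(\Lambda\delta/2)<0$ and the bracket is $>1$. Since $\tan$ is monotonically increasing on $\bigl(\frac{\pi}{2},\pi\bigr)$, its most negative value in this interval occurs at the left endpoint, where $\tan\bigl(\frac{(2w+1)\pi}{4}\bigr) = -\cot\bigl(\frac{(2w-1)\pi}{4}\bigr)$. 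Combining with $\Lambda\delta \geqslant \frac{(2w+1)\pi}{2}$ yields
\begin{equation*}
\hat{\phi}_{-,\delta}(0)\ \geqslant\ \frac{1}{\Lambda}\cdot\frac{1}{1 + \frac{4\cot\left(\frac{(2w-1)\pi}{4}\right)}{(2w+1)\pi}},
\end{equation*}
which is the required uniform lower bound.

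For part (b), set $N(x) := x\cos(\Lambda x/2) - \delta\cot(\delta\Lambda/2)\sin(\Lambda x/2)$ and $D := \delta\Lambda\cot(\delta\Lambda/2)-2$, so $\phi_{-,\delta}(x) = 4\delta^2 N(x)^2/[x^2(\delta^2-x^2)D^2]$. A direct substitution gives $N(\pm\delta)=0$, confirming that the apparent poles at $x=\pm\delta$ are removable. In the regime we consider, $\cot(\delta\Lambda/2)<0$, hence $|D|\geqslant 2$. For $|x|\geqslant 2$, I will use $|N(x)|\leqslant |x|+\delta|\cot(\delta\Lambda/2)|$, bound $\delta\leqslant 1$ to get $x^2-\delta^2 \geqslant (|x|-1)^2$, and combine with $|D|\geqslant 2$ and $\Lambda\leqslant 2\pi$ to obtain $|\phi_{-,\delta}(x)|\leqslant \frac{4\pi^2}{\Lambda^2(|x|-1)^2}$. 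For $|x|\leqslant 2$, the nontrivial step is to control the removable singularity at $\pm\delta$ uniformly; I will Taylor-expand $N$ around its zeros, use $N'(\pm\delta)$ together with the identity $\delta\Lambda\cot(\delta\Lambda/2)=D+2$ to rewrite $N(x)^2/(x^2-\delta^2)$ as a bounded smooth function, and then estimate each factor with the trivial bounds $|\sin|,|\cos|\leqslant 1$. This is the main technical obstacle, since one must verify that the resulting bound remains finite for every $\delta\in\bigl[\tfrac{(2w+1)\pi}{2\Lambda},1\bigr]$ and for every $\Lambda$ in the allowed range, without any blowup as $\delta\to 1$.

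For part (c), I will use the Fourier convention $\hat{\phi}(t) = \frac{1}{2\pi}\int e^{ixt}\phi(x)\,dx$ to obtain $|\hat{\phi}_{-,\delta}(t)|\leqslant \frac{1}{2\pi}\|\phi_{-,\delta}\|_{L^1}$. Integrating the pointwise bound from part (b),
\begin{equation*}
\|\phi_{-,\delta}\|_{L^1}\ \leqslant\ 4\pi^2 + 2\int_{2}^{\infty}\frac{4\pi^2}{\Lambda^2(x-1)^2}\,dx\ =\ 4\pi^2\Bigl(1+\frac{2}{\Lambda^2}\Bigr),
\end{equation*}
so $|\hat{\phi}_{-,\delta}(t)|\leqslant 2\pi(1+2/\Lambda^2)$. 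Dividing this by the lower bound of part (a) gives the third inequality of \eqref{phiminusbounds}. The second inequality follows from the same division applied to the pointwise bound of part (b), using that $\max\{\pi^2, 4\pi^2/\Lambda^2\}$ dominates both pieces of the case distinction.
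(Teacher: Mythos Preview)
Parts (a) and (c) of your plan coincide with the paper's argument. The gap is in part~(b), and it is precisely the step you flag as the main obstacle.

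Your bound $|N(x)|\leqslant |x|+\delta|\cot(\delta\Lambda/2)|$ combined with $|D|\geqslant 2$ does \emph{not} give $\frac{4\pi^2}{\Lambda^2(|x|-1)^2}$ for $|x|\geqslant 2$. Tracing it through yields
\[
|\phi_{-,\delta}(x)|\ \leqslant\ \frac{\delta^2\bigl(|x|+\delta|\cot(\delta\Lambda/2)|\bigr)^2}{x^2(|x|-1)^2},
\]
and the factor $|\cot(\delta\Lambda/2)|$ is not controlled: since $\delta\Lambda/2$ ranges up to $\Lambda/2\leqslant\pi w$, this quantity can be as large as $|\cot(\pi w)|$, which is not absorbed by $2\pi/\Lambda$. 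So the constant you obtain is $w$-dependent and does not match the lemma. For $|x|\leqslant 2$ your Taylor expansion near $\pm\delta$ is left as an outline; it could in principle be pushed through, but it is considerably more laborious than what the paper actually does.

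The paper bypasses both difficulties with a single algebraic identity. One factors a $(\delta-x)$ out of the numerator and multiplies numerator and denominator by $\sin^2(\delta\Lambda/2)$ to obtain
\[
\phi_{-,\delta}(x)=\frac{\delta-x}{\delta+x}\left[\frac{2\delta}{\delta\Lambda\cos\!\frac{\delta\Lambda}{2}-2\sin\!\frac{\delta\Lambda}{2}}\right]^2\left[\frac{\sin\!\frac{\Lambda(\delta-x)}{2}}{\delta-x}-\frac{\sin\!\frac{\Lambda x}{2}\cos\!\frac{\delta\Lambda}{2}}{x}\right]^2.
\]
In this form the removable singularities at $x=0,\pm\delta$ are manifest (each term is of the type $\sin y/y$), and the dangerous $\cot$ has been absorbed. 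Now the three factors are bounded separately: the first by $1$; the second by $\pi^2/\Lambda^2$ (since $f(y)=[\cos y-\tfrac{\sin y}{y}]^{-2}$ on $[\pi/2,\pi]$ is maximized at $y=\pi/2$ with value $\pi^2/4$); and the third by $\Lambda^2$ for all $x$ (via $|\sin y/y|\leqslant 1$) or by $4/(|x|-1)^2$ for $|x|\geqslant 2$ (via $|\sin|\leqslant 1$ and $\delta\leqslant 1$). This immediately gives both cases of the first inequality in \eqref{phiminusbounds}, with no Taylor expansion and no residual $\cot$ term.
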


\subsection{Proof of lemma \ref{lemma:phiplusbounds}}
When $\abs{x}\geqslant1$, we have
\begin{equation}
	\begin{split}
		\abs{\phi_{+,\delta}(x)}\equiv&16\left[\frac{\frac{x}{\Lambda}\cos\left(\frac{\delta\Lambda}{2}\right)\sin\left(\frac{\Lambda x}{x}\right)-\frac{\delta}{\Lambda}\sin\left(\frac{\delta \Lambda}{2}\right)\cos\left(\frac{\Lambda x}{2}\right)}{(x^2-\delta^2)\left(1+\frac{\sin(\delta\Lambda)}{\delta\Lambda}\right)}\right]^2 \\
		\leqslant&16\left[\frac{\frac{\abs{x}+\delta}{\Lambda}}{x^2-\delta^2}\right]^2 \\
		=&\frac{16}{\Lambda^2}\left[\frac{1}{\abs{x}-\delta}\right]^2 \\
		\leqslant&\frac{64}{\Lambda^2x^2}. \\
	\end{split}
\end{equation}
In the first line, we rewrite $\phi_{+,\delta}(x)$. In the second line, we bound $\sin$ and $\cos$ by 1 and use the fact that $\frac{\sin(\delta\Lambda)}{\delta\Lambda}\geqslant0$ for $\delta\in\left[0,\frac{1}{2}\right]$ and $\Lambda\in(0,2\pi]$. The third line is a rewrite of the second line. In the last line, we use the fact that $|x|-\delta\geqslant\frac{|x|}{2}$ for $|x|\geqslant1$ and $\delta\in\left[0,\frac{1}{2}\right]$.

When $0\leqslant x\leqslant1$ we have
\begin{equation}
	\begin{split}
		\abs{\phi_{+,\delta}}(x)\equiv&4\left[\frac{\frac{2\delta\sin\left(\frac{\Lambda(x-\delta)}{2}\right)}{\Lambda(x-\delta)}+\frac{2\cos\left(\frac{\delta\Lambda}{2}\right)\sin\left(\frac{\Lambda x}{2}\right)}{\Lambda}}{(\delta+x)\left(1+\frac{\sin(\delta\Lambda)}{\delta\Lambda}\right)}\right]^2 \\
		\leqslant&4\left[\frac{\delta+x}{(\delta+x)\left(1+\frac{\sin(\delta\Lambda)}{\delta\Lambda}\right)}\right] \\
		=&4. \\
	\end{split}
\end{equation}
In the first line, we rewrite $\phi_{+,\delta}(x)$. In the second line, we use the inequality $|\frac{\sin x}{x}|\leqslant1$ for any $x\in\mathbb{R}$ and $\sin x\leqslant x$ for $x\geqslant0$. In the last line, we use the fact that $\frac{\sin(\delta\Lambda)}{\delta\Lambda}\geqslant0$ for $\delta\in\left[0,\frac{1}{2}\right]$ and $\Lambda\in(0,2\pi]$. The same bound on $\phi_{+,\delta}(x)$ holds for $-1\leqslant x\leqslant0$ because it is an even function of $x$. This completes the proof of the first inequality in \eqref{phiplusbounds}.

Next, we derive a uniform upper bound on the ratio $\abs{\frac{\phi_{+,\delta}(x)}{\hat{\phi}_{+,\delta}(0)}}$. Using \eqref{hatphipm:optimalR}, we have
\begin{equation}
	\begin{split}
		\hat{\phi}_{+,\delta}(0)\geqslant\frac{1}{\Lambda}
	\end{split}
\end{equation}
for $\Lambda\in(0,2\pi]$ and $\delta\in\left[0,\frac{1}{2}\right]$. Together with the first inequality in \eqref{phiplusbounds}, we obtain
\begin{equation}
	\begin{split}
		\abs{\frac{\phi_{+,\delta}(x)}{\hat{\phi}_{+,\delta}(0)}}\leqslant\max\left\{\frac{64}{\Lambda},\ 4\Lambda\right\}\quad\left(0\leqslant\delta\leqslant\frac{1}{2},\ 0<\Lambda\leqslant2\pi\right).
	\end{split}
\end{equation}
The bound on $|\frac{\hat{\phi}{+,\delta}(x)}{\hat{\phi}{+,\delta}(0)}|$ follows trivially from the fact that $\phi{+,\delta}(x)\geqslant0$ for all $x$. Hence, we have
\begin{equation}
	\begin{split}
		\abs{\frac{\hat{\phi}_{+,\delta}(x)}{\hat{\phi}_{+,\delta}(0)}}\leqslant1.
	\end{split}
\end{equation}
This completes the proof of Lemma \ref{lemma:phiplusbounds}.

\subsection{Proof of lemma \ref{lemma:phiminusbounds}}
We note that $\phi_{-,\delta}(x)$ is an even function of $x$, so it is sufficient to prove the first inequality in \eqref{phiminusbounds} for $x\geqslant0$. We rewrite $\phi_{-,\delta}(x)$ as follows:
\begin{equation}\label{phiminus:rewrite}
	\begin{split}
		\phi_{-,\delta}(x)=\frac{\delta-x}{\delta+x}\left[\frac{2\delta}{\delta\Lambda\cos\left(\frac{\delta\Lambda}{2}\right)-2\sin\left(\frac{\delta\Lambda}{2}\right)}\right]^2\left[\frac{\sin\left(\frac{\Lambda(\delta-x)}{2}\right)}{\delta-x}-\frac{\sin\left(\frac{\Lambda x}{2}\right)\cos\left(\frac{\delta\Lambda}{2}\right)}{x}\right]^2.
	\end{split}
\end{equation}
When $x\geqslant0$, the first factor in \eqref{phiminus:rewrite} is bounded by 1. For the second factor, we use the condition on the range of $\delta$ and $\Lambda$, which implies that $\frac{\delta\Lambda}{2}\in\left[\frac{\pi}{2},\pi\right]$. Thus, the second factor is bounded as follows:
\begin{equation}
	\begin{split}
		\left[\frac{2\delta}{\delta\Lambda\cos\left(\frac{\delta\Lambda}{2}\right)-2\sin\left(\frac{\delta\Lambda}{2}\right)}\right]^2\leqslant\frac{\pi^2}{\Lambda^2},
	\end{split}
\end{equation}
where we used the fact that the function $f(x)\equiv\frac{1}{\left[\cos x-\frac{\sin x}{x}\right]^2}$ is maximized at $x=\frac{\pi}{2}$ with the maximum value of $\frac{\pi^2}{4}$ for $\frac{\pi}{2}\leqslant x\leqslant\pi$. This can be seen from the plot of $f(x)$ shown in figure \ref{figure:fx}.
\begin{figure}
	\centering
	\begin{tikzpicture}
		\draw[->] (pi/2-0.5, 0.5) -- (pi+0.5, 0.5) node[right] {$x$};
		\draw[->] (pi/2-0.5, 0.5) -- (pi/2-0.5, 2.8) node[above] {$f(x)$};
		\draw[scale=1, domain=pi/2:pi, smooth, variable=\x, blue] plot ({\x}, {1/((cos(\x r)-sin(\x r)/(\x))*(cos(\x r)-sin(\x r)/(\x)))});
	\end{tikzpicture}
    \caption{Plot of $f(x)=\frac{1}{\left[\cos x-\frac{\sin x}{x}\right]^2}$ in the regime $\frac{\pi}{2}\leqslant x\leqslant\pi$.}
    \label{figure:fx}
\end{figure}
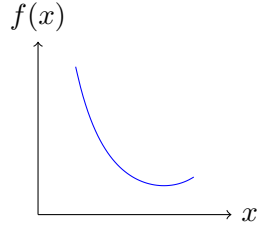
The last factor in \eqref{phiminus:rewrite} is bounded as follows:
\begin{equation}
	\begin{split}
		\left[\frac{\sin\left(\frac{\Lambda(\delta-x)}{2}\right)}{\delta-x}-\frac{\sin\left(\frac{\Lambda x}{2}\right)\cos\left(\frac{\delta\Lambda}{2}\right)}{x}\right]^2\leqslant\left[\frac{\Lambda}{2}+\frac{\Lambda}{2}\right]^2=\Lambda^2.
	\end{split}
\end{equation} 
This bound is obtained by using the inequalities $\abs{\frac{\sin x}{x}}\leqslant1$, $\abs{\sin x}\leqslant\abs{x}$, and $\abs{\cos x}\leqslant1$.

Combining the inequalities derived above, we obtain the following bounds for $\phi_{-,\delta}(x)$:
\begin{equation}
	\begin{split}
		\abs{\phi_{-,\delta}(x)}\leqslant\pi^2\quad(\forall\ x\in\mathbb{R}).
	\end{split}
\end{equation}
In particular, this bound holds for $\abs{x}\leqslant2$.

When $x\geqslant2$, we use the condition $\delta\leqslant1$ and further refine the bound. The last factor in \eqref{phiminus:rewrite} is bounded by
\begin{equation}
	\begin{split}
		\left[\frac{\sin\left(\frac{\Lambda(\delta-x)}{2}\right)}{\delta-x}-\frac{\sin\left(\frac{\Lambda x}{2}\right)\cos\left(\frac{\delta\Lambda}{2}\right)}{x}\right]^2\leqslant\left[\frac{1}{x-1}+\frac{1}{x}\right]^2\leqslant\frac{4}{(x-1)^2}.
	\end{split}
\end{equation}
Thus, we obtain the improved bound
\begin{equation}
	\begin{split}
		\abs{\phi_{-,\delta}(x)}\leqslant\frac{4\pi^2}{\Lambda^2(\abs{x}-1)^2}\quad(\abs{x}\geqslant2).
	\end{split}
\end{equation}
This completes the proof of the first inequality in \eqref{phiminusbounds}.

For the second and third inequalities in \eqref{phiminusbounds}, we make use of the explicit expression of $\hat{\phi}_{-,\delta}$ given in \eqref{hatphipm:optimalR}:
\begin{equation}\label{hatphiminus:lowerbound}
	\begin{split}
		\hat{\phi}_{-,\delta}(0)=\frac{1}{\Lambda}\frac{1}{1-\frac{2\tan\left(\frac{\Lambda\delta}{2}\right)}{\Lambda\delta}}\geqslant\frac{1}{\Lambda}\frac{1}{1+\frac{4\cot\left(\frac{(2w-1)\pi}{4}\right)}{(2w+1)\pi}}.
	\end{split}
\end{equation}
Here, we used the condition $\frac{(2w+1)\pi}{4}\leqslant\frac{\delta\Lambda}{2}\leqslant\pi$, where $\frac{1}{2}< w<1$, and the fact that $\frac{1}{1-\frac{\tan x}{x}}$ is monotonically increasing for $\frac{\pi}{2}\leqslant x\leqslant\pi$. Consequently, $\hat{\phi}_{-,\delta}(0)$ possesses a strictly positive lower bound that remains uniform in $\delta$ within the specified range.

The second inequality of \eqref{phiminusbounds} follows from \eqref{hatphiminus:lowerbound} and the first inequality of \eqref{phiminusbounds}. Similarly, the third inequality of \eqref{phiminusbounds} is obtained by combining \eqref{hatphiminus:lowerbound} and the first inequality of \eqref{hatphiminus:lowerbound} with the additional estimate:
\begin{equation}
	\begin{split}
		\abs{\hat{\phi}_{-,\delta}(x)}\leqslant\int \frac{dy}{2\pi} \abs{\phi_{-,\delta}(y)}\leqslant 2\pi+\frac{4\pi}{\Lambda^2}.
	\end{split}
\end{equation}
Therefore, we have established the validity of the second and third inequalities in \eqref{phiminusbounds}, completing the proof of lemma \ref{lemma:phiminusbounds}.

\section{Examples of CFTs with a shifted twist accumulation point}\label{appendix:example}
The purpose of this appendix is to provide technical details for the examples discussed in sections \ref{section:twistshift1} and \ref{section:twistshift2}.

\subsection{Decoupled irrational CFTs}\label{app:decoupledCFT}
The first example considers a CFT composed of several decoupled copies of CFTs:
$${\rm CFT}={\rm CFT}^{(1)}\otimes{\rm CFT}^{(2)}\otimes\ldots\otimes{\rm CFT}^{(N)}.$$
By construction, the symmetry algebra $\mathcal{G}$ of the combined CFT includes, at least, the direct sum of Virasoro algebras:
$$\mathcal{G}\supseteq Vir_L^{(1)}\oplus Vir_R^{(1)}\oplus\ldots\oplus Vir_L^{(N)}\oplus Vir_R^{(N)}.$$
However, for the purpose of our analysis, we focus solely on the "naive Virasoro algebra" generated by
\begin{equation}\label{def:naiveVir}
	\begin{split}
		L_n=\sum\limits_{i=1}^{N}L_n^{(i)},\quad\bar{L}_n=\sum\limits_{i=1}^{N}\bar{L}_n^{(i)}.
	\end{split}
\end{equation}
We assume that each individual CFT$^{(i)}$ has a central charge $c^{(i)}>1$, a unique twist-0 primary state (the vacuum), and a nonzero twist gap $\tau_{\rm gap}^{(i)}>0$. 

\subsubsection{Central charge, vacuum, currents and the twist gap}
The central charge and the twist gap of the combined CFT are then given by
\begin{equation}
	\begin{split}
		c=\sum\limits_{i=1}^{N}c^{(i)},\quad\tau_{\rm gap}=\min\{4,\tau_{\rm gap}^{(1)},\ldots,\tau_{\rm gap}^{(N)}\},
	\end{split}
\end{equation}
where the ``4" in the twist gap corresponds to the states
\begin{equation}\label{example1:spin2current}
	\begin{split}
		\left(\sum\limits_{i=1}^Nu_i L^{(i)}_{-2}\right)\left(\sum\limits_{i=1}^Nv_i \bar{L}^{(i)}_{-2}\right)\ket{\rm vac},
	\end{split}
\end{equation}
subject to the constraints
\begin{equation}\label{example1:spin2constraint}
	\begin{split}
		\sum\limits_{i=1}^Nu_i\,c^{(i)}=\sum\limits_{i=1}^Nv_i\,c^{(i)}=0.
	\end{split}
\end{equation}
These states are twist-4 scalar primaries of the "naive Virasoro algebra" \eqref{def:naiveVir}, and there are $(N-1)^2$ of them (by constraints \eqref{example1:spin2constraint}).

From the perspective of the ``naive Virasoro algebra," the theory contains infinitely many twist-0 primary states, which correspond to chiral currents. This can be argued in the following way. 

Let us focus on the left currents since the argument for the right currents is analogous. We consider the representation of the direct sum of Virasoro algebras:
$$\mathcal{U}\left[Vir_L^{(1)}\oplus Vir_R^{(1)}\oplus\ldots\oplus Vir_L^{(N)}\oplus Vir_R^{(N)}\right]\ket{h_1,\bar{h}_1,h_2,\ldots,\bar{h}_N},$$
where $\mathcal{U}[\mathcal{G}]$ denotes the universal enveloping algebra of the Lie algebra $\mathcal{G}$, and $\left\{h_i,\bar{h}_i\right\}$ is the set of the highest weights of the representation. The primaries of the ``naive Virasoro algebra" are obtained by taking linear combinations of the elements with the same conformal weights in the above representation. By considering all possible $\ket{h_1,\bar{h}_1,h_2,\ldots,\bar{h}_N}$, we obtain all the primaries of the ``naive Virasoro algebra." It is important to note that these primaries can only have conformal weights satisfying:
\begin{equation}\label{example1:confweightineq}
	\begin{split}
		h\geqslant h_1+\ldots+h_N,\quad \bar{h}\geqslant\bar{h}_1+\ldots+\bar{h}_N.
	\end{split}
\end{equation}
Since we assumed a twist gap $\tau_{\rm gap}^{(i)}$ for each CFT$^{(i)}$, the left currents, which have $\bar{h}=0$, can only be descendants of the vacuum state. Moreover, they can only be obtained by acting with $L^{(i)}_{-n}$'s (not $\bar{L}^{(i)}_{-n}$'s!) on the vacuum state. Consequently, the counting of the left currents reduces to considering the product of the left vacuum characters:
\begin{equation}\label{example1:vaccharacter}
	\begin{split}
		\prod\limits_{i=1}^{N}\chi_{\rm vac}^{(i)}(q)=&\prod\limits_{i=1}^{N}\frac{q^{-\frac{c^{(i)}-1}{24}}(1-q)}{\eta(q)}\equiv\frac{q^{-\frac{\sum\limits_{i=1}^{N}c^{(i)}-1}{24}}}{\eta(q)}f(q), \\
		f(q)=&\frac{q^{\frac{N-1}{24}}}{\eta(q)^{N-1}}(1-q)^N. \\
	\end{split}
\end{equation}
The function $f(q)$ is related to $\mathcal{F}_{\rm current}(q)$ (see \eqref{Fcurrent:asymp:withshift}) through the equation
\begin{equation}\label{fFrelation}
	\begin{split}
		f(q)=1-q+\mathcal{F}_{\rm current}(q)\quad\left(q=e^{-\beta}\right).
	\end{split}
\end{equation}
The coefficients of $q^j$ in $\mathcal{F}{\rm current}(q)$, which correspond to the total number of spin-$j$ left currents, are the same as those in $f(q)$ for $j\geqslant2$. Taking the limit $q\rightarrow1$, we have
\begin{equation}
	\begin{split}
		f(1)=\mathcal{F}_{\rm current}(1)={\rm total\ number\ of\ left\ currents}.
	\end{split}
\end{equation}
By \eqref{example1:vaccharacter}, $f(q)$ diverges as $q\rightarrow1$. This demonstrates that there are infinitely many left currents.

The uniqueness of the vacuum state and the candidate twist gap ``4" can also be observed from the character formula. By considering \eqref{example1:confweightineq}, we find that the vacuum state of the ``naive Virasoro algebra" and any potential twist gap other than $\tau_{\rm gap}^{(i)}$ can only arise from the descendants of the vacuum state. Let us decompose the product of vacuum characters into characters of the ``naive Virasoro algebra":
\begin{equation}\label{example1:vacdecomp}
	\begin{split}
		\prod\limits_{i=1}^{N}\chi_{\rm vac}^{(i)}(q)\chi_{\rm vac}^{(i)}(\bar{q})=\frac{(q\bar{q})^{-\frac{\sum\limits_{i=1}^{N}c^{(i)}-1}{24}}}{\eta(q)\eta(\bar{q})}f(q)f(\bar{q}),
	\end{split}
\end{equation}
where $f(q)$ is the same function as in \eqref{example1:vaccharacter}. In \eqref{example1:vacdecomp}, the prefactor accounts for the contribution from descendants of the "naive Virasoro algebra," and the coefficients of $q^h\bar{q}^{\bar{h}}$ in the power series expansion of $f(q)f(\bar{q})$ correspond to the number of Virasoro primaries with conformal weights $(h,\bar{h})$ (with the exception that we need to use $1-q$ instead of $1$ for $h=0$, and similarly for $\bar{h}$). The expression for $f(q)f(\bar{q})$ is given by
\begin{equation}
	\begin{split}
		f(q)f(\bar{q})=&(1-q)(1-\bar{q})+(1-q)\mathcal{F}_{\rm current}(\bar{q})+\mathcal{F}_{\rm current}(q)(1-\bar{q})+(N-1)^2q^2\bar{q}^2 \\
		&+O(q^3\bar{q}^2)+O(q^2\bar{q}^3).
	\end{split}
\end{equation}
In this expression, the first term corresponds to the vacuum state, the second and third terms correspond to the chiral currents, the fourth term corresponds to the twist-4 scalar state described earlier, and the error terms correspond to operators with twists $\geqslant4$. This decomposition justifies the uniqueness of the vacuum state and the possibility of a candidate twist gap of ``4".

Now let us analyze the growth of the current degeneracy $D(j)$ by examining the asymptotic behavior of $f(q)$ as $q\rightarrow 1$ (or equivalently, $\beta\rightarrow 0$) using the relation \eqref{fFrelation}. We use of the property of the modular form $\eta(q)$ under S modular transformation, given by
\begin{equation}
	\begin{split}
		\eta(q)=\sqrt{\frac{2\pi}{\beta}}\eta(q')\quad\left(q=e^{-\beta},\ q'=e^{-\frac{4\pi^2}{\beta}}\right).
	\end{split}
\end{equation}
Applying this to $f(q)$, we find that
\begin{equation}
	\begin{split}
		f(q)=\left(\frac{\beta}{2\pi}\right)^{\frac{N-1}{2}}\frac{q^{\frac{N-1}{24}}}{\eta(q')^{N-1}}(1-q)^N\stackrel{q\rightarrow1}{\sim}\left(\frac{\beta}{2\pi}\right)^{\frac{N-1}{2}}\beta^{N}\,e^{\frac{4\pi^2}{\beta}\frac{N-1}{24}}\quad\left(q'=e^{-\frac{4\pi^2}{\beta}}\right). \\
	\end{split}
\end{equation}
Then by \eqref{Fcurrent:asymp:withshift}, we get $b=\frac{1}{2}$ and $a=\sqrt{\frac{N-1}{24}}$ in the notation of \eqref{D:ansatz}. 

Since we assumed that $c^{(i)}>1$, our case falls under the condition $b=\frac{1}{2}$ and $0<a<\sqrt{A}$, which corresponds to conjecture \ref{conjecture:infinitespin:shift}. Therefore, let's examine the predictions made by this conjecture.

\subsubsection{Testing conjecture \ref{conjecture:infinitespin:shift}}
The first part of conjecture \ref{conjecture:infinitespin:shift} says that the big CFT should contain a twist accumulation point given by
\begin{equation}\label{shift:example1}
	\begin{split}
		h=A-a^2=\frac{\sum\limits_{i=1}^{N}c^{(i)}-1}{24}-\frac{N-1}{24}=\sum\limits_{i=1}^{N}A^{(i)},\quad A^{(i)}=\frac{c^{(i)}-1}{24}.
	\end{split}
\end{equation}
This is expected because each individual CFT$^{(i)}$ contains a twist accumulation point at 
$$h=A^{(i)},\quad\bar{h}=\infty.$$
Let $\left\{\mathcal{O}_k^{(i)}\right\}$ represent the family of Virasoro primary operators in CFT$^{(i)}$ that approach the above twist accumulation point. In the big CFT, we have the primaries given by
\begin{equation}\label{example1:operatorprod}
	\begin{split}
		\mathcal{O}_k(x)=\prod\limits_{i=1}^{N}\mathcal{O}_k^{(i)}(x).
	\end{split}
\end{equation}
This family of primaries will approach the twist accumulation point with $h$ given by \eqref{shift:example1}. Consequently, the first part of conjecture \ref{conjecture:infinitespin:shift} is confirmed.

The second part of conjecture \ref{conjecture:infinitespin:shift} is challenging to verify due to the limited information available on each individual CFT$^{(i)}$. One naive intuition is that, in the double lightcone limit, the dominant contribution to $\mathcal{A}_J(\beta_{L},\varepsilon)$ arises from operators of the form \eqref{example1:operatorprod}. However, this intuition overlooks a significant portion of Virasoro primaries that can be generated by applying $\bar{L}^{(i)}_{-k}$ operators to \eqref{example1:operatorprod}.

To count the number of Virasoro primaries with the same $h$ ($=\sum\limits_{i=1}^{N} h_k^{(i)}$) generated from each operator of the form \eqref{example1:operatorprod}, we examine the product of the characters of the right movers:
\begin{equation}
	\begin{split}
		\prod\limits_{i=1}^{N}\chi_{\bar{h}_k^{(i)}}(\bar{q})=&\prod\limits_{i=1}^{N}\frac{\bar{q}^{\sum\limits_{i}(\bar{h}^{(i)}_k-A^{(i)})}}{\eta(\bar{q})}=\frac{\bar{q}^{-A+\sum\limits_{i}\bar{h}^{(i)}_k}}{\eta(\bar{q})}f(\bar{q}), \\
	\end{split}
\end{equation}
where $f(\bar{q})$ is the same function as in \eqref{example1:vaccharacter}. The coefficient of the $\bar{q}^n$ term in $f(\bar{q})$ represents the number of generated Virasoro primaries with
$$h_k=\sum\limits_{i=1}^{N} h_k^{(i)},\quad\bar{h}_k=\sum\limits_{i=1}^{N} \bar{h}_k^{(i)}+n.$$
From our earlier analysis of the vacuum character, we already know that for large values of $n$, the coefficient of the $\bar{q}^n$ term exhibits growth behavior on the order of $e^{4\pi\sqrt{\frac{N-1}{24}n}}$, up to a slow-growing factor. This provides insight into the growth of the coefficient and confirms the conjecture's prediction.

Taking into account all these operators, we can now make the following hypothesis:
\begin{itemize}
	\item In the double lightcone limit, the leading term of $\log\mathcal{A}_J(\beta_{L},\varepsilon)$ arises from operators of the form \eqref{example1:operatorprod}, including the Virasoro primaries generated from them.
\end{itemize}
Based on this assumption, we approximate $\mathcal{A}_J(\beta_{L},\varepsilon)$ as follows:
\begin{equation}
	\begin{split}
		\mathcal{A}_J(\beta_{L},\varepsilon)\stackrel{{\rm DLC}_w}{\approx}\sum\limits_{n+\sum\limits_{i=1}^{N}J_i=J}\left(\prod\limits_
		{i=1}^{N}\mathcal{A}_{J_i}(\beta_{L},\varepsilon)\right)f_n,
	\end{split}
\end{equation} 
Here, the factor $f_n$ represents the coefficient of $\bar{q}^n$ in $f(\bar{q})$ and counts the number of generated Virasoro primaries, as mentioned earlier. It is important to note that this approximation captures only the leading exponential growth in $J$, as we have neglected the slow-growing factor. 

Next, we apply the second approximation as follows. We expect that for large $J$, the leading exponential growth of $A_J(\beta_{L},\varepsilon)$ in $J$ is dominated by the regime where $n$ and all the $J_i$'s are large. Under this assumption, we use $f_n\sim e^{4\pi\sqrt{\frac{N-1}{24}n}}$ and theorem \ref{theorem:modulartauberianFJ}, which allows us to replace $f_n$ and all $\mathcal{A}_{J_i}$'s with their asymptotic growth, up to slow-growing factors. Consequently, we have:
\begin{equation}
	\begin{split}
		\mathcal{A}_J(\beta_{L},\varepsilon)\approx&\sum\limits_{n+\sum\limits_{i=1}^{N}J_i=J}\left(\prod\limits_
		{i=1}^{N}e^{4\pi\sqrt{A^{(i)} J_i}}\right)e^{4\pi\sqrt{\frac{N-1}{24}n}} \\
		=&\sum\limits_{n+\sum\limits_{i=1}^{N}J_i=J}e^{4\pi\sum\limits_{i}\sqrt{A^{(i)} J_i}+4\pi\sqrt{\frac{N-1}{24}n}}. \\
	\end{split}
\end{equation}
In the limit of $J\rightarrow\infty$, the leading exponential growth of the expression above is determined by maximizing $\sqrt{\frac{N-1}{24}n}+\sum\limits_{i}\sqrt{A^{(i)} J_i}$ subject to the constraint $n+\sum\limits_{i}J_i=J$. This is achieved by the condition:
\begin{equation}\label{example1:realizecondition}
	\begin{split}
		\frac{N-1}{24n}=\frac{A^{(1)}}{J_1}=\frac{A^{(2)}}{J_2}=\ldots=\frac{A^{(N)}}{J_N}\left(=\frac{A}{J}\right).
	\end{split}
\end{equation}
It is worth noting that this condition implies that both the big CFT and all the small CFTs employ the same right inverse temperature in the analysis:
\begin{equation}
	\begin{split}
		\beta_{R}=2\pi\sqrt{\frac{A}{J}},\quad\beta_{R}^{(i)}=2\pi\sqrt{\frac{A^{(i)}}{J_i}}=2\pi\sqrt{\frac{A}{J}}.
	\end{split}
\end{equation}
This provides a consistency check on our approximation.

Therefore, we obtain:
\begin{equation}
	\begin{split}
		\mathcal{A}_J(\beta_{L},\varepsilon)\stackrel{{\rm DLC}_w}{\approx}e^{4\pi\sqrt{\left(\sum\limits_{i}A^{(i)}+\frac{N-1}{24}\right)J}}=e^{4\pi\sqrt{AJ}}.
	\end{split}
\end{equation}
In this expression, we have neglected the slow-growing factor, which is currently beyond our computational capabilities. However, this result is consistent with the prediction in the second part of conjecture \ref{conjecture:infinitespin:shift}, as evidenced by the presence of the $e^{4\pi\sqrt{AJ}}$ term in \eqref{replace:shift}.

\subsection{$W_N$ CFT}\label{app:WNCFT}
The second example is the $W_N$ CFT with central charge $c>N-1$ and a twist gap $\tau_{\rm gap}^{W_N}$ in the spectrum of $W_N$-primaries. This example was studied in \cite{Afkhami-Jeddi:2017idc}, where the authors predicted the existence of a twist accumulation point of $W_N$ primaries, given by
\begin{equation}\label{WN:accumpt}
	\begin{split}
		h=\frac{c-N+1}{24},\quad\bar{h}=\infty.
	\end{split}
\end{equation}
Notably, since every $W_N$ primary state is also a Virasoro primary state, this twist accumulation point holds true for Virasoro primaries as well. It is interesting to observe that when $N=2$, the $W_N$ algebra reduces to the Virasoro algebra, and the expression \eqref{WN:accumpt} coincides with the twist accumulation point in the Virasoro case, without any need for shifting.

In the case of $W_N$ CFT, we aim to verify the following aspects (from perspective of Virasoro algebra):
\begin{itemize}
	\item The theory contains an infinite number of chiral currents. The growth of the current degeneracy $D(j)$ (as indicated in \eqref{Zansatz:withcurrents}) is approximately given by
	\begin{equation}
		\begin{split}
			D(j)\sim e^{4\pi\sqrt{\frac{N-2}{24}j}}\quad(j\rightarrow\infty),
		\end{split}
	\end{equation}
	up to a slow-growing factor in $j$.
	\item It has a unique vacuum and a nonzero twist gap.
	\item It has a twist accumulation point for Virasoro primaries, characterized by \eqref{WN:accumpt}. The counting of spin-$J$ Virasoro primaries near the twist accumulation point follows the pattern $e^{4\pi\sqrt{AJ}}$, up to a potential slow growth factor (see equation \eqref{WN:expcheck} for precise formulation).
\end{itemize}
The first two points establish that the theory falls within the framework of conjecture \ref{conjecture:infinitespin:shift} with $a=\sqrt{\frac{N-2}{24}}$. The existence of the twist accumulation point \eqref{WN:accumpt} directly corresponds to the statement made in the first part of conjecture \ref{conjecture:infinitespin:shift}. While verifying the complete consistency with the second part of conjecture \ref{conjecture:infinitespin:shift} is a complex task, in this context, we solely focus on confirming the exponential growth factor $e^{4\pi\sqrt{AJ}}$ associated with the last point, disregarding the slow-growing factor.

Our main tool for analysis is the $W_N$ character. In the case of a chiral $W_N$ primary with conformal weight $h$, its corresponding $W_N$ representation character is given by \cite{Afkhami-Jeddi:2017idc}:
\begin{equation}\label{def:WNcharacter}
	\begin{split}
		\chi^{W_N}_{h}(q)=\begin{cases}
			\frac{q^{-A+\frac{N-2}{24}}}{\eta(q)^{N-1}}\prod\limits_{n=1}^{N-1}(1-q^n)^{N-n} &\text{if } h = 0,\\
			\frac{q^{h-A+\frac{N-2}{24}}}{\eta(q)^{N-1}}&\text{if } h > 0. \\
		\end{cases}
	\end{split}
\end{equation}
In particular, when $N=2$, these characters reduce to the expected Virasoro characters.

\subsubsection{Virasoro vacuum, twist gap and chiral currents}
Assuming the existence of a positive $W_N$ twist gap $\tau^{W_N}_{\rm gap}$ and a unique twist-0 $W_N$ primary (which corresponds to the $W_N$ vacuum), the torus partition function is given by:
\begin{equation}
	\begin{split}
		Z(q,\bar{q})=\chi^{W_N}_{\rm vac}(q)\chi^{W_N}_{\rm vac}(\bar{q})+\sum\limits_{h,\bar{h}\geqslant\tau^{W_N}_{\rm gap}/2}\chi^{W_N}_{h}(q)\chi^{W_N}_{\bar{h}}(\bar{q}).
	\end{split}
\end{equation}
The Virasoro chiral currents comes from the $W_N$ vacuum character:
\begin{equation}
	\begin{split}
		\chi^{W_N}_{0}(q)=\chi_0(q)+\sum\limits_{j=1}^{\infty}D(j)\chi_j(q).
	\end{split}
\end{equation}
Similar to appendix \ref{app:decoupledCFT}, we introduce the function $f(q)$ to count the number of Virasoro chiral currents:
\begin{equation}
	\begin{split}
		\chi^{W_N}_{0}(q)\equiv&\frac{q^{-A}}{\eta(q)}f(q), \\
		f(q)=&1-q+\sum\limits_{j=1}^{\infty}D(j)q^j. \\
	\end{split}
\end{equation}
Using the $h=0$ case of \eqref{def:WNcharacter}, we find:
\begin{equation}
	\begin{split}
		f(q)=\frac{q^{\frac{N-2}{24}}}{\eta(q)^{N-2}}\prod\limits_{n=1}^{N-1}(1-q^n)^{N-n}.
	\end{split}
\end{equation}
To analyze the asymptotic behavior of $D(j)$, we use the fact that $\eta(q)$ is a modular form, which leads to:
\begin{equation}\label{WN:fq}
	\begin{split}
		f(q)=\left(\frac{2\pi}{\beta}\right)^{-\frac{N-2}{2}}\frac{q^{\frac{N-2}{24}}}{\eta(q')^{N-2}}\prod\limits_{n=1}^{N-1}(1-q^n)^{N-n}\quad \left(q'=e^{-\frac{4\pi^2}{\beta}}\right).
	\end{split}
\end{equation}
Taking the limit $q\rightarrow1$ (or equivalently, $\beta\rightarrow0$), we obtain the asymptotic behavior of $f(q)$:
\begin{equation}\label{fqasymp1}
	\begin{split}
		f(q)\stackrel{q\rightarrow1}{\sim}\left(\frac{2\pi}{\beta}\right)^{-\frac{N-2}{2}}\left(\prod\limits_{n=1}^{N-1}\left(n\beta\right)^{N-n}\right)e^{\frac{N-2}{24}\frac{4\pi^2}{\beta}}.
	\end{split}
\end{equation}
Then by \eqref{Fcurrent:asymp:withshift}, we determine that $a=\sqrt{\frac{N-2}{24}}$, which aligns with our expectations. This completes the verification of the first point.

By using the $W_N$ character formula \eqref{def:WNcharacter}, we observe that a $W_N$ descendant cannot possess a smaller twist than its corresponding $W_N$ primary. Hence, in a $W_N$ CFT, the twist gap of the Virasoro primaries is at most $\tau_{\rm gap}^{W_N}$. Furthermore, if $\tau_{\rm gap}^{W_N}$ is sufficiently large, there can be Virasoro primaries with smaller nonzero twists originating from the $W_N$ vacuum sector. To determine the lowest nontrivial twist from the $W_N$ vacuum sector, let us compute several leading terms in $f(q)$. Using \eqref{WN:fq}, we find
\begin{equation}
	\begin{split}
		f(q)=1-q+q^3+O(q^4).
	\end{split}
\end{equation}
Taking into account both the left and right movers, we have
\begin{equation}
	\begin{split}
		f(q)f(\bar{q})=&(1-q)(1-\bar{q})+(1-q)\mathcal{F}_{\rm current}(\bar{q})+\mathcal{F}_{\rm current}(q)(1-\bar{q})+q^3\bar{q}^3 \\
		&+O(q^3\bar{q}^4)+O(q^4\bar{q}^3). \\
	\end{split}
\end{equation}
Therefore, we observe that the $W_N$ vacuum sector includes a unique Virasoro vacuum, chiral currents, a twist-6 Virasoro scalar primary, and other Virasoro primaries with twists $\geqslant6$. Taking into account the other candidate twist gap $\tau^{W_N}_{\rm gap}$, we conclude that the twist gap in the spectrum of Virasoro primaries is given by
\begin{equation}
	\begin{split}
		\tau_{\rm gap}=\min\left\{6,\tau_{\rm gap}^{W_N}\right\}.
	\end{split}
\end{equation}
This completes the verification of the second point.

\subsubsection{Generalizing theorem \ref{theorem:modulartauberianFJ} to $W_N$ CFT}\label{app:WNconjecture}
Before going to the last point, we would like to propose a $W_N$ analogue of theorem \ref{theorem:modulartauberianFJ}. The setup is as follows. We define the reduced partition function in the $W_N$ CFT as
\begin{equation}
	\begin{split}
		\tilde{Z}^{W_N}(q,\bar{q}):=\left[\eta(q)\eta(\bar{q})\right]^{N-1}Z(q,\bar{q}).
	\end{split}
\end{equation}
Recall that $\tilde{Z}$ counts the Virasoro primaries, here similarly, $\tilde{Z}^{W_N}$ counts the $W_N$ primaries:
\begin{equation}
	\begin{split}
		\tilde{Z}^{W_N}(q,\bar{q})=&(q\bar{q})^{-A+\frac{N-2}{24}}\left[\prod\limits_{n=1}^{N-1}\left[(1-q^n)(1-\bar{q}^n)\right]^{N-n}+\sum\limits_{h, \bar{h}\geqslant T} q^{h}\bar{q}^{\bar{h}}\right], \\
	\end{split}
\end{equation}
where $A=\frac{c-1}{24}$ and $T=2\tau_{\rm gap}^{W_N}$.

By using the modular invariance of $Z(q,\bar{q})$ and the modular transformation property of $\eta(q)$, we obtain an analogue of \eqref{modulartransformation} for the reduced partition function $\tilde{Z}^{W_N}$:
\begin{equation}\label{WN:modulartransformation}
	\begin{split}
		\tilde{Z}^{W_N} (\beta_L, \beta_R) = \left(\frac{4 \pi^2}{\beta_L \beta_R}\right)^{\frac{N-1}{2}} \tilde{Z}^{W_N}
		\left( \frac{4 \pi^2}{\beta_L}, \frac{4 \pi^2}{\beta_R} \right)\quad\left(q=e^{-\beta_{L}},\bar{q}=e^{-\beta_{R}}\right).
	\end{split}
\end{equation}
Comparing this setup with the usual CFT (as described in section \ref{section:modularsetup}), we observe the following modifications specific to $W_N$ CFTs:
\begin{itemize}
	\item The power index of the overall $q\bar{q}$ factor changes from $A$ to $A-\frac{N-2}{24}$.
	\item Here the vacuum term is given by $\prod\limits_{n=1}^{N-1}\left[(1-q^n)(1-\bar{q}^n)\right]^{N-n}$ instead of $(1-q)(1-\bar{q})$.
	\item The crossing factor in the modular invariance equation of the reduced partition function is modified to $\left(\frac{4 \pi^2}{\beta_L \beta_R}\right)^{\frac{N-1}{2}}$ instead of $\left(\frac{4 \pi^2}{\beta_L \beta_R}\right)^{\frac{1}{2}}$.
\end{itemize}
The first modification is crucial as it shifts the position of the twist accumulation point. The second and third modifications are minor, affecting only the slow-growing factors in the analysis.

We define the $W_N$ analogues of $\mathcal{N}_J$, $\mathcal{A}_J$, and the double lightcone limit as follows:
\begin{equation}\label{redef:WN}
	\begin{split}
		\mathcal{N}_J^{W_N}(\varepsilon) :=& \sum\limits_{h\in\left(A-\frac{N-2}{24}-\varepsilon, A-\frac{N-2}{24}+\varepsilon\right)} n_{h,h+J}, \\
		\mathcal{A}_J^{W_N}(\varepsilon,\beta_{L}):=&\sum\limits_{h\in\left(A-\frac{N-2}{24}-\varepsilon, A-\frac{N-2}{24}+\varepsilon\right)} n_{h,h+J}\,e^{-\left(h-A+\frac{N-1}{24}\right)\beta_{L}}, \\
			W_N-\mathrm{DLC}_w\ \mathrm{limit}:&\quad\beta_L,\ J\rightarrow\infty,\quad \frac{2\pi T(1-w^2)}{A-\frac{N-2}{24}}\sqrt{\frac{J}{A-\frac{N-2}{24}}}-\beta_L\rightarrow\infty\,,\\
			&\quad \beta_L^{-1}\log{J}\to 0\,. \\
	\end{split}
\end{equation}
It is worth noting that the double lightcone limit here is weaker than the one defined in (\ref{redef:DLCwJ}), with the identification $a=\sqrt{\frac{N-2}{24}}$. Therefore, when we apply (\ref{redef:DLCwJ}) to check conjecture \ref{conjecture:infinitespin:shift} in the $W_N$ CFT, the conditions of the $W_N$ DLC$_w$ limit defined here will always be satisfied.

Now we are ready to propose the following $W_N$ analogue of theorem \ref{theorem:modulartauberianFJ}:
\begin{conjecture}\label{conjecture:modulartauberianFJ:WN}
	For any $w\in\left(\frac{1}{2},1\right)$, the quantity $\mathcal{A}_J^{W_N}(\varepsilon,\beta_{L})$ satisfies asymptotic two-sided bounds in the $W_N$-DLC$_w$ limit, given by:
	\begin{equation}
		\begin{split}\label{eq:resultFiniteEpsilon:WN}
			\frac{1}{w}\frac{1}{1-\frac{\tan\left(\pi w(1-\varepsilon)\right)}{\pi w(1-\varepsilon)}}\lesssim \frac{\mathcal{A}_J^{W_N}(\beta_L,\varepsilon )}{S_N(\beta_{L},J)e^{4\pi\sqrt{\left(A-\frac{N-2}{24}\right)J}}} \lesssim \frac{1}{w}\frac{2}{1+\frac{\sin(2\pi w\varepsilon)}{2\pi w\varepsilon}},
		\end{split}
	\end{equation}
    where $S_N(\beta_{L},J)$ is a slow-growing factor of the form 
    \begin{equation}
    	\begin{split}
    		S_N(\beta_{L},J)=C_N\beta_{L}^{\mu_N}J^{\nu_N}
    	\end{split}
    \end{equation}
    with finite coefficients $C_N$, $\mu_N$, and $\nu_N$ determined by $N$ of the $W_N$ algebra.
    
    Furthermore, the parameter $\varepsilon$ belongs to the interval
	\begin{equation}\label{epsilon:choice:DLCJ:WN}
		\begin{split}
			\varepsilon_{\rm min}(\beta_{L},J)\leqslant&\varepsilon\leqslant1-\frac{1}{2w}, \\
		\end{split}
	\end{equation}
	where $\varepsilon_{\rm min}(\beta_{L},J)$ is defined as:
	\begin{equation}\label{def:epsilonmin:WN}
		\begin{split}
			\varepsilon_{\rm min}(\beta_{L},J):=&\max\left\{P_N(A,T,w)\frac{\log{J}}{\sqrt{J}},\ 
			Q_N(A,T,w)\frac{\log{J}}{\beta_{L}}+R_N(A,T,w)\frac{\log\beta_{L}}{\beta_{L}}\right\},
		\end{split}
	\end{equation}
with finite coefficients $P_N$, $Q_N$, and $R_N$.
\end{conjecture}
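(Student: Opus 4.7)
The plan is to imitate the proof of theorem \ref{theorem:modulartauberianFJ} essentially verbatim, with bookkeeping changes that track the three modifications flagged after \eqref{WN:modulartransformation}: (i) the shift of the twist accumulation point from $A$ to $A - (N-2)/24$, (ii) the richer vacuum factor $\prod_{n=1}^{N-1}(1-q^n)^{N-n}$, and (iii) the modular crossing weight $(4\pi^2/(\beta_L\beta_R))^{(N-1)/2}$. Concretely, I would re-define $\mathcal{A}^{W_N}(\beta_L,\bar{H},\varepsilon,\delta)$, the band-limited auxiliary functions $\phi_\pm$, and the integrals $I_\pm$, $I^{\rm dual}_{\pm,\rm vac}$, $I^{\rm dual}_{\pm,\rm nonvac}$, $I_{\pm,h\in(\cdots)}$ exactly as in \eqref{A:upperbound}--\eqref{def:allIpm} but with $\tilde{Z}\to \tilde{Z}^{W_N}$, $\bar{H}=A-(N-2)/24+J$, and the saddle-point choice $\beta_R=2\pi\sqrt{(A-(N-2)/24)/(\bar{H}-A+(N-2)/24)}$ motivated, as around figure \ref{figure:window}, by the $\bar{h}$-dependence of the $W_N$ crossing kernel. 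The relation \eqref{AJ:prop} is unchanged because integer-spin quantization is a property of the theory, not of the chiral algebra.

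Next I would redo the dual vacuum computation of appendix \ref{app:Ivac:asym}. Starting from $\tilde{Z}^{W_N}_{\rm vac}$ and inverting $\chi_0^{W_N}$ via a $W_N$ analogue of \eqref{id:Ivac}, one obtains a modular crossing kernel $\rho_0^{W_N}(A-(N-2)/24;\bar h)$ built from sums of $\cosh\bigl(4\pi\sqrt{(A-(N-2)/24-n)(\bar h-A+(N-2)/24)}\bigr)/\sqrt{\pi(\bar h-A+(N-2)/24)}$ with $n=0,1,\ldots,\binom{N}{2}$ and explicit signs. Repeating the domain-splitting and dominated-convergence argument of appendix \ref{app:Ivac:asym} term by term, the $n=0$ piece dominates, yielding
\begin{equation}
I^{\rm dual}_{\pm,\rm vac} \sim C_N\, \beta_L^{-(N+1)/2}\, J^{-1/2}\, e^{4\pi\sqrt{(A-(N-2)/24)J}}\, \hat{\phi}_\pm(0),
\end{equation}
with an explicit $C_N$ coming from the residues of $\prod_{n=1}^{N-1}(1-q^n)^{N-n}$ at $q\to 1$. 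This fixes the normalising slow-growth factor $S_N(\beta_L,J)$ in \eqref{eq:resultFiniteEpsilon:WN}.

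To close the argument I would prove a $W_N$ analogue of lemma \ref{lemma:Ztildebound}, namely
\begin{equation}
\tilde{Z}^{W_N}(\beta_L,\beta_R)\leqslant \kappa^{W_N}(\beta_0)\,e^{(A-(N-2)/24)(\beta_L+\beta_R)}\quad(\beta_L,\beta_R\geqslant\beta_0),
\end{equation}
which follows by the same vacuum/non-vacuum split together with the twist-gap estimate $h,\bar h\geqslant T$; here one uses that $\prod_n(1-q^n)^{N-n}\leqslant 1$ for real $q\in(0,1)$ to bound the vacuum part. With this in hand, the arguments of sections \ref{section:modulardual} and \ref{section:modulardirect} go through line by line: each entry of table \ref{table:modularestimate} retains its exponential factor (the suppression is controlled by the same DLC combinations), while the polynomial prefactors pick up $N$-dependent powers coming from the $(N-1)/2$ exponent in \eqref{WN:modulartransformation} and from $S_N$. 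The Tauberian extraction of section \ref{section:modular:tauberian} is unchanged because the choice of $\phi_\pm$ in \eqref{def:phipm}--\eqref{hatphipm:optimalR} is independent of the chiral algebra, and the uniform bounds of appendix \ref{appendix:uniformbounds} apply as stated; consequently the shrinking-$\varepsilon$ analysis of section \ref{section:epsilonwindow} yields \eqref{def:epsilonmin:WN} with coefficients $P_N,Q_N,R_N$ determined by the $N$-dependent power-law factors.

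The main obstacle will be the explicit estimation of the $W_N$ crossing-kernel integral. Because $\prod_{n=1}^{N-1}(1-q^n)^{N-n}$ expands into $O(2^{N(N-1)/2})$ monomials, each giving a shifted Cardy-type piece analogous to the $\rho_0^{(i)}$ of appendix \ref{app:Ivac:asym}, the careful bookkeeping of which pieces are suppressed by $e^{-(\text{const})/\beta_R}$ versus which contribute to the leading $e^{4\pi\sqrt{(A-(N-2)/24)J}}$ growth is delicate. In principle the non-leading pieces are all killed by the same mechanism that suppressed $I^{R(3)}_{\pm,\rm vac}$ and $I^{R(4)}_{\pm,\rm vac}$ in appendix \ref{app:estimateIR34}, but one must verify this uniformly in $N$ and pin down $C_N$, which directly enters $S_N$. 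A secondary point requiring care is that the $W_N$-DLC$_w$ limit in \eqref{redef:WN} is weaker than the naively expected DLC$_w$ condition with shift $a=\sqrt{(N-2)/24}$; one needs to check that the factor of $(1-w^2)$ in the denominator on the right-hand side of the third condition of the $W_N$-DLC$_w$ limit provides enough slack, analogous to the argument around \eqref{def:auxbeta}, to dominate the $N$-dependent polynomial prefactors in the low-twist suppression \eqref{eq:lowprime}.
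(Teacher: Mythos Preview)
The paper does not actually prove this statement: it is labeled a \emph{conjecture}, and the paper's only comment is the single sentence ``We believe that conjecture \ref{conjecture:modulartauberianFJ:WN} can be proven using the same argument as presented in section \ref{section:modularbootstrap}.'' Your proposal is precisely a fleshed-out version of that one-line belief, tracking the three modifications flagged after \eqref{WN:modulartransformation} through the proof of theorem \ref{theorem:modulartauberianFJ}, so you are in complete agreement with the paper's intended (but unwritten) approach.

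Since the paper supplies no proof, your proposal in fact goes further than the paper: you identify the $W_N$ analogue of lemma \ref{lemma:Ztildebound}, the term-by-term dominated-convergence argument for the $W_N$ crossing kernel, and the two places where care is needed (the combinatorial explosion of monomials in $\prod_{n}(1-q^n)^{N-n}$ and the slack in the $W_N$-DLC$_w$ condition). These are genuine contributions beyond what the paper offers. One small correction: the exponent you write for $\beta_L$ in $I^{\rm dual}_{\pm,\rm vac}$ should be rechecked, since the $W_N$ vacuum factor $\prod_{n=1}^{N-1}(1-e^{-4\pi^2 n/\beta_L})^{N-n}$ contributes $\beta_L^{-N(N-1)/2}$ in the $\beta_L\to\infty$ limit, on top of the $(4\pi^2/\beta_L)^{(N-1)/2}$ from the crossing weight, so $\mu_N$ need not equal $-(N+1)/2$. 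This does not affect the structure of the argument, only the explicit value of $\mu_N$ in $S_N$.
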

We believe that conjecture \ref{conjecture:modulartauberianFJ:WN} can be proven using the same argument as presented in section \ref{section:modularbootstrap}.

A direct consequence of conjecture \ref{conjecture:modulartauberianFJ:WN} is the estimate of $\mathcal{N}_J^{W_N}(\varepsilon)$, which represents the number of $W_N$ primaries near the twist accumulation point. This estimate serves as an analogue of corollary \ref{cor:operatorcount}. According to the conjecture, we have:
\begin{equation}\label{operatorcount:WN}
	\begin{split}
		\mathcal{N}_J^{W_N}(\varepsilon\equiv\kappa J^{-1/2}\log J)=e^{4\pi\sqrt{\left(A-\frac{N-2}{2}\right)J}+O(\log J)},
	\end{split}
\end{equation}
where $\kappa$ is a fixed positive constant. We note that $\kappa$ has a positive lower bound, similar to what we observed in corollary \ref{cor:operatorcount}. This estimate demonstrates the exponential growth of $\mathcal{N}_J^{W_N}(\varepsilon)$ with respect to $J$, with the leading term determined by $A$ and the $W_N$ algebra. The subleading term contributes a factor of powerlaw growth.

\subsubsection{Testing conjecture \ref{conjecture:infinitespin:shift}}
We have no doubt that the first part of conjecture \ref{conjecture:infinitespin:shift} can be checked rigorously in the case of $W_N$ CFT. Now, let us discuss the exponential growth of $\mathcal{A}_J$ in the second part of the conjecture. Specifically, we aim to demonstrate that both $\mathcal{A}_J$ and $\mathcal{N}_J$ exhibit the expected exponential growth as stated in conjecture \ref{conjecture:infinitespin:shift} and eq.\,\eqref{operatorcounting:shift}:
\begin{equation}\label{WN:expcheck}
	\begin{split}
		\mathcal{A}_J(\beta_{L},\varepsilon)\stackrel{{\rm DLC}_w}{\sim}e^{4\pi\sqrt{AJ}},\quad \mathcal{N}_J(\varepsilon\equiv\kappa J^{-1/2}\log J)\sim e^{4\pi\sqrt{AJ}},
	\end{split}
\end{equation}
with the understanding that there may be additional factors of slow growth.

Analogous to the first example, where we were unable to provide a rigorous verification, we would like to propose the following hypotheses for the $W_N$ CFT:
\begin{itemize}
	\item We assume that conjecture \ref{conjecture:modulartauberianFJ:WN} is correct.
	\item In the DLC$_w$ limit, we expect the dominant contribution to $\log\mathcal{A}_J(\beta_{L},\varepsilon)$ to come from two sources: (a) $W_N$ primaries near the twist accumulation point, and (b) Virasoro primaries originating from the $W_N$ descendants of these $W_N$ primaries.
\end{itemize}
According to the second hypothesis, in the DLC$_w$ limit, we can approximate $\mathcal{A}_J(\beta_{L},\varepsilon)$ as a sum over contributions from $W_N$ primaries near the twist accumulation point and their $W_N$ descendants which are Virasoro primaries. Specifically, we have the approximation:
\begin{equation}\label{WN:AJapprox}
	\begin{split}
		\mathcal{A}_J(\beta_{L},\varepsilon)\approx\sum\limits_{n=0}^{J} \mathcal{A}_{J-n}^{W_N}(\beta_{L},\varepsilon)\times B_n.
	\end{split}
\end{equation}
Here, $J-n$ represents the spin of the $W_N$ primary near the twist accumulation point, and $B_n$ corresponds to the number of independent $W_N$ descendants. These descendants are Virasoro primaries with the same twist as the $W_N$ primary and an additional spin of $n$ (i.e.\,the total spin of the Virasoro primary is $J$).

The coefficients $B_n$ can be determined using the second case of the $W_N$ character formula \eqref{def:WNcharacter}. Similar to our analysis for the $W_N$ vacuum character, we have the expression:
\begin{equation}
	\begin{split}
		\left(\frac{q^{\frac{1}{24}}}{\eta(q)}\right)^{\frac{N-2}{24}}=\sum\limits_{n=0}^{\infty}B_nq^n.
	\end{split}
\end{equation}
By the same analysis as the $W_N$ vacuum character, we get the asymptotic growth of $B_n$ when $n$ is very large:
\begin{equation}\label{WN:Dnasymp}
	\begin{split}
		B_n=b_ne^{4\pi\sqrt{\frac{N-2}{24}n}},
	\end{split}
\end{equation}
where $b_n$ is a factor of slow growth in $n$.

By considering the first hypothesis, namely conjecture \ref{conjecture:modulartauberianFJ:WN}, we can derive the exponential factor that governs the growth of $\mathcal{A}_{J-n}^{W_N}$:
\begin{equation}\label{WN:AJnasymp}
	\begin{split}
		\mathcal{A}_{J-n}^{W_N}(\beta_L,\varepsilon)\approx e^{4\pi\sqrt{\left(A-\frac{N-2}{24}\right)(J-n)}}.
	\end{split}
\end{equation}
According to conjecture \ref{conjecture:modulartauberianFJ:WN}, the aforementioned asymptotic behavior remains valid when the growth rates of $J-n$ and $\beta_{L}$, instead of $J$ and $\beta_{L}$, satisfy the condition of the $W_N$-DLC$_w$ limit. We will address this subtlety in further detail later.

Substituting \eqref{WN:Dnasymp} and \eqref{WN:AJnasymp} into \eqref{WN:AJapprox}, we obtain
\begin{equation}\label{WN:AJapprox2}
	\begin{split}
		\mathcal{A}_J(\beta_{L},\varepsilon)\stackrel{{\rm DLC}_w}{\approx}\sum\limits_{n=0}^{J} e^{4\pi\sqrt{\left(A-\frac{N-2}{24}\right)(J-n)}}e^{4\pi\sqrt{\frac{N-2}{2}n}}.
	\end{split}
\end{equation}
Here, we neglect the slow-growing factor. In the limit $J\rightarrow\infty$, the leading exponential growth of the sum is determined by maximizing
$$\sqrt{\left(A-\frac{N-2}{24}\right)(J-n)}+\sqrt{\frac{N-2}{24}n},$$
which occurs when
\begin{equation}\label{WN:maxpoint}
	\begin{split}
		\frac{A-\frac{N-2}{24}}{J-n}=\frac{N-2}{24n}\quad\Longleftrightarrow\quad J-n=\frac{A-\frac{N-2}{24}}{A}J,\ n=\frac{N-2}{24A}J.
	\end{split}
\end{equation}
Now we come back to the subtlety mentioned after eq.\,\eqref{WN:AJnasymp}. In the framework of $W_N$ CFT, the right inverse temperature in the analysis of $\mathcal{A}_{J-n}^{W_N}(\beta_{L},\varepsilon)$ is identified with spin $J$ by
\begin{equation}
	\begin{split}
		\beta_{R}=2\pi\sqrt{\frac{A-\frac{N-2}{24}}{J-n}}.
	\end{split}
\end{equation}
Here the numerator in the square-root is $A-\frac{N-2}{24}$ instead of $A$ because in the setup of $W_N$ reduced partition function $\tilde{Z}^{W_N}$, everything depending on $A$ is modified to $A-\frac{N-2}{24}$, so is the identification between $\beta_{R}$ and $J$:
\begin{equation}
	\begin{split}
		{\rm CFT}:\ \beta_{R}=2\pi\sqrt{\frac{A}{J}}\quad{\rm vs}\quad W_N\ {\rm CFT}:\ \beta_{R}=2\pi\sqrt{\frac{A-\frac{N-2}{24}}{J}}.
	\end{split}
\end{equation}
But now, because of condition \eqref{WN:maxpoint}, we have
\begin{equation}
	\begin{split}
		\beta_{R}=2\pi\sqrt{\frac{A-\frac{N-2}{24}}{J-n}}=2\pi\sqrt{\frac{A}{J}}.
	\end{split}
\end{equation}
This result is of utmost importance in our examination of conjecture \ref{conjecture:infinitespin:shift} as it indicates that the right inverse temperature we are considering corresponds precisely to the one in the standard CFT! One can also check that by condition \eqref{WN:maxpoint}, the $W_N$-DLC$_w$ condition \eqref{redef:WN}, with $J$ replaced by $J-n$, is compatible with the DLC$_w$ condition \eqref{redef:DLCwJ} for conjecture \ref{conjecture:infinitespin:shift}. This resolves the previously mentioned subtlety and serves as a consistency check on our approximation.

By condition \eqref{WN:maxpoint}, we have
\begin{equation}
	\begin{split}
		\max\limits_{n\leqslant J}\left\{\sqrt{\left(A-\frac{N-2}{24}\right)(J-n)}+\sqrt{\frac{N-2}{24}n}\right\}\approx \sqrt{AJ}.
	\end{split}
\end{equation}
Here we use ``$\approx$" instead of ``$=$" because the maximal value is generally not attained at integer values of $J$ and $n$. However, this approximation does not affect the leading exponential growth. Consequently, we obtain
\begin{equation}
	\begin{split}
		\mathcal{A}_J(\beta_{L},\varepsilon)\stackrel{{\rm DLC}_w}{\approx}e^{4\pi\sqrt{AJ}}.
	\end{split}
\end{equation}
We expect that a more careful analysis will lead to the precise statement:
\begin{equation}
	\begin{split}
		\lim\limits_{{\rm DLC}_w}\frac{\log\mathcal{A}_J(\beta_{L},\varepsilon)}{4\pi\sqrt{AJ}}=1.
	\end{split}
\end{equation}
This completes the verification for the first equation of \eqref{WN:expcheck}. The same reasoning applies to the second equation of \eqref{WN:expcheck}.

Thus, we have checked the last point, which is consistent with the second part of conjecture \ref{conjecture:infinitespin:shift}.

\subsection{Three copies of Ising CFTs}\label{app:Ising}
The third example involves three copies of Ising CFTs:
$${\rm CFT}={\rm Ising}^{(1)}\otimes{\rm Ising}^{(2)}\otimes{\rm Ising}^{(3)}.$$
This CFT has the central charge $c=\frac{3}{2}$, which corresponds to $A=\frac{1}{48}$, and a nonzero twist gap of Virasoro primaries. Therefore, based on the aforementioned argument, there exists a accumulation point for twists with $h\leqslant\frac{1}{48}$. In a single Ising CFT, there are only three primary states: $I$ ($h_I=\bar{h}_I=0$), $\sigma$ ($h_\sigma=\bar{h}_\sigma=\frac{1}{16}$), and $\epsilon$ ($h_\epsilon=\bar{h}_\epsilon=\frac{1}{2}$). Consequently, in the big CFT, the only possible states with $h$ not exceeding $\frac{1}{48}$ are given by
\begin{equation}
	\begin{split}
		\prod\limits_{i=1,2,3}\prod\limits_{n\geqslant2}(\bar{L}^{(i)}_{-n})^{\alpha(i,n)}\ket{{\rm vac}}^{(1)}\otimes\ket{{\rm vac}}^{(2)}\otimes\ket{{\rm vac}}^{(3)},
	\end{split}
\end{equation}
where $i$ corresponds to the label in Ising$^{(i)}$, and $\alpha(i,n)$ are non-negative integers. By taking linear combinations of these states, we can obtain the right-moving currents of the Virasoro algebra, which have $h=0$.

Therefore, even without performing any computations, we can deduce that the twist accumulation point must be shifted by
\begin{equation}
	\begin{split}
		a^2=\frac{1}{48},
	\end{split}
\end{equation}
and the corresponding Virasoro primaries, which are the right currents, are located precisely at $h=0$. Now, let's perform a consistency check on the growth of $D(j)$, the number of spin-$j$ currents. Similar to the previous examples, the chiral currents are contained in the product of vacuum characters of the Ising CFTs:
\begin{equation}
	\begin{split}
		\chi^{\rm Ising}_{\rm vac}(q)^3=&\frac{q^{-\frac{1}{48}}}{\eta(q)}f(q),\\
	\end{split}
\end{equation}
where $f(q)$ is related to $\mathcal{F}_{\rm current}(q)$ through \eqref{fFrelation}, and the vacuum character of the Ising CFT is given by
\begin{equation}
	\begin{split}
		\chi^{\rm Ising}_{\rm vac}(q)=\frac{1}{2\sqrt{\eta(q)}}\left(\sqrt{\theta_3(q)}+\sqrt{\theta_4(q)}\right),
	\end{split}
\end{equation}
where $\theta_i(q)$ represents Jacobi's theta functions. Combining the above two equations, we find that
\begin{equation}
	\begin{split}
		f(q)=\frac{q^{\frac{1}{48}}}{8\sqrt{\eta(q)}}\left(\sqrt{\theta_3(q)}+\sqrt{\theta_4(q)}\right)^3.
	\end{split}
\end{equation}
Using the properties of $\eta(q)$ and $\theta_i(q)$ under S modular transformation, we determine the asymptotic behavior of $f(q)$ as $q\rightarrow 1$ (or equivalently, $\beta\rightarrow 0$):
\begin{equation}
	\begin{split}
		f(q)=\frac{1}{8}\sqrt{\frac{2\pi}{\beta}}\frac{q^{\frac{1}{48}}}{\sqrt{\eta(q')}}\left(\sqrt{\theta_3(q')}+\sqrt{\theta_2(q')}\right)^3\stackrel{\beta\rightarrow0}{\sim}\frac{1}{8}\sqrt{\frac{2\pi}{\beta}}e^{\frac{4\pi^2}{\beta}\frac{1}{48}}\quad\left(q'=e^{-\frac{4\pi^2}{\beta}}\right).
	\end{split}
\end{equation}
Then by \eqref{Fcurrent:asymp:withshift} and \eqref{fFrelation}, we obtain $a=\sqrt{\frac{1}{48}}$.

In this case, there is no need to look for the candidate twist gap from the product of Ising vacuum characters. This is because the Ising spin field $\sigma$, which has twist $\tau_\sigma=\frac{1}{8}$, already gives the twist gap:
\begin{equation}
	\begin{split}
		\tau_{\rm gap}=\frac{1}{8}.
	\end{split}
\end{equation}
Any Virasoro primary from the product of Ising vacuum characters can only have zero twist or twist $\geqslant1$.

\section{Proof of theorem \ref{theorem:largec}}\label{section:estimatelargec}
In this appendix, we present a proof of theorem \ref{theorem:largec}.  Similar to section \ref{section:modularbootstrap}, to derive bounds on $\mathcal{A}_J(\beta_L,\varepsilon_1,\varepsilon_2,A)$ it is convenient to introduce the quantity
\begin{equation}
	\begin{split}
		\mathcal{A}(\beta_{L},\bar{H},\varepsilon_1,\varepsilon_2,\delta,A):=&\int_{A-\varepsilon_1}^{A+\varepsilon_2}d h\int_{\bar{H}-\delta}^{\bar{H}+\delta} d\bar{h}\,\rho(h,\bar{h})e^{-(h-A)\beta_L}, \\
	\end{split}
\end{equation}
which is the analogue of $\mathcal{A}(\beta_L,\bar{H},\varepsilon,\delta)$ defined in \eqref{def:Acal} (recall that $\bar{H}\equiv A+J$). Recall that we always use the identification $\beta_{R}=2\pi\sqrt{\frac{A}{J}}$, the HDLC$_w$ limit \eqref{def:HDLC} is equivalent to
\begin{equation}\label{def:HDLCequiv}
	\begin{split}
		&\beta_L/A\to\infty\,,\quad\beta_R\to 0\,, \quad A\to\infty,\quad \frac{4\pi^2 \alpha(1-w^2)}{\beta_R}-\beta_L\rightarrow\infty, \\
		&\beta_L^{-1}\log\left(\beta_R\right)\rightarrow0\,. \\
	\end{split}
\end{equation}
Our approach here is similar to the one explained in section \ref{section:modular:idea}. We aim to show that, under the additional assumptions stated at the beginning of the section, the results similar to the ones in section \ref{section:modularbootstrap} holds. Specifically, we want to prove
\begin{equation}\label{modular:sketch1c}
	\begin{split}
		\lim\limits_{{\mathrm{HDLC}_{w}}}\frac{I^{\rm dual}_{\pm,\rm nonvac}}{I^{\rm dual}_{\pm,\rm vac}}=0.
	\end{split}
\end{equation}
in the dual channel and
\begin{equation}\label{modular:sketch2c}
	\begin{split}
		\lim\limits_{\mathrm{HDLC}_{w}}\frac{I_{\pm,\rm vac}}{I^{\rm dual}_{\pm,\rm vac}}=0\,,\quad 
		\lim\limits_{\mathrm{HDLC}_{w}}\frac{I_{\pm,T\leqslant h\leqslant A-\varepsilon_1}}{I^{\rm dual}_{\pm,\rm vac}}=0\,,\quad
		\lim\limits_{\mathrm{HDLC}_{w}}\frac{I_{\pm,h\geqslant A+\varepsilon_2}}{I^{\rm dual}_{\pm,\rm vac}}=0\,
	\end{split}
\end{equation}
in the direct channel. Here these $I_{\pm}$'s are defined in a similar way to \eqref{def:allIpm}.

\subsection{Dual channel: vacuum}
We reconsider the integral (\ref{int:dualvac}), the vacuum term in the dual channel of $I_{\pm}$, keeping in mind that $A$ is no longer fixed. We claim that a similar asymptotic behavior holds:
\begin{equation}\label{Ipmvac:asymp:largec}
	\begin{split}
		I^{\rm dual}_{\pm,\rm vac}\stackrel{{\rm HDLC}_{w}}{\sim} \frac{4\pi^{5/2}\beta_R}{\beta_L^{3/2}A^{1/2}}e^{A\left(\frac{4\pi^2}{\beta_L}+\frac{4\pi^2}{\beta_R}\right)}\hat{\phi}_\pm(0).  
	\end{split}
\end{equation}
Note that (\ref{Ipmvac:asymp:largec}) has an extra factor $e^{\frac{4\pi^2A}{\beta_L}}$ compared to (\ref{Ipmvac:asymp}). In fact, we also have this factor for fixed $A$, but we ignored it in (\ref{Ipmvac:asymp}) because it is asymptotically equal to 1 in the double lightcone limit when $A$ is fixed. This asymptotic behavior holds for any limit $\beta_{L}\rightarrow\infty,\ \beta_{R}\rightarrow0$ (keeping in mind that $\beta_{R}\equiv2\pi\sqrt{\frac{A}{J}}$) without any constraints between $A$, $\beta_{L}$ and $\beta_{R}$. In the HDLC$_w$ limit, we have the extra condition $A/\beta_{L}\rightarrow0$, so \eqref{Ipmvac:asymp:largec} reduces to \eqref{Ipmvac:asymp}, the one in a fixed CFT.

To see the asymptotic behavior (\ref{Ipmvac:asymp:largec}), we just need to repeat the estimates in section \ref{section:modulardual} more carefully, taking into account that the quantities dependent on $A$ are no longer constants. The technical details are given in appendix \ref{app:Ivac:asym}.

\subsection{Other terms: some technical preparations}
Our aim is to show that, as we approach the double lightcone limit, the nonvacuum contribution in the dual channel as well as the vacuum/low-twist/high-twist contributions in the direct channel are subleading compared to the vacuum term in the dual channel. This was illustrated in eqs.\,\eqref{modular:sketch1c} and \eqref{modular:sketch2c}.

Our analysis builds upon the arguments presented in sections \ref{section:modulardual} and \ref{section:modulardirect}, but to handle the limit $A\rightarrow\infty$ appropriately, we would like to introduce a technical lemma and a remark (see below) that facilitate the generalization of the previous arguments.

The following lemma will be convenient for us to eliminate the unimportant powerlw factors of $\beta_{L}$, $\beta_{R}$ and $A$ in the analysis:
\begin{lemma}\label{lemmaLargec}
	Given any fixed $u>0$ and $r,s,t\in\mathbb{R}$, we have
	\begin{equation}
		\begin{split}
			\beta_{L}^r\beta_{R}^sA^te^{-\frac{u}{\beta_{R}}}\stackrel{{\rm HDLC}_{w}}{\longrightarrow}0,\\ 
		\end{split}
	\end{equation}
	where the HDLC$_w$ limit was defined in (\ref{def:HDLC}).
\end{lemma}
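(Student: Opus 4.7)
The plan is to reduce the statement to the elementary fact that $e^{-u/\beta_R}$ decays faster than any polynomial in $\beta_R$ as $\beta_R\to 0^+$. The key observation is that in the $\mathrm{HDLC}_w$ limit both $\beta_L$ and $A$ are controlled by $1/\beta_R$. Specifically, the third condition of \eqref{def:HDLCequiv} guarantees that eventually $\beta_L\leqslant \frac{4\pi^2\alpha(1-w^2)}{\beta_R}$, while the combination $\beta_L/A\to\infty$ with $A\to\infty$ gives $1\leqslant A\leqslant \beta_L$ eventually. So both $\beta_L$ and $A$ are sandwiched between $1$ and $C/\beta_R$ for $C=4\pi^2\alpha(1-w^2)$.

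Using this, write $r_+:=\max(r,0)$ and $t_+:=\max(t,0)$. Since $\beta_L,A\geqslant 1$ eventually, one has $\beta_L^r\leqslant \beta_L^{r_+}$ and $A^t\leqslant A^{t_+}$, and therefore
\begin{equation}
\beta_L^r\beta_R^s A^t e^{-u/\beta_R}\leqslant C^{r_+ + t_+}\,\beta_R^{\,s-r_+-t_+}\,e^{-u/\beta_R}
\end{equation}
eventually in the $\mathrm{HDLC}_w$ limit. It then suffices to show that $\beta_R^{N}e^{-u/\beta_R}\to 0$ as $\beta_R\to 0^+$ for any fixed $N\in\mathbb{R}$ and $u>0$. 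Setting $x=1/\beta_R$, this reduces to $x^{-N}e^{-ux}\to 0$ as $x\to\infty$, which is standard.

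There is no real obstacle here; the lemma is a routine dominated-growth estimate. Its purpose is to absorb, in a single clean statement, all the power-law prefactors of $\beta_L$, $\beta_R$ and $A$ that will appear when the arguments of sections \ref{section:modulardual} and \ref{section:modulardirect} are adapted to the large-$c$ setting. Once \ref{lemmaLargec} is in hand, only the genuine exponential factors in the bounds analogous to table \ref{table:modularestimate} need to be tracked in order to justify \eqref{modular:sketch1c} and \eqref{modular:sketch2c}.
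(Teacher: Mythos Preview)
The proposal is correct and follows essentially the same approach as the paper: both bound $\beta_L$ and $A$ above by a constant times $1/\beta_R$ (you via $A\leqslant\beta_L$ from $\beta_L/A\to\infty$, the paper by asserting $A\leqslant M/\beta_R$ directly), then reduce to the elementary decay of $\beta_R^{N}e^{-u/\beta_R}$ as $\beta_R\to 0$. Your use of $r_+,t_+$ to handle all sign cases at once is a minor packaging difference from the paper's reduction to the worst case $r>0,\ s<0,\ t>0$.
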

\begin{proof}
	Since $\beta_{L}\rightarrow\infty$, $\beta_{R}\rightarrow0$ and $A\rightarrow\infty$ in the HDLC$_w$ limit, it suffices to prove the case of $r>0$, $s<0$ and $t>0$. 
	
	When we approach the HDLC$_w$ limit, eventually we will have
	\begin{equation}
		\begin{split}
			\beta_{L}\leqslant\frac{4\pi^2\alpha(1-w^2)}{\beta_{R}},\quad A\leqslant \frac{M}{\beta_{R}},
		\end{split}
	\end{equation}
	for some fixed positive $M$. In the above regime we have
	\begin{equation}
		\begin{split}
			\beta_{L}^r\beta_{R}^sA^te^{-\frac{u}{\beta_{R}}}\leqslant\left[4\pi^2\alpha(1-w^2)\right]^rM^t\beta_{R}^{-r+s-t}e^{-\frac{u}{\beta_{R}}}\rightarrow0.
		\end{split}
	\end{equation}
	In the last step we used $\beta_{R}\rightarrow0$ in the HDLC$_w$ limit.
\end{proof}

We would also like to make the following remark:
\begin{remark}\label{remark:Ztildebound:largec}
	Lemma \ref{lemma:Ztildebound} assumes only $\beta_0>0$. However, according to assumption 2 at the beginning of section \ref{section:largec}, lemma \ref{lemma:Ztildebound} still holds in the current context, given that $\beta_0>2\pi$. 
\end{remark}

\subsection{Dual channel: non-vacuum}\label{sec:largeCdualnvac}
Recall from section \ref{section:modular:directnonvac} that in the dual channel of $I_{\pm}$, we separated the non-vacuum contribution $I^{\rm dual}_{\pm,\rm nonvac}$ into $I^{\rm dual}_{\pm,T\leqslant\bar{h}\leqslant A}$ and $I^{\rm dual}_{\pm,\bar{h}\geqslant A}$ and found that in the double lightcone limit, they were aymptotically bounded by \eqref{eq:dualLessA} and \eqref{eq:dualGreatA}. Now, we aim to generalize \eqref{eq:dualLessA} and \eqref{eq:dualGreatA} to the current context, where we need to carefully examine the changes that arise due to the theory no longer being fixed.

Let us first consider  $I^{\rm dual}_{\pm,T\leqslant\bar{h}\leqslant A}$. The first step of the estimate can be found in \eqref{Ztilde:dualLessA:derivation}. According to remark \ref{remark:Ztildebound:largec}, the inequalities in \eqref{Ztilde:dualLessA:derivation} remain valid if we choose $\beta_0>2\pi$. Let us now look at \eqref{eq:dualLessA}, the next step of the estimate. The difference between our current situation and the one in \eqref{eq:dualLessA} is that $I^{\rm dual}_{\pm,\rm vac}$ now has an extra factor of $e^{\frac{4\pi^2A}{\beta_{L}}}$ due to $A$ no longer being fixed, as shown in eq. \eqref{Ipmvac:asymp:largec}. Therefore, estimate \eqref{eq:dualLessA} is modified to:
\begin{equation}\label{eq:dualLessA:largec}
	\begin{split}
		\abs{\frac{I^{\rm dual}_{\pm,T\leqslant\bar{h}<A}}{I^{\rm dual}_{\pm,{\rm vac}}}}\stackrel{{\rm HDLC}_w}{\lesssim}&\frac{\Lambda\beta_0^{1/2}\kappa(\beta_0)\max\limits_{x}\abs{\hat{\phi}_\pm(x)}}{2\pi^{5/2}\hat{\phi}_\pm(0)}\left(\frac{\beta_{L}}{\beta_{R}}\right)^{3/2}A^{1/2}e^{-\frac{4\pi^2 T}{\beta_{R}}-\frac{4\pi^2A}{\beta_{L}}+A\beta_{L}+A\beta_0-\frac{4\pi^2(A-T)}{\beta_0}}. \\
	\end{split}
\end{equation}
The factors on the r.h.s. of the above equation can be expressed in terms of $\beta_{L}$, $\beta_{R}$, and $A$ as follows:
\begin{equation}\label{dualLessA:largec:exp}
	\begin{split}
		\left[\left(\frac{\beta_{L}}{\beta_{R}}\right)^{3/2}A^{1/2}e^{-\frac{4\pi^2\alpha w^2 A}{\beta_{R}}}\right]\times e^{-A\left[\frac{4\pi^2\alpha(1-w^2)}{\beta_{R}}-\beta_{L}+\left(\frac{T}{A}-\alpha\right)\frac{4\pi^2}{\beta_{R}}+\frac{4\pi^2}{\beta_{L}}+\frac{4\pi^2(A-T)}{A\beta_0}-\beta_0\right]}.
	\end{split}
\end{equation}
In the HDLC$_w$ limit, the first factor vanishes, as shown by Lemma \ref{lemmaLargec}, while the second factor decays exponentially to 0 due to (\ref{assum1}), (\ref{def:HDLC}), and the fact that $T\leqslant A$. Consequently, we obtain:
\begin{equation}\label{Result:dualLessA:largec}
	\begin{split}
		\abs{\frac{I^{\rm dual}_{\pm,T\leqslant\bar{h}<A}}{I^{\rm dual}_{\pm,{\rm vac}}}}\stackrel{{\rm HDLC}_w}{\longrightarrow}0.
	\end{split}
\end{equation}
Let us now consider $I^{\rm dual}_{\pm,\bar{h}\geqslant A}$ and perform a similar analysis as we did for $I^{\rm dual}_{\pm,T\leqslant\bar{h}<A}$. The inequalities in (\ref{Ztilde:dualGreatA:derivation}) remain valid for $\beta_0>2\pi$. Additionally, we have an extra factor $e^{\frac{4\pi^2A}{\beta_{L}}}$ in the asymptotic behavior of $I^{\rm dual}_{\pm,\rm vac}$. Thus, the estimate in \eqref{eq:dualGreatA} is modified to:
\begin{equation}\label{eq:dualGreatA:largec}
	\begin{split}
		\abs{\frac{I^{\rm dual}_{\pm,\bar{h}\geqslant A}}{I^{\rm dual}_{\pm,{\rm vac}}}}\stackrel{{\rm HDLC}_w}{\lesssim}&\frac{\Lambda\kappa(\beta_0)\max\limits_{x}\abs{\hat{\phi}_\pm(x)}}{\sqrt{2}\pi^{5/2}\hat{\phi}_\pm(0)}\frac{\beta_{L}^{3/2}A^{1/2}}{\beta_{R}^2}e^{-A\left[\frac{4\pi^2-\Lambda^2}{\beta_{R}}-\beta_{L}-\beta_{R}+\frac{4\pi^2}{\beta_{L}}\right]}. \\
	\end{split}
\end{equation}
We can express the factors that depend on $\beta_{L}$, $\beta_{R}$ and $A$ in a similar manner as in \eqref{dualLessA:largec:exp}:
\begin{equation}
	\begin{split}
		\left[\frac{\beta_{L}^{3/2}A^{1/2}}{\beta_{R}^2}e^{-A\frac{4\pi^2w^2-\Lambda^2}{\beta_{R}}}\right]\times e^{-A\left[\frac{4\pi^2(1-w^2)}{\beta_{R}}-\beta_{L}-\beta_{R}+\frac{4\pi^2}{\beta_{L}}\right]}.
	\end{split}
\end{equation}
Both factors decay to 0 in the HDLC$_w$ limit. Therefore, we obtain
\begin{equation}\label{Result:dualGreatA:largec}
	\begin{split}
		\abs{\frac{I^{\rm dual}_{\pm,\bar{h}\geqslant A}}{I^{\rm dual}_{\pm,{\rm vac}}}}\stackrel{{\rm HDLC}_w}{\longrightarrow}0.
	\end{split}
\end{equation}
Eqs.\,\eqref{Result:dualLessA:largec} and \eqref{Result:dualGreatA:largec} show that in the dual channel, the non-vacuum contribution is suppressed by the vacuum contribution in the HDLC$_w$. This finishes the proof of \eqref{modular:sketch1c}.

\subsection{Direct channel: vacuum}
Let us consider $I_{\pm,\rm vac}$, the vacuum term in the direct channel of $I_{\pm}$, with the goal of showing the first equation of \eqref{modular:sketch2c}. In section \ref{modular:crossvac}, we analyzed $I_{\pm,\rm vac}$ for fixed CFT and derived the upper bound \eqref{Ipmdrect:upperbound}, which still holds. As in the previous subsection, we need to modify eq.\,\eqref{eq:vac} to account for the large $A$ dependence of $I^{\rm dual}_{\pm,\rm vac}$. Here we obtain the following modified estimate:
\begin{equation}\label{eq:vac:largec}
	\begin{split}
		\abs{\frac{I_{\pm,\rm vac}}{I^{\rm dual}_{\pm,\rm vac}}}\stackrel{\mathrm{HDLC}_w}{\lesssim}& \frac{\max\limits_{x}\abs{\phi_\pm(x)}}{2\pi^{5/2}\hat{\phi}_\pm(0)}\frac{\beta_L^{3/2}A^{1/2}}{\beta_R} e^{-A\left(\frac{4\pi^2}{\beta_R}+\frac{4\pi^2}{\beta_{L}}-\beta_L-\beta_R\right)}. \\
	\end{split}
\end{equation}
As in the previous subsection, we express the factors that depend on $\beta_{L}$, $\beta_{R}$ and $A$ as 
\begin{equation}
	\begin{split}
		\left[\frac{\beta_L^{3/2}A^{1/2}}{\beta_R} e^{-\frac{4\pi^2w^2A}{\beta_{R}}}\right]\times e^{-A\left(\frac{4\pi^2(1-w^2)}{\beta_R}+\frac{4\pi^2}{\beta_{L}}-\beta_L-\beta_R\right)}.
	\end{split}
\end{equation}
Both factors decay to 0 in the HDLC$_w$ limit by the same reason as in the previous subsection. So we get
\begin{equation}\label{Result:directvac:largec}
	\begin{split}
		\abs{\frac{I_{\pm,\rm vac}}{I^{\rm dual}_{\pm,{\rm vac}}}}\stackrel{{\rm HDLC}_w}{\longrightarrow}0.
	\end{split}
\end{equation}

\subsection{Direct channel: low twist and high twist}\label{sec:htc}
Let us consider the non-vacuum terms in the direct channel of $I_{\pm}$: the low-twist ($T\leqslant h\leqslant A-\varepsilon_1$) and high-twist ($h\geqslant A+\varepsilon_2$) terms, with the goal of showing the second and third equations of \eqref{modular:sketch2c}.

Before proceeding with the analysis, we want to emphasize that our estimates on $I^{\rm dual}_{\pm,\rm vac}$, $I^{\rm dual}_{\pm,\rm nonvac}$, and $I_{\pm,\rm vac}$ are based on a weaker version of the double lightcone limit. Specifically, the constraints we need for these estimates are as follows:\footnote{It is worth noting that under these weaker conditions, lemma \ref{lemmaLargec} still holds.}
\begin{equation}\label{HDLC:weaker}
	\begin{split}
		\beta_{L}\rightarrow\infty,\quad\beta_{R}\rightarrow0,\quad \frac{4\pi^2 \alpha(1-w^2)}{\beta_R}-\beta_L\rightarrow\infty,\quad  A_{\rm min}\leqslant A\leqslant \frac{M}{\beta_{R}},
	\end{split}
\end{equation}
where $A_{\rm min}$ and $M$ are some arbitrary fixed positive constant.\footnote{The last condition in \eqref{HDLC:weaker} is fulfilled in the HDLC$_w$ limit.} The additional constraints in the definition of the HDLC$_w$ limit (see eq.\,\eqref{def:HDLC}) are only required for the estimates on $I_{\pm,T\leqslant h\leqslant A-\varepsilon_1}$ and $I_{\pm,h\geqslant A+\varepsilon_2}$.

Here we only consider the case of fixed $\varepsilon_1$ and $\varepsilon_2$, and we will discuss the case of $\varepsilon_1,\varepsilon_2\rightarrow0$ later in section \ref{section:epsilonwindow:largec}.

For the high-twist term $I_{\pm,h\geqslant A+\varepsilon_2}$, we reconsider the estimates from \eqref{Zhigh:bound} to \eqref{eq:highprime}. The subtleties here are again $\beta_0>2\pi$ and the extra factor $e^{\frac{4\pi^2A}{\beta_{R}}}$ of $I^{\rm dual}_{\pm,\rm vac}$. We have a modified version of (\ref{eq:high}):
\begin{equation}\label{eq:high:largec}
	\begin{split}
		\abs{\frac{I_{\pm,h\geqslant A+\varepsilon_2}}{I^{\rm dual}_{\pm,\rm vac}}}\stackrel{{\rm HDLC}_w}{\lesssim}&\frac{\kappa(\beta_0)\beta_0^{1/2}\max\limits_{x}\abs{\phi_{\pm}(x)}}{4\pi^{5/2}\hat{\phi}_\pm(0)}\left(\frac{\beta_{L}}{\beta_{R}}\right)^{3/2}A^{1/2} e^{-\varepsilon_2\left(\beta_L-\frac{4\pi^2}{\beta_0}\right)+A\beta_0-\frac{4\pi^2A}{\beta_{L}}}.\\
	\end{split}
\end{equation}
From \eqref{eq:high:largec} we notice two issues:
\begin{itemize}
	\item There is a factor of $\beta_{R}^{-3/2}$, while the exponential factor does not depend on $\beta_{R}$. This issue has already appeared in the case of fixed CFT.
	\item There is an extra factor $e^{A\beta_0}$ which blows up in the limit $A\rightarrow\infty$.
\end{itemize}
The first issue was already resolved by introducing the condition $\log(\beta_{R})/\beta_{L}\rightarrow0$. For the second issue, the only way to resolve it is to let $\beta_{L}$ go to $\infty$ much faster than $A$. This is where we use the condition $\beta_{L}/A\rightarrow\infty$. We express the factors that depend on $\beta_{L}$, $\beta_{R}$ and $A$ as follows:
\begin{equation}
	\begin{split}
		\left[\left(\frac{\beta_{L}}{\beta_{R}}\right)^{3/2}A^{1/2} e^{-\frac{\varepsilon_2\beta_{L}}{2}}\right]\times e^{-\varepsilon_2\left(\frac{\beta_L}{2}-\frac{4\pi^2}{\beta_0}\right)+A\beta_0-\frac{4\pi^2A}{\beta_{L}}}
	\end{split}
\end{equation}
It is not hard to see that for fixed $\varepsilon_2$, both factors vanishes in the HDLC$_w$ limit. Therefore, we get
\begin{equation}\label{Result:directhigh:largec}
	\begin{split}
		\abs{\frac{I_{\pm,h\geqslant A+\varepsilon_2}}{I^{\rm dual}_{\pm,\rm vac}}}\stackrel{{\rm HDLC}_w}{\longrightarrow}&0\quad(\varepsilon_2\ {\rm fixed}).\\
	\end{split}
\end{equation}

For the low-twist term $I_{\pm,T\leqslant h\leqslant A-\varepsilon_1}$, we reconsider the estimates from \eqref{ineq:lowtwist} to \eqref{eq:lowprime}. Here we modify the definition of $\beta_{L}'$ to
\begin{equation}\label{def:betaLprime:largec}
	\begin{split}
		\beta_{L}'=\frac{4\pi^2\alpha(1-w^2/2)}{\beta_{R}}.
	\end{split}
\end{equation}
Then the estimates from \eqref{ineq:lowtwist} to \eqref{ineq:Zlowtwist} remain valid if we choose $\beta_0>2\pi$. So we still have
\begin{equation}
	\begin{split}
		\abs{I_{\pm,T\leqslant h\leqslant A-\varepsilon_1}}\leqslant&\max\limits_{x}\abs{\phi_{\pm}(x)}e^{-\varepsilon_1(\beta_{L}'-\beta_{L})}\left(\frac{4\pi^2}{\beta_{L}'}\right)^{3/2}\beta_{R}^{-1/2}e^{A\left(\frac{4\pi^2}{\beta_L'}+\frac{4\pi^2}{\beta_R}\right)} \\
		&\times\left[1+\kappa(\beta_0)\beta_0^{1/2}\left(\frac{\beta_L'}{4\pi^2}\right)^{3/2}e^{A\left(\beta_L'-\frac{4\pi^2}{\beta_{L}'}+\beta_0-\frac{4\pi^2}{\beta_0}\right)-T\left(\frac{4\pi^2}{\beta_R}-\frac{4\pi^2}{\beta_0}\right)}\right] \\
	\end{split}
\end{equation}
in the regime $\beta_{L}\leqslant\beta_{L}',\ \beta_{R}\leqslant\beta_0$. Here we also used \eqref{Ipmlowhigh:obviousbound}. The second term in the $[\ldots]$ of the second line vanishes in the HDLC$_w$ limit in the following way:
\begin{equation}
	\begin{split}
		&\kappa(\beta_0)\beta_0^{1/2}\left(\frac{\beta_L'}{4\pi^2}\right)^{3/2}e^{A\left(\beta_L'-\frac{4\pi^2}{\beta_{L}'}+\beta_0-\frac{4\pi^2}{\beta_0}\right)-T\left(\frac{4\pi^2}{\beta_R}-\frac{4\pi^2}{\beta_0}\right)} \\
		\leqslant&\left[\kappa(\beta_0)\beta_0^{1/2}\left(\frac{\alpha(1-w^2/2)}{\beta_{R}}\right)^{3/2}e^{-\frac{\pi^2\alpha w^2A}{\beta_{R}}}\right]e^{-A\left[\frac{\pi^2\alpha w^2}{\beta_{R}}+\frac{\beta_{R}}{\alpha(1-w^2)}-\beta_0+\left(1-\frac{T}{A}\right)\frac{4\pi^2}{\beta_0}\right]}.
	\end{split}
\end{equation}
Here we used $T/A\geqslant\alpha$ (see \eqref{assum1}) and \eqref{def:betaLprime:largec}. We see that both factors vanish in the HDLC$_w$ limit. Therefore we have
\begin{equation}
	\begin{split}
		\abs{I_{\pm,T\leqslant h\leqslant A-\varepsilon_1}}\stackrel{{\rm HDLC}_w}{\lesssim}\max\limits_{x}\abs{\phi_{\pm}(x)}e^{-\varepsilon_1(\beta_{L}'-\beta_{L})}\left(\frac{4\pi^2}{\beta_{L}'}\right)^{3/2}\beta_{R}^{-1/2}e^{A\left(\frac{4\pi^2}{\beta_L'}+\frac{4\pi^2}{\beta_R}\right)}.
	\end{split}
\end{equation}
Dividing above by $I^{\rm dual}_{\pm,\rm vac}$ (see \eqref{Ipmvac:asymp:largec}), we get
\begin{equation}\label{eq:low:largec}
	\begin{split}
		\abs{\frac{I_{\pm,T\leqslant h\leqslant A-\varepsilon_1}}{I^{\rm dual}_{\pm,\rm vac}}}\stackrel{{\rm HDLC}_w}{\lesssim}\frac{\max\limits_{x}\abs{\phi_{\pm}(x)}}{8\pi^{5/2}(\alpha(1-w^2/2))^{3/2}\hat{\phi}_\pm(0)}\left(A\beta_{L}^3\right)^{1/2}e^{-\varepsilon_1\left(\beta_{L}'-\beta_{L}\right)-A\left(\frac{4\pi^2}{\beta_{L}}-\frac{4\pi^2}{\beta_{L}^{\prime}}\right)}.
	\end{split}
\end{equation}
The $A\left(\frac{4\pi^2}{\beta_{L}}-\frac{4\pi^2}{\beta_{L}^{\prime}}\right)$ term in the exponential factor is irrelevant because it goes to 0 in the HDLC$_w$ limit. For the remaining factors that depend on $\beta_{L}$, $\beta_{R}$ and $A$, we have
\begin{equation}
	\begin{split}
		\sqrt{\frac{A\beta_{L}^3}{\beta_{R}\beta_{L}^{\prime3}}}e^{-\varepsilon_1\left(\beta_{L}'-\beta_{L}\right)}\leqslant&\sqrt{\frac{A}{\beta_{R}}}e^{-\varepsilon_1\left(\beta_{L}'-\beta_{L}\right)} \\
		=&\left[\sqrt{\frac{A}{\beta_{R}}}e^{-\frac{2\pi^2\alpha w^2\varepsilon_1}{\beta_{R}}}\right]\times e^{-\varepsilon_1\left(\frac{4\pi^2\alpha(1-w^2)}{\beta_{R}}-\beta_{L}\right)}.
	\end{split}
\end{equation}
Here in the first line we used $\beta_{L}\leqslant\beta_{L}'$, and in the second line we used \eqref{def:betaLprime:largec}. In the HDLC$_w$ limit, the first factor vanishes by lemma \ref{lemmaLargec}, and the second factor vanishes by \eqref{def:HDLC}. Therefore, we get
\begin{equation}
	\begin{split}
		\abs{\frac{I_{\pm,T\leqslant h\leqslant A-\varepsilon_1}}{I^{\rm dual}_{\pm,\rm vac}}}\stackrel{{\rm HDLC}_w}{\longrightarrow}0\quad(\varepsilon_1\ {\rm fixed}).
	\end{split}
\end{equation}
So we have finished the proof of the second and third equations of (\ref{modular:sketch2c}) in the case of fixed $\varepsilon_1$ and $\varepsilon_2$.

Based on our estimates on various terms in the direct and dual channels of $I_{\pm}$ (recall their definitions in \eqref{Ipm:direct}, \eqref{modularinv:split} and \eqref{def:allIpm}) we establish the statement in theorem \ref{theorem:largec} for fixed $\varepsilon_1$ and $\varepsilon_2$.  The bound $\varepsilon_i<1-1/2w$ comes from similar consideration as in section \ref{section:modular:tauberian}. As a final step, we would like to let $\varepsilon_i$ also go to zero in the HDLC$_w$ limit. This will be the subject of the next subsection.

\subsection{Shrinking the $(A-\varepsilon_1,A+\varepsilon_2)$ window}\label{section:epsilonwindow:largec}
In section \ref{section:epsilonwindow}, we observed that the $\varepsilon$-window around $h=A$ can approach zero in the DLC$_w$ limit. Now, we will revisit that analysis and study the rate at which $\varepsilon_1$ and $\varepsilon_2$ can tend to zero in the HDLC$_w$ limit.

In our previous analysis, $\varepsilon_1$ and $\varepsilon_2$ only affected two aspects: (a) the range of $\phi_{\pm,\delta_\pm}$ that we could choose, and (b) the estimate of high- and low-twist contributions (as seen in \eqref{eq:high:largec} and \eqref{eq:low:largec}). The issue regarding (a) is identical to points 1 and 2 described in section \ref{section:epsilonwindow}, and we have already resolved it. Therefore, let us now reconsider the issue regarding (b).

According to \eqref{eq:high:largec} and \eqref{eq:low:largec}, for $I_{\pm,h\geqslant A+\varepsilon_2}$ and $I_{\pm,T\leqslant h\leqslant A-\varepsilon_1}$ to be subleading in the HDLC$_w$ limit as $\varepsilon_1$ and $\varepsilon_2$ tend to 0, we require the following conditions:
\begin{equation}\label{epsiloncondition:largec}
	\begin{split}
		\left(\frac{\beta_{L}}{\beta_{R}}\right)^{3/2}A^{1/2} e^{-\varepsilon_2\left(\beta_L-\frac{4\pi^2}{\beta_0}\right)+A\beta_0},\ \left(A\beta_{L}^3\right)^{1/2}e^{-\varepsilon_1\left(\frac{4\pi^2\alpha(1-w^2/2)}{\beta_{R}}-\beta_{L}\right)}\rightarrow0.
	\end{split}
\end{equation}
Here we used the definition of $\beta_{L}'$ (see \eqref{def:betaLprime:largec}) and neglected some irrelevant exponential factors that tend to 1 in the HDLC$_w$ limit. Comparing \eqref{epsiloncondition:largec} to \eqref{epsiloncondition}, we can see that their structures are quite similar. However, in \eqref{epsiloncondition:largec}, there are additional dependence on $A$. This distinction is the primary motivation for considering separate parameters $\varepsilon_1$ and $\varepsilon_2$.

Let us consider $\varepsilon_1$-term first. We express it as follows:
\begin{equation}
	\begin{split}
		\left(A\beta_{L}^3\right)^{1/2}e^{-\varepsilon_1\left(\frac{4\pi^2\alpha(1-w^2/2)}{\beta_{R}}-\beta_{L}\right)}\stackrel{{\rm HDLC}_w}{\lesssim}&\left(A\beta_{L}^3\right)^{1/2}\ e^{-\varepsilon_1\frac{2\pi^2\alpha w^2}{\beta_{R}}}. \\
	\end{split}
\end{equation}
Here we used the fact that eventually $\beta_{L}\leqslant\frac{4\pi^2\alpha(1-w^2)}{\beta_{R}}$ in the HDLC$_w$ limit. For the r.h.s. of the above equation to vanish in the HDLC$_w$ limit, we can choose any $\varepsilon_1$ satisfying
\begin{equation}\label{epsilon1:choice}
	\begin{split}
		\varepsilon_1\geqslant\dfrac{\beta_{R}}{2\pi^2\alpha w^2}\left[\frac{3}{2}\log\beta_{L}+\left(\frac{1}{2}+\nu\right)\log A\right],
	\end{split}
\end{equation}
where $\nu$ is an arbitrary fixed positive constant. 

Then let us consider the $\varepsilon_2$-term. Similarly to the case of $\varepsilon_1$-term, here we can choose any $\varepsilon_2$ satisfying
\begin{equation}\label{epsilon2:choice}
	\begin{split}
		\varepsilon_2\geqslant\left(\beta_L-\frac{4\pi^2}{\beta_0}\right)^{-1}\left[ A\beta_0+\frac{3}{2}\log\left(\frac{\beta_{L}}{\beta_{R}}\right)+\left(\frac{1}{2}+\nu\right)\log A\right],
	\end{split}
\end{equation}
where $\nu$ is again an arbitrary fixed positive constant.

One can check explicitly that the choice of $\varepsilon_1$ and $\varepsilon_2$ given by the r.h.s. of \eqref{epsilon1:choice} and \eqref{epsilon2:choice} vanish in the HDLC$_w$ limit.\footnote{Here we added the small number $\nu$ to the $\log A$ term because it is the term with slowest growth compared to $A$, $\log\beta_{L}$ and $\log\beta_{R}$. It is of course not yet the optimal choice. For example, one can replace the $\nu \log A$ term with $\nu\log\log A$.} 

Now we choose $\beta_0=3\pi$ and $\nu=1$ in \eqref{epsilon1:choice} and \eqref{epsilon2:choice}, and recall that $\beta_{R}=2\pi\frac{A}{J}$. This gives
\begin{equation}
	\begin{split}
		\varepsilon_1\geqslant&\frac{3}{2\pi\alpha w^2}\sqrt{\frac{A}{J}}\log\left(\beta_{L} A\right), \\
		\varepsilon_2\geqslant&\left(\beta_{L}-\frac{4\pi}{3}\right)^{-1}\left[3\pi A+\frac{3}{2}\log\left(\frac{\beta_{L}\sqrt{AJ}}{2\pi}\right)\right]. \\
	\end{split}
\end{equation}
Then the last part of theorem \ref{theorem:largec} follows. This finishes the whole proof of the theorem \ref{theorem:largec}.

\small
	
\bibliography{tauberian}
\bibliographystyle{utphys}
	
\end{document}